\documentclass[12pt]{article}

\usepackage{authblk}

\usepackage{amsmath,amssymb,amsfonts,amsthm,mathtools}
\usepackage{stmaryrd}

\usepackage[T1]{fontenc}     
\usepackage[utf8]{inputenc}  
\usepackage{newtxtext}       
\usepackage{newtxmath}       

\usepackage{dsfont}          
\usepackage{bbm}             
\usepackage{mathrsfs}        
\usepackage{graphicx}
\usepackage{xcolor}          
\usepackage{wrapfig}
\usepackage{booktabs}
\usepackage{subcaption}      
\usepackage{accents}
\usepackage{enumitem}
\usepackage{datetime}
\usepackage{titlesec}

\usepackage{hyperref}
\newcommand{\email}[1]{%
  \begingroup\urlstyle{same}%
  \href{mailto:#1}{\nolinkurl{#1}}%
  \endgroup
}
\usepackage{cleveref}        
\hypersetup{
    linktoc=all,
    linkcolor=blue,          
    citecolor=blue,        
    filecolor=blue,      
    urlcolor=blue,
    colorlinks=true           
}

\crefname{section}{Section}{Sections}
\crefname{subsection}{Subsection}{Subsections}
\crefname{appendix}{Appendix}{Appendices}

\crefname{theorem}{Theorem}{Theorems}
\crefname{proposition}{Proposition}{Propositions}
\crefname{corollary}{Corollary}{Corollaries}
\crefname{lemma}{Lemma}{Lemmas}
\crefname{definition}{Definition}{Definitions}
\crefname{table}{Table}{Tables}
\crefname{assumption}{Assumption}{Assumptions}
\crefname{remark}{Remark}{Remarks}

\crefname{theorem}{Theorem}{Theorems}

\theoremstyle{thmstyleone}%
\newtheorem{theorem}{Theorem}[section]
\newtheorem{proposition}[theorem]{Proposition}%
\newtheorem{lemma}[theorem]{Lemma}%

\theoremstyle{thmstyletwo}%

\theoremstyle{definition}
\newtheorem{remark}[theorem]{Remark}

\theoremstyle{thmstylethree}%
\newtheorem{corollary}[theorem]{Corollary}
\theoremstyle{definition}
\newtheorem{definition}[theorem]{Definition}%
\newtheorem{example}{Example}%
\numberwithin{equation}{section}
\newcommand{\R}{\mathbb{R}}
\newcommand{\mR}{\mathcal{R}}
\newcommand{\M}{\mathcal{M}}

\renewcommand{\L}{\mathscr{L}}

\newcommand{\E}{\mathbb{E}}
\renewcommand{\P}{\mathbb{P}}

\newcommand{\ps}[1]{\langle #1 \rangle}

\newcommand{\bi}{\begin{itemize}}
\newcommand{\ei}{\end{itemize}}

\newcommand{\law}{\stackrel{\mathrm{law}}{=}}
\newcommand{\w}{\omega}

\renewcommand{\S}{\mathcal{S}}

\newcommand{\pp}[2]{\left(\!\left(#1 , #2\right)\!\right)}

\newcommand{\norm}[1]{\| #1 \|}

\def\eps{\varepsilon}

\newcommand{\introsec}[1]{%
  \par                                             
  \refstepcounter{subsection}%
  \addcontentsline{toc}{subsection}%
    {\protect\numberline{\thesubsection}#1}%
  \bigskip\noindent\textit{\thesubsection.\ #1.}\quad%
  \ignorespaces
}

\begin{document}
\title{Renormalization Group on Wiener Space: \\[0.2cm]
Spectral Theory and Universality}
\author[1,2*]{André L. P. Considera}
\author[1]{Alexei A. Mailybaev}
\affil[1]{Instituto de Matemática Pura e Aplicada (IMPA), Rio de Janeiro, Brazil}
\affil[2]{Université Côte d'Azur, Nice, France}
\date{}

%
%
%

\maketitle

\begin{center}
  \raisebox{1em}[0pt][0pt]{\large\Affilfont
    *Corresponding author(s). E-mail(s): \email{andre.luis@impa.br};}
\end{center}

\begin{abstract}
  We develop a rigorous renormalization group (RG)
  formalism on Wiener space. We introduce an RG operator $\mathcal R$ 
  acting on
  probability measures over continuous paths, 
  whose fixed point is
  the Wiener measure. 
  We linearize $\mathcal{R}$ around the Brownian fixed point and
  analyze the spectral structure of the resulting operator $\mathscr{L}$. 
  Its eigenvectors are too singular to be realized as honest
  measures on path space, 
  and we therefore develop a generalized spectral theory
  within the framework of white noise analysis (Hida calculus).
  The eigenvectors of $\mathscr{L}$ are realized as Hida
  distributions, satisfying $\L U=\lambda U$ in the weak sense.
  This analysis yields structural results such as a spectral gap and
  a diagonal-concentration property of the eigenvectors.
  Moreover, we identify a family $\{\mathfrak{U}_n\}_{n \geq 0}$ of
  eigenvectors with eigenvalues $\lambda_n = 2^{1-n/2}$, consisting of Wick polynomials of white noise
  formally expressed as $\mathfrak{U}_n = \int_0^1
  {:}\dot W(t)^n{:}\, dt$, and rigorously constructed as Hida
  distributions. 
  We then use this spectral structure to uncover a finer, second
  layer of universality in the Donsker invariance
  principle: not only is the convergence of random walks towards the
  Brownian scaling limit universal, but the entire hierarchy of
  leading corrections to the Brownian limit is universal as well,
  governed by the eigenpairs $(\lambda_n, \mathfrak{U}_n)$.
  We also show, at a formal level, that the framework developed here
  extends to Gibbs-type perturbations of the Brownian fixed point, in
  the spirit of self-interacting quantum field theories. 
  In particular, we recover the standard irrelevant/marginal/relevant 
  classification, usually obtained in the physics literature by power-counting, 
  solely from the spectral analysis of $\mathscr{L}$.
\end{abstract}
\noindent\textbf{Keywords:} 
Renormalization group, Wiener space, 
white noise analysis, Hida calculus, Wick polynomials,
Donsker invariance principle, universality.
\newpage

\tableofcontents

\section{Introduction}\label{sec:intro}

\introsec{Rigorous renormalization group}
Renormalization group (RG) is a powerful conceptual and computational 
framework, originally developed in quantum field theory (QFT) and statistical 
mechanics, to analyze the behavior of physical systems across different 
scales.
In its essence, the RG formalism describes how the effective description of a 
system evolves when small-scale degrees of freedom are integrated out, 
revealing large-scale, often universal, structures. 
Iterating this scale-by-scale coarse-graining defines
an RG flow whose fixed points correspond to scale-invariant
theories and provide effective descriptions of the system's large-scale behavior.

While one of the most influential ideas in theoretical physics,
RG methods have historically lacked a rigorous mathematical
foundation. This lack of rigor stems from 
the very nature of the formalism: RG
involves the manipulation of ill-defined objects such as divergent
power series and formal path integrals on infinite-dimensional spaces. 
Typically, RG procedures operate at
a heuristic level, relying on formal perturbative expansions and scaling
arguments that evade direct mathematical control. 
As emphasized by Kupiainen \cite{Kupiainen2023}, 
these methods were developed in the service of physical intuition and
computational success, rather than deductive precision.

The development of mathematically rigorous formulations of RG
remains an active and challenging area of research.
Considerable effort has been devoted to placing RG calculations on a rigorous
mathematical footing, and this effort has often led to new
mathematics of independent interest 
\cite{bleher1973investigation,glimm1987quantum,gawedzki1980rigorous,gawedzki1981renormalization,gawedzki1982renormalization,gawedzki1983block,gawedzki1985massless,brydges1990grad,sinai2014theory, bauerschmidt2019introduction, Kupiainen2016, AizenmanDuminilCopin2021}. 
In the context of stochastic partial differential equations (SPDEs), 
perhaps the most celebrated recent example is
Hairer's theory of regularity structures \cite{Hairer2014}, which
turns RG ideas and intuition into a rigorous
mathematical framework for analyzing singular SPDEs.
This shows that, while RG remains non-rigorous in its traditional form,
it can serve as a conceptual blueprint for constructing
mathematically robust theories in infinite-dimensional analysis.

In this work, we contribute to this program by developing
a rigorous RG formalism on Wiener space and uncovering
a finer, second layer of universality in the classical Donsker
invariance principle: not only is the convergence of random walks
towards the Brownian scaling limit universal, but the entire hierarchy
of leading corrections to the Brownian limit is universal as well,
governed by the spectral data of the linearized RG operator.

Let us briefly describe our approach.
We introduce an RG operator acting on probability
measures over continuous paths, naturally induced by the dyadic
cascade structure of the Lévy--Ciesielski construction of Brownian
motion. 
The Wiener measure then emerges as the fixed point of the
resulting RG dynamics.
At this stage, we follow the standard
dynamical-systems program:
we linearize around the Brownian fixed point and analyze
the spectrum of the linearized operator in order to describe the
asymptotic behavior of nearby orbits.
In this light, Donsker's theorem is interpreted as a stability
statement: random walks are stable perturbations of the
Brownian fixed point, shrinking under the RG flow.
More specifically, we show that, asymptotically, random walks
approach the Brownian fixed point through perturbative directions
given by Wick polynomials of white noise. These directions are
eigenvectors of the linearized RG operator with eigenvalues of modulus
less than one, and are therefore stable.

The Wick-polynomial eigenmodes turn out to be too singular to be realized as
honest measures on Wiener space, forcing us into a larger
distributional setting. For this reason, we work within the framework of white
noise analysis (Hida calculus), a distributional theory on Wiener
space designed precisely to handle Wick-renormalized polynomials of
white noise and other singular Brownian functionals \cite{kuo2018white,obata1994white}. It is the
natural context in which the spectral theory developed in this
paper takes place.

With the correct functional-analytic context in place,
we then show that the cumulant structure of the Donsker
approximations is organized by the same spectral data. More precisely,
a perturbation in the direction of the $m$-th Wick-polynomial
eigenmode excites precisely the $m$-th cumulant of the walk, while the
corresponding eigenvalue determines the decay rate of this contribution
under the RG flow. These cumulant asymptotics translate directly into
correction terms around the Brownian limit. Thus the universality of
Donsker's theorem extends beyond convergence to the Brownian limit:
the hierarchy of correction terms is universal as well.

Let us mention in passing that we also analyze, at a formal level,
the stability of Gibbs-type perturbations of the Brownian fixed point.
These perturbations are obtained by tilting the Brownian measure with
Wick-polynomial interactions, in the same spirit as self-interacting
QFTs. They are classified by the same spectral stability criterion as
the random-walk perturbations. In this case, we recover, formally and
by purely dynamical-systems arguments, the familiar
irrelevant/marginal/relevant classification obtained by standard
power-counting analysis.

Before entering the path-space setting and stating the main
results, it is worth recalling the RG formulation of the Central
Limit Theorem (CLT). It provides a finite-dimensional prototype in which
nearly all the ingredients of the dynamical-systems picture
described above can be seen explicitly. In this case, the RG machinery is
fully rigorous, free from the technical complications intrinsic to
the infinite-dimensional context. It also serves as
a guiding analogy throughout the paper, since many of its basic
objects have direct counterparts in the infinite-dimensional
case developed in the sequel.

\introsec{RG formulation of the central limit theorem}
Following Sinai \cite[pp.~131--132]{sinai1991probability}, we sketch the RG 
formulation of the CLT; see also \cite{jona2001renormalization}.
Let $\{X_n\}_{n\geq 1}$ be a sequence of centered 
i.i.d. random variables on $\R$ with unit variance. 
Their partial sums are given by 
\begin{equation*}
    S_k = \frac{X_1 + \dots + X_{2^k}}{2^{k/2}},
\end{equation*}
and we have
\begin{equation}\label{eq:rg_intro}
    S_{k+1}= 2^{-1/2}\left(S'_k + S''_k\right),
\end{equation}
where $S'_k$ and $S''_k$ are i.i.d. random variables given by 
$S'_k=2^{-k/2}\sum_{n=1}^{2^k}X_n$ and $S''_k=2^{-k/2}\sum_{n=2^k+1}^{2^{k+1}}X_n$.
Denote by $\rho_k$ the density of $S_k$. 
Then, at the level of probability densities, \eqref{eq:rg_intro} becomes
\begin{equation*}
\rho_{k+1}(x)= \mR[\rho_k](x):=\sqrt{2}\,\rho_k * \rho_k\bigl(\sqrt{2}x\bigr),
\end{equation*}
where $\mR$ is the RG operator acting on 
probability densities.
One can easily check that the standard Gaussian density $\rho_G$
satisfies the fixed point equation $\mR[\rho_G]=\rho_G$.
We can thus perform a linearization around $\rho_G$ 
by considering small perturbations of the form $\rho=\rho_G +\eps \eta$.
The linearized RG operator may be written as
\begin{equation*}
    \L\eta(x)=2\sqrt{2}\, \eta*\rho_G\left(\sqrt{2}x\right). 
\end{equation*}
Its eigenfunctions are the Hermite functions $\phi_n(x)=H_n(x) e^{-x^2/2}$, 
where $H_n$ is the $n$-th Hermite polynomial.
The corresponding eigenvalues are $\lambda_n=2^{1-n/2}$, $n=0,1,2,\dots$.
As a result, the operator $\L$ is unstable along the directions 
$\phi_0$ and $\phi_1$, marginally stable along $\phi_2$, and stable in 
all remaining directions.
Consequently, to ensure spectral stability of the Gaussian fixed point, 
one must consider perturbations $\eta$ which are orthogonal to 
$\phi_0$, $\phi_1$ and $\phi_2$. 
This means that the Gaussian fixed point is stable under perturbations that 
preserve the total mass --- so that it remains a probability density --- as well as the first and second moments.
We may then conclude that if $\rho$ is a probability density function 
with zero mean and unit variance, and is sufficiently close 
to $\rho_G$, then $\mR^n[\rho] \to \rho_G$ as $n \to \infty$.

Anticipating some of what is to follow, we briefly indicate how
this pattern carries over to the infinite-dimensional setting
developed in the sequel. The Gaussian density is replaced by the
Wiener measure, and the convolution-rescaling operator
$\mathcal{R}$ on densities by an RG operator on path measures,
induced by the dyadic Lévy--Ciesielski cascade, also involving
convolution and path rescaling. The linearization $\mathscr{L}$ retains
the same geometric sequence of eigenvalues $\lambda_n = 2^{1-n/2}$,
with the role of the Hermite functions $\phi_n$ now played by the
Wick polynomials of white noise, interpreted as Hida distributions. 
The unstable/marginal/stable trichotomy carries over unchanged. In this
correspondence, Donsker's invariance principle plays the role of
the CLT, with random walks taking the place of sums of i.i.d.\
random variables. The second layer of universality discussed
previously then emerges as the path-space counterpart of this
Sinai-type RG analysis.

We now turn to the Wiener space setting and describe our main results.

\introsec{Main results}\label{sec:main-results}

Let $\Omega$ be the space of continuous paths on $[0,1]$ 
vanishing at the origin.
We denote the Wiener measure on $\Omega$
by $\mu_\infty$. The associated Cameron--Martin space is the
Hilbert space $\mathfrak{H}$ of absolutely continuous paths with
square-integrable derivative. The triple
$(\Omega, \mathfrak{H}, \mu_\infty)$ 
is the classical Wiener space associated with Brownian motion.

The construction of the RG operator is based on the
Lévy--Ciesielski representation of Brownian motion, which
realizes Brownian paths as the sum of two parts: a single
large-scale Gaussian mode, and a fluctuating component
consisting of a Brownian bridge built from the dyadic
Faber--Schauder wavelets (tent functions);
see \cref{sec:RG operator}. 
This representation gives rise to a sequence of probability
laws $(\mu_N)_{N\geq1}$
converging to the Wiener measure $\mu_\infty$
and obeying the cascade relation
\begin{equation}\label{eq:cascade-intro}
    \mu_{N+1}
    =
    (F_{\#}\mu_N)*(G_{\#}\mu_N).
\end{equation}
Here, the functions $F$ and $G$ are suitable 
path-rescaling maps defined in \eqref{eq:FandG},
compressing trajectories onto the left and right halves of
the unit interval, respectively. 
The notation $*$ denotes convolution of probability measures, 
and $\#$ denotes pushforward.

The cascade relation above 
suggests introducing the following 
RG operator acting on probability measures over $\Omega$,
\[
\mathcal R[\mu] = (F_{\#}\mu)*(G_{\#}\mu).
\]
Observe that $\mathcal R$ 
is defined on the full space of probability measures
over $\Omega$ and not just on the sequence $(\mu_{N})$.
The cascade dynamics is recovered as a special case: 
advancing in $N$ is simply an application of $\mathcal{R}$
and one has 
$\mu_{N+1}=\mathcal R[\mu_N]=\mathcal R^N[\mu_1]$.
More generally, $\mathcal R$ defines a discrete 
dynamical system on the space of 
probability measures over $\Omega$,
rendering meaningful 
the classical notions of dynamical systems theory such as
orbits, fixed point attractors, basins of attraction, 
and stability.
This reformulation of the cascade as 
a discrete dynamical system also provides a 
clear program to follow, namely,
identify the fixed points of $\mathcal{R}$, 
linearize around them, 
and analyze the spectrum of the 
linearized operator to 
describe the asymptotic behavior of nearby orbits.
Universality statements --- a central theme in statistical physics 
and the theory of critical phenomena --- 
are then reinterpreted as stability of fixed points, 
with different universality classes corresponding to distinct 
fixed points and their respective basins of attraction.
In what follows, we carry out this program rigorously in 
the infinite-dimensional setting of Wiener space 
and apply it to derive a newfound layer of universality 
in the classical Donsker invariance principle.
\bigskip

We now turn our attention to the Brownian fixed point 
and its linearization.
Passing the cascade relation to the limit, one finds that
the Wiener measure satisfies the fixed-point
relation
\[
\mathcal R[\mu_\infty]=\mu_\infty.
\]
We linearize $\mathcal R$ around $\mu_\infty$ by writing
$\mu=\mu_\infty+\varepsilon\nu$ and retaining terms of
order $\varepsilon$. This yields the linearized RG operator
\begin{equation}\label{eq:L-intro}
    \L\nu
    =
    (F_{\#}\nu)*(G_{\#}\mu_\infty)
    +
    (F_{\#}\mu_\infty)*(G_{\#}\nu),
\end{equation}
governing the first-order dynamics of perturbations
around the fixed point. 
To study the stability of
$\mu_\infty$, one is naturally led to analyze the spectral
properties of $\L$ as an operator acting on signed measures,
seeking measure-valued solutions to the eigenvalue problem. 
However, as the spectral analysis conducted 
in Sects.~\ref{sec:gen-spec} and \ref{sec:wick-power}
reveals, the operator $\L$
admits a large family of physically relevant eigenmodes, 
most of which turn out to be too singular to be 
represented as honest measures on $\Omega$.
As a result, one is forced to work in a larger distributional space
of generalized Brownian functionals,
broad enough to accommodate these singular objects.

The framework of white noise analysis (Hida calculus) 
presents itself as the ideal setting to handle this issue.
Initiated by Hida in the 1970s \cite{hida1975analysis}
and subsequently extended by Potthoff, Streit, Kondratiev and others 
\cite{potthoff1991characterization,kondratiev1993spaces,kondratiev1996generalized}, 
its development was originally motivated by 
the need to give rigorous meaning to path integrals 
and singular stochastic processes. 
In other words, the theory was designed precisely to handle 
objects that are too irregular to be functions or measures 
in the classical sense. 
In the present context, 
it provides the correct setting for the generalized spectral theory 
of $\mathscr{L}$ and a natural description of its generalized eigenmodes.
The classical development of the theory can be found in 
\cite{hida1993white,kuo2018white,obata1994white}; 
\cite{dinunno2009malliavin,holden2010stochastic} 
offer modern introductory accounts emphasizing connections 
to Malliavin calculus.

To put it simply, 
white noise analysis provides us with
the space $(\S)$ of Hida stochastic test functions, 
together with its topological dual $(\S)^*$ --- the space of
Hida stochastic distributions. 
They are arranged in the Gelfand 
triple structure
\[
(\S) \subset L^2(\Omega,\mu_\infty)\subset (\S)^*,
\]
the infinite-dimensional stochastic analogue of 
the Schwartz--Gelfand triple in classical distribution theory, 
here for Brownian functionals $\Phi(\omega)$ defined on $\Omega$. 
More specifically, the test space $(\S)$ 
is a nuclear countably Hilbert (Fréchet) space, 
obtained as the projective limit of a family of Hilbert spaces, 
in direct analogy with the construction of the Schwartz 
space via the harmonic oscillator Hamiltonian.
On the other hand, its dual $(\S)^*$, 
the space of Hida distributions, 
is represented as the corresponding
inductive limit; its elements are generalized 
Brownian functionals, not necessarily square-integrable,
and thus need not belong to $L^2(\Omega,\mu_\infty)$ in general.
The construction of these tailor-made functional spaces
revolves around the Wiener chaos expansion.
Test functionals 
$\Phi \in (\mathcal{S})$ and distributions
$U \in (\mathcal{S})^*$ admit Wiener chaos 
decompositions of the form
\begin{equation*}
    \Phi = \sum_{n \geq 0} I_n(f_n), \qquad 
    U = \sum_{n \geq 0} I_n(u_n),
\end{equation*}
where the kernels $f_n$ are tensors belonging to 
appropriate deterministic test spaces on $[0,1]^n$, 
and the $u_n$ are generalized tensors belonging to 
the associated deterministic distribution spaces.
This decomposition is the main engine of analysis 
and permeates the entire edifice of the Hida white-noise framework;
we refer the reader to \cref{sec:hida} for a detailed construction 
of the relevant functional spaces.
\bigskip

Coming back to $\L$ and its spectral theory,
the framework described above is rich enough to
formulate the eigenvalue problem in a natural way.
\begin{definition}[Generalized eigenvectors and spectrum of $\L$]\label{def:gen-spec-intro}
    Let $\pp{\cdot}{\cdot}$ denote the dual pairing between 
    $(\S)^*$ and $(\S)$. 
    We say that $U\in (\S)^*\setminus \{0\}$ is a \emph{generalized 
    eigenvector} of $\L$ with eigenvalue $\lambda \in \mathbb{C}$ if
    \begin{equation}\label{eq:eigenvector-intro}
        \pp{\L U}{\Phi} = \lambda \pp{U}{\Phi},
        \quad \text{for all }  \Phi \in (\S).
    \end{equation}
    The \emph{generalized spectrum} of $\L$, denoted by $\sigma(\L)$, 
    is the set of all $\lambda \in \mathbb{C}$ for which 
    \eqref{eq:eigenvector-intro} holds for some $U \in (\S)^* \setminus \{0\}$.
\end{definition}
Let us take a moment and comment on \cref{def:gen-spec-intro}.
At face value, the eigenvalue problem 
$\mathscr{L}\nu = \lambda\nu$ is posed on signed
measures, which are themselves distributional objects --- 
dual elements of continuous functions on $\Omega$. 
In this sense, the problem lives in a distribution space 
from the very beginning, albeit a small one.
Hence, by working with \cref{def:gen-spec-intro},
one is not introducing distributional methods 
in an ad-hoc manner, but rather
enlarging the solution space of the eigenvalue problem 
from measures to Hida distributions ---
an enlargement within the same distributional paradigm.
This procedure is classical in the theory of 
partial differential equations:
just as distributional solutions extend
classical solutions by transferring differential operators 
onto smooth test functions via adjoint duality,
here Hida-distributional solutions extend measure-valued ones 
by transferring $\mathscr{L}$ onto Hida test functionals 
through its dual $\mathscr{L}^*$; see \cref{sec:dual-operator}.

The following theorem, one of the main results of this work, 
characterizes the generalized 
spectral structure of $\L$ in the sense above.
We state it in a simplified form to avoid getting bogged down in technicalities; 
the precise statements can be found 
in \cref{thm:spectral_gap,thm:diag_support_eigenkernel,thm:eigenvector}.

\begin{theorem}[Generalized spectral structure, simplified form]\label{thm:intro-spec}
    The linearized RG operator $\mathscr{L}$ has the following 
    generalized spectral structure:
    \begin{enumerate}[label=(\roman*)]
    \item \textbf{Spectral gap.} 
        $\sigma(\L) \subseteq \{z \in \mathbb{C} : |z| \leq \sqrt{2}\} \cup \{2\}$, 
        and $\lambda_0 = 2$ is a simple eigenvalue with eigenspace spanned by $\mu_\infty$
        (see \cref{thm:spectral_gap}).

    \item \textbf{Diagonal concentration of eigenvectors.} 
        Let $U = \sum_{n \geq 0} I_n(u_n)$ be a generalized 
        eigenvector of $\mathscr{L}$ with nonzero eigenvalue. 
        Then, for every $n \geq 2$, the kernel $u_n$ is 
        supported on the time diagonal 
        $\{t_1 = \cdots = t_n\} \subset [0,1]^n$
        (see \cref{thm:diag_support_eigenkernel}).

    \item \textbf{Wick powers as eigenvectors.} 
        For each $n \geq 0$, let $\mathfrak{U}_n \in (\mathcal{S})^*$ 
        be the Hida distribution given formally by the time integral 
        of the $n$-th Wick power of white noise,
        \begin{equation}\label{eq:un-intro}
            \mathfrak{U}_n = \int_0^1 {:}\dot{W}(t)^n{:}\,dt,
        \end{equation}
        where $W$ is a Brownian motion.
        Then, for each $n \geq 0$, $\mathfrak{U}_n$ is a generalized 
        eigenvector of $\mathscr{L}$ with eigenvalue
        \begin{equation*}
            \lambda_n = 2^{1 - n/2},
        \end{equation*}
        (see \cref{thm:eigenvector}).

    \end{enumerate}
\end{theorem}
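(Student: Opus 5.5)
The plan is to pass everything to the Wiener chaos side, where $\L$ becomes an explicit operator on kernels. Two ingredients are needed first. The maps $F$ and $G$ of \eqref{eq:FandG} compress a path onto $[0,1/2]$ and $[1/2,1]$ with Brownian scaling. On white noise this means $F\omega$ carries the noise $\sqrt2\,\dot\omega(2t)\mathbf 1_{[0,1/2]}(t)$, and $G\omega$ carries the analogous noise on $[1/2,1]$ together with the continuity shift. Convolution of measures is addition of independent paths.

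To compute the dual $\L^*$ on $(\S)$, I would use the $S$-transform (or equivalently the exponential test functionals $:\!e^{\langle\cdot,\xi\rangle}\!:$). For $\Phi=:\!e^{\langle\omega,\xi\rangle}\!:$ the pairing $\pp{\L U}{\Phi}$ splits into two terms. In the first, the $\mu_\infty$ factor on the $G$-half integrates out, and the $F$-half acts by the scaling $\xi\mapsto \sqrt2\,\xi(\cdot/2)$ restricted to $[0,1/2]$; the second term is symmetric. Translating to kernels, I expect
\[
(\L u)_n(t_1,\dots,t_n)=2^{1-n/2}\Big[ u_n(2t_1,\dots,2t_n)\mathbf 1_{[0,1/2]^n}+u_n(2t_1-1,\dots,2t_n-1)\mathbf 1_{[1/2,1]^n}\Big],
\]
after checking the normalisation against the CLT toy model and the fact that $\L\mu_\infty=2\mu_\infty$. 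The point of this formula is that $\L$ preserves each chaos and maps kernels on $[0,1]^n$ to kernels supported on the two diagonal blocks. The Brownian-bridge shift in $G$ should only contribute through the large mode, and I need to verify that it does not mix chaoses.

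With the chaos-wise formula in hand, the three parts follow in order.

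\textbf{(iii) Wick powers.} $\mathfrak U_n$ has kernel $u_n=\delta$-function on the diagonal, $\int_0^1\delta_t^{\otimes n}\,dt$, which lies in the distribution space of the Gelfand triple. Substituting $\delta(2t_i-2s)$ gives a factor $2^{-n}$ from the delta functions times $2$ from the change of variables $s\mapsto 2s$; each half reproduces the restricted kernel. The total is $\lambda_n=2^{1-n/2}$, up to checking the scaling factor carefully in the weak pairing with test kernels $f_n\in\S([0,1]^n)$.

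\textbf{(ii) Diagonal concentration.} If $\L U=\lambda U$ with $\lambda\ne0$, then iterating $k$ times shows that $u_n$ is supported in the union of the dyadic blocks $\bigcup_j I_{k,j}^n$, where $I_{k,j}$ are the dyadic intervals of length $2^{-k}$. Intersecting over $k$ gives the diagonal. Rigorously, I test against $f_n$ supported off $\bigcup_j I_{k,j}^n$ and use that $\lambda^{-k}(\L^*)^k f_n=0$.

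\textbf{(i) Spectral gap.} By chaos preservation, an eigenvector with eigenvalue $\lambda$ has some nonzero $u_n$ with $\L_n u_n=\lambda u_n$.
\begin{itemize}
\item For $n=0$, $\L_0$ is multiplication by $2$, so $\lambda=2$ with eigenspace spanned by $\mu_\infty$, i.e.\ by the constant functional $1$.
\item For $n\ge1$, I need $|\lambda|\le \sqrt2$. I would bound the growth of $\L_n$ in the Hida norms: $\L_n$ acts on the $n$-th chaos as $2^{1-n/2}$ times a sum of two "half-isometries" with disjoint supports.
\end{itemize}

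The main obstacle is exactly this last bound. I need to show that the operator $T u(t)=u(2t)\mathbf 1+u(2t-1)\mathbf 1$ has spectral radius at most $2^{n/2}\cdot 2^{-1}\cdot\sqrt2$ on the relevant weighted Sobolev scale. Equivalently, I need $|\lambda|\le 2^{1-n/2}\cdot 2^{(n-1)/2}=\sqrt2$.

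I would try a duality argument. Pair the eigenkernel against $(\L^*)^k f$ for a fixed test kernel $f$. Then estimate $\|(\L^*)^k f\|_p$ in the $p$-th Hida norm: each application averages $f$ over two dyadic halves with a factor $2^{1-n/2}$ and a Jacobian gain. Pick $p$ so that the distribution norm of $u_n$ is finite. This forces $|\lambda|^k\,|\pp{U}{\Phi}|\le C\,\sqrt2^{\,k}$ modulo polynomial factors, hence $|\lambda|\le\sqrt2$. The delicate part is controlling the derivatives in the Hida norms (the $A^p$ weights) under the dilation, since dilation amplifies derivatives by $2^{k}$. I expect to handle this by working instead with the characterization via $S$-transform growth bounds, where only sup-type estimates on $\xi$ enter.
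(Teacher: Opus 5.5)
Your treatment of (ii) and (iii) follows the paper's route: the chaos-wise reduction to $T_n=F^{\otimes n}+G^{\otimes n}$, iterating the eigen-relation against test kernels that vanish on the dyadic diagonal blocks, and a change of variables in the diagonal delta kernel. However, the central kernel formula has the wrong constant. Since $F^*h(s)=2^{-1/2}h(s/2)$, one has $(F^*)^{\otimes n}f(\mathbf s)=2^{-n/2}f(\mathbf s/2)$, and transposing gives
\[
(\L u)_n(\mathbf t)=2^{n/2}\bigl[u_n(2\mathbf t)\mathbf 1_{[0,1/2]^n}(\mathbf t)+u_n(2\mathbf t-\mathbf 1)\mathbf 1_{[1/2,1]^n}(\mathbf t)\bigr],
\]
not $2^{1-n/2}[\cdots]$. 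Already at $n=0$ your formula gives $\L\mu_\infty=4\mu_\infty$. With your prefactor, your own delta computation ($2^{-n}$ from the deltas times $2$ from $s\mapsto 2s$) produces $2^{2-3n/2}$, not $2^{1-n/2}$. Only with $2^{n/2}$ does it give $2^{n/2}\cdot 2^{1-n}=\lambda_n$. The same slip misstates the target in (i): the pure dilation part would need generalized spectral radius at most $2^{(1-n)/2}$, not $2^{(n-1)/2}$. With the correct constant, each block is an $L^2$-isometry and the two blocks have orthogonal ranges. That is the right heuristic for $\sqrt2$.

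The genuine gap is (i) for $n\ge1$, which you leave as the ``main obstacle.'' The missing ingredient is a one-step contraction in the paper's Haar-adapted norms. Since $\langle F^*f,e_{n,k}\rangle=\langle f,e_{n+1,k}\rangle$ and $\langle G^*f,e_{n,k}\rangle=\langle f,e_{n+1,k+2^n}\rangle$, the adjoints move Haar coefficients down one level, where the weights $\alpha_n$ are smaller. Combined with $\langle F^*f,\mathbf 1\rangle=2^{-1/2}(\langle f,\mathbf 1\rangle+\langle f,e_{0,0}\rangle)$ and the analogous identity for $G^*$ with a minus sign, this yields $|F^*f|_p^2+|G^*f|_p^2\le|f|_p^2$ for every $p\ge0$. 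Tensorizing $J=(F^*,G^*)$ and keeping only the two pure words gives $|(F^*)^{\otimes n}f|^2+|(G^*)^{\otimes n}f|^2\le|f|^2$ in $\S_p^{\otimes n}$, hence $\|T_n^*\|\le\sqrt2$. Duality then gives $|\lambda|\,|u_n|_{\S_{-p}^{\otimes n}}\le\sqrt2\,|u_n|_{\S_{-p}^{\otimes n}}$ directly, with no iteration.

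Your worry that dilation amplifies derivatives is misdiagnosed: on the test side $F^*$ and $G^*$ stretch functions, so they can only decrease these norms. Nor does the $S$-transform route sidestep the issue. The available bound $|SU(\xi)|\le\|U\|_{-p}e^{|\xi|_p^2/2}$ involves $|\xi|_p$, not sup norms, so you would need exactly this control of $|F^*\xi|_p$ and $|G^*\xi|_p$. Without it, neither $|\lambda|\le\sqrt2$ nor the simplicity of $\lambda_0=2$ is established, since the latter requires $2\notin\sigma(T_n)$ for $n\ge1$.
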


We now discuss the content of \cref{thm:intro-spec} and 
the overarching methodology.
Unlike the self-adjoint or normal operators of classical spectral 
theory in Hilbert spaces, $\mathscr{L}$ 
does not enjoy any type of spectral theorem
at this level of generality,
and no general theory supplies 
us with a complete description of its spectral data. 
Items (i) and (ii) of \cref{thm:intro-spec}
are the most general structural results we can extract in this 
setting, obtained without invoking any form of spectral theorem.
This absence of general structure also shapes item (iii).
Without a general characterization or completeness theorem 
to rely on, 
we are forced to proceed by constructing explicit eigenmodes, 
and the family $\{\mathfrak{U}_n\}$ 
appearing in (iii) is representative rather than exhaustive.

Item (i) is a 
spectral-gap statement. Apart from the leading eigenvalue 
$\lambda_0 = 2$, the entire spectrum of $\mathscr{L}$, 
in the sense of \cref{def:gen-spec-intro},
is confined 
to the closed disk of radius $\sqrt{2}$. 
We observe that the
gap is sharp, since $\lambda_1 = \sqrt{2}$ lies exactly on the 
boundary of the disk, so the gap size cannot be increased. 
Moreover, the separation of $\lambda_0 = 2$ from the rest of the spectrum 
identifies $\mu_\infty$ as the unique maximally unstable direction, 
isolated from the rest of the dynamics. 
In other words, under iterations of $\L$, 
a perturbation in the direction of $\mu_{\infty}$ 
is amplified by a factor of $2$ per step,
and there is no faster-growing direction. 
Item (ii), on the other hand, is a rigidity result about the shape of 
eigenvectors, roughly saying that generalized eigenvectors 
of $\mathscr{L}$ must be concentrated on the time diagonal.

Part (iii) of \cref{thm:intro-spec} identifies a particular 
family $\{\mathfrak{U}_n\}$ of Hida distributions  
as generalized eigenvectors of 
$\mathscr{L}$, namely, time integrals of Wick powers of 
white noise. Heuristically, $\mathfrak{U}_n$ corresponds 
to the formal integral
$\int_0^1 {:}\dot{W}(t)^n{:}\,dt$,
where $\dot{W}$ is the distributional time derivative of 
Brownian motion and ${:}\,\,{:}$ denotes Wick normal ordering. 
Although this expression is only formal, the eigenvectors 
$\mathfrak{U}_n$ are rigorously constructed as elements of 
$(\mathcal{S})^*$ in \cref{sec:wick-power}.
These are familiar objects in Quantum Field Theory (QFT),
where Wick renormalized polynomials 
appear as interaction terms in the action 
functional of self-interacting field theories
\cite{glimm1987quantum}.

In the present context, 
the integral $\int {:}\dot{W}(t)^n{:}\,dt$,
which formally defines \(\mathfrak U_n\), 
appears as the interaction
term of a one-dimensional field theory 
with derivative coupling. The
corresponding Hamiltonian is of gradient 
type, depending on the field
only through its derivative, a feature 
typical of interface models. In
the small coupling regime, its equilibrium 
states are described
formally by the Gibbs measure
\begin{equation}\label{eq:tilt-intro}
    \mu^{(\varepsilon, n)}(dW) \propto 
    e^{-\varepsilon \int_0^1 
    {:}\dot{W}(t)^n{:}\,dt}\,
    \mu_\infty(dW),
\end{equation}
which may be viewed as a continuum 
analogue of the
Ginzburg--Landau \(\nabla\phi\) interface 
model with polynomial
gradient interaction of order \(n\) 
\cite{gawedzki1980rigorous,gawedzki1981renormalization,gawedzki1982renormalization,gawedzki1983block,funaki1997motion,miller2011fluctuations}.
In \cref{sec:qft}, we formally apply the RG theory developed in this 
paper to recover the classical 
relevant/marginal/irrelevant classification familiar from QFT power counting, 
solely from the spectrum of the linearized RG operator 
around the Brownian fixed point.

From a dynamical perspective,
the eigenpairs $(\lambda_n, \mathfrak{U}_n)$ admit a transparent 
interpretation, paralleling the classical
RG analysis of the CLT sketched above. 
There, the linearized RG operator at the Gaussian fixed point 
has eigenvalues $\lambda_n = 2^{1-n/2}$ with eigenvectors given by the 
Hermite functions.
In our path-space setting, the same geometric sequence of 
eigenvalues reappears, with the Wick polynomials 
$\mathfrak{U}_n$ taking the role 
of the Hermite functions. 
The stability of the Brownian fixed point follows the same route. 
The leading eigenvalue $\lambda_0 = 2$ is unstable, with 
associated eigenvector $\mathfrak{U}_0 = \mu_\infty$, representing 
the Brownian fixed point itself. The next eigenvalue 
$\lambda_1 = \sqrt{2}$ is also unstable. 
The eigenvalue $\lambda_2 = 1$ is marginal, while all 
$\lambda_n < 1$ for $n \geq 3$ are stable, with stability 
becoming progressively stronger as $n$ grows.

The eigenmodes also admit a clean probabilistic interpretation,
which is fleshed out in \cref{sec:cumulant-interpretation}; here we summarize the 
main points.
A perturbation $\mu \approx \mu_\infty + \varepsilon \mathfrak{U}_n$ 
of the Wiener measure modifies precisely its $n$-th 
cumulant at order $\varepsilon$, leaving all other cumulants 
unchanged at leading order; a consequence of the localization of 
$\mathfrak{U}_n$ in the $n$-th Wiener chaos.
More specifically, 
the unstable modes $\mathfrak{U}_0, \mathfrak{U}_1$ and the 
marginal mode $\mathfrak{U}_2$ correspond, respectively, to 
perturbations of the total mass, the mean, and the covariance 
of the Wiener measure.
To obtain linearly stable perturbations, and thereby 
convergence towards the fixed point, 
one must remove the spectrally unstable and marginal modes.
Equivalently, stable 
perturbations are those that preserve total mass, mean and 
covariance, deviating from the Wiener measure only in cumulants 
of order $n \geq 3$. 
The spectrum $\{\lambda_n\}_{n \geq 3}$ 
describes how each such cumulant contracts under iteration, with 
higher cumulants decaying faster.
\bigskip

Item (iii) of \cref{thm:intro-spec} provides a rather abstract 
characterization of the basin of attraction of $\mu_\infty$, 
identifying linearly stable perturbations with linear 
combinations of the eigenvectors $\{\mathfrak{U}_n\}_{n \geq 3}$. 
To complement this general picture, one is naturally led to 
ask for a more concrete description of the basin, and the 
question then becomes: which specific measures actually belong 
to it?
We find the answer in two distinct classes of probability measures.
The first corresponds to the family of Gibbs measures in
\eqref{eq:tilt-intro} for $n\geq 3$,
which are treated formally in \cref{sec:qft}.
The second consists of random walk approximations of the
Brownian fixed point.
In what follows, we focus on the random walk
approximations and state the corresponding results.

Let $\zeta$ be a random variable with $\mathbb{E}[\zeta] = 0$ 
and $\mathbb{E}[\zeta^2] = 1$, and let 
$\{\zeta_j\}_{j=1}^{2^N}$ be a sequence of i.i.d. copies of 
$\zeta$. Define the rescaled partial sums
\begin{equation*}
    S_k^{(N)} = 2^{-N/2} \sum_{j=1}^k \zeta_j, 
    \qquad k = 0, 1, \ldots, 2^N,
\end{equation*}
and let $W_N : [0,1] \to \mathbb{R}$ be the continuous path 
obtained by linearly interpolating the values $S_k^{(N)}$ 
at the dyadic grid points.
We denote its law on $\Omega$ by $\widetilde\mu_N$.
By the classical Donsker invariance principle, 
$\widetilde\mu_N$ converges weakly to the 
Wiener measure $\mu_\infty$ as $N \to \infty$.
Donsker's theorem is traditionally interpreted
as a functional CLT, 
lifting the convergence of normalized 
sums of random variables to that of random continuous paths.

This naturally invites an RG interpretation,  
where the sequence $(\widetilde\mu_N)_{N \geq 1}$ 
represents an orbit converging to the fixed point attractor 
$\mu_\infty$. In this case, universality is the 
statement that this convergence is largely insensitive to the 
law of $\zeta$, that is, any centered unit-variance distribution yields 
the same large-scale Brownian limit $\mu_\infty$, 
with $\zeta$ playing the role of a microscopic degree of 
freedom whose particular details are washed away in the scaling 
limit.

Our next result not only substantiates this vision but refines 
it by uncovering a second, finer layer of universality hidden 
in the invariance principle: not only is the convergence towards 
the fixed point universal, but the entire hierarchy of leading 
corrections to the Brownian limit is also universal, governed 
by the eigenmodes $(\lambda_n, \mathfrak{U}_n)$ of $\mathscr{L}$ 
featured in \cref{thm:intro-spec}.

\begin{theorem}[Universality of corrections, simplified form]
    \label{thm:intro-universality}
    Assume that $\zeta$ is a centered random variable with unit variance
    satisfying the sub-Gaussian condition
    \begin{equation*}
        \mathbb{E}\left[e^{t\zeta}\right]
        \leq e^{\sigma^2 t^2/2},
        \qquad t \in \mathbb{R},
      \end{equation*}
    for some constant $\sigma > 0$.
    Then the following holds:
    \begin{enumerate}[label=(\roman*)]
        \item \textbf{Cumulant expansion.} 
        For suitable test functions $h$,
        let 
        $\mathcal{E}(h) = \exp\!\left(\int_0^1 h(t)\,dW_t - \tfrac{1}{2}|h|_{L^2}^2\right)$
        denote the Wick exponential, where $W$ is a 
        standard Brownian motion.
        Then, for every integer $M\ge 3$,
        \begin{equation*}
            \log \mathbb{E}_{\widetilde\mu_N}\!\left[\mathcal{E}(h)\right]
            = \sum_{m=3}^{M} \frac{\kappa_m(\zeta)}{m!}\,
            \lambda_m^N\,\pp{\mathfrak{U}_m}{\mathcal{E}(h)} 
            + o(\lambda_M^N),
            \qquad N \to \infty,
        \end{equation*}
        where $\kappa_m(\zeta)$ is the $m$-th cumulant of $\zeta$ 
        (see \cref{thm:cumulant-expansion}).

        \item \textbf{Leading correction.} 
        Fix $m \geq 3$, and suppose that the moments of $\zeta$
        match those of a standard Gaussian up to order $m-1$,
        that is,
        \begin{equation}\label{eq:moment-match-intro}
          \mathbb{E}[\zeta^r] = \mathbb{E}[Z^r],
          \qquad r = 1, \ldots, m-1,
        \end{equation}
        where $Z \sim \mathcal{N}(0,1)$.
        Then, for every stochastic test functional $\Phi \in (\S)$,

        \begin{equation*}
            \mathbb{E}_{\widetilde\mu_N}[\Phi] 
            - \mathbb{E}_{\mu_\infty}[\Phi]
            = \frac{\kappa_m(\zeta)}{m!}\,\lambda_m^N\,
            \pp{\mathfrak{U}_m}{\Phi} + o(\lambda_m^N),
            \qquad N \to \infty,
        \end{equation*}
        (see \cref{thm:leading-correction}).
    \end{enumerate}
\end{theorem}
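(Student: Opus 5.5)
Both parts come from an explicit computation of the S-transform of $\widetilde\mu_N$, that is, of $\mathbb{E}_{\widetilde\mu_N}[\mathcal{E}(h)]$. Under $\widetilde\mu_N$ the path $W_N$ is piecewise linear with increments $2^{-N/2}\zeta_j$ on the dyadic intervals $I_j=[(j-1)2^{-N},j2^{-N}]$. Hence $\int_0^1 h\,dW_N=\sum_j 2^{-N/2}\zeta_j\,\bar h_j$, where $\bar h_j=2^N\int_{I_j}h$ is the average of $h$ over $I_j$. Independence then gives
\[
\log \mathbb{E}_{\widetilde\mu_N}[\mathcal{E}(h)]=\sum_{j=1}^{2^N}\Bigl(K_\zeta\bigl(2^{-N/2}\bar h_j\bigr)-\tfrac12 2^{-N}\bar h_j^2\Bigr)+\tfrac12\bigl(2^{-N}\textstyle\sum_j\bar h_j^2-|h|_{L^2}^2\bigr),
\]
where $K_\zeta(s)=\log\mathbb{E}e^{s\zeta}=\sum_{m\ge2}\kappa_m s^m/m!$. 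Because $\kappa_2=1$, the Gaussian term cancels, and what remains is $\sum_{m\ge3}\frac{\kappa_m}{m!}2^{-Nm/2}\sum_j\bar h_j^m$ plus the variance defect $\tfrac12(\|P_Nh\|^2-\|h\|^2)$, where $P_N$ denotes projection onto piecewise constants.

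Next, identify $\pp{\mathfrak{U}_m}{\mathcal{E}(h)}$. From the construction of $\mathfrak{U}_m$ in \cref{sec:wick-power}, its S-transform is $\int_0^1 h(t)^m\,dt$, since the kernel is the diagonal delta $\int\delta_t^{\otimes m}dt$. Now write $2^{-Nm/2}\sum_j\bar h_j^m=2^{N(1-m/2)}\cdot 2^{-N}\sum_j\bar h_j^m=\lambda_m^N\bigl(\int h^m+O(2^{-N})\bigr)$ for Lipschitz (test-class) $h$, since Riemann sums of averages converge at rate $2^{-N}$. The variance defect is $O(2^{-2N})$ for smooth $h$, because $\|h-P_Nh\|^2\sim 2^{-2N}$. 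Both error terms are therefore $O(2^{-N})$, which is $o(\lambda_M^N)$ only if $\lambda_M>1/2$, i.e.\ $M<4$. To do better, the dyadic Riemann-sum errors would have to be expanded further. I would handle this with an Euler--Maclaurin-type expansion, or else observe that these deterministic discretization errors are exactly the contributions the paper attributes to the higher modes, so that they are absorbed in the precise statement of \cref{thm:cumulant-expansion} (for instance by restricting the class of $h$, or by comparing with $\widetilde\mu_N$ defined via the cascade). This matching of discretization error against $\lambda_M^N$ is the step I expect to be the main obstacle, and I would first check the precise formulation the paper uses.

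Convergence of the cumulant series comes from sub-Gaussianity. It gives analyticity of $K_\zeta$ near $0$ with $|\kappa_m|\le C^m m^{m/2}$, hence $|K_\zeta(s)-\sum_{m\le M}\kappa_m s^m/m!|\le C|s|^{M+1}$ for $|s|\le 1$. With $s=2^{-N/2}\bar h_j$, summing over the $2^N$ values of $j$ gives a tail of order $2^{N(1-(M+1)/2)}=\lambda_{M+1}^N=o(\lambda_M^N)$. Combined with the identification $\pp{\mathfrak{U}_m}{\mathcal{E}(h)}=\int_0^1 h^m$, this yields (i).

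For (ii), moment matching up to order $m-1$ gives $\kappa_r(\zeta)=\kappa_r(Z)=0$ for $3\le r\le m-1$. Applying (i) with $M=m$ then gives, after exponentiating,
\[
S\widetilde\mu_N(h)-S\mu_\infty(h)=\frac{\kappa_m}{m!}\lambda_m^N\int h^m+o(\lambda_m^N),
\]
uniformly for $h$ in bounded sets of the complexified test space. Dividing by $\lambda_m^N$, the functionals $V_N=\lambda_m^{-N}(\widetilde\mu_N-\mu_\infty)$ have S-transforms that converge and are uniformly bounded of the form $C\exp(c|h|_p^2)$. By the Potthoff--Streit characterization and its convergence corollary, $V_N\to\frac{\kappa_m}{m!}\mathfrak{U}_m$ in $(\S)^*$. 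Pairing with any $\Phi\in(\S)$ then yields (ii). The uniform S-transform bound follows from the sub-Gaussian estimate $\mathbb{E}e^{s\zeta}\le e^{\sigma^2s^2/2}$, which holds also for complex $s$ up to constants.
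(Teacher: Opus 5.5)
Your architecture matches the paper's. You expand $\log S\widetilde\mu_N$ from the product formula, Taylor-expand $\log\E[e^{w\zeta}]$ with an $O(|w|^{M+1})$ remainder, identify $S\mathfrak{U}_m(\xi)=\int_0^1\xi(t)^m\,dt$, and apply the Potthoff--Streit convergence theorem to $\lambda_m^{-N}(\widetilde\mu_N-\mu_\infty)$. However, the step you flag as the main obstacle is a genuine gap, and your suggested remedies either fail or are left unspecified.

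\textbf{Part (i).} The discretization errors are not multiples of $\int_0^1 h^m\,dt$, so they cannot be ``absorbed into higher modes''. For smooth $h$, the variance defect $-\tfrac12|(I-P_N)h|_{L^2}^2$ is asymptotically $-\tfrac{1}{24}\,2^{-2N}\int_0^1\dot h(t)^2\,dt$, which is of exact order $\lambda_6^N$. An Euler--Maclaurin expansion would therefore only exhibit terms that make the statement false for $M\ge 6$.

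The missing idea is that $\S$ is not a space of smooth or Lipschitz functions. A $C^1$ function with $\dot h\not\equiv0$ has $\sum_k|\ps{h,e_{n,k}}_{L^2}|^2\asymp 2^{-2n}$, so it is not even in $\S_1$. Instead, $\S$ is defined spectrally through the Haar basis with $\alpha_n=\sqrt{1+2^{2n}}$. By \cref{thm:projection}, the projection $P_N$ onto dyadic step functions is exactly the truncation of the Haar expansion below level $N$. Hence, for every $p\ge0$,
\[
|(I-P_N)\xi|_{L^2}^2=\sum_{n\ge N}\sum_k|\ps{\xi,e_{n,k}}_{L^2}|^2\le\alpha_N^{-2p}|\xi|_p^2 .
\]
Consequently the variance defect is $O(2^{-2pN})$. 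The Riemann-sum error satisfies $|\int_0^1(P_N\xi)^m-\int_0^1\xi^m|\le m|\xi|_{L^\infty}^{m-1}|(I-P_N)\xi|_{L^2}=O(2^{-pN})$, by \cref{lem:projection_linfty} and \cref{prop:Sp-Linfty}. Taking $p$ large makes both errors $o(\lambda_M^N)$ for every $M$. So restricting to $h\in\S$ is exactly what is needed, but the mechanism is this Haar-scale decay, not smoothness.

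A minor point: your bound $|\kappa_m|\le C^m m^{m/2}$ is false in general. For Rademacher $\zeta$, $\log\cosh$ has radius of convergence $\pi/2$, so $\kappa_m$ grows factorially. You only need analyticity of $\log\E[e^{w\zeta}]$ on some small disc, and that holds because the moment generating function is entire and equals $1$ at $0$.

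\textbf{Part (ii).} \cref{thm:S-convergence} requires a bound $|\lambda_m^{-N}(F_N(\xi)-1)|\le K_1e^{K_2|\xi|_p^2}$ valid for all $\xi\in\S_{\mathbb C}$, with constants independent of $N$. This follows from neither of your two sources:
\begin{itemize}
\item Part (i) is pointwise in $\xi$, and applies only once $2^{-N/2}|\xi|_{L^\infty}$ lies in the disc where $\log\E[e^{w\zeta}]$ is holomorphic.
\item The bare sub-Gaussian estimate bounds $|F_N|$, but supplies no factor $\lambda_m^N$ to compensate for the division by $\lambda_m^N$.
\end{itemize}

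The paper proves this bound separately in \cref{lem:g-minus-one,lem:uniform-bound}. It writes $F_N(\xi)=e^{-Q((I-P_N)\xi)/2}\prod_{k} g(Q(\xi,\phi_{N,k}))$ with $g(z)=e^{-z^2/2}\E[e^{z\zeta}]$, and then:
\begin{itemize}
\item shows the global bound $|g(z)-1|\le C|z|^{m}\,e^{\frac{1+\sigma^2}{2}|z|^2}$ on all of $\mathbb C$, using moment matching near $0$ and sub-Gaussian growth elsewhere;
\item telescopes the product;
\item uses $\sum_k|Q(\xi,\phi_{N,k})|^m=\lambda_m^N\int_0^1|P_N\xi|^m\le C\lambda_m^N|\xi|_p^m$;
\item uses $|Q((I-P_N)\xi)|\le\alpha_N^{-2p}|\xi|_p^2\le\lambda_m^N|\xi|_p^2$, which again relies on the Haar structure of $\S$.
\end{itemize}

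Finally, $\E_{\widetilde\mu_N}[\Phi]$ has no a priori meaning for $\Phi\in(\S)$, because $\widetilde\mu_N$ is singular with respect to $\mu_\infty$. You must first realize $\widetilde\mu_N$ as a Hida distribution, which the paper does by applying the characterization theorem to $F_N$.
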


In the infinite-dimensional context, the Wick exponential 
$\mathcal{E}(h)$ plays the role of the standard exponential, 
and the pairing $\mathbb{E}_{\widetilde\mu_N}[\mathcal{E}(h)]$ 
is the analogue of a moment generating function. Its logarithm, 
therefore, plays the role of a cumulant generating functional 
for the rescaled random walk law $\widetilde\mu_N$ on the space 
of continuous paths. The asymptotic cumulant expansion in (i) thus reveals a 
remarkable connection between the Donsker approximations and 
the spectral structure of the linearized RG operator: 
the cumulant generating functional of $\widetilde\mu_N$ 
is organized into a hierarchy of contributions arising
from the eigenpairs $(\lambda_n, \mathfrak{U}_n)$, 
with each 
eigenmode $\mathfrak{U}_m$ exciting the corresponding cumulant 
and the associated
eigenvalue $\lambda_m$ dictating its decay rate.
Moreover, the particular cumulant structure of $\zeta$ only
enters as a prefactor 
independent of $N$.
This is the second layer of universality expressed 
in cumulant terms: the non-Gaussian cumulants of 
$\widetilde\mu_N$ vanish at universal rates, prescribed entirely 
by the eigenvalues of $\mathscr{L}$, with the eigenmodes 
$\mathfrak{U}_m$ providing the corresponding universal shapes 
of the corrections.

In part (ii), this universality is investigated directly 
through the difference between expectations of test 
functionals under $\widetilde\mu_N$ and under $\mu_\infty$. 
More specifically, we identify which eigenmode controls the 
leading correction: under the moment-matching condition 
\eqref{eq:moment-match-intro}, the lower eigenmodes 
$\mathfrak{U}_3, \ldots, \mathfrak{U}_{m-1}$ are not excited, 
and the leading correction is determined by $\mathfrak{U}_m$ 
at rate $\lambda_m^N$. The generic case is $m = 3$, meaning that any
centered, unit-variance $\zeta$ with $\kappa_3(\zeta) \neq 0$ 
produces a leading correction governed by the eigenvalue 
$\lambda_3 = 1/\sqrt{2}$ and the eigenmode $\mathfrak{U}_3$. 
Moment matching to higher orders systematically switches off 
lower-order eigenmodes and accelerates the convergence rate 
from $\lambda_3^N$ to $\lambda_m^N$, climbing the eigenmode 
hierarchy.
The role of the microscopic dynamics, 
described by the random variable $\zeta$, is again restricted 
to a scalar prefactor, not affecting the convergence rate.

\introsec{Extensions and Applications}
The RG framework developed here can serve as a foundation for 
several lines of further investigation. 
Below, we comment on a select few.

\paragraph{SDEs driven by Brownian motion.} 
A natural first extension of the present framework is to study 
not only the driving noise itself but the dynamics it generates. 
Given a stochastic differential equation
\begin{equation}\label{eq:intro-sde}
    dX_t = b(X_t)\,dt + \sigma(X_t)\,dW_t,
\end{equation}
the solution $X_t$ is a path-valued functional of the driving 
Brownian motion $W_t$. The classical Wong--Zakai theorem 
\cite{wong1965convergence,wong1965relation}
asserts that, under appropriate assumptions, 
smooth approximations of $W$ produce solutions converging to 
those of the Stratonovich SDE. 
The RG analysis developed here applies directly to the driver 
$W$ and allows one to recast Wong--Zakai-type results into a 
dynamical-systems perspective: smooth approximations of $W$ 
define a family of path-space measures lying in the basin of 
attraction of the Stratonovich fixed point, all flowing to it 
under coarse-graining.
This viewpoint becomes even more illuminating 
when Wong--Zakai fails. In this case, approximations of $W$ may produce,
under coarse-graining, solutions $X$ with a nontrivial It\^o correction.
As such, they flow to a different fixed point 
--- an It\^o fixed point --- with its own basin of attraction. 
The RG framework thus offers a systematic tool to classify
the resulting landscape of fixed points, 
providing a unified universality picture for SDE solutions.

\paragraph{Fractional Gaussian fields.} 
A second direction concerns the extension of the driving noise 
itself. The Brownian fixed point of $\mathcal{R}$ is a particular 
instance of a self-similar Gaussian process --- a fractional 
Gaussian field (FGF) with Hurst index $H = 1/2$ \cite{lodhia2016fractional}.
Generalizing it to an arbitrary FGF in spatial dimension $d$ and Hurst 
index $H \in (0, 1)$ yields a one-parameter family of RG operators 
$\mathcal R^{(s)}$ parametrized by $s=H-d/2$, each with its own Gaussian fixed point.
This extension accommodates many random fields of interest in 
mathematical physics, most notably fractional Brownian motion 
($d = 1$, $H \in (0, 1)$) and its multidimensional extensions, 
Gaussian free fields ($d \geq 1$, $H = (2-d)/2$), and 
log-correlated Gaussian fields ($d \geq 1$, $H = 0$).
Such an extension would deliver a unified universality framework 
covering a broad range of physical models whose scaling limits are 
self-similar Gaussian fields.

\paragraph{Spontaneous stochasticity.} 
A more physically motivated direction concerns the phenomenon 
of spontaneous stochasticity in turbulent flows. 
Spontaneous stochasticity is the phenomenon in which 
deterministic dynamical systems exhibit genuinely random 
behavior in the limit of vanishing regularization, in the 
absence of any external noise. 
In the particular context of SDEs, one considers 
\eqref{eq:intro-sde} with $b$
non-Lipschitz (only Hölder continuous) and with 
noise term of the form $\varepsilon\sigma$ 
where $\varepsilon\to0$.
The system corresponding to $\varepsilon=0$ 
has non-unique solutions, while the limiting procedure 
$\varepsilon\to 0$ selects a stochastic process supported 
on such solutions \cite{barlet2026spontaneous}.

Spontaneous stochasticity has been observed in
turbulent flows modelled by vectorial fractional Gaussian 
fields, where Lagrangian trajectories fail to be unique, with 
the resulting probabilistic spread persisting in the inviscid 
limit \cite{gawedzki2000phase,le2002integration,considera2023spontaneous,considera2026transport}. 
More recently, RG methods have been applied to investigate this phenomenon in 
simplified models, providing a dynamical-systems perspective 
on the emergence of spontaneous stochasticity 
as 
a chaotic attractor on the space of Markov kernels
\cite{mailybaev2022spontaneous,mailybaev2026renormalization,mailybaev2026renormalization2}. 
Verifying this scenario rigorously remains a challenging 
problem, deeply connected to
the well-posedness of singular SDEs and SPDEs. 
We conjecture that a suitable extension of the framework 
developed here will allow for chaotic attractors similar to 
those in \cite{mailybaev2022spontaneous,mailybaev2026renormalization,mailybaev2026renormalization2}, 
paving the way to a rigorous mathematical theory of spontaneous stochasticity 
with ramifications for the solution theory of singular 
differential equations.

\introsec{Organization of the paper}\label{sec:intro-org}

The paper is organized as follows. 
\Cref{sec:preliminaries} collects the necessary background on 
Wiener space, Wiener chaos, and Hida white noise analysis, 
including the construction of the Gelfand triple 
$(\mathcal{S}) \subset L^2(\Omega,\mu_\infty) \subset (\mathcal{S})^*$ and the 
$S$-transform. \Cref{sec:RG operator} introduces the RG 
operator $\mathcal{R}$ at the level of probability measures 
on Wiener space, derives the cascade relation, and identifies 
the Brownian fixed point and its linearization $\mathscr{L}$. 
\Cref{sec:dual-operator} develops the dual operator 
$\mathscr{L}^*$ acting on Hida test functionals, including its 
second-quantization formula, and uses it to extend $\mathscr{L}$ 
itself to Hida distributions. \Cref{sec:gen-spec} establishes 
the generalized spectral theory of $\mathscr{L}$: the spectral 
gap of \cref{thm:intro-spec}~(i) and the diagonal-concentration 
result of \cref{thm:intro-spec}~(ii). \Cref{sec:wick-power} 
constructs the Wick-power eigenvectors $\mathfrak{U}_n$ and 
proves \cref{thm:intro-spec}~(iii). 
\Cref{sec:cumulant-interpretation} develops the probabilistic 
interpretation of the eigenmodes in terms of cumulants and 
Wick perturbations of the Brownian fixed point.
\Cref{sec:donsker} is 
devoted to the Donsker approximations: the random walks $W_N$ 
are realized as Hida distributions, and the cumulant expansion 
of \cref{thm:intro-universality}~(i) and the leading-correction 
result of \cref{thm:intro-universality}~(ii) are established. 
\Cref{sec:qft} carries out the formal application to 
Quantum Field Theory, recovering the 
relevant/marginal/irrelevant classification from the 
spectral data of $\mathscr{L}$.

\section{Preliminaries}\label{sec:preliminaries}
\subsection{Wiener space}\label{sec:Wiener_space}
Let $\Omega := C_{0}([0,1])$
be the Banach space of real-valued continuous functions 
\(\omega : [0,1] \to \mathbb{R}\) satisfying
\(\omega(0) = 0\), 
equipped with the supremum norm
$
|\omega|_\infty := \sup_{t \in [0,1]} |\omega(t)|.
$
Denote by \(\mathscr{B}\) the Borel \(\sigma\)-algebra generated by the norm topology on \(\Omega\).
Let \(\mu_\infty\) be the Wiener measure on \((\Omega, \mathscr{B})\); that is, the law of a standard Brownian motion on \([0,1]\).
The triple
$(\Omega, \mathscr{B}, \mu_\infty)$
thus defines the canonical probability space of Brownian motion.
The coordinate  process $W_t(\omega) := \omega(t)$ 
is then a standard Brownian motion under $\mu_\infty$.

Let $\frak H$ be the Cameron--Martin space associated with the Wiener measure 
$\mu_\infty$, namely
\begin{equation}
    \frak H = \left\{ h \in \Omega : h \text{ is absolutely continuous and } \dot{h} \in L^2([0,1]) \right\}.
\end{equation}
Here, $\dot{h}$ is the time derivative of $h$.
The space $\frak H$ is a Hilbert space equipped with the inner product
\begin{equation}
    \langle h_1, h_2 \rangle_{\frak H} = \int_0^1 \dot{h}_1(t) \dot{h}_2(t) \,dt.
\end{equation}
Observe that with this choice of inner product, $\frak H$ 
is isometrically isomorphic to the space $L^2([0,1])$
of square integrable functions (see \cite[pp.~32]{nualart2006malliavin}), that is 
\begin{equation*}
   \frak H\simeq L^2= L^2([0,1]).    
\end{equation*}
Indeed, the map $h\in L^2([0,1])\mapsto \int_0^\cdot h \in \frak H$ 
is a linear isometry. It is also surjective 
by the fundamental theorem of calculus.
Its inverse is again an isometry (and therefore continuous), 
and hence it is a Hilbert-space isomorphism.
In view of this correspondence, we shall identify $\frak H$ with 
$L^2([0,1])$ in what follows.
Under this identification, 
the triple $(\Omega,L^2([0,1]),\mu_\infty)$ 
with the continuous embedding 
$L^2([0,1])\hookrightarrow \Omega$
is the classical Wiener space associated with
Brownian motion.

\subsection{Wiener chaos expansion}
Consider the $L^2$-indexed stochastic process $\{\mathcal W(h) : h\in L^2([0,1])\}$ 
defined on $(\Omega, \mathscr{B}, \mu_\infty)$ 
by the Wiener--Itô stochastic integral
\begin{equation}\label{eq:isonormal}
    \mathcal{W}(h)\coloneqq  \int_{0}^{1} h(t)\,dW_t.
\end{equation}
For each $h\in L^2([0,1])$, the random variable $\mathcal{W}(h)$ is 
centered Gaussian, and the process $\mathcal W$ has covariance
$\E_{\mu_\infty}\!\!\left(\mathcal{W}(h)\mathcal{W}(u)\right)=\ps{h,u}_{L^2}$,
for all $h,u\in L^2([0,1])$.
Consequently, $\mathcal W$ is an \emph{isonormal Gaussian process} 
on $L^2$ (see \cite[Definition~1.1.1]{nualart2006malliavin}).

Let $\mathcal{F}_\mathcal{W}=\sigma\bigl(\mathcal{W}(h):h\in L^2([0,1])\bigr)$
and $(L^2)= L^2\left(\Omega, \mathcal{F}_\mathcal{W},\mu_\infty\right)$.
Then every $\Phi\in (L^2)$ 
admits a unique Wiener chaos expansion \cite[Theorem~1.1.2]{nualart2006malliavin}
\begin{equation}\label{eq:chaos}
    \Phi=\sum_{n=0}^{\infty} I_n(f_n)\quad \text{in } (L^2),
\end{equation}
where, for $n\ge1$, $f_n\in L^2([0,1])^{\hat\otimes n}$ are symmetric kernels
uniquely determined by $\Phi$, with $f_0=\E_{\mu_\infty}[\Phi]$.
For $n\geq1$,
$I_n$ denotes the $n$-fold stochastic integral with respect to 
the isonormal process $\mathcal{W}$, and $I_0$ is the identity map on $\R$.
Here $\hat{\otimes}$ denotes the symmetric Hilbert tensor product.

\subsection{Hida calculus}\label{sec:hida}

In this subsection we construct the test and
distribution spaces relevant to white noise analysis, 
following the general framework of 
\cite[Chapter 4.2]{kuo2018white}.
Let us comment on two features 
distinguishing our construction from the
classical one.
Traditionally, white noise analysis is formulated on the
canonical white noise probability space, namely 
the Schwartz space of tempered distributions
endowed with the white noise measure
constructed via the Bochner--Minlos theorem
\cite{hida1993white,obata1994white}.
Here, however, we work on the Wiener space
$(\Omega, L^2, \mu_\infty)$.
The framework of \cite[Chapter~4.2]{kuo2018white}
is general enough that the classical construction carries over
upon replacing the Schwartz nuclear space with a
nuclear subspace of $L^2$, obtained as
the projective limit of the domains of a suitable
operator $A$.

The other distinguishing feature lies precisely in
the choice of $A$. In the standard setting, one
takes $A$ to be the harmonic oscillator Hamiltonian,
whose eigenfunctions (the Hermite functions)
determine the nuclear topology of the Schwartz space
(see \cite[Chapter 3]{kuo2018white}).
In the present work, 
we instead define $A$ spectrally,
by declaring the Haar wavelets to be its eigenfunctions, 
and the associated eigenvalues to grow
dyadically in the resolution scale.
This particular choice is motivated by the structure
of the RG operator introduced in
\cref{sec:RG operator}, which acts by splitting and
rescaling paths on the left and right halves of
the unit interval --- precisely the dyadic
decomposition encoded by the Haar system.
This compatibility between $A$ and the Haar system
gives rise to natural functional spaces 
in which
many of the analytical estimates in later sections
become rather straightforward.

We also introduce the $S$-transform, one of the
main tools in Hida calculus, whose definition and
basic properties are readily adapted to our setting.
The main theorems of the classical $S$-transform
theory
--- notably the characterization theorem
(\cref{thm:characterization}) and the associated 
convergence criteria
(\cref{thm:S-convergence}) --- remain at our disposal,
and are put to essential use in \cref{sec:donsker}.

\subsubsection{Deterministic test and distribution spaces} \label{sec:deterministic}

Let $e_{0,0}$ denote the mother Haar wavelet, defined by
\begin{equation}\label{eq:mother-haar}
  e_{0,0}(t)\coloneqq
  \mathbf{1}_{[0,1/2)}(t)-\mathbf{1}_{[1/2,1)}(t),
\end{equation}
and for each $n\ge 0$ and $0\le k\le 2^n-1$ define
\[
  e_{n,k}(t)\coloneqq 2^{n/2}\,e_{0,0}(2^n t-k),
  \qquad t\in[0,1].
\]
Let $\mathbf{1}$ denote the indicator function of the interval $[0,1]$.
Then, the family
$\{\mathbf{1}\}\cup\{e_{n,k}\}_{n\ge 0,\,0\le k\le 2^n-1}$
forms a complete orthonormal basis of $L^2([0,1])$,
and any $f\in L^2([0,1])$ can be represented as 
\[
  f = \ps{f,\, \mathbf{1}}_{L^2}\mathbf{1} + \sum_{n=0}^{\infty}\sum_{k=0}^{2^n-1} \ps{f,\, e_{n,k}}_{L^2}\, e_{n,k}.
\]
Define the operator $A$ on $L^2([0,1])$ by
\[
  Af \coloneqq 
  \alpha_{-1}\,\ps{f,\, \mathbf{1}}_{L^2}\mathbf{1} 
  +
  \sum_{n=0}^{\infty}
  \sum_{k=0}^{2^n-1} \alpha_n\,
  \ps{f,\, e_{n,k}}_{L^2}\, e_{n,k},
  \qquad
  \alpha_n \coloneqq \sqrt{1+2^{2n}},\quad n\ge -1,
\]
with domain
\[
  \operatorname{Dom}(A) = \Bigl\{
  f \in L^2([0,1]) :
  \alpha_{-1}^2\,|\ps{f,\, \mathbf{1}}_{L^2}|^2 +
  \sum_{n=0}^{\infty}\sum_{k=0}^{2^n-1}
  \alpha_n^2\,
  |\ps{f,\, e_{n,k}}_{L^2}|^2
  < \infty \Bigr\}.
\]
By construction, $A$ is a positive
operator with eigenfunctions 
$\{\mathbf{1}\}\cup\{e_{n,k}\}_{n\ge 0,\,0\le k\le 2^n-1}$ and eigenvalues $\{\alpha_n\}_{n\ge -1}$ 
satisfying $1<\alpha_{-1}<\alpha_0<\alpha_1<\cdots$.
For $p\in\mathbb{R}$, the powers $A^p$ 
are defined via spectral multiplication by $\alpha_n^p$ 
in a similar way.
In particular, the Hilbert--Schmidt norm of $A^{-1}$ 
is given by
\[
  \norm{A^{-1}}^2_{HS}
  =\alpha_{-1}^{-2}+\sum_{n=0}^\infty\sum_{k=0}^{2^n-1} \alpha_n^{-2}
  < \infty,
\]
so that $A^{-1}$ is a Hilbert--Schmidt operator on $L^2([0,1])$.


For each $p\ge 0$, define the norm
$$
  |f|_p \coloneqq |A^p f|_{L^2}
   = \left(\alpha_{-1}^{2p}\,|\ps{f,\, \mathbf{1}}_{L^2}|^2 +\sum_{n=0}^\infty\sum_{k=0}^{2^n-1} \alpha_n^{2p}\,|\ps{f,e_{n,k}}_{L^2}|^2\right)^{1/2},
$$
and set 
$$
\S_p\coloneq\operatorname{Dom}(A^p) = \bigl\{f\in L^2([0,1]) : |f|_p <\infty\bigr\}.
$$
For each $p\ge 0$, $(\S_p,|\cdot|_p)$ 
is a separable Hilbert space, 
and  $\S_p\subset \S_q$ whenever $p\geq q\geq0$.
Moreover, the inclusion maps
$$
  \S_{p+1} \hookrightarrow \S_p, \qquad p\ge 0,
$$
are Hilbert--Schmidt operators.
The projective limit
\begin{equation*}
  \mathcal{S} \coloneqq \bigcap_{p=0}^\infty \S_p,
\end{equation*}
is a nuclear countably Hilbert space 
(in particular, a Fr\'echet space)
endowed with the projective limit topology 
induced by the family of norms \(\{ | \cdot |_{p} \}_{p \geq0}\). 
That is, convergence in $\S$ 
means convergence in every $|\cdot|_p$ norm:
\[
f_k \to f \quad \text{in}\;\; \S \quad \Longleftrightarrow \quad | f_k - f |_{p} \to 0 \quad \text{for all } p \geq 0.
\]

The topological (continuous) duals of $\S$ and $\S_p$ are 
denoted by $\S'$ and $\S_p'$, respectively. 
It follows that $\S'$ is given by the inductive limit (see \cite[Chapter 2]{kuo2018white})
$$
\S'= \bigcup_{p=0}^\infty \S_p'.
$$
Moreover, if $p\geq q\geq 0$, any continuous linear functional
over $\S_q$ can be restricted to $\S_p$, leading to the inclusions
$\S'_q\subset\S'_p\subset \S'$ . 
The space $L^2([0,1])$ can be continuously embedded in $\S'_p$
via the continuous injection 
\[
\begin{aligned}
\iota_p : L^2([0,1]) &\hookrightarrow \S_p',\\
f &\mapsto \bigl(\varphi \mapsto \ps{f,\varphi}_{L^2}\bigr),
\qquad \varphi\in \S_p.
\end{aligned}
\]
Similarly, restricting $\iota_p(f)$ to $\S\subset \S_p$ yields a continuous
embedding $L^2([0,1])\hookrightarrow \S'$.
As a result, we obtain the Gelfand triple
$$
\S \subset L^2([0,1]) \subset \S',
$$
with the continuous injections 
$$
\S\subset\S_p  \subset L^2([0,1])\subset\S_p' \subset \S',
$$
for each $p\geq 0$.

Observe that we do not identify $\S_p'$ with $\S_p$, but rather with 
the negative-index space $\S_{-p}$.
More precisely, $\S_{-p}$ is defined as the Hilbert space completion of $L^2([0,1])$ 
with respect to the norm 
$$
|f|_{-p} \coloneqq |A^{-p} f|_{L^2},
$$
and the map
\begin{equation}\label{eq:iso}
    \begin{aligned}
        \Lambda_p:\S_{-p} &\to \S_p',\\
        f &\mapsto \bigl(\varphi \mapsto \ps{A^{-p}f,\,A^{p}\varphi}_{L^2}\bigr),
        \qquad \varphi\in \S_p,
    \end{aligned}
\end{equation}
is an isometric isomorphism. 
As a consequence of this correspondence
we identify $\S_p' \simeq \S_{-p}$.

The preceding development is sufficiently 
abstract to apply to any operator $A$ 
with suitable spectral properties.
We now collect a couple of results 
that rely on the particular structure 
of the Haar wavelets and our choice of $A$.
Specifically, we show that elements 
of $\S_p$, for $p>1/2$, are bounded functions 
defined everywhere on $[0,1]$ and 
not merely $L^2$-equivalence classes.

\begin{proposition}\label{prop:Sp-Linfty}
    Let $p>1/2$. Then for every $f\in\S_p$,
    the Haar series
    \[ \,\ps{f,\mathbf{1}}_{L^2}\mathbf{1}(t) +
    \sum_{n=0}^\infty 
    \sum_{k=0}^{2^n-1} 
    \ps{f,e_{n,k}}_{L^2}\, e_{n,k}(t)
    \]
    converges absolutely for every 
    $t\in[0,1]$.
    Moreover, there exists a finite constant $C_p$ 
    such that
    \begin{equation}\label{eq:Sp-Linfty}
        |f|_{L^\infty} 
        \leq C_p\, |f|_p.
    \end{equation}
    That is, $\S_p$ embeds continuously 
    into $L^\infty([0,1])$.
\end{proposition}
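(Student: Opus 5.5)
The proof exploits the disjoint supports of the Haar wavelets at each fixed level. For fixed $n\ge 0$ and $t\in[0,1]$, the functions $e_{n,k}$, $0\le k\le 2^n-1$, have pairwise disjoint supports (the dyadic intervals $[k2^{-n},(k+1)2^{-n})$). So at most one term of level $n$ is nonzero at $t$, and $|e_{n,k}(t)|\le 2^{n/2}$. Writing $c_{n,k}=\ps{f,e_{n,k}}_{L^2}$ and $c=\ps{f,\mathbf 1}_{L^2}$, we get, for every $t$,
\[
\sum_{k=0}^{2^n-1}|c_{n,k}|\,|e_{n,k}(t)| \le 2^{n/2}\max_k |c_{n,k}| \le 2^{n/2}\Bigl(\sum_{k}|c_{n,k}|^2\Bigr)^{1/2}=:2^{n/2}\beta_n .
\]

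Next, we sum over levels using Cauchy--Schwarz with the weights $\alpha_n^{p}$:
\[
\sum_{n\ge 0}2^{n/2}\beta_n=\sum_{n\ge0}\bigl(2^{n/2}\alpha_n^{-p}\bigr)\bigl(\alpha_n^{p}\beta_n\bigr)
\le\Bigl(\sum_{n\ge0}2^{n}\alpha_n^{-2p}\Bigr)^{1/2}\Bigl(\sum_{n\ge0}\alpha_n^{2p}\beta_n^2\Bigr)^{1/2}.
\]
Since $\alpha_n\ge 2^n$, we have $2^n\alpha_n^{-2p}\le 2^{n(1-2p)}$, which is summable exactly when $p>1/2$. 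This is the only place the hypothesis enters, and it is the one step needing care. The second factor is at most $|f|_p$. Adding the constant term $|c|\le \alpha_{-1}^{-p}\,\alpha_{-1}^{p}|c|\le |f|_p$ shows that the full Haar series converges absolutely at every $t$, with
\[
\sum |{\rm terms}(t)|\le C_p|f|_p,\qquad C_p=1+\Bigl(\sum_{n\ge0}2^{n(1-2p)}\Bigr)^{1/2}.
\]
The bound is uniform in $t$.

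Finally, we identify the pointwise sum $\tilde f(t)$ with $f$. The Haar series converges to $f$ in $L^2$, so some subsequence of its partial sums (ordered by level) converges a.e. to $f$. These partial sums also converge everywhere to $\tilde f$, by absolute convergence. Hence $\tilde f=f$ a.e., so $\tilde f$ is an everywhere-defined representative of $f$, and $|f|_{L^\infty}\le\sup_t|\tilde f(t)|\le C_p|f|_p$. This is \eqref{eq:Sp-Linfty}. Linearity of the bound then gives the continuous embedding $\S_p\hookrightarrow L^\infty([0,1])$.
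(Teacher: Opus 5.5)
Your proof is correct and follows essentially the paper's route. Both arguments rest on the same four ingredients: disjointness of the level-$n$ supports, the bound $|e_{n,k}|\le 2^{n/2}$, a Cauchy--Schwarz inequality weighted by $\alpha_n^{p}$, and summability of $2^{n(1-2p)}$ for $p>1/2$. The only difference is the order: you use disjointness first, level by level, and then apply Cauchy--Schwarz over levels, while the paper applies Cauchy--Schwarz jointly over all $(n,k)$ first. Your final step, identifying the pointwise sum with $f$ a.e.\ via an a.e.-convergent subsequence of partial sums, is a welcome detail that the paper leaves implicit.
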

\begin{proof}
    Fix $t\in[0,1]$.
    Multiplying and dividing each term 
    of the Haar expansion by $\alpha_n^p$ 
    and applying the Cauchy--Schwarz 
    inequality gives
    \begin{equation*}
        \begin{split}
            \alpha_{-1}^p|\ps{f,\mathbf{1}}_{L^2}|\,
            \alpha_{-1}^{-p}|\mathbf{1}(t)|
            &+\sum_{n=0}^\infty 
            \sum_{k=0}^{2^n-1}\alpha_n^p|\ps{f,e_{n,k}}_{L^2}| 
            \alpha_n^{-p} |e_{n,k}(t)|
            \\
            &\leq 
            \left(\alpha_{-1}^{2p}|\ps{f,\mathbf{1}}_{L^2}|^2
            +\sum_{n=0}^\infty \sum_{k=0}^{2^n-1}\alpha_n^{2p}|\ps{f,e_{n,k}}_{L^2}|^2\right)^{1/2}
            \\
            &\qquad\times
            \left(\alpha_{-1}^{-2p}\mathbf{1}(t)^2
            +\sum_{n=0}^\infty \sum_{k=0}^{2^n-1} \alpha_n^{-2p} e_{n,k}(t)^2 \right)^{1/2}
            \\
            &=|f|_p 
            \left(\alpha_{-1}^{-2p}\mathbf{1}(t)^2
            +\sum_{n=0}^\infty \sum_{k=0}^{2^n-1} \alpha_n^{-2p} e_{n,k}(t)^2 \right)^{1/2}.
        \end{split}
    \end{equation*}
    At each resolution level $n$, the Haar 
    wavelets $\{e_{n,k}\}_{k=0}^{2^n-1}$ have 
    mutually disjoint supports, 
    so at most one term in the inner sum 
    is nonzero.
    Since $|e_{n,k}(t)|\leq 2^{n/2}$ 
    and 
    $\alpha_n^{-2p}\leq 2^{-2p n}$, 
    we obtain 
    \[
        \alpha_{-1}^{-2p}
        +\sum_{n=0}^\infty
        \sum_{k=0}^{2^n-1} 
        \alpha_n^{-2p}\, e_{n,k}(t)^2 
        \leq \alpha_{-1}^{-2p}
        +\sum_{n=0}^\infty 
        2^{-2pn}\cdot 2^n 
        = \alpha_{-1}^{-2p}
        +\frac{1}{1-2^{1-2p}}
        < \infty,
    \]
    for $p>1/2$, independent of $t$.
    This establishes absolute
    convergence, as well as 
    the bound \eqref{eq:Sp-Linfty}.
\end{proof}

The pointwise convergence of the Haar 
series in \cref{prop:Sp-Linfty} allows 
us to pass from $L^2$-equivalence classes 
to canonical, everywhere-defined representatives.

\begin{corollary}\label{cor:pointwise-rep}
    For $p>1/2$, every $f\in\S_p$ admits 
    a canonical bounded representative, 
    defined at every point of $[0,1]$ 
    by the pointwise limit of its Haar series.
\end{corollary}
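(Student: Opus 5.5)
The corollary follows almost directly from \cref{prop:Sp-Linfty}. The plan is to define the representative explicitly, then check that it lies in the given $L^2$-class, that it is bounded, and that it does not depend on any choices.

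\textbf{Definition.} Fix $p>1/2$ and $f\in\S_p$. For each $t\in[0,1]$ set
\[
\tilde f(t) \coloneqq \ps{f,\mathbf{1}}_{L^2}\mathbf{1}(t)+\sum_{n=0}^\infty\sum_{k=0}^{2^n-1}\ps{f,e_{n,k}}_{L^2}\,e_{n,k}(t).
\]
By \cref{prop:Sp-Linfty} this series converges absolutely at every $t$, so $\tilde f$ is a well-defined real-valued function on all of $[0,1]$.

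\textbf{Boundedness.} The Cauchy--Schwarz estimate in that proof is uniform in $t$. It gives $\sup_{t}|\tilde f(t)|\le C_p|f|_p$, so $\tilde f$ is bounded.

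\textbf{Same $L^2$-class.} I would argue by partial sums. Let $f_M$ denote the truncated Haar expansion up to level $M$. On one hand, $f_M\to f$ in $L^2$ by completeness and orthonormality of the Haar system. On the other hand, $f_M(t)\to\tilde f(t)$ for every $t$, by the absolute convergence above. Along a subsequence, $f_M\to f$ almost everywhere. Hence $\tilde f=f$ a.e., so $\tilde f$ is a representative of the class $f$.

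A slightly stronger route is also available: the tail bound in the proof of \cref{prop:Sp-Linfty} gives
\[
\sup_t|\tilde f(t)-f_M(t)|\le |f|_p\Big(\sum_{n>M}2^{(1-2p)n}\Big)^{1/2}\to0,
\]
so the convergence is in fact uniform.

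\textbf{Canonicity.} The coefficients $\ps{f,e_{n,k}}_{L^2}$ and $\ps{f,\mathbf{1}}_{L^2}$ depend only on the $L^2$-class of $f$. Therefore $\tilde f$ is independent of which representative is chosen. The map $f\mapsto\tilde f$ is also linear.

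I expect no real obstacle here. The only point needing care is identifying the pointwise limit with the $L^2$-limit, which the subsequence argument (or the uniform convergence) settles.
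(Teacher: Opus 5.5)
Your proposal is correct and follows the same route as the paper. The paper gives no separate proof and treats the corollary as an immediate consequence of \cref{prop:Sp-Linfty}, namely absolute pointwise convergence together with the uniform bound $|f|_{L^\infty}\le C_p|f|_p$. Your explicit check that the pointwise limit agrees almost everywhere with $f$ (either through an a.e.-convergent subsequence of the $L^2$ partial sums, or through the uniform tail estimate) is a step the paper leaves implicit, and it is exactly what justifies calling the limit a representative of $f$.
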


\begin{remark}
  Throughout the rest of the paper, whenever $p>1/2$, we shall 
  identify each $f\in\S_p$ with its canonical representative 
  and regard $f$ as an 
  everywhere-defined bounded function.
  In particular, this applies to every 
  $f\in\S$.
\end{remark}

\subsubsection{Stochastic test and distribution spaces}\label{sec:stochastic}

In what follows, we fix a single-index enumeration
of the family 
$\{\mathbf{1}\}\cup\{e_{n,k}\}_{n \geq 0,\, 0 \leq k \leq 2^n-1}$
and denote the resulting
orthonormal basis by $\{e_k\}_{k \geq 1}$,
ordered so that the sequence of eigenvalues
$\{\alpha_k\}_{k \geq 1}$ remains nondecreasing
and strictly greater than 1.
For instance, this is achieved by 
setting $e_1=\mathbf{1}$ and then
enumerating the
wavelets level by level, according to their
resolution scale.
We employ this convention for the sake of
simplicity, reducing the notational weight of
what is to follow.
In addition, this matches the convention of the
standard references
\cite{kuo2018white,obata1994white},
facilitating comparison with the existing
literature.

With this convention in place, we now turn our
attention to the stochastic test and distribution spaces. 
They can be constructed by lifting the previous construction 
from $L^2([0,1])$ to $(L^2)=L^2(\Omega,\mathcal{F}_\mathcal{W},\mu_\infty)$
using the \emph{second quantization operator} $\Gamma(A)$ 
and the Wiener chaos expansion \eqref{eq:chaos}.
For $\Phi\in (L^2)$ with chaos expansion 
$$
    \Phi=\sum_{n=0}^{\infty} I_n(f_n),
$$
such that
$$
\sum_{n=0}^\infty n!\bigl|A^{\otimes n}f_n\bigr|_{L^2([0,1]^n)}^2 < \infty,
$$
we define $\Gamma(A)\Phi \in (L^2)$ by 
$$
\Gamma(A)\Phi=\sum_{n=0}^{\infty} I_n(A^{\otimes n}f_n).
$$

The operator $\Gamma(A)$ is a densely defined operator on $(L^2)$ 
with many properties similar to $A$. For instance,
we can lift the basis family $\{e_k\}_{k\geq 1}\subset L^2([0,1])$ 
to $(L^2)$ in the following manner.
Let $\mathbf n=(n_1,n_2,\dots)$ be a multiindex 
such that all but finitely many values vanish,
and write
$$
|\mathbf n|\coloneqq \sum_{k\geq1} n_k, \quad 
\mathbf n!\coloneqq \prod_{k\geq 1} n_k! \;.
$$
Define the symmetric kernel
\[
f_{\mathbf n}
\;\coloneqq\;
\mathop{\bigotimes}\limits_{k\ge 1}^{\smash{\raisebox{-1.7ex}[0pt][0pt]{$\widehat{\phantom{X}}$}}}
\, e_k^{\otimes n_k}
\;\in\; L^2([0,1])^{\hat\otimes |\mathbf n|},
\]
and set
$$
\Psi_{\mathbf n}
\coloneqq
\frac{1}{\sqrt{\mathbf n!}}\; I_{|\mathbf n|}(f_{\mathbf n}),
$$
with the convention $\Psi_{\mathbf 0}=I_0(1)=1$.
Then the family $\bigl\{\Psi_{\mathbf n}:|\mathbf n|=n ,\, n=0,1,2\dots \bigr\}$ 
is an orthonormal basis of
$(L^2)$, and \(\Gamma(A)\) is diagonal in this basis: 
\[
\Gamma(A)\Psi_{\mathbf n}
=
\frac{1}{\sqrt{\mathbf n!}}\; I_{|\mathbf n|}\!\Big(A^{\otimes |\mathbf n|} f_{\mathbf n}\Big)
=
\Big(\prod_{k\ge 1} \alpha_k^{\,n_k}\Big)\Psi_{\mathbf n}.
\]
In other words, $\Psi_{\mathbf n}$ 
is an eigenfunction of $\Gamma(A)$ with eigenvalue
$\alpha_{\mathbf n}\coloneqq \displaystyle \prod_{k\ge 1} \alpha_k^{n_k}$.
One can easily check that the family $\{\alpha_{\mathbf n}\}_{\mathbf{n}}$ is 
bounded below by 1 and unbounded above.
Consequently, $\Gamma(A)^{-1}$ is a Hilbert-Schmidt operator with
\begin{equation*}
        \|\Gamma(A)^{-1}\|_{HS}^2
        =\sum_{\mathbf n}\prod_{k\ge 1}\alpha_k^{-2n_k}
        = \prod_{k= 1}^\infty\sum_{m=0}^\infty\alpha_k^{-2m} 
        =\prod_{k= 1}^\infty \frac{1}{1-\alpha_k^{-2}} < \infty.
\end{equation*}

These structural similarities allow us to construct the associated 
stochastic Gelfand triple following the same approach 
as in \cref{sec:deterministic}.
For each $p\geq0$, we define the Hida test norms
\begin{equation*}
    \norm{\Phi}_p\coloneqq \norm{\Gamma(A)^p\Phi}_{(L^2)}=
    \left(\sum_{n=0}^\infty n!\bigl|(A^p)^{\otimes n}f_n\bigr|_{L^2([0,1]^n)}^2\right)^{1/2},
\end{equation*}
and let 
\begin{equation*}
    (\S_p)\coloneqq\operatorname{Dom}(\Gamma(A)^p)= \bigl\{\Phi \in (L^2): \norm{\Phi}_p <\infty\bigr\}.
\end{equation*}
Similar to the deterministic case, 
we have that 
for each $p\ge 0$, $\bigl((\S_p),\norm{\cdot}_p\bigr)$ 
is a separable Hilbert space, 
and  $(\S_p)\subset (\S_q)$ whenever $p\geq q\geq0$.
Similarly, the inclusion maps
$$
  (\S_{p+1}) \hookrightarrow (\S_p), \qquad p\ge 0,
$$
are Hilbert--Schmidt operators.
The \emph{Hida test space} $(\mathcal S)$ is then obtained as the projective limit
$$
(\mathcal S) \coloneqq \bigcap_{p\ge0} (\mathcal S_p).
$$
It is a nuclear countably Hilbert space (hence a Fréchet space),
equipped with the projective limit topology 
induced by the family of norms 
\(\{ \norm{\cdot}_{p} \}_{p \geq0}\):
\[
\Phi_k \to \Phi \quad \text{in}\;\; (\S) \quad \Longleftrightarrow \quad  \norm{\Phi_k - \Phi}_{p} \to 0 \quad \text{for all } p \geq 0.
\]

We denote by $(\S)^*$ and $(\S_p)^*$
the topological duals of $(\S)$ and $(\S_p)$, respectively,
with $(\S)^*$ represented by the inductive limit 
\begin{equation*}\label{eq:hida distributions}
    (\mathcal S)^* = \bigcup_{p\ge0} (\mathcal S_p)^*.
\end{equation*}
The space $(\S)^*$ is called \emph{Hida distribution space}. 
Its elements are called \emph{Hida distributions}
or \emph{stochastic generalized functionals}.
By restricting linear functionals we have the 
inclusions $(\S_q)^*\subset(\S_p)^* \subset(\S)^*$
if $p\geq q \geq 0$.
Moreover, by using the $(L^2)$-inner product 
we have the continuous embeddings $(L^2)\hookrightarrow (\S_p)^*$ 
and $(L^2)\hookrightarrow (\S)^*$,
in a similar fashion as \cref{sec:deterministic}.
This yields the stochastic Gelfand triple
$$
(\S) \subset (L^2) \subset (\S)^*,
$$
and the continuous injections 
$$
(\S)\subset(\S_p)  \subset (L^2)\subset(\S_p)^* \subset (\S)^*,
$$
for each $p\geq 0$.

Denote by $(\S_{-p})$ the Hilbert space completion of $(L^2)$ 
with respect to the norm 
\begin{equation*}
    \norm{\Phi}_{-p}\coloneqq \norm{\Gamma(A)^{-p}\Phi}_{(L^2)}.
\end{equation*}
Then, one can construct an isometric isomorphism from 
$(\S_{-p})$ to $(\S_{p})^*$ by replacing $A$ with $\Gamma(A)$ and 
the $L^2$-inner product with the $(L^2)$-inner product 
in expression \eqref{eq:iso}. Thus we make the identification
$(\S_{p})^*\simeq (\S_{-p})$.

We end this subsection with two results characterizing
Hida test functionals and Hida distributions 
in terms of their chaos expansions. 
Roughly speaking, the first proposition shows that 
the regularity of a test functional
is encoded in the regularity of its chaos kernels, 
whereas the second provides a
generalized chaos expansion for Hida distributions.
Both statements are standard in Hida calculus; 
see, e.g.,
\cite[Theorem~3.1.5]{obata1994white} and 
\cite[Theorem~3.1.6]{obata1994white}.

\medskip
\noindent\textbf{Notation.}
Throughout the paper, we use the notation $(\cdot,\cdot)$ 
to denote the dual pairing
between $\S'$ and $\S$, as well as the induced 
pairings between the corresponding
symmetric tensor powers $\S'^{\hat\otimes n}$ 
and $\S^{\hat\otimes n}$, $n\ge1$.
Similarly, the notation $\pp{\cdot}{\cdot}$ 
denotes the dual pairing between
$(\S)^*$ and $(\S)$.

\begin{proposition}\label{prop:hida-test-char}
    Let $\Phi \in (L^2)$ have chaos expansion
    \[
        \Phi=\sum_{n=0}^{\infty} I_n(f_n).
    \]
    Then $\Phi\in (\S)$ if and only if $f_n\in \S^{\hat\otimes n}$ for all $n\geq0$ and
    \[
        \sum_{n=0}^\infty n!\bigl|(A^p)^{\otimes n}f_n\bigr|_{L^2([0,1]^n)}^2 < \infty,
    \]
    for all  $p\geq0$.
\end{proposition}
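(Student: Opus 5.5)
The plan is to reduce everything to the chaos-wise description of the norms $\norm{\cdot}_p$, using the orthogonality of Wiener chaoses and the isometry $\mathbb{E}\bigl[I_n(f)I_m(g)\bigr]=\delta_{nm}\,n!\,\ps{f,g}_{L^2([0,1]^n)}$. Since $(\S)=\bigcap_{p\ge0}(\S_p)$ and $(\S_p)=\operatorname{Dom}(\Gamma(A)^p)$, the statement reduces to checking, for each fixed $p\ge 0$, that $\Phi\in\operatorname{Dom}(\Gamma(A)^p)$ if and only if each $f_n\in\operatorname{Dom}\bigl((A^p)^{\otimes n}\bigr)$ and $\sum_n n!\,|(A^p)^{\otimes n}f_n|^2_{L^2([0,1]^n)}<\infty$. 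The remaining point is to identify $\bigcap_p\operatorname{Dom}((A^p)^{\otimes n})$ with $\S^{\hat\otimes n}$.

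First I will expand each kernel in the tensor basis. Writing $f_n=\sum_{|\mathbf n|=n} c_{\mathbf n}\, f_{\mathbf n}$, we get $I_n(f_n)=\sum c_{\mathbf n}\sqrt{\mathbf n!}\,\Psi_{\mathbf n}$. Because $\Gamma(A)^p$ is diagonal in $\{\Psi_{\mathbf n}\}$ with eigenvalues $\alpha_{\mathbf n}^p$, the domain of this self-adjoint multiplication operator is exactly
\[
\Bigl\{\Phi=\textstyle\sum_{\mathbf n}a_{\mathbf n}\Psi_{\mathbf n} : \sum_{\mathbf n}\alpha_{\mathbf n}^{2p}|a_{\mathbf n}|^2<\infty\Bigr\}.
\]
On the other side, $(A^p)^{\otimes n}$ acts diagonally on $f_{\mathbf n}$ with the same eigenvalue $\alpha_{\mathbf n}^p$. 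Grouping terms by chaos order and using $\norm{\Psi_{\mathbf n}}=1$, one checks
\[
n!\,|(A^p)^{\otimes n}f_n|^2=\sum_{|\mathbf n|=n}\alpha_{\mathbf n}^{2p}\,\mathbf n!\,|c_{\mathbf n}|^2 ,
\]
where the equality is understood in $[0,\infty]$, so it also detects membership in the domain. Summing over $n$ then identifies $\norm{\Phi}_p^2$ with the series in the statement, which gives the equivalence for each fixed $p$.

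Next I will identify the kernel space. I will show that $\S^{\hat\otimes n}$, the completed symmetric projective tensor power, coincides with the set of symmetric $f\in L^2([0,1]^n)$ satisfying $|(A^p)^{\otimes n}f|<\infty$ for all $p$. For Hilbertian nuclear spaces, $\S^{\otimes n}$ is the projective limit of the Hilbert tensor products $\S_p^{\otimes n}$ carrying the norms $|(A^p)^{\otimes n}\cdot|$. This is standard (Kuo, Chapter 2 / Obata), and I will cite it rather than reprove it. Symmetrization commutes with $(A^p)^{\otimes n}$, so the symmetric parts also match.

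For the converse direction, if each $f_n\in\S^{\hat\otimes n}$ and all the series converge, the computation above gives $\norm{\Phi}_p<\infty$ for every $p$, hence $\Phi\in(\S)$. The only delicate point I expect is the domain bookkeeping: making sure that the formula $\Gamma(A)^p\Phi=\sum_n I_n((A^p)^{\otimes n}f_n)$ is valid exactly on the domain. This holds because both operators are diagonal in the same orthonormal basis, so their domains are described by the same weighted square-summability condition, and the rest is routine.
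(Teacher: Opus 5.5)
Your argument is correct and is the standard one. The paper does not prove this proposition itself; it cites \cite[Theorem~3.1.5]{obata1994white}, whose proof runs as yours does: diagonalize $\Gamma(A)^p$ and $(A^p)^{\otimes n}$ in the same basis to get $\norm{\Phi}_p^2=\sum_n n!\,|(A^p)^{\otimes n}f_n|^2_{L^2([0,1]^n)}$ in $[0,\infty]$, then identify $\S^{\hat\otimes n}$ with the projective limit of the Hilbert tensor powers $\S_p^{\hat\otimes n}=\operatorname{Dom}\bigl((A^p)^{\otimes n}\bigr)\cap L^2([0,1])^{\hat\otimes n}$. Since the paper builds the chaos-wise formula for $\norm{\cdot}_p$ into its definition, the only substantive step is that last identification, which you correctly isolate.
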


\begin{proposition}[Generalized chaos expansion]\label{thm:generalized chaos}
Let $U\in (\S)^*$.
Then there exists an index $p\geq0$ and a unique sequence
$\{u_n\}_{n\geq0}$ with $u_n\in \S_{-p}^{\hat\otimes n}$ such that for every
$\Phi\in (\S)$ with chaos expansion $\Phi=\sum_{n=0}^{\infty} I_n(f_n)$ one has
\begin{equation}\label{eq:generalized chaos}
    \pp{U}{\Phi}=\sum_{n=0}^\infty n!(u_n,f_n),
\end{equation}
where the series converges absolutely.
In this case, we write formally
\begin{equation*}
    U= \sum_{n=0}^{\infty} I_n(u_n),
\end{equation*}
in analogy with the chaos expansion of square-integrable functionals.
In addition, it follows that
\begin{equation}\label{eq:hida1}
    \norm{U}_{-p}^2= \sum_{n=0}^\infty n!\bigl|(A^{-p})^{\otimes n}u_n\bigr|_{L^2([0,1]^n)}^2 <\infty.
\end{equation}

Conversely, assume we are given a sequence of kernels $\{u_n\}_{n\geq0}$ with 
$u_n\in \S_{-p}^{\hat\otimes n}$ such that the series in 
\eqref{eq:hida1} converges for some $p\geq0$. 
Then, expression \eqref{eq:generalized chaos} defines a Hida distribution 
$U\in(\S)^*$ and \eqref{eq:hida1} holds.
\end{proposition}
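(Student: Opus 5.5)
The statement is the generalized chaos expansion (Proposition~\ref{thm:generalized chaos}). The plan is to reduce everything to the Hilbert space structure of $(\S_{-p})\simeq(\S_p)^*$ and the orthonormal basis $\{\Psi_{\mathbf n}\}$.

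\textbf{Existence.} Take $U\in(\S)^*$. Since $(\S)$ is a countably Hilbert space topologized by the increasing norms $\norm{\cdot}_p$, continuity of $U$ means there are $p\ge0$ and $C$ with $|\pp{U}{\Phi}|\le C\norm{\Phi}_p$. Hence $U$ extends to $(\S_p)^*$, and via the isometry (stochastic analogue of \eqref{eq:iso}) it corresponds to an element of $(\S_{-p})$, the completion of $(L^2)$ under $\norm{\Gamma(A)^{-p}\cdot}$.

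An orthonormal basis of $(\S_{-p})$ is $\{\alpha_{\mathbf n}^{p}\Psi_{\mathbf n}\}$. We expand $U=\sum_{\mathbf n}c_{\mathbf n}\Psi_{\mathbf n}$ with $\sum|c_{\mathbf n}|^2\alpha_{\mathbf n}^{-2p}<\infty$. Grouping by $|\mathbf n|=n$, we set
\[
u_n=\sum_{|\mathbf n|=n}c_{\mathbf n}f_{\mathbf n}/(n!\sqrt{\mathbf n!}).
\]
The key fact is that $\{f_{\mathbf n}\}$, suitably normalized, is an orthogonal basis of the symmetric tensor space, and $(A^{-p})^{\otimes n}$ acts diagonally on it with eigenvalue $\alpha_{\mathbf n}^{-p}$. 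This identifies $\norm{U}_{-p}^2$ with $\sum n!|(A^{-p})^{\otimes n}u_n|^2$, giving \eqref{eq:hida1} and $u_n\in\S_{-p}^{\hat\otimes n}$. The only real care needed here is normalization, using $|f_{\mathbf n}|^2=\mathbf n!/n!$.

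\textbf{Pairing formula and absolute convergence.} For $\Phi\in(\S)$, the pairing between $(\S_{-p})$ and $(\S_p)$ is the continuous extension of the $(L^2)$ inner product. On each chaos level this inner product is $n!\langle\cdot,\cdot\rangle_{L^2([0,1]^n)}$, which extends to $n!(u_n,f_n)$. Cauchy--Schwarz then gives
\[
\sum n!|(u_n,f_n)|\le \Bigl(\sum n!|(A^{-p})^{\otimes n}u_n|^2\Bigr)^{1/2}\Bigl(\sum n!|(A^{p})^{\otimes n}f_n|^2\Bigr)^{1/2}=\norm{U}_{-p}\norm{\Phi}_p<\infty,
\]
where $\norm{\Phi}_p<\infty$ by Proposition~\ref{prop:hida-test-char}. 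This proves absolute convergence of \eqref{eq:generalized chaos}.

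\textbf{Uniqueness.} Testing against $\Phi=I_n(f)$ with $f=f_{\mathbf n}\in\S^{\hat\otimes n}$ (which lies in $(\S)$ since the $e_k$ lie in $\S$) recovers every coefficient $(u_n,f_{\mathbf n})$. These determine $u_n$ in $\S_{-p}^{\hat\otimes n}$, because the $f_{\mathbf n}$ span a dense subspace. Uniqueness is independent of the chosen $p$, since the spaces $\S_{-q}$ are nested.

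\textbf{Converse.} Given kernels with finite \eqref{eq:hida1}, the same Cauchy--Schwarz bound shows $\Phi\mapsto\sum n!(u_n,f_n)$ is well defined and bounded by $\norm{U}_{-p}\norm{\Phi}_p$, hence continuous on $(\S)$. The isometry then gives \eqref{eq:hida1} for its norm.

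The main obstacle is the bookkeeping that identifies $(A^{-p})^{\otimes n}$ on symmetric tensors with the diagonal action of $\Gamma(A)^{-p}$ on $\{\Psi_{\mathbf n}\}$, including the factors $n!$ and $\mathbf n!$. Everything else is Hilbert-space duality, as in \cite[Theorem~3.1.6]{obata1994white}.
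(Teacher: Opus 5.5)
Your overall route is the standard one: continuity of $U$ gives a $p$, duality places $U$ in $(\S_{-p})$, you expand in the orthonormal basis $\{\alpha_{\mathbf n}^{p}\Psi_{\mathbf n}\}$ and regroup by chaos order. The paper gives no proof of its own; it cites \cite[Theorem~3.1.6]{obata1994white}, which rests on the same Hilbert-space duality $(\S_p)^*\simeq(\S_{-p})$ in the chaos picture.

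The one step you flag as delicate, the normalization of $u_n$, is wrong as written. The paper takes $\Psi_{\mathbf n}=I_{|\mathbf n|}(f_{\mathbf n})/\sqrt{\mathbf n!}$ orthonormal, which forces $|f_{\mathbf n}|_{L^2}^2=\mathbf n!/n!$, the value you quote. With that convention, the requirement $\sum_{|\mathbf n|=n}c_{\mathbf n}\Psi_{\mathbf n}=I_n(u_n)$ when $U\in(L^2)$ forces
\[
u_n=\sum_{|\mathbf n|=n}\frac{c_{\mathbf n}}{\sqrt{\mathbf n!}}\,f_{\mathbf n},
\]
with no extra factor $1/n!$. Your kernel fails two checks:
\begin{itemize}
\item Testing against $\Phi=\Psi_{\mathbf m}$ with $|\mathbf m|=n$ gives $n!\,(u_n,f_{\mathbf m})/\sqrt{\mathbf m!}=c_{\mathbf m}/n!$, whereas $\pp{U}{\Psi_{\mathbf m}}=c_{\mathbf m}$.
\item The right-hand side of \eqref{eq:hida1} becomes $\sum_n (n!)^{-2}\sum_{|\mathbf n|=n}|c_{\mathbf n}|^2\alpha_{\mathbf n}^{-2p}$ instead of $\norm{U}_{-p}^2=\sum_{\mathbf n}|c_{\mathbf n}|^2\alpha_{\mathbf n}^{-2p}$.
\end{itemize}
So both \eqref{eq:generalized chaos} and \eqref{eq:hida1} fail for $n\ge 2$ as you wrote them. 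Your formula would be right if $f_{\mathbf n}$ meant the unnormalized sum over all $n!$ permutations, but that contradicts $|f_{\mathbf n}|^2=\mathbf n!/n!$.

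With the corrected kernel you get $n!\,|(A^{-p})^{\otimes n}u_n|_{L^2([0,1]^n)}^2=\sum_{|\mathbf n|=n}|c_{\mathbf n}|^2\alpha_{\mathbf n}^{-2p}$. The rest of your argument then goes through unchanged: the Cauchy--Schwarz bound for absolute convergence, uniqueness by testing against $I_n(f_{\mathbf n})$ plus density, and nesting in $p$.

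In the converse, "the isometry gives \eqref{eq:hida1}" needs one more line to get equality rather than just an upper bound on the dual norm. Set $c_{\mathbf n}:=n!\,(u_n,f_{\mathbf n})/\sqrt{\mathbf n!}$. Then $\sum_{\mathbf n}c_{\mathbf n}\Psi_{\mathbf n}\in(\S_{-p})$ represents the functional, and the same basis computation shows its norm equals the weighted sum.
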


\subsubsection{The $S$-transform}\label{sec:S-transform}

We introduce the $S$-transform and establish
the properties that will be used throughout the paper.  
We follow the standard treatment of white noise analysis 
given in \cite[Section~5.2]{kuo2018white},
but adapted to our Wiener space setting.
We begin by defining the Wick exponentials, 
the infinite-dimensional analogue of 
the exponential function in classical analysis. 
Just as a probability measure on the real line
is determined by its moment generating function,
a Hida distribution is determined by its pairing against 
Wick exponentials, which is precisely the $S$-transform.

\begin{definition}[Wick exponential]\label{def:wick-exp}
For $h \in L^2([0,1])$, define the \emph{Wick exponential}
(or \emph{exponential vector}) by
\begin{equation*}\label{eq:wick-exp}
  \mathcal{E}(h)
  := \exp\!\left(\mathcal W(h) - \tfrac{1}{2}|h|_{L^2}^2\right),
\end{equation*}
where $\mathcal W$ is the isonormal Gaussian process 
introduced in~\eqref{eq:isonormal}.
\end{definition}

In what follows, $H_n$ denotes the $n$-th Hermite polynomial
defined by
\begin{equation}\label{eq:hermite-def}
  H_n(x) := (-1)^n\,e^{x^2/2}\,
  \frac{d^n}{dx^n}\!\left(e^{-x^2/2}\right),
  \qquad n \geq 0.
\end{equation}
In particular, $H_0(x) = 1$, $H_1(x) = x$,
$H_2(x) = x^2 - 1$, and $H_3(x) = x^3 - 3x$.
These polynomials are characterized by the generating
function identity
(see \cite[Eq.~(1.1)]{nualart2006malliavin}):
\begin{equation}\label{eq:hermite-gf}
  \exp\!\left(tx - t^2/2\right)
  = \sum_{n=0}^{\infty} \frac{t^n}{n!}\,H_n(x),
  \qquad t, x \in \mathbb{R},
\end{equation}
where the series converges absolutely.
\begin{remark}  
  Note that the Hermite polynomials used
  in~\cite{nualart2006malliavin} are
  $h_n(x) = H_n(x)/n!$, so
  that~\cite[Eq.~(1.1)]{nualart2006malliavin} reads
  $\exp(tx - t^2/2) = \sum_{n=0}^{\infty} t^n\,h_n(x)$.
\end{remark}

\begin{proposition}\label{prop:wick-chaos}
For every $h \in L^2([0,1])$, the Wick exponential admits the
chaos expansion
\begin{equation}\label{eq:wick-chaos}
  \mathcal{E}(h)
  = \sum_{n=0}^{\infty} \frac{1}{n!}\,I_n(h^{\otimes n})
  \qquad \text{in } (L^2).
\end{equation}
Moreover, $\mathcal{E}(h) \in (\S)$ if and only if $h \in \S$,
and in that case
\begin{equation*}
  \|\mathcal{E}(h)\|_p^2
  = \exp\!\left(|h|_p^2\right),
  \qquad p \geq 0.
\end{equation*}
\end{proposition}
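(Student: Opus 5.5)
The statement has two parts: the chaos expansion of $\mathcal{E}(h)$, and the identification of its Hida norms. For the first, I would reduce to the one-dimensional Hermite generating function \eqref{eq:hermite-gf}. If $h=0$ the expansion is trivial. Otherwise set $e=h/|h|_{L^2}$, so that $\mathcal W(h)=|h|_{L^2}\,\mathcal W(e)$ with $\mathcal W(e)\sim\mathcal N(0,1)$. Taking $t=|h|_{L^2}$ and $x=\mathcal W(e)$ in \eqref{eq:hermite-gf} gives
\[
\mathcal{E}(h)=\exp\!\bigl(t\,\mathcal W(e)-t^2/2\bigr)=\sum_{n\ge0}\frac{t^n}{n!}H_n(\mathcal W(e)).
\]
The standard identity $I_n(e^{\otimes n})=H_n(\mathcal W(e))$ for unit vectors $e$ (\cite[Prop.~1.1.4]{nualart2006malliavin}, adjusted for the normalization in the remark) then turns each term into $\frac{1}{n!}I_n(h^{\otimes n})$. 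This identity gives pointwise convergence. To upgrade it to convergence in $(L^2)$, I would use orthogonality of chaoses together with the isometry $\E[I_n(f)^2]=n!|f|^2$. These show $\sum_n \frac{1}{(n!)^2}\,n!\,|h|_{L^2}^{2n}=e^{|h|_{L^2}^2}<\infty$, so the series converges in $(L^2)$. Its $(L^2)$-limit agrees almost surely with the pointwise limit, by passing to a subsequence.

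For the norm identity, I would compute directly from the definition of $\|\cdot\|_p$. Since $(A^p)^{\otimes n}h^{\otimes n}=(A^ph)^{\otimes n}$, whenever $h\in\S_p$ we get
\[
\|\mathcal E(h)\|_p^2=\sum_{n\ge0} n!\,\frac{1}{(n!)^2}\,|A^ph|_{L^2}^{2n}=\exp\bigl(|h|_p^2\bigr).
\]
If $h\notin\S_p$, then the $n=1$ kernel $h$ is already not in $\operatorname{Dom}(A^p)$, so $\mathcal E(h)\notin(\S_p)$. One should read ``$\|\mathcal E(h)\|_p^2=\exp(|h|_p^2)$'' with both sides equal to $+\infty$ in that case.

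The equivalence $\mathcal E(h)\in(\S)\iff h\in\S$ then follows. If $h\in\S=\bigcap_p\S_p$, every norm $\|\mathcal E(h)\|_p$ is finite, so $\mathcal E(h)\in\bigcap_p(\S_p)=(\S)$. Proposition \ref{prop:hida-test-char} gives the same conclusion, since $h^{\otimes n}\in\S^{\hat\otimes n}$. Conversely, if $\mathcal E(h)\in(\S)$, Proposition \ref{prop:hida-test-char} forces the first-chaos kernel $f_1=h$ to lie in $\S$; equivalently, finiteness of $\|\mathcal E(h)\|_p$ for all $p$ forces $|h|_p<\infty$ for all $p$.

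The only delicate step is the Hermite–multiple-integral identity and its normalization. Everything else is bookkeeping with the chaos isometry.
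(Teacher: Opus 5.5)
Your proposal is correct and follows essentially the same route as the paper. You reduce to the Hermite generating function via $e=h/|h|_{L^2}$ and the identity $H_n(\mathcal W(e))=I_n(e^{\otimes n})$, identify the almost-sure and $(L^2)$ limits, and compute $\sum_n |h|_p^{2n}/n!=\exp(|h|_p^2)$ directly from the kernels $h^{\otimes n}/n!$. Your subsequence argument replaces the paper's appeal to uniqueness of limits in probability, and your remark about the first-chaos kernel spells out the converse direction a bit more explicitly; these differences are only cosmetic.
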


\begin{proof}
If $h = 0$ the identity~\eqref{eq:wick-chaos} reduces to
$1 = I_0(1)$, so we may assume $h \neq 0$.
Set $e := h/|h|_{L^2}$, so that $\mathcal W(e)$ is a standard normal
random variable under $\mu_\infty$.  
Using \eqref{eq:hermite-gf} with 
$x = \mathcal{W}(e)$ and $t = |h|_{L^2}$,
together with $|h|_{L^2}\,\mathcal{W}(e) = \mathcal{W}(h)$, yields 
the almost sure convergence
$$
\mathcal{E}(h)=
\sum_{n=0}^{\infty} \frac{|h|_{L^2}^n}{n!}\,H_n(\mathcal{W}(e))
= \sum_{n=0}^{\infty} \frac{|h|_{L^2}^n}{n!}\,I_n(e^{\otimes n})
= \sum_{n=0}^{\infty} \frac{1}{n!}\,I_n(h^{\otimes n})
\qquad \text{a.s.},
$$
where we used that 
$H_n(\mathcal{W}(e)) = I_n(e^{\otimes n})$
(see \cite[Proposition~1.1.4]{nualart2006malliavin}),
and 
$|h|_{L^2}^n\,e^{\otimes n} = h^{\otimes n}$.
Moreover, using the orthogonality of the chaos spaces 
and It\^o isometry,
one can show that the series on the right-hand 
side of \eqref{eq:wick-chaos} converges in $(L^2)$ to some limit.
Since both almost sure convergence and $(L^2)$-convergence
imply convergence in probability, and limits in probability
are unique, the two limits coincide.
This proves~\eqref{eq:wick-chaos}.

For the norm computation, note that the chaos kernels of
$\mathcal{E}(h)$ are $f_n = h^{\otimes n}/n!$ by \eqref{eq:wick-chaos}.
From the definition of the Hida stochastic norm we have
\[
    \norm{\mathcal{E}(h)}_p^2
  = \sum_{n=0}^{\infty} n!
    \left|(A^p)^{\otimes n} \left(\frac{h^{\otimes n}}{n!}\right)\right|_{L^2([0,1]^n)}^2
  = \sum_{n=0}^{\infty} \frac{|h|_p^{2n}}{n!}
  = \exp\!\left(|h|_p^2\right),
\]
where we used the multiplicativity of the tensor norm:
$|(A^p)^{\otimes n} h^{\otimes n}|_{L^2([0,1]^n)}
= |A^p h|_{L^2}^n = |h|_p^n$.
This is finite for all $p \geq 0$ if and only if
$|h|_p < \infty$ for all $p$, i.e., $h \in \S$.
\end{proof}

\begin{remark}
    The chaos expansion and the norm
    formula in \cref{prop:wick-chaos} are standard in white noise
    analysis; see \cite[Lemma~5.5 and Theorem~5.7]{kuo2018white}
    for analogous statements in the canonical white noise
    setting.
\end{remark}

\begin{definition}[$S$-transform]\label{def:S-transform}
For $U \in (\S)^*$, the \emph{$S$-transform} of $U$ is the
function $SU : \S \to \mathbb{R}$ defined by
\begin{equation*}
  SU(h) := \pp{U}{\mathcal{E}(h)},
  \qquad h \in \S.
\end{equation*}
\end{definition}

\begin{remark}
The $S$-transform is well-defined since 
$U \in (\S)^*$ and 
$\mathcal{E}(h) \in (\S)$ for every $h \in \S$
(\cref{prop:wick-chaos}). 
As a concrete example, consider the case where 
$U = \Phi$ for some square-integrable random variable
$\Phi \in (L^2)$, viewed as a Hida distribution via the 
canonical embedding $(L^2) \subset (\S)^*$ given by the 
$(L^2)$-inner product. Then the dual pairing reduces 
to the expectation under $\mu_\infty$, and the $S$-transform takes 
the familiar form
\[
  S\Phi(h) = \E_{\mu_\infty}[\Phi\,\mathcal{E}(h)],
  \qquad h \in \S.
\]
In this sense, the $S$-transform generalizes 
the notion of testing a random variable against 
exponential vectors to the full space of Hida 
distributions, including those that are not 
representable as elements of $(L^2)$.
\end{remark}

The next proposition provides an explicit series representation
for the $S$-transform in terms of the generalized chaos kernels.
\begin{proposition}\label{prop:S-series}
    Let $U \in (\S)^*$ with generalized chaos expansion
    $U = \sum_{n=0}^{\infty} I_n(u_n)$ as in
    \cref{thm:generalized chaos}.
    Then
    \begin{equation*}
      SU(h) = \sum_{n=0}^{\infty} (u_n,\, h^{\otimes n}),
      \qquad h \in \S,
    \end{equation*}
    where the series converges absolutely.
\end{proposition}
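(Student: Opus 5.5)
The plan is to evaluate the defining series \eqref{eq:generalized chaos} of \cref{thm:generalized chaos} at the test functional $\Phi=\mathcal{E}(h)$. By \cref{prop:wick-chaos}, for $h\in\S$ we have $\mathcal{E}(h)\in(\S)$ with chaos kernels $f_n=h^{\otimes n}/n!$. We also know $h^{\otimes n}\in\S^{\hat\otimes n}$, as required by \cref{prop:hida-test-char}. Substituting these kernels gives
\[
SU(h)=\pp{U}{\mathcal{E}(h)}=\sum_{n=0}^\infty n!\,\Bigl(u_n,\tfrac{1}{n!}h^{\otimes n}\Bigr)=\sum_{n=0}^\infty (u_n,h^{\otimes n}).
\]
Absolute convergence is inherited directly from the absolute convergence asserted in \cref{thm:generalized chaos}, since the terms are literally the same.

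For a self-contained bound, I would also verify absolute convergence directly. Let $p$ be the index from \cref{thm:generalized chaos}, so that $u_n\in\S_{-p}^{\hat\otimes n}$. Using the duality $\S_{-p}^{\hat\otimes n}\simeq(\S_p^{\hat\otimes n})'$ given by the tensor version of \eqref{eq:iso}, we get
\[
|(u_n,h^{\otimes n})|\le |(A^{-p})^{\otimes n}u_n|_{L^2}\,|h|_p^n .
\]
Applying Cauchy--Schwarz with weights $\sqrt{n!}$ then yields
\[
\sum_n|(u_n,h^{\otimes n})|\le\Bigl(\sum_n n!\,|(A^{-p})^{\otimes n}u_n|^2\Bigr)^{1/2}\Bigl(\sum_n \frac{|h|_p^{2n}}{n!}\Bigr)^{1/2}=\norm{U}_{-p}\,e^{|h|_p^2/2}<\infty,
\]
by \eqref{eq:hida1} and the norm computation in \cref{prop:wick-chaos}.

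The only point needing care is the identification of the pairing $(u_n,f_n)$ on tensor powers with the $L^2$-based duality, namely $\langle (A^{-p})^{\otimes n}u_n,(A^p)^{\otimes n}f_n\rangle$. This follows from the construction of the tensor Gelfand triple in the same way as \eqref{eq:iso}. Once that is in place, the rest is routine.
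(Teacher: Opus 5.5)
Your proposal is correct and matches the paper's proof: substitute $\Phi=\mathcal{E}(h)$, whose kernels are $h^{\otimes n}/n!$, into the generalized chaos expansion, and inherit absolute convergence from \cref{thm:generalized chaos}. Your extra Cauchy--Schwarz estimate $\sum_n |(u_n,h^{\otimes n})|\le \norm{U}_{-p}\,e^{|h|_p^2/2}$ is also correct; it is a self-contained supplement that the paper does not write out.
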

\begin{proof}
    By \cref{thm:generalized chaos} applied to 
    $\Phi = \mathcal{E}(h)$, whose chaos kernels are 
    $f_n = h^{\otimes n}/n!$ (\cref{prop:wick-chaos}),
    \[
      SU(h) 
      = \pp{U}{\mathcal{E}(h)}
      = \sum_{n=0}^{\infty} n!
        \left(u_n,\,\frac{h^{\otimes n}}{n!}\right)
      = \sum_{n=0}^{\infty} (u_n,\, h^{\otimes n}).
    \]
    Absolute convergence also follows by \cref{thm:generalized chaos}.
\end{proof}

\begin{proposition}[Injectivity of $S$-transform]\label{thm:S-injective}
Let $U_1,U_2\in(\S)^*$. 
If $SU_1 = SU_2$, then  $U_1 = U_2$.
\end{proposition}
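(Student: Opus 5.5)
By linearity it suffices to show that if $U\in(\S)^*$ satisfies $SU\equiv 0$ on $\S$, then $U=0$. Take $U=U_1-U_2$ and its generalized chaos expansion $U=\sum_n I_n(u_n)$ with $u_n\in\S_{-p}^{\hat\otimes n}$, as in \cref{thm:generalized chaos}. By \cref{prop:S-series},
\[
0=SU(th)=\sum_{n=0}^\infty t^n (u_n,h^{\otimes n}),\qquad h\in\S,\ t\in\mathbb{R}.
\]

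The first step is to show that each term vanishes separately. We need absolute convergence of the power series in $t$. Using $\eqref{eq:hida1}$ and Cauchy--Schwarz,
\[
|(u_n,h^{\otimes n})|\le |(A^{-p})^{\otimes n}u_n|_{L^2}\,|h|_p^n \le \frac{\|U\|_{-p}}{\sqrt{n!}}\,|h|_p^n .
\]
So the series is entire in $t$. Since an entire function that vanishes identically has all its Taylor coefficients zero, we get $(u_n,h^{\otimes n})=0$ for all $h\in\S$ and all $n$.

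The second step is polarization. For symmetric $u_n$, the quantity $(u_n,h_1\hat\otimes\cdots\hat\otimes h_n)$ is recovered from the values $(u_n,(\sum_i s_ih_i)^{\otimes n})$ by taking the mixed derivative $\partial_{s_1}\cdots\partial_{s_n}$ at $s=0$. This derivative is legitimate because the expression is a polynomial in $s$. Hence $u_n$ vanishes on all symmetric tensors of elements of $\S$. In particular it vanishes on the symmetrized finite tensors of the Haar basis $\{e_k\}$, which all lie in $\S$.

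The last step turns vanishing on basis tensors into vanishing of the whole functional, and this density argument is the step needing the most care. In $\S_{-p}^{\hat\otimes n}$, the norm is computed by expanding $u_n$ in the orthonormal basis of symmetrized products of the $e_k$, weighted by powers of $\alpha_k^{-p}$. Each coefficient is exactly the pairing of $u_n$ with such a symmetrized product, so each coefficient is zero and therefore $u_n=0$. Then \eqref{eq:generalized chaos} gives $\pp{U}{\Phi}=0$ for every $\Phi\in(\S)$, so $U=0$.

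An alternative final step avoids kernels: finite linear combinations of the $\Psi_{\mathbf n}$ are dense in $(\S)$ in every norm $\|\cdot\|_q$, since $\Gamma(A)$ is diagonal in that basis. Each $\Psi_{\mathbf n}$ is obtained by differentiating $\mathcal{E}(\sum_k s_ke_k)$ in the $s_k$ at $0$, where the difference quotients converge in $(\S)$. So $U$ annihilates a dense set and is therefore zero by continuity.
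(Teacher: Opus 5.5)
Your proof is correct and follows essentially the same route as the paper: reduce to $SU\equiv 0$, expand $SU(th)=\sum_n t^n (u_n,h^{\otimes n})$ to kill each Taylor coefficient, polarize, and conclude $u_n=0$ from vanishing on tensors built from elements of $\S$. Your explicit bound $|(u_n,h^{\otimes n})|\le \|U\|_{-p}\,|h|_p^n/\sqrt{n!}$ and your basis-coefficient version of the final step are just more concrete renderings of what the paper takes from \cref{thm:generalized chaos} and from the density of elementary symmetric tensors in $\S_p^{\hat\otimes n}$.
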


\begin{proof}
It is enough to show that if $SU\equiv0$ then $U\equiv0$.
Let $U = \sum_{n=0}^{\infty} I_n(u_n) \in (\S)^*$ with
$SU \equiv 0$.  By \cref{prop:S-series},
\begin{equation*}
  SU(th)=\sum_{n=0}^{\infty}t^n (u_n,\, h^{\otimes n}) = 0,
  \qquad \text{for all } 
  t \in \mathbb{R} \text{ and all } h \in \S.
\end{equation*}
Since the series above converges absolutely for every 
$t \in \mathbb{R}$, the map
$t \mapsto SU(th)$ is real-analytic on $\mathbb{R}$.
The condition $SU\equiv 0$ implies that all its Taylor coefficients
are zero:
\[  \frac{d^n}{dt^n}\bigg|_{t=0}\!SU(th) 
    =n!(u_n,\, h^{\otimes n}) =0,
  \qquad \text{for all } n \geq 0
  \text{ and all } h \in \S.
\]
For $n = 0$, the identity reads $u_0 = 0$ directly.
For $n \geq 1$, fix $h_1, \ldots, h_n \in \S$.
By the polarization identity for symmetric $n$-linear forms
(see \cite[proof of Proposition~5.10]{kuo2018white}), we have
\[
  (u_n,\, h_1 \hat\otimes \cdots \hat\otimes h_n)
  = \frac{1}{n!}\sum_{k=1}^{n}(-1)^{n-k}
    \sum_{j_1 < \cdots < j_k}
      \bigl(u_n,\,
        (h_{j_1} + \cdots + h_{j_k})^{\otimes n}
      \bigr).
\]
Each pairing on the right-hand side is of the form
$(u_n, h^{\otimes n})$ with
$h = h_{j_1} + \cdots + h_{j_k} \in \S$, which vanishes by
the previous step.  Hence
$(u_n, h_1 \hat\otimes \cdots \hat\otimes h_n) = 0$
for all $h_1, \ldots, h_n \in \S$.
Since elementary symmetric tensors
$h_1 \hat\otimes \cdots \hat\otimes h_n$ with
$h_i \in \S$ span a dense subspace of
$\S_p^{\hat\otimes n}$, 
and $u_n$ is a continuous linear
functional on $\S_p^{\hat\otimes n}$, it follows that
$u_n = 0$.
As this holds for every $n \geq 0$, we conclude that $U \equiv 0$.
\end{proof}

The density of exponential vectors in $(\S)$ follows from the
injectivity of the $S$-transform established above, together with
the geometric Hahn--Banach separation theorem for locally convex
spaces; see \cite[Theorem~3.5]{rudin1991functional}.

\begin{corollary}\label{cor:exp-dense}
The subspace
$\mathcal{D} \coloneq \operatorname{span}\{\mathcal{E}(h) : h \in \S\}$
is dense in~$(\S)$.
\end{corollary}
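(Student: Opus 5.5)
The plan is a Hahn--Banach separation argument that uses the injectivity of the $S$-transform (\cref{thm:S-injective}). Let $\overline{\mathcal D}$ be the closure of $\mathcal D$ in $(\S)$, and suppose, for contradiction, that $\overline{\mathcal D}\neq(\S)$. Choose $\Phi_0\in(\S)\setminus\overline{\mathcal D}$.

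The first step is to produce a separating functional. The space $(\S)$ is a Fréchet space, hence locally convex. The set $\overline{\mathcal D}$ is a closed convex subspace, and $\{\Phi_0\}$ is compact and convex. The geometric Hahn--Banach theorem \cite[Theorem~3.5]{rudin1991functional} then gives a continuous linear functional $U$ on $(\S)$ and a real $\gamma$ with
\[
\pp{U}{\Psi}<\gamma<\pp{U}{\Phi_0}\qquad\text{for all }\Psi\in\overline{\mathcal D}.
\]
Since $\overline{\mathcal D}$ is a linear subspace, $\pp{U}{\cdot}$ is bounded above on it, so it must vanish there; otherwise scaling $\Psi$ by large multiples breaks the bound. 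Because $0\in\overline{\mathcal D}$, we get $\gamma>0$, and therefore $\pp{U}{\Phi_0}>0$.

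The second step applies injectivity. Since $U$ is continuous and linear on $(\S)$, we have $U\in(\S)^*$. From the first step, $SU(h)=\pp{U}{\mathcal E(h)}=0$ for every $h\in\S$, because $\mathcal E(h)\in\mathcal D$ by \cref{prop:wick-chaos}. Thus $SU\equiv0=S0$, and \cref{thm:S-injective} gives $U=0$. This contradicts $\pp{U}{\Phi_0}>0$, so $\overline{\mathcal D}=(\S)$.

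No step here is a real obstacle. The only points to check are that the real-scalar version of Hahn--Banach applies (all spaces here are real), and that a continuous functional on $(\S)$ is exactly an element of $(\S)^*$ as defined, so \cref{thm:S-injective} can be used. The same argument can be phrased more directly: a continuous functional that vanishes on a subspace also vanishes on its closure, and this gives the same contradiction.
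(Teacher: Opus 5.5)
Your proposal is correct and follows the paper's own argument: Hahn--Banach separation in the locally convex space $(\S)$ yields a nonzero $U\in(\S)^*$ that vanishes on $\mathcal{D}$, so $SU\equiv 0$, and injectivity of the $S$-transform forces $U=0$, a contradiction. Your version simply writes out the separation step, deriving from the two-sided inequality that $U$ vanishes on the subspace, which the paper cites directly from \cite[Theorem~3.5]{rudin1991functional}.
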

\begin{proof}
Suppose $U \in (\S)^*$ satisfies $\pp{U}{\Phi} = 0$ for every
$\Phi \in \mathcal{D}$.  In particular,
$SU(h) = \pp{U}{\mathcal{E}(h)} = 0$ for all $h \in \S$.
By \cref{thm:S-injective}, $U = 0$.  
We claim that this implies that $\mathcal D$ is dense in $(\S)$.
Indeed, if this was not the case, one can use
\cite[Theorem~3.5]{rudin1991functional} 
to find $U\in (\S)^*$ such that $U|_{\mathcal{D}}=0$ but $U\neq 0$,
leading to a contradiction.
\end{proof}

We close this subsection with two general results,
originally due to Potthoff and Streit (PS) \cite{potthoff1991characterization}.
The first is the characterization theorem, 
which provides conditions under which a
given function on $\S$ arises as the $S$-transform of a Hida
distribution. For a proof, we refer the reader to 
\cite[Theorem~8.2]{kuo2018white}.
The second result provides a criterion for
convergence of Hida distributions in terms of their
$S$-transforms, in the same spirit as the Lévy continuity
theorem for characteristic functions of random variables, 
and can be
found in \cite[Theorem~8.6]{kuo2018white}.
In order to state them, we extend the $S$-transform
to the complexification
$\S_{\mathbb{C}} := \S + i\S$ by replacing
$h \in \S$ with $\xi \in \S_{\mathbb{C}}$ in 
\cref{def:S-transform}. 
There is no real obstruction to working with
complex-valued test functions, and all previous
results concerning the $S$-transform carry over directly 
to the complexified setting.

\begin{theorem}[Characterization theorem]%
\label{thm:characterization}
A function $F : \S_{\mathbb{C}} \to \mathbb{C}$ is the
$S$-transform of a (unique) Hida distribution $U \in (\S)^*$
if and only if the following two conditions hold:
\begin{enumerate}
  \item[\textup{\textbf{(PS1)}}]
  \textit{Ray analyticity.}
  For every $\xi, \eta \in \S_{\mathbb{C}}$, the function
  $z \mapsto F(\xi + z\,\eta)$ is entire on $\mathbb{C}$.
  \item[\textup{\textbf{(PS2)}}]
  \textit{Growth bound.}
  There exist constants $K_1, K_2 \geq 0$ and
  $p \in \mathbb{N}_0$ such that
    \[
        |F(\xi)|
        \leq K_1\exp\!\left(K_2\,|\xi|_p^2\right),
        \qquad 
        \xi \in \S_{\mathbb{C}}.
    \]
\end{enumerate}
Uniqueness holds by \cref{thm:S-injective}.
\end{theorem}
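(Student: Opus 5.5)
The plan is the classical Potthoff--Streit argument, transplanted to the present Gelfand triple. The only ingredients from our particular choice of $A$ are two facts. First, $\Gamma(A)$ and $A$ are diagonal in explicit bases. Second, $\|A^{-r}\|_{HS}^2=\alpha_{-1}^{-2r}+\sum_{n\ge0}2^n\alpha_n^{-2r}$ tends to $0$ as $r\to\infty$, since every $\alpha_n>1$ and $\alpha_n^{-2r}\le 2^{-2nr}$. Uniqueness is already settled by \cref{thm:S-injective} (complexified), so only the two implications remain.

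\emph{Necessity.} Let $U=\sum I_n(u_n)\in(\S_p)^*$. By \cref{prop:S-series},
\[
SU(\xi+z\eta)=\sum_n\bigl(u_n,(\xi+z\eta)^{\otimes n}\bigr).
\]
Each term is a polynomial in $z$. We bound each term by $|u_n|_{-p}\,(|\xi|_p+|z||\eta|_p)^n$ and compare with \eqref{eq:hida1} via Cauchy--Schwarz in $n$. This shows the series converges uniformly on compact sets in $z$, so the sum is entire, which gives (PS1). For (PS2), Cauchy--Schwarz in $(\S_{-p})\times(\S_p)$ together with \cref{prop:wick-chaos} (complexified) gives
\[
|SU(\xi)|\le\|U\|_{-p}\|\mathcal E(\xi)\|_p=\|U\|_{-p}\exp\bigl(\tfrac12|\xi|_p^2\bigr).
\]

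\emph{Sufficiency.} Given $F$ satisfying (PS1) and (PS2), note that $z\mapsto F(z\xi)$ is entire. Let $P_n(\xi)$ denote its $n$-th Taylor coefficient. Using (PS1) on finite-dimensional affine slices, $P_n$ is a homogeneous polynomial of degree $n$ on $\S_{\mathbb C}$. Polarization then produces a symmetric $n$-linear form $\check P_n$.

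Cauchy's estimate on the circle $|z|=r$ combined with (PS2) gives
\[
|P_n(\xi)|\le K_1 r^{-n}e^{K_2r^2|\xi|_p^2}.
\]
Optimizing over $r$ yields $|P_n(\xi)|\le K_1(2eK_2/n)^{n/2}|\xi|_p^n$. The polarization formula from the proof of \cref{thm:S-injective} costs at most a factor $n^n/n!$ on the unit ball of $\S_p^n$.

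Next, fix $q>p$ with $\rho:=\|A^{-(q-p)}\|_{HS}$ small. Expand $\check P_n(e_{k_1},\dots,e_{k_n})$ in the basis $\{e_k\}$. The Hilbert--Schmidt property of $\S_q\hookrightarrow\S_p$ shows that $\check P_n$ is represented by a kernel $u_n\in\S_{-q}^{\hat\otimes n}$ with
\[
|u_n|_{-q}\le K_1\frac{n^n}{n!}\Bigl(\frac{2eK_2}{n}\Bigr)^{n/2}\rho^n .
\]
By Stirling, $n!\,|u_n|_{-q}^2\lesssim (C K_2\rho^2)^n$ for a universal constant $C$, so the series \eqref{eq:hida1} converges once $q$ is large enough. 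The converse half of \cref{thm:generalized chaos} then defines $U\in(\S)^*$. Its $S$-transform is $\sum_n P_n(\xi)$, which equals $F(\xi)$ by evaluating the entire function $z\mapsto F(z\xi)$ at $z=1$.

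I expect the main obstacle to be the kernel-theorem step: passing from the bound on the multilinear form on $\S_p$ to a genuine kernel in $\S_{-q}^{\hat\otimes n}$ with control that is geometric in $n$. Here the $n^n/n!$ polarization loss and the $n^{-n/2}$ Cauchy gain must be balanced against $\rho^{2n}$. This balance is exactly where the decay $\|A^{-r}\|_{HS}\to0$ of our Haar-based $A$ is used.
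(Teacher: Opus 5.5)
Your proposal is correct and is essentially the classical Potthoff--Streit argument: Cauchy estimates along rays, polarization, and the Hilbert--Schmidt kernel step. That is exactly what the paper relies on, since it gives no proof of its own and simply cites \cite[Theorem~8.2]{kuo2018white}. You also correctly isolate that the only input from the Haar-based $A$ is $\|A^{-r}\|_{HS}\to 0$ as $r\to\infty$.

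Two small remarks. First, concluding that $P_n$ is a homogeneous polynomial on finite-dimensional slices requires Hartogs' (or Osgood's) theorem, to pass from the separate analyticity given by (PS1) to joint analyticity. Second, the inclusion--exclusion polarization formula from the proof of \cref{thm:S-injective} can cost up to $(2n)^n/n!$ rather than $n^n/n!$; use the $\pm 1$ sign-averaging formula for the sharper constant. This is harmless either way, because the extra factor $2^n$ is absorbed by taking $q$ larger.
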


\begin{theorem}[Convergence of Hida distributions]\label{thm:S-convergence}
    Let $(U_N)_{N \geq 1} \subset (\S)^*$ and let
    $F_N := SU_N$.  Suppose:
    \begin{enumerate}
      \item[\textup{(i)}]
      $\lim_{N \to \infty} F_N(\xi)$ exists for every
      $\xi \in \S_{\mathbb{C}}$;
      \item[\textup{(ii)}]
      there exist constants $K_1, K_2\geq 0$ and
      $p \in \mathbb{N}_0$, independent of $N$, such that
      \[
        |F_N(\xi)|
        \leq K_1\,\exp\!\left(K_2\,|\xi|_p^2\right),
        \qquad N \geq 1,\;
        \xi \in \S_{\mathbb{C}}.
      \]
    \end{enumerate}
    Then there exists a unique $U \in (\S)^*$ such that
    $SU(\xi) = \lim_{N \to \infty} F_N(\xi)$ for every
    $\xi \in \S_{\mathbb{C}}$, and $U_N \to U$ in the
    weak-$*$ topology of $(\S)^*$, i.e.,
    \[
      \pp{U_N}{\Phi} \to \pp{U}{\Phi}
      \qquad \text{for every } \Phi \in (\S).
    \]
\end{theorem}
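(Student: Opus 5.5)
The plan is to build the limit $U$ with the characterization theorem (\cref{thm:characterization}), and then upgrade pointwise convergence of $S$-transforms to weak-$*$ convergence. The upgrade uses a uniform norm bound on the $U_N$ together with the density of exponential vectors (\cref{cor:exp-dense}).

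\emph{Step 1: the limit is an $S$-transform.} Set $F(\xi):=\lim_N F_N(\xi)$ for $\xi\in\S_{\mathbb C}$; this exists by (i). Condition (PS2) for $F$ follows at once by passing to the limit in the uniform bound (ii). For (PS1), fix $\xi,\eta\in\S_{\mathbb C}$ and set $g_N(z):=F_N(\xi+z\eta)$. Each $g_N$ is entire, since $F_N=SU_N$ satisfies (PS1). By (ii),
\[
|g_N(z)|\le K_1\exp\bigl(2K_2(|\xi|_p^2+|z|^2|\eta|_p^2)\bigr),
\]
so the family $\{g_N\}$ is locally uniformly bounded on $\mathbb C$. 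By Montel's theorem, together with pointwise convergence on all of $\mathbb C$ (Vitali's theorem), $g_N$ converges locally uniformly to an entire function. This limit is $z\mapsto F(\xi+z\eta)$. Hence $F$ satisfies (PS1) and (PS2), and \cref{thm:characterization} gives a unique $U\in(\S)^*$ with $SU=F$.

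\emph{Step 2: a uniform bound $\sup_N\|U_N\|_{-q}<\infty$ for some $q>p$.} I expect this to be the main obstacle, since it is the quantitative content hidden in the characterization theorem. I would argue directly with the generalized chaos kernels $u^{(N)}_n$ of $U_N$ (\cref{thm:generalized chaos}). By \cref{prop:S-series}, $t\mapsto F_N(th)=\sum_n t^n(u_n^{(N)},h^{\otimes n})$ is entire. Cauchy's estimate on a circle of radius $r$, combined with (ii), gives
\[
|(u^{(N)}_n,h^{\otimes n})|\le K_1 r^{-n}e^{K_2r^2|h|_p^2}.
\]
Optimizing in $r$ yields $|(u^{(N)}_n,h^{\otimes n})|\le K_1(2eK_2/n)^{n/2}|h|_p^n$. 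Polarization, as in the proof of \cref{thm:S-injective}, extends this to $|(u^{(N)}_n,h_1\hat\otimes\cdots\hat\otimes h_n)|\le K_1 C^n (K_2/n)^{n/2}\prod_j|h_j|_p$; the factor $n^n/n!\le e^n$ is absorbed into $C^n$. Now expand $(A^{-q})^{\otimes n}u_n^{(N)}$ in the basis $e_{k_1}\otimes\cdots\otimes e_{k_n}$, and use that $A^{-(q-p)}$ is Hilbert--Schmidt with norm $\delta_{q}:=\|A^{-(q-p)}\|_{HS}$, which tends to $0$ as $q\to\infty$. This gives
\[
n!\,|(A^{-q})^{\otimes n}u^{(N)}_n|^2_{L^2}\le K_1^2\,n!\,C^{2n}(K_2/n)^n\delta_{q}^{2n}.
\]
By Stirling, $n!\,n^{-n}\le C' e^{-n}\sqrt n$, so the right-hand side is summable in $n$ once $q$ is chosen with $C^2K_2\delta_q^2<1$. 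The resulting bound on $\|U_N\|_{-q}^2$ via \eqref{eq:hida1} is independent of $N$.

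\emph{Step 3: weak-$*$ convergence.} For $\Phi\in\mathcal D=\operatorname{span}\{\mathcal E(h):h\in\S\}$, linearity and (i) give $\pp{U_N}{\Phi}\to\pp{U}{\Phi}$. The bound from Step 2 also holds for $U$, either by the same argument or by weak lower semicontinuity. For general $\Phi\in(\S)$ and $\varepsilon>0$, \cref{cor:exp-dense} gives $\Psi\in\mathcal D$ with $\|\Phi-\Psi\|_q<\varepsilon$. Then
\[
|\pp{U_N-U}{\Phi}|\le (\|U_N\|_{-q}+\|U\|_{-q})\,\varepsilon+|\pp{U_N-U}{\Psi}|,
\]
which is $O(\varepsilon)$ for $N$ large. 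Uniqueness of $U$ follows from \cref{thm:S-injective}.
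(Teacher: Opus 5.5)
Your proof is correct, and it is essentially the standard argument behind this result. The paper does not prove the theorem but cites it from \cite[Theorem~8.6]{kuo2018white}, whose proof rests on two ingredients: a uniform bound $\sup_N\|U_N\|_{-q}<\infty$ obtained from the Cauchy-estimate, polarization and Hilbert--Schmidt computation underlying \cref{thm:characterization}, and convergence on the dense span $\mathcal D$ of exponential vectors. The only inessential difference is your Step~1: once Step~2 is available, $U$ can be obtained directly as the continuous extension to $(\S_q)$ of $\Phi\mapsto\lim_N\pp{U_N}{\Phi}$ on $\mathcal D$, which makes the Vitali argument and the separate bound on $\|U\|_{-q}$ unnecessary.
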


\begin{remark}
For a nuclear countably Hilbert space such as $(\S)$,
weak-$*$ convergence of sequences in $(\S)^*$
is equivalent to norm convergence:
$U_N \to U$ in the weak-$*$ topology of $(\S)^*$
if and only if there exists $q \geq 0$ such that
$\|U_N - U\|_{-q} \to 0$ 
(see \cite[pp.~10--11]{kuo2018white}).
In particular, the convergence in
\cref{thm:S-convergence} also holds 
in the sense of norm convergence.
\end{remark}

\section{Renormalization group (RG) operator}\label{sec:RG operator}

In this section we introduce the RG operator together with its
linearization. We use the notation introduced in
\cref{sec:Wiener_space}; in particular, $\Omega$ denotes the space of
continuous functions on $[0,1]$ vanishing at $t=0$, and $\mu_\infty$
is the Wiener measure on $\Omega$.

The starting point is the Lévy--Ciesielski construction of Brownian
motion on the unit interval, which is based on the Faber--Schauder
wavelet system.
The Faber--Schauder system is constructed from scaled and shifted
copies of a single ``tent'' function, playing the role of the mother
wavelet. Specifically, we define the mother wavelet by
\begin{equation}\label{eq:mother-schauder}
    \psi(t) =
    \begin{cases}
      t & \text{if } 0 \leq t < 1/2, \\
      1 - t & \text{if } 1/2 \leq t \leq 1, \\
      0 & \text{otherwise},
    \end{cases}
\end{equation}
and for integers $n \geq 0$ and $0 \leq k \leq 2^n - 1$, we define the
Faber--Schauder wavelet functions
\begin{equation}
    \psi_{n,k}(t) = 2^{-n/2} \, \psi(2^n t - k), \quad 0\leq t\leq 1.
\end{equation}
Each $\psi_{n,k}$ is supported on the dyadic interval
$[k 2^{-n}, (k+1) 2^{-n}]$, attaining its peak at $(k + 1/2) 2^{-n}$.

This construction gives rise to a natural multiscale,
piecewise-linear approximation scheme for Brownian sample paths.
Indeed, let
$\left(\xi, \xi_{n,k}\right)_{n \geq 0, 0 \leq k \leq 2^n - 1}$
be a sequence of independent standard normal random variables
on $(\Omega, \mathscr{B}, \mu_\infty)$.
For each $N\geq0$, define the partial sum
\begin{equation}\label{eq:B_N}
    \beta_N(t) = \sum_{n=0}^N \sum_{k=0}^{2^n - 1} \xi_{n,k} \,
    \psi_{n,k}(t),\quad 0\leq t\leq 1.
\end{equation}
The process $\left(\beta_N(t)\right)_{0 \leq t \leq 1}$ provides an approximation of 
the Brownian bridge at scale $N$. By adding a large-scale Gaussian mode, 
one recovers a standard Brownian motion in the limit of infinite resolution, as stated in the 
following theorem (see, e.g., \cite{evans2012introduction}).

\begin{theorem}[Wavelet approximation of Brownian motion]
\label{thm:bm_convergence}
    For $0\leq t\leq 1$, define
    $$
    W_N(t) = \xi t + \beta_N(t).
    $$
    Then, almost surely,
    $W_N$ converges uniformly to a standard Brownian motion on the
    interval $[0,1]$.
\end{theorem}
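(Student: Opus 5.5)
The plan is the classical Lévy--Ciesielski argument: uniform almost-sure convergence via Borel--Cantelli, followed by identification of the limit law through its covariance.

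\textbf{Step 1 (uniform convergence).} Set $M_n := \max_{0\le k\le 2^n-1}|\xi_{n,k}|$. At each level $n$ the tents $\psi_{n,k}$ have disjoint interiors and $\sup_t \psi_{n,k}(t) = 2^{-n/2}\cdot\tfrac12$. Hence
\[
\sup_{t\in[0,1]}\Bigl|\sum_{k}\xi_{n,k}\psi_{n,k}(t)\Bigr| \le \tfrac12\, 2^{-n/2} M_n .
\]
The Gaussian tail bound $\P(|\xi_{n,k}|>c\sqrt{n}) \le e^{-c^2 n/2}$ and a union bound over the $2^n$ indices give
\[
\P\bigl(M_n > c\sqrt{n}\bigr) \le 2^n e^{-c^2 n/2},
\]
which is summable once $c^2 > 2\log 2$. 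By Borel--Cantelli, almost surely $M_n \le c\sqrt n$ for all large $n$. Therefore $\sum_n 2^{-n/2}M_n < \infty$ a.s., and $\beta_N$ converges uniformly on $[0,1]$ by the Weierstrass M-test. The limit, and hence $W := \lim_N W_N = \xi t + \beta_\infty$, is continuous with $W(0)=0$, since each partial sum is continuous and vanishes at $0$.

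\textbf{Step 2 (Gaussianity).} Each $W_N$ is a centered Gaussian process, being a finite linear combination of independent standard normals. Almost-sure limits of jointly Gaussian vectors are Gaussian: their characteristic functions converge, and the covariances converge by Step 3 below. So $W$ is a centered Gaussian process, and it only remains to check that $\E[W(s)W(t)] = \min(s,t)$.

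\textbf{Step 3 (covariance via Parseval).} Write $\psi_{n,k}(t) = \int_0^t e_{n,k}$ (up to the normalization the paper uses) and $t = \int_0^t \mathbf 1$. By independence of the coefficients,
\[
\E[W_N(s)W_N(t)] = \ps{\mathbf{1}_{[0,s]},\mathbf 1}\ps{\mathbf{1}_{[0,t]},\mathbf 1} + \sum_{n\le N}\sum_k \ps{\mathbf{1}_{[0,s]},e_{n,k}}\ps{\mathbf{1}_{[0,t]},e_{n,k}}.
\]
Since $\{\mathbf 1\}\cup\{e_{n,k}\}$ is a complete orthonormal basis of $L^2([0,1])$, as $N\to\infty$ this tends to $\ps{\mathbf{1}_{[0,s]},\mathbf{1}_{[0,t]}}_{L^2} = \min(s,t)$ by Parseval.

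The a.s. convergence $W_N(t)\to W(t)$ together with the $L^2$ convergence of the Gaussian series gives $\E[W(s)W(t)] = \min(s,t)$. Hence $W$ is a continuous centered Gaussian process with Brownian covariance, i.e.\ a standard Brownian motion.

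\textbf{Main obstacle.} The only delicate point is Step 1, namely controlling the maximum of $2^n$ Gaussians by $O(\sqrt n)$ uniformly in $n$. The Borel--Cantelli estimate above handles it.

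A small bookkeeping check is also needed in Step 3. With the paper's normalization $\psi_{n,k} = 2^{-n/2}\psi(2^n\cdot - k)$, one has $\int_0^t e_{n,k} = 2\cdot 2^{-n/2}\psi(2^n t-k)$, a factor $2$ off. I would verify this constant carefully, and if needed absorb it by rescaling the $\xi_{n,k}$, since the classical statement in the cited reference uses exactly the primitives of the Haar functions.
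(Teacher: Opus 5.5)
Your proof is correct and is the classical L\'evy--Ciesielski argument: Borel--Cantelli control of the level maxima, the Weierstrass M-test, then Gaussianity and the covariance $\min(s,t)$ via Parseval in the Haar basis; the paper cites this argument (Evans) rather than proving the theorem. Your factor-of-$2$ worry is unfounded: with the paper's tent $\psi$ (peak value $1/2$) one has $\int_0^t e_{n,k}(s)\,ds = 2^{-n/2}\psi(2^n t-k) = \psi_{n,k}(t)$ exactly, so Step~3 goes through as written and no rescaling of the $\xi_{n,k}$ is needed (nor would it be allowed, since they are fixed standard normals).
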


Beyond yielding approximations of Brownian motion, the
Faber--Schauder wavelet construction also induces a natural dyadic
cascade relation across scales, which will serve as the starting
point for the RG analysis carried out in this work.

\begin{proposition}[Cascade relation] \label{thm:cascade relation}
    For each $N\geq 0$ we have the following equality in law
    \begin{equation}\label{eq:cascade relation}
        W_{N+1}(t)\law
        \frac{1}{\sqrt 2}
        \left(
        W'_N(2t\wedge 1)
        +W''_N\big((2t-1)\vee 0\big)
        \right),
        \qquad 0\leq t\leq 1,
    \end{equation}
    where $W'_N$ and $W''_N$ are two independent copies of $W_N$.
\end{proposition}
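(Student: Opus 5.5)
Both sides of \eqref{eq:cascade relation} are centered Gaussian processes with continuous paths on $[0,1]$, so I will show that their covariance functions coincide. The right-hand side is a linear combination of the jointly Gaussian, centered coefficients $\xi',\xi'_{n,k},\xi'',\xi''_{n,k}$ (all independent), so it is centered Gaussian. Equality of covariances then gives equality of finite-dimensional distributions, and by continuity of paths, equality in law on $\Omega$.

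The plan is to exhibit the identity at the level of the expansions, which is cleaner than comparing covariances directly. Write $W_{N+1}(t)=\xi t+\xi_{0,0}\psi(t)+\sum_{n=1}^{N+1}\sum_k \xi_{n,k}\psi_{n,k}(t)$. The wavelets at levels $n\ge1$ split into those supported in $[0,1/2]$ ($k<2^{n-1}$) and those in $[1/2,1]$. For $t\in[0,1/2]$ and $k<2^{n-1}$, reindexing with $m=n-1$ gives
\[
\psi_{n,k}(t)=2^{-n/2}\psi(2^{m}(2t)-k)=\tfrac{1}{\sqrt2}\,\psi_{m,k}(2t).
\]
Analogously, on $[1/2,1]$ we have $\psi_{n,2^{m}+k}(t)=\tfrac{1}{\sqrt2}\psi_{m,k}(2t-1)$. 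Hence the fine-scale part equals $\tfrac{1}{\sqrt2}\bigl(\beta'_N(2t\wedge1)+\beta''_N((2t-1)\vee0)\bigr)$. Here $\beta'_N$ and $\beta''_N$ are independent copies of $\beta_N$, built from $\xi_{n,k}$ with $k<2^{n-1}$ and $k\ge 2^{n-1}$ respectively. This uses that $\beta_N$ vanishes at $0$ and $1$, so $\beta'_N(2t\wedge 1)=0$ for $t\ge1/2$ and $\beta''_N(0)=0$.

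It remains to match the coarse part $\xi t+\xi_{0,0}\psi(t)$ with $\tfrac{1}{\sqrt2}\bigl(\xi'(2t\wedge1)+\xi''((2t-1)\vee0)\bigr)$. Both are piecewise linear on $[0,1/2]$ and $[1/2,1]$ and vanish at $0$, so each is determined by its values at $t=1/2$ and $t=1$:
\begin{itemize}
\item the left side gives $(\xi/2+\xi_{0,0}/2,\;\xi)$;
\item the right side gives $\bigl(\xi'/\sqrt2,\;(\xi'+\xi'')/\sqrt2\bigr)$.
\end{itemize}
Setting $\xi'=(\xi+\xi_{0,0})/\sqrt2$ and $\xi''=(\xi-\xi_{0,0})/\sqrt2$ makes these agree. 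This is an orthogonal rotation of $(\xi,\xi_{0,0})$, so $\xi',\xi''$ are independent standard normals, independent of all the $\xi_{n,k}$ with $n\ge1$.

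Assembling the two parts represents $W_{N+1}$ pathwise as the right-hand side, with $W'_N=\xi' t+\beta'_N$ and $W''_N=\xi''t+\beta''_N$ independent copies of $W_N$. This gives equality in law, in fact almost sure equality under this coupling. The main care point is the index bookkeeping in the rescaling step and checking the boundary behaviour at $t=1/2$, where both formulas must agree. Continuity of the tents and their vanishing at the endpoints of dyadic intervals handle that boundary.
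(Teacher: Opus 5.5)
Your proposal is correct and follows essentially the same route as the paper. You split the levels $n\ge1$ into left- and right-supported tents via $\psi_{n,k}(t)=\tfrac{1}{\sqrt2}\psi_{n-1,k}(2t)$ and $\psi_{n,2^{n-1}+k}(t)=\tfrac{1}{\sqrt2}\psi_{n-1,k}(2t-1)$, then rotate $(\xi,\xi_{0,0})$ into $\xi'=(\xi+\xi_{0,0})/\sqrt2$ and $\xi''=(\xi-\xi_{0,0})/\sqrt2$ to get an almost-sure coupling. The only cosmetic differences are that you match the coarse part through its values at $t=1/2$ and $t=1$, whereas the paper uses the identities $t+\psi(t)=2t\wedge1$ and $t-\psi(t)=(2t-1)\vee0$, and that the covariance comparison announced in your first paragraph is never needed.
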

Explicitly, the right-hand side of \eqref{eq:cascade relation} equals
$\tfrac{1}{\sqrt{2}}W'_N(2t)$ for $t\in[0,\tfrac12]$ and
$\tfrac{1}{\sqrt{2}}\left(W'_N(1)+W''_N(2t-1)\right)$ for
$t\in[\tfrac12,1]$. 
In other words, the first copy is 
rescaled to the left half of the unit interval
and then held fixed at its terminal value,
while the second copy is rescaled and shifted to 
the right half of the unit interval;
the two halves join continuously at $t=\tfrac12$.

\begin{proof}
\textit{Wavelet modes.}\quad
We express $W_{N+1}$ in terms of $W_N$ by decomposing 
the approximation $W_{N+1}$ into its constituent wavelet components as follows
\begin{align*}
    W_{N+1}(t) =\,\, &\xi t +  \xi_{0,0}\psi(t) + \sum_{n=1}^{N+1}\sum_{k=0}^{2^n-1}\xi_{n,k}
    \psi_{n,k}(t)
    \nonumber
    \\
    =\,\,& \xi t + \xi_{0,0}\psi(t) + \sum_{n=0}^{N}\sum_{k=0}^{2^{n+1}-1}\xi_{n+1,k}\psi_{n+1,k}(t)
    \nonumber
    \\
    =\,\,& \xi t + \xi_{0,0}\psi(t) + \sum_{n=0}^{N}
    \left[
        \sum_{k=0}^{2^{n}-1}\xi_{n+1,k}\psi_{n+1,k}(t) 
        +
        \sum_{k=2^n}^{2^{n+1}-1}\xi_{n+1,k}\psi_{n+1,k}(t)
    \right]
    \nonumber
    \\
    =\,\,& \xi t + \xi_{0,0}\psi(t) + \sum_{n=0}^{N}
    \left[
        \sum_{k=0}^{2^{n}-1}\xi_{n+1,k}\psi_{n+1,k}(t) 
        +
        \sum_{k=0}^{2^{n}-1}\xi_{n+1,k+2^n}\psi_{n+1,k+2^n}(t) 
    \right].
\end{align*}
By using the scaling relations
\begin{equation*}
    \psi_{n+1,k}(t)=\frac{1}{\sqrt{2}}\psi_{n,k}(2t)
    \quad,\quad
    \psi_{n+1,k+2^n}(t)=\frac{1}{\sqrt{2}}\psi_{n,k}(2t-1),
\end{equation*}
and setting
\begin{equation*}
    \beta_N'(s) \coloneq
    \sqrt{2}\sum_{n=0}^{N}\sum_{k=0}^{2^{n}-1}\xi_{n+1,k}\psi_{n+1,k}(s/2)
    \quad,\quad
    \beta_N''(s)\coloneq
    \sqrt{2}\sum_{n=0}^{N}\sum_{k=0}^{2^{n}-1}\xi_{n+1,k+2^n}\psi_{n+1,k+2^n}\big((s+1)/2\big), 
\end{equation*}
we have that $\beta_N'$ and $\beta_N''$ are two copies of $\beta_N$
independent from each other and from $\xi$ and $\xi_{0,0}$, leading to
\begin{equation}\label{eq:cascade_proof}
    W_{N+1}(t) = \xi t + \xi_{0,0} \psi(t) +\frac{1}{\sqrt{2}} \left(\beta'_N(2t)+\beta''_N(2t-1)\right).
\end{equation}

\medskip
\textit{Large-scale modes.}\quad
It remains to handle the two large-scale modes $\xi t$ and $\xi_{0,0}\psi(t)$.
Setting
\begin{equation*}
    \xi' \coloneq \frac{\xi+\xi_{0,0}}{\sqrt{2}}
    \quad,\quad
    \xi'' \coloneq \frac{\xi-\xi_{0,0}}{\sqrt{2}},
\end{equation*}
the pair $(\xi',\xi'')$ is centered and Gaussian, with
\begin{equation*}
    \mathbb{E}\left[(\xi')^2\right]
    =\mathbb{E}\left[(\xi'')^2\right]=1
    \quad,\quad
    \mathbb{E}\left[\xi'\xi''\right]
    =\frac{1}{2}\,\mathbb{E}\left[\xi^2-\xi_{0,0}^2\right]=0,
\end{equation*}
so that $\xi'$ and $\xi''$ are independent standard normal random
variables, independent of $\beta'_N$ and $\beta''_N$. 
The sum of the first two terms in 
\eqref{eq:cascade_proof} can then be expressed as
\begin{equation}\label{eq:cascade_proof2}
    \begin{split}
        \xi t + \xi_{0,0}\psi(t)
        &= \frac{1}{\sqrt{2}}\,(\xi'+\xi'')t
        + \frac{1}{\sqrt{2}}\,(\xi'-\xi'')\psi(t)\\
        &= \frac{1}{\sqrt{2}}\,\xi'\left(t+\psi(t)\right)
        + \frac{1}{\sqrt{2}}\,\xi''\left(t-\psi(t)\right)\\
        &= \frac{1}{\sqrt{2}}\,\xi'\,(2t\wedge 1)
        + \frac{1}{\sqrt{2}}\,\xi''\,\big((2t-1)\vee 0\big),
    \end{split}
\end{equation}
where the last equality follows from the elementary identities
$t+\psi(t)=2t\wedge 1$ and $t-\psi(t)=(2t-1)\vee 0$.
Moreover, $\beta'_N(2t)=0$ for $t\geq\tfrac12$ and
$\beta''_N(2t-1)=0$ for $t\leq\tfrac12$, since the wavelets defining
$\beta'_N$ and $\beta''_N$ have support inside $[0,\tfrac12]$ and
$[\tfrac12,1]$, respectively. Hence
\begin{equation}\label{eq:cascade_proof3}
    \beta'_N(2t\wedge 1)=\beta'_N(2t)
    \quad,\quad
    \beta''_N\big((2t-1)\vee 0\big)=\beta''_N(2t-1),
    \qquad 0\leq t\leq 1.
\end{equation}

Finally, substituting Eqs.~\eqref{eq:cascade_proof2} and \eqref{eq:cascade_proof3} into \eqref{eq:cascade_proof} 
we obtain
\begin{equation*}
    \begin{split}
        W_{N+1}(t)
        &= \frac{1}{\sqrt{2}}
        \left(
        \xi'\,(2t\wedge 1)+\beta'_N(2t\wedge 1)
        \right)
        + \frac{1}{\sqrt{2}}
        \left(
        \xi''\,\big((2t-1)\vee 0\big)
        +\beta''_N\big((2t-1)\vee 0\big)
        \right)\\
        &= \frac{1}{\sqrt{2}}
        \left(
        W'_N(2t\wedge 1) + W''_N\big((2t-1)\vee 0\big)
        \right),
    \end{split}
\end{equation*}
where
\begin{equation*}
    W'_N(s) \coloneq \xi' s + \beta'_N(s)
    \quad,\quad
    W''_N(s) \coloneq \xi'' s + \beta''_N(s)
\end{equation*}
are two independent copies of $W_N$.
\end{proof}

Next, we express the
cascade relation \eqref{eq:cascade relation}
at the level of probability measures.
To this end, we define the path rescaling transformations
$F,G : \Omega\to \Omega$ by
\begin{equation}\label{eq:FandG}
    \begin{aligned}
        (F\omega)(t) &=
        \begin{cases}
            \dfrac{1}{\sqrt 2}\,\omega(2t),
            \quad\,\,\, & t \in [0,\tfrac{1}{2}], \\[0.3cm]
            \dfrac{1}{\sqrt 2}\,\omega(1), & t \in [\tfrac{1}{2},1],
        \end{cases}
        \\[0.5cm]
        (G\omega)(t) &=
        \begin{cases}
            0, & t \in [0,\tfrac{1}{2}], \\[0.3cm]
            \dfrac{1}{\sqrt 2}\,\omega(2t-1), & t \in [\tfrac{1}{2},1],
        \end{cases}
    \end{aligned}
    \end{equation}
so that, for every $\omega\in\Omega$,
\begin{equation*}
    (F\omega)(t) = \frac{1}{\sqrt 2}\,\omega(2t\wedge 1)
    \quad,\quad
    (G\omega)(t) = \frac{1}{\sqrt 2}\,\omega\big((2t-1)\vee 0\big).
\end{equation*}
Denoting by $\mu_N$ the law of $W_N$ on $\Omega$, the cascade
relation \eqref{eq:cascade relation} becomes
\begin{equation}\label{eq:cascade relation2}
   \mu_{N+1} =
   \left(F_\#\mu_N\right) * \left(G_\#\mu_N\right),
\end{equation}
where $F_\# \mu_N$ and $G_\# \mu_N$ denote the pushforwards\footnote{One can easily check that $F$ and $G$ are continuous linear maps on $\Omega$, so that the pushforwards $F_\#\mu_N$ and $G_\#\mu_N$ are well-defined.}
of $\mu_N$ by the transformations $F$ and $G$, respectively,
and $*$ denotes the convolution of probability measures.

In analogy with the CLT,
expression \eqref{eq:cascade relation2}
can be viewed as a deterministic transformation on
probability laws:
the law at scale $N+1$ is obtained by convolving two
rescaled copies of the law at scale $N$.
This motivates the following definition
of the RG operator.

\begin{definition}
    Let $\M$ be the space of probability measures on $\Omega$.
    The RG operator is the map $\mR: \M\to\M$ defined by
    \begin{equation}\label{eq:RG dynamics}
        \mR\left[\mu\right]=
        \left(F_\#\mu\right) * \left(G_\#\mu\right),
        \quad \mu \in \M.
    \end{equation}
\end{definition}

It follows that relation \eqref{eq:cascade relation2}
holds not only for the finite-resolution approximations of Brownian
motion, but also for the limiting measure $\mu_\infty$.
From the RG perspective, this means that the Wiener measure
is a fixed point of $\mR$, as stated in the next result.

\begin{proposition}
    The Wiener measure $\mu_\infty$ obeys the fixed point
    equation
    \begin{equation}\label{eq:fixedpoint}
        \mR\left[\mu_\infty\right] = \mu_\infty.
    \end{equation}
\end{proposition}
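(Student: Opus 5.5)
We pass the cascade relation \eqref{eq:cascade relation2} to the limit $N\to\infty$, using the weak convergence $\mu_N\Rightarrow\mu_\infty$ that follows from \cref{thm:bm_convergence}. The argument rests on two facts: the maps $F$ and $G$ are continuous, and convolution on $\Omega$ is jointly weakly continuous. A direct route is also available as a cross-check. The right-hand side of \eqref{eq:fixedpoint} is the law of a centered Gaussian process, so it suffices to compute its covariance.

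\textbf{Step 1: weak convergence of $\mu_N$.} By \cref{thm:bm_convergence}, $W_N\to W$ uniformly almost surely, where $W$ is a standard Brownian motion. Hence $\mu_N\Rightarrow\mu_\infty$ weakly on the separable Banach space $(\Omega,|\cdot|_\infty)$. This holds because almost sure convergence implies convergence in law, and bounded continuous test functions on $\Omega$ pass to the limit by dominated convergence.

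\textbf{Step 2: continuity of $\mathcal R$ along the sequence.} The maps $F$ and $G$ are linear and satisfy $|F\omega|_\infty,|G\omega|_\infty\le 2^{-1/2}|\omega|_\infty$, so they are continuous. The continuous mapping theorem therefore gives $F_\#\mu_N\Rightarrow F_\#\mu_\infty$ and $G_\#\mu_N\Rightarrow G_\#\mu_\infty$.

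For the convolution, we realize $(F_\#\mu_N)*(G_\#\mu_N)$ as the law of $FW'_N+GW''_N$, with $W'_N,W''_N$ independent copies of $W_N$. The product measures $\mu_N\otimes\mu_N$ converge weakly to $\mu_\infty\otimes\mu_\infty$ on $\Omega\times\Omega$; this is standard for separable metric spaces, and can be obtained by taking independent almost surely convergent copies. The map $(\omega_1,\omega_2)\mapsto F\omega_1+G\omega_2$ is continuous, so a second application of the continuous mapping theorem gives $\mathcal R[\mu_N]\Rightarrow\mathcal R[\mu_\infty]$.

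\textbf{Step 3: conclusion.} By \eqref{eq:cascade relation2}, $\mathcal R[\mu_N]=\mu_{N+1}\Rightarrow\mu_\infty$. Weak limits on a metric space are unique, so $\mathcal R[\mu_\infty]=\mu_\infty$.

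\textbf{Cross-check by covariance.} One can verify the identity directly. If $W',W''$ are independent Brownian motions, then
\[
X(t)=2^{-1/2}\bigl(W'(2t\wedge1)+W''((2t-1)\vee0)\bigr)
\]
is a continuous centered Gaussian process with $X(0)=0$. For $s\le t$, its covariance is
\[
\tfrac12\bigl[(2s\wedge1)\wedge(2t\wedge1)+((2s-1)\vee0)\wedge((2t-1)\vee0)\bigr].
\]
Checking the three cases $s\le t\le\tfrac12$, $s\le\tfrac12\le t$ and $\tfrac12\le s\le t$, the expression equals $s=s\wedge t$ in each. Gaussian laws are determined by their covariance, and Borel sets on $\Omega$ are determined by finite-dimensional distributions, so the law of $X$ is $\mu_\infty$.

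The only step that needs some care is the weak continuity of convolution in Step 2, and the coupling argument handles it. The covariance computation avoids this step entirely, so either route suffices.
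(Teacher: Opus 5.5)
Your proposal is correct, and your covariance cross-check is exactly the paper's proof. The paper sets $X=FW'+GW''$ for independent Brownian motions, observes that $X$ is a continuous centered Gaussian process with $X(0)=0$, and verifies $\mathbb{E}[X(s)X(t)]=s\wedge t$ in the same three cases $s\le t\le\tfrac12$, $s\le\tfrac12\le t$ and $\tfrac12\le s\le t$.

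Your primary argument, however, takes a genuinely different route, and all of its steps are sound. You obtain the fixed point as the weak limit of the cascade orbit $\mu_{N+1}=\mathcal{R}[\mu_N]$. For this you use:
\begin{itemize}
\item continuity of $F$, $G$ and of addition on $\Omega$;
\item weak convergence of the product measures, via an almost sure coupling of independent copies;
\item uniqueness of weak limits.
\end{itemize}
This is the rigorous version of ``passing the cascade relation to the limit,'' which the paper invokes only heuristically in the introduction. It also has real dynamical content. Your product-measure argument works for any weakly convergent sequence in $\mathcal{M}$, so it shows that $\mathcal{R}$ is weakly sequentially continuous. Hence any weak limit of an $\mathcal{R}$-orbit is automatically a fixed point.

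The price is two additional inputs. One is the exact finite-$N$ cascade relation \eqref{eq:cascade relation2}. The other is the L\'evy--Ciesielski convergence theorem (\cref{thm:bm_convergence}), which the paper cites rather than proves. That theorem's identification of the limit as Brownian motion is itself a covariance computation (Parseval for the Haar system), so your route relocates the Gaussian covariance check rather than avoiding it. The paper's direct computation is shorter and self-contained: it uses only linearity of $F$, $G$ and Gaussianity, and it does not depend on the Faber--Schauder approximation scheme at all.
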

\begin{proof}
    Let $W'$ and $W''$ be independent standard Brownian motions on
    $[0,1]$ and set $X\coloneq FW'+GW''$, so that
    $\mathrm{Law}(X)=\mR[\mu_\infty]$. Explicitly,
    \begin{equation*}
        X(t)=
        \begin{cases}
            \dfrac{1}{\sqrt 2}\,W'(2t),
            & t \in [0,\tfrac{1}{2}], \\[0.3cm]
            \dfrac{1}{\sqrt 2}\,W'(1)
            +\dfrac{1}{\sqrt 2}\,W''(2t-1),
            & t \in [\tfrac{1}{2},1].
        \end{cases}
    \end{equation*}
    Since $F$ and $G$ are linear and $(W',W'')$ is jointly Gaussian,
    the process $X$ is centered and Gaussian, with continuous paths
    and $X(0)=0$. It remains to compute its covariance. 
    Without loss of generality, assume $0\leq s\leq t\leq 1$.

    If $s\leq t\leq\tfrac12$, then
    \begin{equation*}
        \mathbb{E}\left[X(s)X(t)\right]
        =\frac{1}{2}\,\mathbb{E}\left[W'(2s)W'(2t)\right]
        =\frac{1}{2}\,(2s\wedge 2t)=s.
    \end{equation*}
    On the other hand, if $s\leq\tfrac12\leq t$, we have
    \begin{equation*}
            \mathbb{E}\left[X(s)X(t)\right]
            =\frac{1}{2}\,
            \mathbb{E}\left[
            W'(2s)\left(W'(1)+W''(2t-1)\right)
            \right]
            =\frac{1}{2}\,\mathbb{E}\left[W'(2s)\,W'(1)\right]
            =\frac{1}{2}\,(2s\wedge 1)=s,
    \end{equation*}
    where the cross terms vanish by independence and centering.
    Similarly, if $\tfrac12\leq s\leq t$, then
    \begin{equation*}
        \mathbb{E}\left[X(s)X(t)\right]
        =\frac{1}{2}
        \left(
        \mathbb{E}\left[W'(1)^2\right]
        +\mathbb{E}\left[W''(2s-1)\,W''(2t-1)\right]
        \right)
        =\frac{1}{2}\left(1+(2s-1)\right)=s.
    \end{equation*}
    In all cases $\mathbb{E}[X(s)X(t)]=s\wedge t$, so that $X$ is a
    standard Brownian motion and $\mathrm{Law}(X)=\mu_\infty$.
\end{proof}

Having identified $\mu_\infty$ as a fixed point of the
renormalization group dynamics, the next step is to analyze its
stability.
In particular, we wish to understand how small perturbations of
$\mu_\infty$ evolve under successive applications of the RG
operator.
This is achieved by linearizing $\mR$ in a neighborhood of
$\mu_\infty$.
Formally, let
\begin{equation*}
    \mu = \mu_\infty + \varepsilon \nu,
\end{equation*}
where $\nu$ is a signed measure on $\Omega$ with zero total mass,
and $\varepsilon>0$ is a small parameter.
Substituting this expansion into the RG relation
\eqref{eq:RG dynamics} and retaining
terms of order $\varepsilon$ yields the linearized RG operator
\begin{equation}\label{eq:definition_L}
    \L\nu
    = (F_\# \nu) * (G_\# \mu_\infty)
    + (F_\# \mu_\infty) * (G_\# \nu).
\end{equation}
This operator governs the first-order dynamics of perturbations
around the Brownian fixed point and is the central object of the analysis
in the following sections.
\section{The dual operator $\L^*$}\label{sec:dual-operator}

The operator $\L$ is naturally defined on signed measures; 
however, to fully analyze its spectral properties, 
it is necessary to extend its action to a larger space of distributions 
(generalized functionals). 
This extension makes it possible to detect generalized eigenvectors 
of $\L$ that cannot be represented as measures.
The natural framework for such an extension is provided by the theory of 
Hida distributions,
which builds on the analytic machinery of Hida calculus 
(white noise analysis). 

The extension is carried out by following the standard duality principle
for linear operators.
We first introduce the dual operator $\L^*$ acting on Hida test
functionals.
The role of $\L^*$ is precisely to transfer the action of $\L$ from
measures to functionals through the dual pairing between measures and
test functionals.
Once $\L^*$ is shown to be a continuous linear map from the Hida test
space into itself, the extension of $\L$ to Hida distributions is
obtained by duality.

\begin{definition}[Dual operator]\label{def:Lstar}
    Let $\Phi:\Omega\to\mathbb{R}$ be measurable. 
    We define the dual operator $\L^*$ by the expression
    
    \begin{equation}\label{eq:Lstar}
        \L^*\Phi(\omega)
        \coloneq 
        \int_{\Omega}
        \Phi\bigl(F\omega + G\omega_2 \bigr)
        \, d\mu_\infty(\omega_2) 
         +
        \int_{\Omega}
        \Phi\bigl(F\omega_1 + G\omega \bigr)
        \, d\mu_\infty(\omega_1),
    \end{equation}    
    whenever the right-hand side is well-defined and finite.
\end{definition}    

The definition above specifies $\L^*$ as a conditional-expectation
operator. In this sense it is reminiscent of the Mehler representation
of the Ornstein--Uhlenbeck (OU) semigroup on Wiener space, where the
action of the semigroup can be written as a conditional expectation by
combining two independent copies of the underlying Brownian path 
with their amplitudes rescaled. 
In expression \eqref{eq:Lstar}, however, the maps $F$ and $G$ act not
only by rescaling the amplitude, but also by rescaling time:
as \eqref{eq:FandG} shows, they encode path compressions and shifts, restricting
the trajectory to the left and right halves of the interval.
We adopt this formulation because it is precisely the one for which the
natural duality relation with $\L$ holds, as stated in the proposition
below.

\begin{proposition}[Duality]
    Let $\nu$ be a signed measure on $\Omega$
    and let $\Phi:\Omega\to\mathbb{R}$ be measurable.
    Then
    \begin{equation}\label{eq:duality}
    \int_{\Omega} \Phi(\omega)\, d(\L\nu)(\omega)
    \;=\;
    \int_{\Omega} (\L^*\Phi)(\omega)\, d\nu(\omega),
    \end{equation}
    whenever both integrals exist and are finite.
\end{proposition}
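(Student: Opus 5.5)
The plan is to unfold the definition \eqref{eq:definition_L} of $\L\nu$ as a sum of two convolutions and, in each term, apply the change-of-variables formula for pushforwards together with Fubini's theorem.

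Recall that for finite signed measures $\alpha,\beta$ on the Banach space $\Omega$, the convolution $\alpha*\beta$ is by definition the pushforward of the product measure $\alpha\otimes\beta$ under the continuous addition map $(\omega_1,\omega_2)\mapsto\omega_1+\omega_2$. Hence, for measurable $\Phi$,
\[
\int_\Omega \Phi\, d(\alpha*\beta)=\iint_{\Omega\times\Omega}\Phi(\omega_1+\omega_2)\,d\alpha(\omega_1)\,d\beta(\omega_2),
\]
whenever the right-hand side is absolutely integrable.

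Apply this first to $\alpha=F_\#\nu$ and $\beta=G_\#\mu_\infty$. The product $F_\#\nu\otimes G_\#\mu_\infty$ equals $(F\times G)_\#(\nu\otimes\mu_\infty)$, since both measures agree on measurable rectangles. So the change-of-variables formula gives
\[
\int \Phi\, d\bigl((F_\#\nu)*(G_\#\mu_\infty)\bigr)=\iint \Phi(F\omega+G\omega_2)\,d\nu(\omega)\,d\mu_\infty(\omega_2).
\]
Fubini's theorem then lets us integrate first in $\omega_2$. The inner integral is exactly the first term of $\L^*\Phi(\omega)$ in \eqref{eq:Lstar}, so the whole expression equals $\int(\text{first term})\,d\nu(\omega)$.

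The second term $(F_\#\mu_\infty)*(G_\#\nu)$ is handled in the same way, with the roles swapped. It produces $\iint\Phi(F\omega_1+G\omega)\,d\mu_\infty(\omega_1)\,d\nu(\omega)$, which after Fubini becomes the integral against $\nu$ of the second term of $\L^*\Phi$. Adding the two identities gives \eqref{eq:duality}.

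The only real obstacle is justifying Fubini for a signed measure $\nu$, since "both integrals exist and are finite" has to be read suitably. My approach is to use the Jordan decomposition $\nu=\nu^+-\nu^-$ and first run the argument for the nonnegative measures $\nu^\pm$ and for $\Phi^\pm$ (or for $|\Phi|$). In that setting Tonelli's theorem applies with no integrability assumption, giving equalities in $[0,\infty]$. The finiteness hypothesis, interpreted as absolute integrability of $\Phi$ against $|\L\nu|\le \L|\nu|$, then guarantees that all four pieces are finite, and linearity assembles the result. Measurability of $(\omega,\omega_2)\mapsto\Phi(F\omega+G\omega_2)$ is immediate because $F$, $G$ and addition are continuous on $\Omega$.
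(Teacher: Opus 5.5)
Your proposal is correct and follows the same route as the paper. Both unfold $\L\nu$ into its two convolutions, rewrite each as a double integral of $\Phi(F\omega+G\omega_2)$ or $\Phi(F\omega_1+G\omega)$ via pushforward and product measures, and then apply Fubini to recognize $\int \L^*\Phi\,d\nu$; your Jordan-decomposition/Tonelli step supplies a justification the paper omits. One caution: the hypothesis that makes your argument work is $\int|\Phi|\,d(\L|\nu|)<\infty$ (equivalently $\int \L^*|\Phi|\,d|\nu|<\infty$). Integrability against $|\L\nu|$ alone does not make the four pieces finite, because the inequality $|\L\nu|\le\L|\nu|$ runs the wrong way for that.
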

\begin{proof}
    By expression \eqref{eq:definition_L}, and the definitions
    of convolution and pushforward of measures,
    it follows that
    \begin{equation*}
        \begin{split}
        \int_{\Omega} \Phi(\omega)\, d(\L\nu)(\omega)
        = &
        \int_{\Omega}\Phi(\omega)
        d\bigl((F_\# \nu) * (G_\# \mu_\infty) \bigr)(\omega)
         +
        \int_{\Omega}
        \Phi(\omega)
        d\bigl((F_\# \mu_\infty) * (G_\# \nu)\bigr)(\omega)\\[0.2cm]
        =&
        \int_{\Omega^2}
        \Phi(F\omega+G\omega_2)
        d\nu(\omega) d\mu_\infty(\omega_2)
        +
        \int_{\Omega^2}
        \Phi(F\omega_1+G\omega)
        d\mu_\infty(\omega_1)d\nu(\omega)\\[0.2cm]
        =& 
        \int_{\Omega} (\L^*\Phi)(\omega)\, d\nu(\omega),
        \end{split}
    \end{equation*}
    where the last equality follows from \cref{def:Lstar} and 
    Fubini's theorem.
\end{proof}

\begin{proposition}[$L^p$-boundedness of $\L^*$]
    \label{thm: Lstar is bounded}
    For $1\leq p\leq \infty$, the dual operator $\L^*$ is a continuous linear map from 
    $L^p(\Omega,\mathscr{B},\mu_\infty)$ into $L^p(\Omega,\mathscr{B},\mu_\infty)$. 
\end{proposition}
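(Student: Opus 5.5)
The plan is to split $\L^*$ into its two summands,
\[
T_1\Phi(\omega)\coloneq\int_\Omega \Phi(F\omega+G\omega_2)\,d\mu_\infty(\omega_2),
\qquad
T_2\Phi(\omega)\coloneq\int_\Omega \Phi(F\omega_1+G\omega)\,d\mu_\infty(\omega_1),
\]
and show that each is a contraction on $L^p(\Omega,\mathscr{B},\mu_\infty)$, so that $\|\L^*\Phi\|_{L^p}\le 2\|\Phi\|_{L^p}$. Linearity is immediate from the definition. Continuity then follows from this bound.

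The key input is the fixed-point identity $\mR[\mu_\infty]=\mu_\infty$, read at the level of random variables. If $\omega,\omega_2$ are independent with law $\mu_\infty$, then $F\omega+G\omega_2$ has law $(F_\#\mu_\infty)*(G_\#\mu_\infty)=\mu_\infty$.

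For $1\le p<\infty$, I apply Jensen's inequality (convexity of $x\mapsto|x|^p$) to the inner integral and then Tonelli on $\Omega^2$:
\[
\int_\Omega |T_1\Phi(\omega)|^p\,d\mu_\infty(\omega)
\le \int_{\Omega^2}|\Phi(F\omega+G\omega_2)|^p\,d\mu_\infty(\omega)\,d\mu_\infty(\omega_2)
=\int_\Omega|\Phi|^p\,d\mu_\infty .
\]
The last equality is exactly the fixed-point identity. The same computation with the roles of the two variables swapped handles $T_2$.

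For $p=\infty$, suppose $|\Phi|\le\|\Phi\|_\infty$ off a $\mu_\infty$-null set $N$. The set $\{(\omega,\omega_2):F\omega+G\omega_2\in N\}$ is $\mu_\infty\otimes\mu_\infty$-null, again by the fixed-point identity. By Fubini, for $\mu_\infty$-a.e.\ $\omega$ its $\omega$-section is $\mu_\infty$-null, which gives $|T_1\Phi(\omega)|\le\|\Phi\|_\infty$ almost everywhere. The same argument applies to $T_2$.

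The one point requiring care, and what I expect to be the main (mild) obstacle, is well-definedness on equivalence classes together with measurability. First, $(\omega,\omega_2)\mapsto F\omega+G\omega_2$ is continuous, since $F$ and $G$ are continuous linear maps, hence Borel. So $\Phi(F\omega+G\omega_2)$ is jointly measurable, and Fubini gives that $T_i\Phi$ is measurable and defined $\mu_\infty$-a.e. Second, the integral is finite a.e.\ because the $L^p$ bound above (taking $p=1$ suffices, as $L^p\subset L^1$) shows that $\int_\Omega|\Phi(F\omega+G\omega_2)|\,d\mu_\infty(\omega_2)$ is integrable in $\omega$. Third, if $\Phi=\Phi'$ a.e., then the null-set argument from the $p=\infty$ case shows $T_i\Phi=T_i\Phi'$ a.e. Hence $\L^*$ is a well-defined bounded linear operator on $L^p$ for every $1\le p\le\infty$, with operator norm at most $2$.
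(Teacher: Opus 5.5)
Your proof is correct and follows the paper's argument: Jensen's inequality on the inner integral, then Tonelli together with the fixed-point identity $\mu_\infty=(F_\#\mu_\infty)*(G_\#\mu_\infty)$. The only difference is that you bound each summand by $\|\Phi\|_{L^p}$ and combine them with Minkowski's inequality, whereas the paper applies the pointwise bound $(a+b)^p\le 2^{p-1}(a^p+b^p)$ before integrating; both routes give operator norm at most $2$, and your explicit treatment of $p=\infty$ and of well-definedness on $\mu_\infty$-equivalence classes (pulling null sets back through $(\omega,\omega_2)\mapsto F\omega+G\omega_2$) supplies details the paper calls immediate.
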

\begin{proof}
    The case $p=\infty$ is immediate.
    For $1\leq p < \infty$,
    we use the elementary inequality $(a+b)^p \le 2^{p-1}(a^p+b^p)$, 
    along with Jensen's inequality for the expectations to obtain
    \begin{equation*}
        \begin{split}
        |\L^*\Phi(\omega)|^p
        \leq &
        \; 2^{p-1}\Bigg(
        \int_{\Omega}
        |\Phi\bigl( F\omega + G\omega_2 \bigr)|^p
        \, d\mu_\infty(\omega_2) 
     +
        \int_{\Omega}
        |\Phi\bigl( F\omega_1 + G\omega \bigr)|^p
        \, d\mu_\infty(\omega_1)  \Bigg).
        \end{split}
    \end{equation*}
    Integrating further with respect to $\mu_\infty$, 
    and using the fixed-point relation 
    $\mu_\infty=F_\#\mu_\infty*G_\#\mu_\infty$ we find
    $$
    \norm{\L^*\Phi}_{L^p(\mu_\infty)}^p\leq 2^{p}\norm{\Phi}_{L^p(\mu_\infty)}^p.
    $$
\end{proof}

\subsection{The operators $F^*$ and $G^*$}

The relevant properties of $\L^*$ are most easily expressed in terms of
the auxiliary operators $F^*,G^*: L^2([0,1])\to L^2([0,1])$ given by
\begin{equation}\label{eq:F*_G*}
    (F^* h)(s) = \frac{1}{\sqrt{2}}\,h\Big(\frac{s}{2}\Big)
    \quad,\quad
    (G^* h)(s) = \frac{1}{\sqrt{2}}\,h\Big(\frac{s+1}{2}\Big).
\end{equation} 
They can be viewed as
adjoints of the rescaling maps $F$ and $G$, 
with adjointness understood with respect 
to the Brownian stochastic integral.
\begin{proposition}\label{thm:F*_G*_adjoint}
    For every $h\in L^2([0,1])$ we have,
    \begin{equation}\label{eq:adjoint_relation}
        \begin{split}
            \int_0^1 h(t)\,dW_t(F\omega)
            &= \int_0^1 (F^* h)(s)\,dW_s(\omega),\\[0.2cm]
            \int_0^1 h(t)\,dW_t(G\omega)
            &= \int_0^1 (G^* h)(s)\,dW_s(\omega),
        \end{split}
    \end{equation}
    as identities in $L^2(\Omega,\mathscr{B},\mu_\infty)$.
\end{proposition}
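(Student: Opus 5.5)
The plan is to verify the identities first on step functions, where both sides are explicit finite sums of Brownian increments, and then extend to all of $L^2([0,1])$ by density and the Itô isometry.

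\textbf{Indicators.} Take $h=\mathbf{1}_{(a,b]}$ with $0\le a<b\le 1$. Since $W_t(F\omega)=(F\omega)(t)=\omega(2t\wedge 1)/\sqrt2$, the left-hand side is
\[
(F\omega)(b)-(F\omega)(a)=\frac{1}{\sqrt2}\bigl(\omega(2b\wedge1)-\omega(2a\wedge1)\bigr).
\]
This equals $\frac{1}{\sqrt2}\int_0^1 \mathbf{1}_{(2a\wedge1,\,2b\wedge1]}(s)\,dW_s(\omega)$. On the other hand,
\[
(F^*h)(s)=\frac{1}{\sqrt2}\mathbf{1}_{(a,b]}(s/2)=\frac{1}{\sqrt2}\mathbf{1}_{(2a,2b]}(s),\qquad s\in[0,1],
\]
which is the same function once it is restricted to $[0,1]$. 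So the identity holds pathwise for indicators, and by linearity for all step functions.

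For $G$ the argument is identical. We have $(G\omega)(t)=\omega((2t-1)\vee0)/\sqrt2$, so the increment over $(a,b]$ is $\frac{1}{\sqrt2}\int \mathbf{1}_{((2a-1)\vee0,\,(2b-1)\vee0]}\,dW$. Meanwhile $G^*h(s)=\frac{1}{\sqrt2}\mathbf{1}_{(a,b]}((s+1)/2)=\frac{1}{\sqrt2}\mathbf{1}_{(2a-1,2b-1]}(s)$, and on $[0,1]$ this coincides with the previous integrand.

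\textbf{Isometries.} Next I will check that both sides are bounded linear maps $L^2([0,1])\to L^2(\mu_\infty)$.

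\emph{Right-hand side.} A change of variables gives
\[
|F^*h|_{L^2}^2=\tfrac12\int_0^1 h(s/2)^2\,ds=\int_0^{1/2}h^2\le|h|_{L^2}^2,
\]
and similarly $|G^*h|_{L^2}^2=\int_{1/2}^1 h^2$. By the Itô isometry, the right-hand side is therefore bounded.

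\emph{Left-hand side.} Under $\mu_\infty$, the process $F\omega$ is a centered Gaussian process. Its increments over $(a,b]$ have variance $|(a,b]\cap[0,\tfrac12]|$, and increments over disjoint intervals are independent. So $h\mapsto\int h\,dW(F\omega)$ is the Wiener integral against the Gaussian martingale $F\omega$, whose quadratic variation is $t\wedge\tfrac12$. Concretely, I will define this left-hand side as the $L^2$-continuous extension from step functions; by the step-function identity, its norm is exactly $|F^*h|_{L^2}\le|h|_{L^2}$. The same holds for $G$, whose quadratic variation is $(t-\tfrac12)\vee0$.

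\textbf{Density.} Step functions are dense in $L^2([0,1])$, both sides are continuous, and they agree on step functions. Hence they agree for every $h\in L^2([0,1])$ as elements of $L^2(\Omega,\mathscr{B},\mu_\infty)$.

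The only delicate point is interpretive: what $\int h\,dW(F\omega)$ means, given that $F\omega$ is not a standard Brownian motion. I will settle this by taking the definition to be the Wiener integral with respect to the Gaussian process $F\omega$, which is the $L^2$ limit of Riemann–Stieltjes sums over step functions. That is exactly the extension used in the density step, so the rest is routine.
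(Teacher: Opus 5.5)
Your proposal is correct and follows essentially the same route as the paper: verify the identities on indicators $\mathbf{1}_{(a,b]}$, extend by linearity to step functions, then pass to all of $L^2([0,1])$ using boundedness of $F^*$ and $G^*$, density of step functions, and the It\^o isometry. Your computation $|F^*h|_{L^2}^2=\int_0^{1/2}h^2$ and your reading of the left-hand side as the Wiener integral against the Gaussian martingale $F\omega$ (quadratic variation $t\wedge\tfrac12$) agree with the paper's accompanying remark; the endpoint discrepancy at $s=0$ in the $G^*$ case is a null set and does not matter.
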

\begin{remark}
    The integral in the left-hand side of the first identity in \eqref{eq:adjoint_relation} 
    is understood as an It\^o integral with respect to 
    the process $W_t(F\omega)=\frac{1}{\sqrt 2}\,\omega(2t\wedge 1)$.
    Indeed, this process is a martingale with respect to the time-changed filtration 
    $\mathcal F_{2t\wedge1}$ with quadratic variation $t\wedge \frac12$,
    where $(\mathcal F_t)_{t\in[0,1]}$ denotes the natural filtration of Brownian motion $W_t$.
    Similarly, the process $W_t(G\omega)=\frac{1}{\sqrt 2}\,\omega((2t-1)\vee 0)$
    is a martingale with respect to the filtration $\mathcal F_{(2t-1)\vee 0}$
    and has quadratic variation $(t-\frac12)\vee 0$. 
    Hence, the integral in the left-hand side of the 
    second identity is also well defined as an It\^o integral.
\end{remark}
\begin{proof}[Proof of \cref{thm:F*_G*_adjoint}]
    We first consider
    $h=\mathbf{1}_{(a,b]}$, with $0\leq a<b\leq1$. 
    In this case, we have
    \[
        \int_0^1 h(t)\,dW_t(F\omega)
        =
        W_b(F\omega)-W_a(F\omega)
        =
        \frac{1}{\sqrt2}
        \bigl(
            W_{2b\wedge1}(\omega)
            -
            W_{2a\wedge1}(\omega)
        \bigr).
    \]
    On the other hand, by \eqref{eq:F*_G*},
    \[
        \int_0^1(F^*h)(s)\,dW_s(\omega)
        =
        \frac1{\sqrt2}
        \int_0^1
        \mathbf 1_{(2a\wedge1,\,2b\wedge1]}(s)\,dW_s
        =
        \frac{1}{\sqrt2}
        \bigl(
            W_{2b\wedge1}(\omega)
            -
            W_{2a\wedge1}(\omega)
        \bigr).
    \]
    This proves the first identity for indicator functions.
    By linearity, it also holds for step functions.
    Since $F^*$ is a bounded linear operator on $L^2([0,1])$, 
    and step functions are dense in $L^2([0,1])$, 
    It\^o isometry yields the first identity for every
    $h\in L^2([0,1])$.
    The proof of the identity involving $G^*$ is analogous.
\end{proof}

In order to obtain analytic control on $\L^*$ in the Hida scale, it is
useful to track how the auxiliary operators $F^*$ and $G^*$ interact
with the deterministic Hilbert scales $\S_p$, which are defined using
the Haar wavelet basis, see \cref{sec:deterministic}. Since the
underlying maps $F$ and $G$ implement a dyadic time compression and
shift, their adjoints inherit a correspondingly simple action in wavelet
coordinates. This is made precise in
\cref{thm:adjoints and wavelets} below. This dyadic compatibility
ultimately implies that $F^*$ and $G^*$ act as contractions on $\S_p$.
Moreover, by multiplicativity of the Hilbert tensor norm, the tensor
powers $(F^*)^{\otimes n}$ and $(G^*)^{\otimes n}$ inherit the same
contraction property on $\S_p^{\otimes n}$, as stated in
\cref{thm: adjoint norms}.

\begin{lemma}\label{thm:adjoints and wavelets}
    Let $f\in L^2([0,1])$ and
    $0\leq k\leq 2^n-1$. 
    Then 
    $$\ps{F^*f,e_{n,k}}_{L^2}=\ps{f,e_{n+1,k}}_{L^2},\quad 
    \ps{G^*f,e_{n,k}}_{L^2}=\ps{f,e_{n+1,\,k+2^n}}_{L^2}.
    $$
\end{lemma}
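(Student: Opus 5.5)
The plan is to compute the adjoint of $F^*$ and $G^*$ on $L^2([0,1])$ explicitly and then check how these adjoints act on a single Haar wavelet. For $f,g\in L^2([0,1])$, the change of variables $s=2t$ gives
\[
\ps{F^*f,g}_{L^2}=\int_0^1 \tfrac{1}{\sqrt2}f(s/2)g(s)\,ds=\int_0^{1/2}\sqrt2\, f(t)\,g(2t)\,dt .
\]
Hence $\ps{F^*f,g}_{L^2}=\ps{f,\widetilde F g}_{L^2}$, where $(\widetilde F g)(t)=\sqrt2\,g(2t)\mathbf 1_{[0,1/2)}(t)$. In the same way, the substitution $s=2t-1$ gives $\ps{G^*f,g}_{L^2}=\ps{f,\widetilde G g}_{L^2}$ with $(\widetilde G g)(t)=\sqrt2\,g(2t-1)\mathbf 1_{[1/2,1)}(t)$. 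The only care needed here is with endpoint conventions, which do not matter since they concern null sets.

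The lemma then reduces to two identities: $\widetilde F e_{n,k}=e_{n+1,k}$ and $\widetilde G e_{n,k}=e_{n+1,k+2^n}$ for $0\le k\le 2^n-1$. Using $e_{n,k}(t)=2^{n/2}e_{0,0}(2^nt-k)$, we get $\sqrt2\,e_{n,k}(2t)=2^{(n+1)/2}e_{0,0}(2^{n+1}t-k)=e_{n+1,k}(t)$. This function is supported in $[k2^{-n-1},(k+1)2^{-n-1}]\subset[0,1/2]$, so multiplying by the indicator changes nothing. Similarly, $\sqrt2\,e_{n,k}(2t-1)=2^{(n+1)/2}e_{0,0}(2^{n+1}t-2^n-k)=e_{n+1,k+2^n}(t)$, which is supported in $[1/2,1]$. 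Substituting these identities into the adjoint relations gives both formulas.

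There is no real obstacle. The only points to verify are that the supports of the rescaled wavelets lie inside the half-interval cut out by the indicator, and that the index shift $k\mapsto k+2^n$ is correct for $G^*$. Both are immediate from the support description of $e_{n+1,\cdot}$ at level $n+1$.
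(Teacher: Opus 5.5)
Your proposal is correct and follows essentially the same route as the paper: the change of variables $s=2t$ (resp.\ $s=2t-1$), the Haar scaling relation $\sqrt2\,e_{n,k}(2t)=e_{n+1,k}(t)$ (resp.\ $\sqrt2\,e_{n,k}(2t-1)=e_{n+1,k+2^n}(t)$), and the support of these wavelets lying in the appropriate half-interval. The only difference is presentational: you name the operators $\widetilde F$ and $\widetilde G$ and prove $\widetilde F e_{n,k}=e_{n+1,k}$, $\widetilde G e_{n,k}=e_{n+1,k+2^n}$, where the paper applies the same identities inside the integral.
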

\begin{proof}
    Using the change of variables $s=2t$ we have
    \[
    \ps{F^*f,e_{n,k}}_{L^2}
    =\frac{\sqrt2}{2}\int_0^1  f(s/2)\, e_{n,k}(s)\,ds
    =\sqrt2\int_0^{1/2} f(t)\, e_{n,k}(2t)\,dt.
    \]
    The Haar scaling relation $e_{n+1,k}(t)=\sqrt2\, e_{n,k}(2t)$
    yields
    \[
    \ps{F^*f,e_{n,k}}_{L^2}=\int_0^{1/2} f(t)e_{n+1,k}(t)\,dt
    =\ps{f,e_{n+1,k}}_{L^2},
    \]
    where the last equality comes from the fact that 
    $\mathrm{supp}(e_{n+1,k})\subset[0,1/2]$ if $k\le 2^n-1$. 
    
    The case of $G^*$ is completely analogous: we use the
    change of variables $s=2t-1$, 
    the scaling relation $e_{n+1,k+2^n}(t)=\sqrt2\, e_{n,k}(2t-1)$,
    and the fact that 
    $\mathrm{supp}(e_{n+1,k+2^n})\subset[1/2,1]$ if $k\le 2^n-1$. 
    We leave the 
    details to the reader.
\end{proof}

\begin{lemma}\label{thm: adjoint norms-combined}
    For every $p\geq 0$ and $f\in \S_p$ we have
    $$|F^*f|_p^2 + |G^*f|_p^2 \leq |f|_p^2.$$
\end{lemma}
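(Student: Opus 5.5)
The plan is to compute both sides of the inequality in the Haar basis $\{\mathbf{1}\}\cup\{e_{n,k}\}$, where $|\cdot|_p$ is diagonal, and to compare them coefficient by coefficient. \Cref{thm:adjoints and wavelets} already handles all wavelet coefficients of $F^*f$ and $G^*f$. The only remaining ingredient is the coefficient along the constant function $\mathbf{1}$, which that lemma does not cover, and which I will compute directly.

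\emph{Step 1: the constant mode.} By the change of variables $s=2t$,
\[
\ps{F^*f,\mathbf{1}}_{L^2}=\tfrac{1}{\sqrt2}\int_0^1 f(s/2)\,ds=\sqrt2\int_0^{1/2}f(t)\,dt.
\]
Since $\mathbf{1}_{[0,1/2)}=\tfrac12(\mathbf{1}+e_{0,0})$, this gives
\[
\ps{F^*f,\mathbf{1}}_{L^2}=\tfrac{1}{\sqrt2}\bigl(\ps{f,\mathbf{1}}_{L^2}+\ps{f,e_{0,0}}_{L^2}\bigr).
\]
The analogous computation gives
\[
\ps{G^*f,\mathbf{1}}_{L^2}=\tfrac{1}{\sqrt2}\bigl(\ps{f,\mathbf{1}}_{L^2}-\ps{f,e_{0,0}}_{L^2}\bigr).
\]
By the parallelogram identity, the sum of the squares of these two coefficients equals
\[
\ps{f,\mathbf{1}}_{L^2}^2+\ps{f,e_{0,0}}_{L^2}^2 .
\]
Multiplying by $\alpha_{-1}^{2p}$ and using $\alpha_{-1}\le\alpha_0$, the constant-mode contribution on the left is bounded by
\[
\alpha_{-1}^{2p}\ps{f,\mathbf{1}}_{L^2}^2+\alpha_0^{2p}\ps{f,e_{0,0}}_{L^2}^2 .
\]

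\emph{Step 2: the wavelet modes.} By \cref{thm:adjoints and wavelets}, the level-$n$ contribution of $F^*f$ and $G^*f$ together is
\[
\alpha_n^{2p}\sum_{k=0}^{2^n-1}\Bigl(\ps{f,e_{n+1,k}}_{L^2}^2+\ps{f,e_{n+1,k+2^n}}_{L^2}^2\Bigr)
=\alpha_n^{2p}\sum_{j=0}^{2^{n+1}-1}\ps{f,e_{n+1,j}}_{L^2}^2 .
\]
The indices $k$ and $k+2^n$ with $0\le k\le 2^n-1$ exhaust $\{0,\dots,2^{n+1}-1\}$ exactly once, which is what gives the equality. Since $\alpha_n\le\alpha_{n+1}$ and $p\ge0$, this is at most the level-$(n+1)$ contribution to $|f|_p^2$.

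\emph{Step 3: summation.} Summing Step 1 and Step 2 over all $n\ge0$, the left-hand side is bounded by the sum of the following pieces of $|f|_p^2$:
\begin{itemize}
  \item the constant mode,
  \item level $0$, recovered from Step 1,
  \item every level $n+1\ge1$, recovered from Step 2.
\end{itemize}
Each level is used exactly once, so the total is exactly $|f|_p^2$.

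If $f\in\S_p$, all these series converge, so $F^*f,G^*f\in\S_p$ and the inequality holds. There is no real obstacle. The only point needing care is the constant mode, because it mixes $\ps{f,\mathbf{1}}_{L^2}$ and $\ps{f,e_{0,0}}_{L^2}$; the parallelogram identity together with the monotonicity of the weights $\alpha_n$ resolves it.
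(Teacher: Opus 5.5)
Your proof is correct and matches the paper's argument essentially line for line. The wavelet levels are shifted down by one via \cref{thm:adjoints and wavelets} and controlled by the monotonicity of $\alpha_n$, while the constant mode is computed from $\mathbf{1}_{[0,1/2)}=\tfrac12(\mathbf{1}+e_{0,0})$ and the resulting $e_{0,0}$-coefficient is absorbed into level $0$ using $\alpha_{-1}\le\alpha_0$; your closing remark that the same bound shows $F^*f,G^*f\in\S_p$ is a small point the paper leaves implicit.
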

\begin{proof}
    We first consider the wavelet components.
    Using \cref{thm:adjoints and wavelets}, we obtain
    \begin{equation*}
    \begin{split}
    \sum_{n=0}^\infty\sum_{k=0}^{2^n-1} 
    \alpha_n^{2p}\left(|\ps{F^*f,e_{n,k}}_{L^2}|^2 + |\ps{G^*f,e_{n,k}}_{L^2}|^2\right)
    &\leq\sum_{n=0}^\infty\sum_{k=0}^{2^n-1} \alpha_{n+1}^{2p}\left(|\ps{f,e_{n+1,k}}_{L^2}|^2 + |\ps{f,e_{n+1,k+2^n}}_{L^2}|^2\right)\\
    &=\sum_{m=1}^\infty\sum_{k=0}^{2^{m-1}-1} \alpha_{m}^{2p}\left(|\ps{f,e_{m,k}}_{L^2}|^2 + |\ps{f,e_{m,k+2^{m-1}}}_{L^2}|^2\right)\\
    &=\sum_{m=1}^\infty\sum_{k=0}^{2^{m}-1} \alpha_{m}^{2p}|\ps{f,e_{m,k}}_{L^2}|^2
    \end{split}
    \end{equation*}
    since the sequence $\{\alpha_n\}_{n\ge -1}$ is increasing and $p\ge 0$.
    The last equality follows from the observation that 
    the $k$-sum in the second line spans all the wavelets at level $m$.
    
    To bound the constant modes, we use the  $L^2$-identities $\mathbf{1}_{[0,1/2)}=\tfrac12(\mathbf{1}+e_{0,0})$ 
    and $\mathbf{1}_{[1/2,1)}=\tfrac12(\mathbf{1}-e_{0,0})$ to obtain
    \begin{equation*}
        \ps{F^*f,\mathbf{1}}_{L^2}=\frac{2}{\sqrt2}\,\ps{f,\mathbf{1}_{[0,1/2)}}_{L^2}
        = \frac{1}{\sqrt2} \left(\ps{f,\mathbf{1}}_{L^2}+\,\ps{f,e_{0,0}}_{L^2}\right),
    \end{equation*}
    \begin{equation*}
        \ps{G^*f,\mathbf{1}}_{L^2}=\frac{2}{\sqrt2}\,\ps{f,\mathbf{1}_{[1/2,1)}}_{L^2}
        = \frac{1}{\sqrt2} \left(\ps{f,\mathbf{1}}_{L^2}-\,\ps{f,e_{0,0}}_{L^2}\right),
    \end{equation*}
    so that
    \begin{equation*}
        \begin{split}
            \alpha_{-1}^{2p}\left(|\ps{F^*f,\mathbf{1}}_{L^2}|^2+|\ps{G^*f,\mathbf{1}}_{L^2}|^2\right)
            &=\alpha_{-1}^{2p} \left(|\ps{f,\mathbf{1}}_{L^2}|^2+\,|\ps{f,e_{0,0}}_{L^2}|^2\right) \\
            &\leq  \alpha_{-1}^{2p}|\ps{f,\mathbf{1}}_{L^2}|^2 + \alpha_{0}^{2p} |\ps{f,e_{0,0}}_{L^2}|^2.
        \end{split}
    \end{equation*}
    Using the definition of $|\cdot|_p$ (see \cref{sec:deterministic}) 
    and combining the above estimates, we get 
    \begin{equation*}
        \begin{split}
            |F^*f|_p^2 &+ |G^*f|_p^2\\
            &=\alpha_{-1}^{2p}\left(|\ps{F^*f,\mathbf{1}}_{L^2}|^2+|\ps{G^*f,\mathbf{1}}_{L^2}|^2\right)
            + \sum_{n=0}^\infty\sum_{k=0}^{2^n-1} 
                \alpha_n^{2p}\left(|\ps{F^*f,e_{n,k}}_{L^2}|^2 + |\ps{G^*f,e_{n,k}}_{L^2}|^2\right)\\
            &\leq\alpha_{-1}^{2p}\,|\ps{f,\mathbf{1}}_{L^2}|^2
            +\sum_{n=0}^\infty\sum_{k=0}^{2^n-1} \alpha_n^{2p}\,|\ps{f,e_{n,k}}_{L^2}|^2
            = |f|_p^2,
        \end{split}
    \end{equation*}
    concluding the proof.
\end{proof}

\begin{corollary}\label{thm: adjoint norms}
    For every $n\geq 1$ and $p\geq 0$, the operators  
    $(F^*)^{\otimes n}$ and $(G^*)^{\otimes n}$
    are bounded linear maps on $\S_p^{\otimes n}$
    with operator norms 
    $$
    \norm{(F^*)^{\otimes n}}_{\S_p^{\otimes n} \to \S_p^{\otimes n}} \leq 1,
    \quad 
    \norm{(G^*)^{\otimes n}}_{\S_p^{\otimes n} \to \S_p^{\otimes n}} \leq 1.
    $$
\end{corollary}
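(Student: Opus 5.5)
The plan is to deduce the corollary directly from \cref{thm: adjoint norms-combined}, combined with the fact that the Hilbert tensor norm is multiplicative. First, dropping one of the two nonnegative terms in \cref{thm: adjoint norms-combined} gives $|F^*f|_p\le |f|_p$ and $|G^*f|_p\le |f|_p$ for every $f\in\S_p$. Hence $F^*$ and $G^*$ are linear contractions of the Hilbert space $(\S_p,|\cdot|_p)$.

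Next, we identify $\S_p^{\otimes n}$ with the Hilbert tensor product equipped with the norm $|(A^p)^{\otimes n}\cdot|_{L^2([0,1]^n)}$. For bounded operators $T_1,\dots,T_n$ on a Hilbert space $H$, the tensor product $T_1\otimes\cdots\otimes T_n$ extends to a bounded operator on $H^{\otimes n}$ with norm $\prod_i\norm{T_i}$. This is the standard fact that $\norm{T\otimes S}=\norm{T}\norm{S}$ on Hilbert tensor products. One proves it by writing $T\otimes S=(T\otimes I)(I\otimes S)$ and noting that $I\otimes S$ acts as $S$ columnwise in an orthonormal expansion $\sum_i e_i\otimes x_i$, so that $\norm{(I\otimes S)\sum_i e_i\otimes x_i}^2=\sum_i|Sx_i|^2\le\norm{S}^2\sum_i|x_i|^2$. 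Applying this with $T_i=F^*$ (respectively $G^*$) and $H=\S_p$ gives operator norm at most $1$.

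To make this fully rigorous, we first check that $(F^*)^{\otimes n}$ is well defined on elementary tensors $f_1\otimes\cdots\otimes f_n$ with $f_i\in\S_p$. We then check that the bound above holds on finite sums of such tensors, using the columnwise argument iterated $n$ times. Finally, we extend by density and continuity to all of $\S_p^{\otimes n}$. This extension is consistent with the action of $(F^*)^{\otimes n}$ as an operator on $L^2([0,1]^n)$, because both operators agree on the dense set of elementary tensors and both are continuous for the $L^2$ norm, since $|\cdot|_p\ge|\cdot|_{L^2}$.

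There is no real obstacle. The only point needing care is the tensor-norm multiplicativity, which is standard, together with consistency of the extension. One could optionally note that the operators commute with symmetrization, so the same bound holds on $\S_p^{\hat\otimes n}$.
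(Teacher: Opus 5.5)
Your proposal is correct and follows the paper's own proof: the contraction bounds for $F^*$ and $G^*$ on $\S_p$ come from \cref{thm: adjoint norms-combined}, multiplicativity of the Hilbert tensor norm gives the bound on the algebraic tensor product, and density gives the full claim. Your columnwise argument for $\norm{T\otimes S}\le\norm{T}\norm{S}$ fills in a step the paper only cites. This matters because multiplicativity on elementary tensors alone does not immediately control finite sums of them.
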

\begin{proof}
    From \cref{thm: adjoint norms-combined}
    we deduce that $\|F^*\|_{\S_p\to \S_p}\le 1$
    and 
    $\|G^*\|_{\S_p\to \S_p}\le 1$.
    Using the multiplicativity 
    of the Hilbert tensor norm 
    we deduce that the tensorized versions 
    $(F^*)^{\otimes n}$ and $(G^*)^{\otimes n}$ 
    are bounded linear operators on the algebraic tensor product  
    with operator norms bounded by 1.
    Density of the algebraic tensor product in $\S_p^{\otimes n}$  
    yields the claim.
\end{proof}

\subsection{Second quantization formula}

The goal of this subsection is to prove \cref{thm:sec quanti}, which
provides a second quantization representation of $\L^*$, allowing us to
write symbolically $\L^*=\Gamma(F^*)+\Gamma(G^*)$.
Here $\Gamma$ denotes the second quantization operator introduced in
\cref{sec:stochastic}.
This second quantization characterization is important because it yields
an explicit, chaoswise description of $\L^*$. Indeed, since Hida norms
are defined in terms of the kernels in the Wiener chaos expansion, any
estimate in the Hida scale reduces to understanding how $\L^*$ transforms
the chaos kernels of a functional $\Phi$.
\cref{thm:sec quanti} provides exactly this
kernel-level transformation, and therefore is a natural tool for
analysis in the Hida framework.
Before proceeding to the proof of \cref{thm:sec quanti}, we need 
the following lemma.

\begin{lemma}[Adjoint change of variables] 
    \label{thm:adj change}
    Let $W$ be the coordinate process on $\Omega$ and 
    let $\mathcal W$ be the isonormal Gaussian process defined by 
    \eqref{eq:isonormal}.
    On the product space 
    $\left(\Omega^2, \mathscr{B}^2,\mu_\infty\otimes\mu_\infty\right)$
    define the projections $\pi_i(\w_1,\w_2)=\w_i$, $i=1,2$.
    Let $W^{(1)}_t\coloneqq W_t\circ\pi_1$ and 
    $W^{(2)}_t\coloneqq W_t\circ\pi_2$
    be two independent copies of $W$.
    Define the independent isonormal families
    \begin{equation*}
       \mathcal W^{(1)}(h)
      \coloneqq\int_0^1 h\,dW^{(1)},
        \qquad
        \mathcal W^{(2)}(h)
        \coloneqq\int_0^1 h\,dW^{(2)}.
    \end{equation*}
    Then, 
    for every $h\in L^2([0,1])$,
    \begin{equation}\label{eq:isonormal adjoint}
        \mathcal{W}(h)\circ S_{FG}=\mathcal{W}^{(1)}(F^*h)+ \mathcal{W}^{(2)}(G^*h)
        \quad \text{in}\quad  L^2\left(\mu_\infty\otimes\mu_\infty\right),
    \end{equation}
    where $S_{FG}(\w_1,\w_2)= F\w_1+G\w_2$.
\end{lemma}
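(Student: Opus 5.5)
The plan is to argue by density: verify \eqref{eq:isonormal adjoint} on indicator functions and step functions, then extend to all of $L^2([0,1])$ by an isometry argument, in the same spirit as the proof of \cref{thm:F*_G*_adjoint}. Two preliminary observations are needed. First, $S_{FG}$ pushes $\mu_\infty\otimes\mu_\infty$ forward to $(F_\#\mu_\infty)*(G_\#\mu_\infty)=\mu_\infty$ by the fixed point relation \eqref{eq:fixedpoint}. Consequently composition with $S_{FG}$ is an isometry from $L^2(\Omega,\mu_\infty)$ into $L^2(\mu_\infty\otimes\mu_\infty)$, and it is well defined on $\mu_\infty$-equivalence classes. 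This matters because $\mathcal W(h)$ is only defined almost everywhere.

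For the base case, take $h=\mathbf 1_{(a,b]}$. Then $\mathcal W(h)(\omega)=\omega(b)-\omega(a)$ holds pointwise, so
\[
\mathcal W(h)\circ S_{FG}=\bigl[(F\omega_1)(b)-(F\omega_1)(a)\bigr]+\bigl[(G\omega_2)(b)-(G\omega_2)(a)\bigr].
\]
Each bracket is $\tfrac1{\sqrt2}$ times an increment of $\omega_i$ over the time-changed interval, namely $(2a\wedge1,2b\wedge1]$ or $((2a-1)\vee0,(2b-1)\vee0]$. These are exactly $\mathcal W^{(1)}(F^*h)$ and $\mathcal W^{(2)}(G^*h)$, as computed in the proof of \cref{thm:F*_G*_adjoint}: $F^*\mathbf 1_{(a,b]}=\tfrac1{\sqrt2}\mathbf 1_{(2a\wedge1,2b\wedge1]}$ and similarly for $G^*$. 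Alternatively, one can simply invoke \cref{thm:F*_G*_adjoint} applied in each coordinate. By linearity the identity then holds for all step functions.

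To extend, both sides are linear in $h$. On the left, $h\mapsto \mathcal W(h)\circ S_{FG}$ is an isometry from $L^2([0,1])$ into $L^2(\mu_\infty\otimes\mu_\infty)$, by the Itô isometry combined with the pushforward observation. On the right, independence and the Itô isometry give
\[
\|\mathcal W^{(1)}(F^*h)+\mathcal W^{(2)}(G^*h)\|^2=|F^*h|_{L^2}^2+|G^*h|_{L^2}^2\le |h|_{L^2}^2,
\]
using \cref{thm: adjoint norms-combined} with $p=0$ (in fact equality holds). Both sides are therefore bounded linear maps agreeing on the dense set of step functions, so they agree on all of $L^2([0,1])$.

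The only delicate point is the well-definedness of the left side as an element of $L^2(\mu_\infty\otimes\mu_\infty)$, that is, ensuring that composing an a.e.-defined stochastic integral with $S_{FG}$ is legitimate. The pushforward identity $S_{FG\#}(\mu_\infty\otimes\mu_\infty)=\mu_\infty$ handles this, and I expect no other obstacle.
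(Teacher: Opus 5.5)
Your proof is correct and follows essentially the same route as the paper, whose proof just invokes linearity of the Wiener integral in the integrator together with \cref{thm:F*_G*_adjoint}; that proposition is itself proved by exactly your indicator, step-function and isometry argument. Your version also makes explicit a point the paper leaves to a remark: the pushforward identity $(S_{FG})_\#(\mu_\infty\otimes\mu_\infty)=\mu_\infty$ makes $h\mapsto\mathcal W(h)\circ S_{FG}$ a well-defined isometry, and this is what justifies the density extension.
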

\begin{remark}
    We note that 
    the composition with $S_{FG}$ is well defined on 
    $L^2$-equivalence classes since 
    $(S_{FG})_\#(\mu_\infty\otimes\mu_\infty)=\mu_\infty$
    by the fixed-point relation \eqref{eq:fixedpoint},
    so that $\mu_\infty$-null sets pull back to 
    $(\mu_\infty\otimes\mu_\infty)$-null sets.
\end{remark}
\begin{proof}[Proof of \cref{thm:adj change}]
    This follows from the linearity of the Wiener integral with respect to the integrator 
    together with \cref{thm:F*_G*_adjoint} applied to the two independent Brownian motions.
\end{proof}

\begin{theorem}[Second quantization formula] \label{thm:sec quanti}
    Let $\Phi\in L^2(\Omega,\mathcal{F}_\mathcal{W},\mu_\infty)$
    have chaos expansion 
    $$\Phi=\sum_{n=0}^{\infty} I_n(f_n)\quad \text{in}\quad L^2(\Omega,\mathcal{F}_\mathcal{W},\mu_\infty).$$
    Then,  $\L^*\Phi$ has chaos expansion 
    \begin{equation*}
        \L^*\Phi
        =\sum_{n=0}^{\infty}\Big(I_n\big((F^*)^{\otimes n} f_n\big)
        + I_n\big((G^*)^{\otimes n} f_n\big)\Big)
        \quad \text{in}\quad L^2(\Omega,\mathcal{F}_\mathcal{W},\mu_\infty).
    \end{equation*}
\end{theorem}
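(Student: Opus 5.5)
The plan is to verify the formula first on Wick exponentials, where everything is explicit, and then extend to all of $L^2(\Omega,\mathcal{F}_\mathcal{W},\mu_\infty)$ by density and continuity. Fix $h\in L^2([0,1])$ and take $\Phi=\mathcal E(h)$. By \cref{thm:adj change}, $\mathcal W(h)\circ S_{FG}=\mathcal W^{(1)}(F^*h)+\mathcal W^{(2)}(G^*h)$ almost surely on $\Omega^2$. I will also use that $|F^*h|_{L^2}^2+|G^*h|_{L^2}^2=|h|_{L^2}^2$. This is the $p=0$ case of the computation in \cref{thm: adjoint norms-combined}, where the inequality is in fact an equality, and it can also be checked directly by a change of variables. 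Together these give the factorization
\[
\mathcal E(h)\circ S_{FG}=\mathcal E^{(1)}(F^*h)\,\mathcal E^{(2)}(G^*h).
\]

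Next I insert this into \eqref{eq:Lstar}. In the first term, $\omega$ plays the role of $\omega_1$ and we integrate out $\omega_2$. Since $\E_{\mu_\infty}[\mathcal E(G^*h)]=1$, the first term equals $\mathcal E(F^*h)(\omega)$. The same argument shows the second term equals $\mathcal E(G^*h)(\omega)$, so $\L^*\mathcal E(h)=\mathcal E(F^*h)+\mathcal E(G^*h)$. Expanding both sides with \cref{prop:wick-chaos}, and using $(F^*h)^{\otimes n}=(F^*)^{\otimes n}h^{\otimes n}$, gives exactly the claimed chaos expansion for $\Phi=\mathcal E(h)$, whose kernels are $f_n=h^{\otimes n}/n!$. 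Replacing $h$ by $th$ and comparing coefficients of $t^n$, which is legitimate because everything is analytic in $t$ with $L^2$-convergent series, gives the formula for each $I_n(h^{\otimes n})$. Polarization then extends it to $I_n$ of finite sums of symmetric elementary tensors, as in the proof of \cref{thm:S-injective}. The cleaner route, though, is to work chaos by chaos.

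The extension step is where I expect the main obstacle. Both sides must define bounded operators on $L^2$. The left side $\L^*$ is bounded by \cref{thm: Lstar is bounded}. For the right side, I define $T\Phi:=\sum_n I_n((F^*)^{\otimes n}f_n+(G^*)^{\otimes n}f_n)$. Since $F^*$ and $G^*$ are contractions on $L^2$ (the $p=0$ case of \cref{thm: adjoint norms}), we get $\|T\Phi\|_{L^2}\le 2\|\Phi\|_{L^2}$ by orthogonality of chaoses and the It\^o isometry.

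It remains to note that the linear span of $\{I_n(h^{\otimes n}):h\in L^2,\ n\ge0\}$ is dense in $(L^2)$. This holds because, by polarization, these elements span $I_n$ of the algebraic symmetric tensors, which are dense in each chaos. Since $\L^*$ and $T$ agree on this dense set and both are bounded, they coincide on all of $(L^2)$.

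One subtlety needs care: the pathwise definition \eqref{eq:Lstar} must be compatible with $L^2$-equivalence classes. This holds by the remark after \cref{thm:adj change}, which guarantees that composition with $S_{FG}$ is well defined on equivalence classes. The same remark justifies Fubini when identifying the partial integral of $\mathcal E^{(1)}(F^*h)\,\mathcal E^{(2)}(G^*h)$ over $\omega_2$.
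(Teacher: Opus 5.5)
Your proof is correct, but the core computation takes a different route from the paper's. The paper works one chaos at a time. It writes $I_n(h^{\otimes n})=|h|_{L^2}^n H_n(\mathcal W(h/|h|_{L^2}))$ and uses the adjoint change of variables to express $\mathcal W(h/|h|_{L^2})\circ S_{FG}$ as $b_1\widetilde X_1+b_2\widetilde X_2$ with $b_1^2+b_2^2=1$. It then expands with the Hermite addition formula and observes that integrating out either copy kills every mixed term $H_k(\widetilde X_1)H_{n-k}(\widetilde X_2)$ with $0<k<n$. What remains is $I_n((F^*h)^{\otimes n})$ and $I_n((G^*h)^{\otimes n})$; the degenerate cases $F^*h=0$ or $G^*h=0$ are treated separately. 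You instead work with the generating function. The same two ingredients, the change of variables and the identity $|F^*h|_{L^2}^2+|G^*h|_{L^2}^2=|h|_{L^2}^2$, give the multiplicative factorization $\mathcal E(h)\circ S_{FG}=\mathcal E^{(1)}(F^*h)\,\mathcal E^{(2)}(G^*h)$. Integrating out one factor then needs only $\E_{\mu_\infty}[\mathcal E(\cdot)]=1$. This avoids the Hermite addition formula and the degenerate cases entirely. It also produces directly the identity $\L^*\mathcal E(h)=\mathcal E(F^*h)+\mathcal E(G^*h)$, which the paper derives only afterwards as a consequence of the theorem, in \eqref{eq:Lstar-wick}. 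The cost is your coefficient-extraction step in $t$. There you should say explicitly that boundedness of $\L^*$ on $(L^2)$ is what justifies $\L^*\mathcal E(th)=\sum_n \frac{t^n}{n!}\L^*I_n(h^{\otimes n})$. You could also skip that step: the span of $\{\mathcal E(h):h\in L^2([0,1])\}$ is itself dense in $(L^2)$, so agreement of the two bounded operators $\L^*$ and $T$ on Wick exponentials already suffices. Your extension argument is essentially the paper's, packaged more cleanly: a chaos-wise operator $T$ with $\|T\Phi\|\le 2\|\Phi\|$, boundedness of $\L^*$, and density.
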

\begin{proof}
    Assume first that $n\ge1$ and that
    $\Phi=I_n(h^{\otimes n})$ for $h\in L^2([0,1])$,
    with $F^*h\neq0$ and $G^*h\neq0$.
    Then, using Proposition 1.1.4 in 
    \cite{nualart2006malliavin} we may write 
    \begin{equation*}
            \Phi\circ S_{FG}
            =|h|_{L^2}^n 
            H_n(\mathcal{W}(h/|h|_{L^2}))
            \circ S_{FG}
            =|h|_{L^2}^n 
            H_n(\mathcal{W}(h/|h|_{L^2})
            \circ S_{FG})\quad \mu_\infty\otimes\mu_\infty\text{-a.s.},
    \end{equation*}
    where $H_n$ is the $n$-th Hermite polynomial 
    defined in \eqref{eq:hermite-def}.

    Using \cref{thm:adj change} and 
    setting 
    $X_1\coloneqq\mathcal{W}^{(1)}(F^*h)$, and
    $X_2\coloneqq\mathcal{W}^{(2)}(G^*h)$
    yields
    \begin{equation*}
        \begin{split}
            \mathcal{W}(h/|h|_{L^2})\circ S_{FG} 
            &=
            \frac{|F^*h|_{L^2}}
            {|h|_{L^2}} 
            \frac{X_1}{|F^*h|_{L^2}} 
            + \frac{|G^*h|_{L^2}}
            {|h|_{L^2}} 
            \frac{X_2}{|G^*h|_{L^2}} 
            \\[0.2cm]
            & \eqcolon b_1 \widetilde X_1 + b_2 \widetilde X_2,
        \end{split}
    \end{equation*}
    with $b_1^2+b_2^2=1$, 
    which follows from the identity
    $|h|_{L^2}^2 =  |F^*h|_{L^2}^2 + |G^*h|_{L^2}^2$.
    Using the Hermite addition formula in 
    \cite[Corollary~5.10]{hairer2026advanced},
    leads to
    \begin{equation*}
        \Phi\circ S_{FG}
        =|h|_{L^2}^n H_n(b_1 \widetilde X_1 + b_2 \widetilde X_2)
        =|h|_{L^2}^n \sum_{k=0}^n 
        \binom{n}{k}
        b_1^k b_2^{n-k}\,
        H_k(\widetilde X_1)
        H_{n-k}(\widetilde X_2).
    \end{equation*}

    Since $\widetilde X_1$, 
    and $\widetilde X_2$ are 
    independent standard normals,
    and $\E_{\mu_\infty\otimes\mu_\infty}
    [H_n(\widetilde X_i)]=0$ 
    for $n\ge1$ and $i=1,2$,
    taking the conditional expectation 
    with respect to $W^{(1)}$
    only the index $k=n$ survives:
    $$
    \E_{\mu_\infty\otimes\mu_\infty}
    \left[\Phi\circ S_{FG}\, 
    \middle|\, W^{(1)}\right]
    =|h|_{L^2}^n\,b_1^n\,
    H_n\big(\widetilde X_1\big)
    =|F^*h|_{L^2}^{\,n}\,
    H_n\Big(\frac{\mathcal W^{(1)}(F^*h)}
    {|F^*h|_{L^2}}\Big)
    =I_n\big((F^*h)^{\otimes n}\big).
    $$  
    Here the last equality is due to
    Proposition 1.1.4 in \cite{nualart2006malliavin}. 
    Similarly, taking the conditional 
    expectation with respect to $W^{(2)}$
    only the index $k=0$ survives:
    $$
    \E_{\mu_\infty\otimes\mu_\infty}
    \left[\Phi\circ S_{FG}\,
    \middle|\, W^{(2)}\right]
    =|h|_{L^2}^n\,b_2^n\,
    H_n\big(\widetilde X_2\big)
    =|G^*h|_{L^2}^{\,n}\,
    H_n\Big(\frac{\mathcal W^{(2)}(G^*h)}
    {|G^*h|_{L^2}}\Big)
    =I_n\big((G^*h)^{\otimes n}\big).
    $$
    The cases where $F^*h=0$ or $G^*h=0$ follow directly, and the case $n=0$ is immediate.
    Since $\L^*\Phi$ is the sum of these two conditional expectations, we have
    $$
    \L^*\Phi
    =I_n\big((F^*h)^{\otimes n}\big)+I_n\big((G^*h)^{\otimes n}\big).
    $$
    
    By linearity and continuity, the identity 
    for $\Phi=I_n(h^{\otimes n})$ 
    extends by polarization to finite linear 
    combinations of symmetrized kernels 
    $h_1\otimes\cdots\otimes h_n$.
    By density of these kernels in 
    $L^2([0,1])^{\hat\otimes n}$ 
    together with It\^o isometry,
    continuity of 
    $(F^*)^{\otimes n}$ and 
    $(G^*)^{\otimes n}$ on 
    $L^2([0,1])^{\hat\otimes n}$ 
    (\cref{thm: adjoint norms} with $p=0$),
    and continuity of $\L^*$ on 
    $L^2(\Omega,\mu_\infty)$ 
    (\cref{thm: Lstar is bounded} with $p=2$),
    it holds for every $\Phi=I_n(f_n)$ with 
    $f_n\in 
    L^2([0,1])^{\hat\otimes n}$. 
    Finally, 
    applying $\L^*$ to the partial sums 
    $\sum_{n\le N} I_n(f_n)$ 
    and letting $N\to\infty$ yields the result 
    for any $\Phi\in 
    L^2(\Omega,\mathcal F_{\mathcal W},
    \mu_\infty)$.
\end{proof}

\subsection{Extension of $\L$ to Hida distributions}

Combining the second quantization representation of $\L^*$ with the
contraction estimates for $(F^*)^{\otimes n}$ and $(G^*)^{\otimes n}$
yields continuity of $\L^*$ in the Hida test topology.

\begin{proposition} 
    The dual operator $\L^*$ is a continuous linear map from 
    $(\S)$ into $(\S)$. 
\end{proposition}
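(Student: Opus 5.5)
We will show directly that $\L^*$ satisfies, for every $p\ge 0$, a bound $\norm{\L^*\Phi}_p\le 2\norm{\Phi}_p$ for all $\Phi\in(\S)$. Continuity of $\L^*:(\S)\to(\S)$ in the projective limit topology then follows at once: each seminorm $\norm{\cdot}_p$ of the image is controlled by the same seminorm of the argument.

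Fix $\Phi\in(\S)$ with chaos expansion $\Phi=\sum_{n\ge0}I_n(f_n)$. Since $(\S)\subset(L^2)$, \cref{thm:sec quanti} gives
\[
\L^*\Phi=\sum_{n\ge0} I_n(g_n),\qquad g_n\coloneqq (F^*)^{\otimes n}f_n+(G^*)^{\otimes n}f_n .
\]
By \cref{prop:hida-test-char}, each $f_n\in\S^{\hat\otimes n}$ and $\sum_n n!\,|(A^p)^{\otimes n}f_n|^2_{L^2([0,1]^n)}<\infty$ for every $p$.

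The kernels $g_n$ are symmetric, because $(F^*)^{\otimes n}$ and $(G^*)^{\otimes n}$ commute with permutations of the variables. By \cref{thm: adjoint norms} and the triangle inequality,
\[
|(A^p)^{\otimes n}g_n|_{L^2([0,1]^n)}\le 2\,|(A^p)^{\otimes n}f_n|_{L^2([0,1]^n)} .
\]
Here I use that $|(A^p)^{\otimes n}\,\cdot\,|_{L^2}$ is exactly the norm of $\S_p^{\otimes n}$. Summing against $n!$ yields $\norm{\L^*\Phi}_p^2\le 4\norm{\Phi}_p^2<\infty$ for every $p$, so $\L^*\Phi\in(\S)$ by \cref{prop:hida-test-char}.

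A sharper constant is available: \cref{thm: adjoint norms-combined} together with the Cauchy--Schwarz inequality for $(a+b)^2$ gives factor $2$ in the squared norm. This is not needed, however.

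The only point requiring some care is the justification that the chaos-wise bound of \cref{thm: adjoint norms} applies to $f_n\in\S_p^{\hat\otimes n}$, which is a closed subspace of $\S_p^{\otimes n}$. The identification of $\norm{\cdot}_p$ with the Hilbert tensor norm of $\S_p^{\otimes n}$ also needs checking. Both are immediate from the definitions. Linearity of $\L^*$ is clear from \eqref{eq:Lstar}.
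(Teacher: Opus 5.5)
Your proof is correct and is essentially the paper's argument: apply the second quantization formula (\cref{thm:sec quanti}) chaos by chaos and use the contraction bounds of \cref{thm: adjoint norms} to get $\norm{\L^*\Phi}_p\le 2\norm{\Phi}_p$ for every $p\ge 0$. Your side remark about a sharper constant is wrong, though it is harmless since you do not use it. On the zeroth chaos $T_0^*$ is multiplication by $2$; indeed $\L^*1=2$, so $\norm{\L^*}_{(\S_p)\to(\S_p)}=2$ exactly. The factor-$\sqrt2$ bound of \cref{prop:Tn-star-bound} holds only on chaoses $n\ge1$, and for $n\ge 2$ it needs the tensorized argument given there, not just \cref{thm: adjoint norms-combined}.
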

\begin{proof}
    Linearity is clear from the definition of $\L^*$.
    Let $p\ge 0$ and $\Phi\in(\S)$ have chaos expansion
    \[
    \Phi=\sum_{n=0}^\infty I_n(f_n).
    \]
    Using \cref{thm:sec quanti} and \cref{thm: adjoint norms} we obtain
    \begin{equation*}
        \begin{split}
            \|\L^*\Phi\|_p^2
            =\sum_{n=0}^\infty n!\,\big|(A^p)^{\otimes n}
            \bigl[(F^*)^{\otimes n}+(G^*)^{\otimes n}\bigr] f_n\big|_{L^2([0,1]^n)}^2
            &\le 
            2\sum_{n=0}^\infty n! \left( \big|(F^*)^{\otimes n}f_n\big|_{S_p^{\otimes n}}^2 +\big|(G^*)^{\otimes n}f_n\big|_{S_p^{\otimes n}}^2\right)\\
            &\leq 4\sum_{n=0}^\infty n!|f_n|_{S_p^{\otimes n}}^2 \\[0.2cm]
            &=4\norm{\Phi}_p^2
        \end{split}
    \end{equation*}
    Hence $\norm{\L^*\Phi}_p\le 2\|\Phi\|_p$ 
    for every $p\ge 0$, so $\L^*:(\S)\to(\S)$ is continuous with
    respect to the 
    topology of the Hida test space $(\S)$ (see \cref{sec:stochastic}).
\end{proof}

We can now extend the action of $\L$ from measures to
general Hida distributions by extending the duality relation \eqref{eq:duality}.
\begin{definition}\label{def:ext}
    Let $U\in (\S)^*$ be a Hida distribution. 
    We define $\L U \in (\S)^*$ to be the Hida distribution 
    given by
    \begin{equation}\label{eq:duality2}
        \pp{\L U}{\Phi}= \pp{U}{\L^*\Phi},
        \quad \Phi \in (\S).
    \end{equation}
\end{definition}
Observe that expression \eqref{eq:duality2} indeed defines 
a Hida distribution, since the composition of continuous linear maps 
is a continuous linear map.
Moreover, when $U$ is a signed measure on $\Omega$,
expression \eqref{eq:duality2} reduces to the duality relation \eqref{eq:duality},
so we recover \eqref{eq:definition_L}.
In this sense, we extend $\L$ 
from measures to general Hida distributions through the generalized 
duality relation \eqref{eq:duality2}.

\section{Generalized spectral theory of $\L$}\label{sec:gen-spec}

In this section, we develop the generalized spectral theory of the
linearized RG operator \(\L\) and give the precise statements 
and proofs of parts (i) and (ii) of \cref{thm:intro-spec};
see \cref{thm:spectral_gap} and \cref{thm:diag_support_eigenkernel} 
below.

Let us briefly describe our approach.
The operator $\L$ is initially defined on signed
measures, but the eigenvalue problem in that setting
is too restrictive: the relevant eigenvectors turn
out to be distributional objects that cannot be
represented as measures.
This calls for an extension of $\L$ to a larger
space, along with a weak formulation of the
eigenvalue problem.
The space $(\S)^*$ of Hida distributions is
particularly well-suited for this purpose, 
allowing us to develop a rich spectral
theory --- establishing general spectral 
properties of $\L$ in the present section --- and
to construct concrete eigenvectors in
\cref{sec:wick-power}.

Having extended $\L$ to $(\S)^*$ in \cref{sec:dual-operator} 
(see \cref{def:ext}), we now formulate the eigenvalue 
problem in the generalized sense.
The approach then rests on reducing this problem
to a family of eigenvalue problems for the
operators $T_n = F^{\otimes n} + G^{\otimes n}$,
one at each level of the Wiener chaos --- a
consequence of the second quantization formula
(\cref{thm:sec quanti}).
Each eigenvalue problem for $T_n$ is also
formulated in the generalized sense, on the
deterministic distribution space
$\S'^{\hat{\otimes} n}$.
The analysis of the adjoints $F^*$ and $G^*$
yields both the characterization of the
generalized spectrum of $\L$
(\cref{thm:spectral_gap}) and the diagonal
concentration of the kernels of generalized
eigenvectors
(\cref{thm:diag_support_eigenkernel}).

\subsection{Generalized eigenvalue problem}

We begin by recalling \cref{def:gen-spec-intro}
here for the reader's convenience, now fully
justified by the analysis of
\cref{sec:dual-operator}, with $\L U$ understood
in the sense of \cref{def:ext}.

\begin{definition}[Generalized eigenvectors and spectrum of $\L$]\label{def:gen-spec-sec5}
    We say that $U\in (\S)^*\setminus \{0\}$ is a \emph{generalized 
    eigenvector} of $\L$ with eigenvalue $\lambda \in \mathbb{C}$ if
    \begin{equation}\label{eq:eigenvector-sec5}
        \pp{\L U}{\Phi} = \lambda \pp{U}{\Phi},
        \quad \text{for all }  \Phi \in (\S).
    \end{equation}
    The \emph{generalized spectrum} of $\L$, denoted by $\sigma(\L)$, 
    is the set of all $\lambda \in \mathbb{C}$ for which 
    \eqref{eq:eigenvector-sec5} holds for some $U \in (\S)^* \setminus \{0\}$.
\end{definition}

At the level of chaos kernels, we adopt the analogous weak
formulation. 
We define $T_n$ on $\S'^{\hat{\otimes} n}$ through
its adjoint action on test kernels, and define the generalized
eigenvalue problem in the same manner.

\begin{definition}[Generalized eigenvectors and spectrum of $T_n$]\label{def:spec-Tn}
    For each $n \geq 0$, define
    \begin{equation*}
        T_n \coloneqq F^{\otimes n} + G^{\otimes n},
        \qquad
        T_n^* \coloneqq (F^*)^{\otimes n}
        + (G^*)^{\otimes n}.
    \end{equation*}
    We say that $u\in\S'^{\hat\otimes n}
    \setminus \{0\}$ is a generalized eigenvector
    of $T_n$ with eigenvalue
    $\lambda \in \mathbb{C}$ if
    \begin{equation}\label{eq:Tn-eigenvector}
      (T_n u, f)\coloneqq(u, T_n^* f) = \lambda\, (u,f),
      \quad \text{for all} \quad
      f\in \S^{\hat\otimes n}.
    \end{equation}
    The \emph{generalized spectrum} of $T_n$,
    denoted by $\sigma(T_n)$, is the set of all
    $\lambda \in \mathbb{C}$ for which
    \eqref{eq:Tn-eigenvector} holds for some
    $u \in \S'^{\hat\otimes n}\setminus\{0\}$.
\end{definition}
    
\begin{remark}
Note that the definition of $T_n u$ via duality is
well-posed. By \cref{thm: adjoint norms},
the operators $(F^*)^{\otimes n}$ and
$(G^*)^{\otimes n}$ are contractions on
$\S_p^{\hat{\otimes} n}$ for every $p \geq 0$,
so $T_n^*$ maps $\S^{\hat{\otimes} n}$ into itself
continuously. For any
$u \in \S'^{\hat{\otimes} n}$, the map
$f \mapsto (u, T_n^* f)$ is therefore a continuous
linear functional on $\S^{\hat{\otimes} n}$,
defining an element
$T_n u \in \S'^{\hat{\otimes} n}$.
\end{remark}

\subsection{Proof of \cref{thm:intro-spec} \textnormal{\emph{(i)}}}
Before proceeding to the proof of part \emph{(i)} of \cref{thm:intro-spec} 
we need the following upper bound on the operator norm of $T_n^*$.

\begin{proposition}\label{prop:Tn-star-bound}
For every \(n\geq 1\) and \(p\geq 0\), $T_n^*$
is a bounded linear operator on \(\S_p^{\otimes n}\) with
\[
    \|T_n^*\|_{\S_p^{\otimes n}\to \S_p^{\otimes n}}
    \leq \sqrt{2}.
\]
\end{proposition}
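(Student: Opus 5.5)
The idea is to avoid estimating $(F^*)^{\otimes n}$ and $(G^*)^{\otimes n}$ separately, since that only gives the trivial bound $2$. Instead, I would tensorize the joint estimate of \cref{thm: adjoint norms-combined}. That lemma states exactly that the map
\[
V:\S_p\to \S_p\oplus\S_p,\qquad Vf\coloneqq (F^*f,\,G^*f),
\]
is a contraction, where $\S_p\oplus\S_p$ carries the Hilbert direct-sum norm. The target bound will then follow from one orthogonality observation combined with Cauchy--Schwarz.

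\emph{Step 1.} Consider the Hilbert tensor power $V^{\otimes n}:\S_p^{\otimes n}\to(\S_p\oplus\S_p)^{\otimes n}$. For Hilbert spaces, $\|A\otimes B\|=\|A\|\,\|B\|$ on the algebraic tensor product. Hence $\|V^{\otimes n}\|\le 1$, first on algebraic tensors and then on all of $\S_p^{\otimes n}$ by density, exactly as in the proof of \cref{thm: adjoint norms}.

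\emph{Step 2.} Distributivity gives the orthogonal decomposition
\[
(\S_p\oplus\S_p)^{\otimes n}\;\cong\;\bigoplus_{\epsilon\in\{1,2\}^n}\S_p^{\otimes n}.
\]
In this decomposition, the component of $V^{\otimes n}f$ indexed by $\epsilon=(1,\dots,1)$ is $(F^*)^{\otimes n}f$, and the component indexed by $\epsilon=(2,\dots,2)$ is $(G^*)^{\otimes n}f$. One checks this on elementary tensors and extends by linearity and continuity. Because $n\ge1$, these two indices are distinct, so the corresponding summands are orthogonal. Dropping all the remaining components then yields
\[
|(F^*)^{\otimes n}f|_{\S_p^{\otimes n}}^2+|(G^*)^{\otimes n}f|_{\S_p^{\otimes n}}^2\;\le\;|V^{\otimes n}f|^2\;\le\;|f|_{\S_p^{\otimes n}}^2 .
\]
This is precisely where $n\ge1$ is used: for $n=0$ the two indices coincide and $T_0^*=2$.

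\emph{Step 3.} By the triangle inequality followed by $a+b\le\sqrt2\,(a^2+b^2)^{1/2}$,
\[
|T_n^*f|_{\S_p^{\otimes n}}\le\sqrt2\,\bigl(|(F^*)^{\otimes n}f|^2+|(G^*)^{\otimes n}f|^2\bigr)^{1/2}\le\sqrt2\,|f|_{\S_p^{\otimes n}} .
\]
Linearity and boundedness of $T_n^*$ are already known from \cref{thm: adjoint norms}, so this completes the argument.

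The main point needing care is Step 2: identifying the components of $V^{\otimes n}f$ rigorously for general $f\in\S_p^{\otimes n}$, not only for elementary tensors. I would handle it through the continuity of the coordinate projections together with the density argument.
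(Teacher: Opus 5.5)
Your proposal is correct and matches the paper's proof essentially step for step. The paper likewise tensorizes the contraction $Jf=(F^*f,G^*f)$ from \cref{thm: adjoint norms-combined}, reads off $(F^*)^{\otimes n}f$ and $(G^*)^{\otimes n}f$ as the pure-word components of $J^{\otimes n}f$ in $\bigoplus_{\varepsilon\in\{F,G\}^n}\S_p^{\otimes n}$, and concludes with $|x+y|^2\le 2(|x|^2+|y|^2)$; your remarks on where $n\ge 1$ enters and on identifying components beyond elementary tensors simply make explicit what the paper leaves implicit.
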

\begin{proof}
    Define
    \[
        J:\S_p\to \S_p\oplus\S_p,
        \qquad
        Jf=(F^*f,G^*f).
    \]
    By \cref{thm: adjoint norms-combined},
    \[
        |Jf|_{\S_p\oplus\S_p}^2
        =|F^*f|_p^2+|G^*f|_p^2
        \leq |f|_p^2,
    \]
    so $\norm{J}_{\S_p\to\S_p\oplus\S_p}\leq1$. 
    By multiplicativity of the Hilbert tensor norm,
    $\|J^{\otimes n}\|\leq1$. 
    Now $(\S_p\oplus\S_p)^{\otimes n}
    \simeq\bigoplus_{\varepsilon\in\{F,G\}^n}\S_p^{\otimes n}$
    isometrically, and $(F^*)^{\otimes n}f$, $(G^*)^{\otimes n}f$
    are the components of $J^{\otimes n}f$ corresponding to 
    the pure words $\varepsilon=(F,\dots,F)$ and $(G,\dots,G)$.
    Hence
    \[
        |(F^*)^{\otimes n}f|_{\S_p^{\otimes n}}^2
        +
        |(G^*)^{\otimes n}f|_{\S_p^{\otimes n}}^2
        \leq
        |J^{\otimes n}f|^2
        \leq
        |f|_{\S_p^{\otimes n}}^2.
    \]
    Consequently,
    \[
        |T_n^*f|_{\S_p^{\otimes n}}^2
        \leq
        2\left(
        |(F^*)^{\otimes n}f|_{\S_p^{\otimes n}}^2
        +
        |(G^*)^{\otimes n}f|_{\S_p^{\otimes n}}^2
        \right)
        \leq 2|f|_{\S_p^{\otimes n}}^2,
    \]
    and the claim follows.
\end{proof}

The bound above translates, by duality, into a bound on the
generalized spectral radius of $T_n$.

\begin{proposition}[Spectral radius of $T_n$]\label{thm:spec radius} 
    Let $n\geq 1$. If $\lambda \in \sigma(T_n)$ then
    $|\lambda|\leq \sqrt{2}$.
\end{proposition}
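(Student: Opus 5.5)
The plan is a power-iteration argument: iterate the weak eigenvalue relation, and play the geometric growth $|\lambda|^k$ on the left against the operator bound of \cref{prop:Tn-star-bound} on the right. Let $u\in\S'^{\hat\otimes n}\setminus\{0\}$ satisfy \eqref{eq:Tn-eigenvector} with eigenvalue $\lambda\in\mathbb{C}$.

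The first step is to show that $u$ is controlled by a single Hilbert norm. The space $\S^{\hat\otimes n}$ is a countably Hilbert (Fréchet) space. Its topology is generated by the increasing family of norms $|\cdot|_{\S_p^{\otimes n}}$, $p\geq0$, exactly as for $\S$ in \cref{sec:deterministic}. A continuous linear functional on such a space is therefore bounded by one of these norms. Hence there exist $p\geq 0$ and $C<\infty$ such that
\[
  |(u,f)|\leq C\,|f|_{\S_p^{\otimes n}},\qquad f\in\S^{\hat\otimes n}.
\]
Equivalently, $u\in\S_{-p}^{\hat\otimes n}$, which is the same conclusion that \cref{thm:generalized chaos} gives kernelwise.

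The second step iterates the eigenvalue relation. By the Remark following \cref{def:spec-Tn}, $T_n^*$ maps $\S^{\hat\otimes n}$ into itself; it preserves symmetry because both $(F^*)^{\otimes n}$ and $(G^*)^{\otimes n}$ commute with permutations of the tensor factors. Applying \eqref{eq:Tn-eigenvector} successively to $f,\,T_n^*f,\,(T_n^*)^2f,\dots$ gives
\[
  (u,(T_n^*)^k f)=\lambda^k\,(u,f)\qquad\text{for all } k\geq 0,\ f\in\S^{\hat\otimes n}.
\]
Combining this with the bound from the first step and with \cref{prop:Tn-star-bound}, applied $k$ times on $\S_p^{\otimes n}$, yields
\[
  |\lambda|^k\,|(u,f)|\leq C\,|(T_n^*)^k f|_{\S_p^{\otimes n}}\leq C\,2^{k/2}\,|f|_{\S_p^{\otimes n}}.
\]

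The final step is to choose the test kernel. Since $u\neq0$ as a functional on $\S^{\hat\otimes n}$, there is some $f\in\S^{\hat\otimes n}$ with $(u,f)\neq0$. Fix such an $f$, take $k$-th roots in the last inequality, and let $k\to\infty$. The constants $C$, $|f|_{\S_p^{\otimes n}}$ and $|(u,f)|$ disappear in the limit, leaving $|\lambda|\leq\sqrt2$. Complex $\lambda$ causes no difficulty: the pairing is simply extended sesquilinearly or bilinearly, and the same estimate applies.

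The only point needing care is the first step, namely that a continuous functional on the projective limit is dominated by a single $p$-norm. This is the standard fact that the dual of a countably Hilbert space is the inductive limit $\bigcup_p\S_{-p}^{\hat\otimes n}$. Once that is in place, the rest follows directly from \cref{prop:Tn-star-bound}.
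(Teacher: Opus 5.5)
Your proof is correct. It rests on the same two ingredients as the paper's proof: that $u$ is dominated by a single Hilbert norm, i.e.\ $u\in\S_{-p}^{\hat\otimes n}$, and the operator bound $\|T_n^*\|_{\S_p^{\otimes n}\to\S_p^{\otimes n}}\le\sqrt2$ of \cref{prop:Tn-star-bound}. The two arguments differ only in how the constant is removed.

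The paper works in one step. It writes $|\lambda|\,|(u,f)|=|(u,T_n^*f)|\le |u|_{\S_{-p}^{\otimes n}}\|T_n^*\|$ for every $f$ with $|f|_{\S_p^{\otimes n}}=1$, and takes the supremum over such $f$. This gives $|\lambda|\,|u|_{\S_{-p}^{\otimes n}}\le\sqrt2\,|u|_{\S_{-p}^{\otimes n}}$, and dividing by the nonzero dual norm finishes the proof.

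You instead fix one $f$ with $(u,f)\neq0$, iterate the weak eigenrelation, and remove the constant $C|f|_{\S_p^{\otimes n}}/|(u,f)|$ by taking $k$-th roots. Your estimate with $k=1$ plus a supremum over $f$ would already close. The iteration is what lets you work with a single test kernel, at the modest cost of checking that $T_n^*$ preserves $\S^{\hat\otimes n}$, which you do.

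In exchange, your Gelfand-type argument gives $|\lambda|\le\lim_k\|(T_n^*)^k\|^{1/k}_{\S_p^{\otimes n}\to\S_p^{\otimes n}}$, the spectral radius rather than the operator norm. Since only the norm bound of \cref{prop:Tn-star-bound} is available, both routes end at the same constant $\sqrt2$.
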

\begin{proof}
Suppose $u\in\S'^{\hat\otimes n}\setminus \{0\}$ 
is a generalized eigenvector of $T_n$ with eigenvalue 
$\lambda \in \mathbb{C}$, in the sense of \cref{def:spec-Tn}. 
Then, there exists \(p\geq 0\) such that
\(u\in \S_{-p}^{\hat\otimes n}\), and
\[
|\lambda| |(u,f)| = |(T_n u,f)|=|(u,T_n^* f)|
\leq |u|_{\S_{-p}^{\otimes n}}
\|T_n^*\|_{\S_p^{\otimes n}\to \S_p^{\otimes n}},
\]
for all \(f\in \S^{\hat\otimes n}\) such that
\(|f|_{\S_p^{\otimes n}}=1\).
Taking the supremum over such \(f\) yields
\[
|\lambda| |u|_{\S_{-p}^{\otimes n}}
\leq
|u|_{\S_{-p}^{\otimes n}}
\|T_n^*\|_{\S_p^{\otimes n}\to \S_p^{\otimes n}},
\]
which by \cref{prop:Tn-star-bound} implies
\[
|\lambda|
\leq
\|T_n^*\|_{\S_p^{\otimes n}\to \S_p^{\otimes n}}
\leq \sqrt{2},
\]
concluding the proof.
\end{proof}

We now give the proof of \cref{thm:intro-spec} \emph{(i)}.

\begin{theorem}[Spectral gap]\label{thm:spectral_gap}
We have the inclusion    
$\sigma(\L)\subseteq \{z\in\mathbb{C} : |z|\leq \sqrt{2}\}\cup \{2\}$.
\end{theorem}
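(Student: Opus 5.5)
The plan is to reduce the eigenvalue problem for $\L$ to the chaos levels and then apply \cref{thm:spec radius} at each level $n\geq1$, treating $n=0$ by hand. Let $U\in(\S)^*\setminus\{0\}$ satisfy \eqref{eq:eigenvector-sec5} with eigenvalue $\lambda$, and write $U=\sum_{n\ge0}I_n(u_n)$ with $u_n\in\S_{-p}^{\hat\otimes n}$ as in \cref{thm:generalized chaos}. For a test functional concentrated in a single chaos, $\Phi=I_n(f)$ with $f\in\S^{\hat\otimes n}$, we have $\Phi\in(\S)$ by \cref{prop:hida-test-char}. By \cref{thm:sec quanti},
\[
\L^*\Phi=I_n\bigl((F^*)^{\otimes n}f+(G^*)^{\otimes n}f\bigr)=I_n(T_n^*f).
\]
Moreover $T_n^*f\in\S^{\hat\otimes n}$: symmetry is preserved because both tensor powers commute with permutations, and membership in $\S^{\otimes n}$ follows from \cref{thm: adjoint norms}. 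Using \eqref{eq:generalized chaos}, the eigenvalue relation then reads
\[
n!\,(u_n,T_n^*f)=\lambda\, n!\,(u_n,f)\qquad\text{for all } f\in\S^{\hat\otimes n}.
\]
In other words, each $u_n$ is either zero or a generalized eigenvector of $T_n$ with eigenvalue $\lambda$, in the sense of \cref{def:spec-Tn}.

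Since $U\neq0$, some $u_n\neq0$; the uniqueness of the chaos kernels, together with the pairing formula, guarantees that $U=0$ precisely when all $u_n=0$. If some $u_n\neq0$ with $n\ge1$, then $\lambda\in\sigma(T_n)$, and \cref{thm:spec radius} gives $|\lambda|\le\sqrt2$. Otherwise only $u_0\neq0$. At level $0$, $T_0^*$ acts on constants: $F^{*\otimes0}$ and $G^{*\otimes0}$ are both the identity on $\R$, and indeed $\L^*1=2$ directly from \eqref{eq:Lstar}. The relation becomes $2u_0=\lambda u_0$, which forces $\lambda=2$. Hence $\lambda\in\{|z|\le\sqrt2\}\cup\{2\}$.

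The only delicate point is justifying that single-chaos functionals $I_n(f)$ suffice and that the pairing computation is legitimate. This requires $I_n(f)\in(\S)$ and $I_n(T_n^*f)\in(\S)$, both of which follow from \cref{prop:hida-test-char} and \cref{thm: adjoint norms}. It also requires that $\L^*$ acting on $(\S)$ agrees with the $L^2$ formula of \cref{thm:sec quanti}, which holds since $(\S)\subset(L^2)$. For complex $\lambda$ one should either complexify the pairings or note that the real and imaginary parts of $u_n$ each satisfy the real relation; in the latter case the argument of \cref{thm:spec radius} applies verbatim to complex kernels. As a by-product, when $\lambda=2$ all $u_n$ with $n\ge1$ vanish, because $2>\sqrt2$; this already yields the simplicity statement with eigenspace spanned by the constant $1$, i.e.\ $\mu_\infty$.
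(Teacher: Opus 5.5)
Your proposal is correct and is essentially the paper's proof: single-chaos test functionals $I_n(f)$ reduce the eigenvalue equation to $(u_n,T_n^*f)=\lambda(u_n,f)$ at each level, level $0$ forces $\lambda=2$ whenever $u_0\neq0$, and \cref{thm:spec radius} handles any nonzero $u_n$ with $n\ge1$ (the paper just phrases this as assuming $\lambda\neq2$ and deducing $u_0=0$). One minor slip in your aside on complex $\lambda$: for non-real $\lambda$ the real and imaginary parts of $u_n$ satisfy a coupled system rather than each satisfying the relation separately, but this does not affect the proof, since with real pairings a non-real $\lambda$ already forces $U=0$, and otherwise your complexification option works.
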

\begin{proof}
Let $\lambda\in\mathbb{C}$ be such that $\lambda\neq 2$ and
$$
\pp{\L U}{\Phi}=\pp{U}{\L^*\Phi}=\lambda \pp{U}{\Phi},  
$$    
for some $U\in(\S)^*\setminus \{0\}$ and for all $\Phi\in(\S)$.
Using the second quantization formula (\cref{thm:sec quanti})
and expanding the relation above in Wiener chaos, we find
$$
\sum_{n=0}^\infty n! (u_n,T_n^*f_n)=\lambda \sum_{n=0}^\infty n!(u_n,f_n),
$$
where $\{u_n\}_{n\geq0}\subseteq\S'^{\hat\otimes n}$ are the kernels appearing 
in the expansion of $U$, and 
$\{f_n\}_{n\geq0}\subseteq \S^{\hat\otimes n}$ are the kernels appearing 
in the expansion of $\Phi$; 
see \cref{prop:hida-test-char,thm:generalized chaos}.
By choosing localized test functionals $\Phi=I_n(f_n)$,
it follows that 
$$
(u_n,T_n^*f_n)=\lambda (u_n,f_n),
$$
for every $n\geq0$ and arbitrary $f_n\in\S^{\hat\otimes n}$.

Since $\lambda\neq 2$, we must have $u_0=0$.
Indeed, $T_0^*$ is two times the identity map over constants, 
$u_0\in \R$, and the dual pairing $(u_0,f_0)$, with $f_0\in \R$, 
is the multiplication in $\R$. Thus, the zeroth order 
equation implies that $2 u_0 f_0=\lambda u_0 f_0$, for every 
$f_0\in \R$. If $u_0\ne 0$, then we must have $(2-\lambda)f_0=0$.
Choosing $f_0=1$ leads to a contradiction.
Consequently, 
it follows that  $u_{n_*}\neq 0$ for some $n_*\geq 1$
since $U\ne 0$. 
As a result, $u_{n_*}$ is a generalized eigenvector of $T_{n_*}$
and $\lambda \in \sigma(T_{n_*})$,
in the sense of \cref{def:spec-Tn}.
\cref{thm:spec radius} then implies $|\lambda|\leq \sqrt{2}$,
concluding the proof.
\end{proof}

\begin{remark}
    The eigenvalue $\lambda_0=2$ is simple. Indeed, if
    $\L U = 2U$ with $U \neq 0$, the same chaos-level
    reduction gives $T_n u_n = 2 u_n$ for every
    $n \geq 0$. For $n \geq 1$,
    \cref{thm:spec radius} gives
    $\sigma(T_n) \subseteq \{|z| \leq \sqrt{2}\}$,
    and since $2 > \sqrt{2}$, it follows that
    $u_n = 0$ for all $n \geq 1$.
    Hence $U = I_0(u_0)$ with
    $u_0 \in \mathbb{R} \setminus \{0\}$,
    and for every $\Phi = \sum_n I_n(f_n) \in (\S)$ we have
    $$\pp{U}{\Phi}=u_0f_0=u_0\,\mathbb{E}_{\mu_\infty}[\Phi]
    =u_0\pp{\mu_\infty}{\Phi},
    $$
    from which we conclude that $U = u_0\,\mu_\infty$.
\end{remark}

\subsection{Proof of \cref{thm:intro-spec} \textnormal{\emph{(ii)}}}

We now turn to part \emph{(ii)} of
\cref{thm:intro-spec}, namely the diagonal
concentration of eigenkernels.
We derive a closed-form expression for
the iterates of $T_n^*$
from which its
geometric content becomes transparent:
$(T_n^*)^m$ acts as a dyadic
averaging operator over a diagonal band of
width $2^{-m}$, ultimately concentrating
mass on the diagonal as $m\to\infty$.
This concentration property allows us to deduce that if
$f$ is supported away from the diagonal,
then $(T_n^*)^m f = 0$ for $m$ large enough.
Part \emph{(ii)} then follows by iterating the
chaos-level eigenvalue relation.

\begin{lemma}\label{thm:closed_form_Tm}
    Let $n,m\ge1$ and $f\in L^2([0,1]^n)$. Then,
    for a.e.  $\mathbf s\in[0,1]^n$
    \begin{equation}\label{eq:closed_form}
    (T_n^*)^m f(\mathbf s)
    =
    2^{-mn/2}\sum_{r=0}^{2^m-1}
    f\!\left(\frac{\mathbf s+r\mathbf 1}{2^m}\right),
    \end{equation}
    where $\mathbf{1}=(1,\dots,1)$.
\end{lemma}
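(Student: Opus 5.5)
We argue by induction on $m$, after first fixing what $(F^*)^{\otimes n}$ and $(G^*)^{\otimes n}$ do pointwise. By \eqref{eq:F*_G*}, for $f\in L^2([0,1]^n)$ and a.e.\ $\mathbf s\in[0,1]^n$,
\[
(F^*)^{\otimes n}f(\mathbf s)=2^{-n/2}f\!\left(\tfrac{\mathbf s}{2}\right),\qquad
(G^*)^{\otimes n}f(\mathbf s)=2^{-n/2}f\!\left(\tfrac{\mathbf s+\mathbf 1}{2}\right).
\]
We first check these identities on elementary tensors $h_1\otimes\cdots\otimes h_n$, where they are immediate from the one-variable formulas. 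We then extend to all of $L^2([0,1]^n)$ by density and continuity. On one side, $(F^*)^{\otimes n}$ and $(G^*)^{\otimes n}$ are contractions on $L^2([0,1]^n)$ by \cref{thm: adjoint norms} with $p=0$. On the other side, the maps $f\mapsto 2^{-n/2}f(\mathbf s/2)$ and $f\mapsto 2^{-n/2}f((\mathbf s+\mathbf 1)/2)$ are bounded on $L^2([0,1]^n)$, by a linear change of variables with Jacobian $2^{-n}$. Since the pullback by an affine bijection sends null sets to null sets, both sides are well defined on equivalence classes, so the equality holds a.e.

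The case $m=1$ is then $T_n^*f(\mathbf s)=2^{-n/2}\bigl[f(\mathbf s/2)+f((\mathbf s+\mathbf 1)/2)\bigr]$, which is \eqref{eq:closed_form} with $r\in\{0,1\}$.

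For the inductive step, assume \eqref{eq:closed_form} holds for $m$ and set $g=(T_n^*)^m f$. Then
\[
(T_n^*)^{m+1}f(\mathbf s)=T_n^*g(\mathbf s)=2^{-n/2}\Bigl[g\!\left(\tfrac{\mathbf s}{2}\right)+g\!\left(\tfrac{\mathbf s+\mathbf 1}{2}\right)\Bigr].
\]
Substituting the formula for $g$ gives
\[
2^{-(m+1)n/2}\sum_{r=0}^{2^m-1}\Bigl[f\!\left(\tfrac{\mathbf s+2r\mathbf 1}{2^{m+1}}\right)+f\!\left(\tfrac{\mathbf s+(2r+1)\mathbf 1}{2^{m+1}}\right)\Bigr].
\]
The indices $2r$ and $2r+1$ with $0\le r\le 2^m-1$ run exactly once over $\{0,\dots,2^{m+1}-1\}$, so this is \eqref{eq:closed_form} at level $m+1$.

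The only delicate point is the a.e.\ bookkeeping. When we substitute the induction hypothesis at the points $\mathbf s/2$ and $(\mathbf s+\mathbf 1)/2$, we need the exceptional null set to pull back to a null set. This is guaranteed because these are affine bijections onto subcubes of $[0,1]^n$. Apart from this, the argument is routine.
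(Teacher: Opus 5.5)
Your proof is correct and follows the same induction as the paper: the base case comes from the pointwise action of $(F^*)^{\otimes n}$ and $(G^*)^{\otimes n}$, and the inductive step evaluates $g=(T_n^*)^m f$ at $\mathbf s/2$ and $(\mathbf s+\mathbf 1)/2$ and then merges the even and odd indices. The paper compresses the base case into ``follows directly from the definition''; your density argument for the tensorized pointwise formulas and your null-set bookkeeping under the affine maps spell out what that phrase leaves implicit.
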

    
\begin{proof}
    We proceed by induction on $m$. The base case follows directly from 
    the definition:
    \[
    (T_n^* f)(\mathbf s)=2^{-n/2}\Big(f(\mathbf s/2)+f((\mathbf s+\mathbf 1)/2)\Big)
    =
    2^{-n/2}\sum_{r=0}^{1} f\!\left(\frac{\mathbf s+r\mathbf 1}{2}\right),
    \]
    which is exactly \eqref{eq:closed_form} for $m=1$.
    Now assume \eqref{eq:closed_form} holds for some $m\ge 1$. Set
    \[
    g(\mathbf s):=( T_n^*)^m f(\mathbf s)
    =
    2^{-mn/2}\sum_{r=0}^{2^m-1} f\!\left(\frac{\mathbf s+r\mathbf 1}{2^m}\right).
    \]
    Then
    \[
    ( T_n^*)^{m+1}f(\mathbf s)
    =( T_n^* g)(\mathbf s)
    =
    2^{-n/2}\Big(g(\mathbf s/2)+g((\mathbf s+\mathbf 1)/2)\Big).
    \]
    We compute the term $g(\mathbf s/2)$ as
    \[
    g(\mathbf s/2)
    =
    2^{-mn/2}\sum_{r=0}^{2^m-1}
    f\!\left(\frac{\mathbf s/2+r\mathbf 1}{2^m}\right)
    =
    2^{-mn/2}\sum_{r=0}^{2^m-1}
    f\!\left(\frac{\mathbf s+2r\,\mathbf 1}{2^{m+1}}\right).
    \]
    Similarly, the term $g((\mathbf s+\mathbf 1)/2)$ can be written as
    \[
    g((\mathbf s+\mathbf 1)/2)
    =
    2^{-mn/2}\sum_{r=0}^{2^m-1}
    f\!\left(\frac{(\mathbf s+\mathbf 1)/2+r\mathbf 1}{2^m}\right)
    =
    2^{-mn/2}\sum_{r=0}^{2^m-1}
    f\!\left(\frac{\mathbf s+(2r+1)\mathbf 1}{2^{m+1}}\right).
    \]
    Plugging into $( T_n^*)^{m+1}f$ we obtain
    \[
    ( T_n^*)^{m+1}f(\mathbf s)
    =
    2^{-n/2}\cdot 2^{-mn/2}
    \left[
    \sum_{r=0}^{2^m-1} f\!\left(\frac{\mathbf s+2r\,\mathbf 1}{2^{m+1}}\right)
    +
    \sum_{r=0}^{2^m-1} f\!\left(\frac{\mathbf s+(2r+1)\mathbf 1}{2^{m+1}}\right)
    \right].
    \]
    Note that, by reindexing $r'=2r$, the first sum 
    covers all even $r'$ up to $2^{m+1}-2$. Similarly, by reindexing 
    $r'=2r+1$, the second sum covers all odd $r'$ up to $2^{m+1}-1$. 
    As a result, both of them combined cover all integers
    $r'=0,1,\dots,2^{m+1}-1$ exactly once. 
    Moreover, since the prefactor in the expression above is
    $2^{-n/2}\cdot 2^{-mn/2}=2^{-(m+1)n/2}$,
    we get
    \[
        ( T_n^*)^{m+1}f(\mathbf s)
        =
        2^{-(m+1)n/2}\sum_{r'=0}^{2^{m+1}-1}
        f\!\left(\frac{\mathbf s+r'\mathbf 1}{2^{m+1}}\right),
        \]
        which is precisely \eqref{eq:closed_form} with $m+1$.
        This completes the induction and proves the lemma.
\end{proof}

\begin{proposition}\label{thm:diag_supp_f}
    Fix $n\ge2$ and let 
    $$\Delta\coloneqq\{(t_1,\dots,t_n)\in[0,1]^n:\ t_1=\cdots=t_n\}.$$
    Assume $f\in L^2([0,1])^{\otimes n}$ satisfies $\mathrm{supp}(f)\cap \Delta=\varnothing$.
    Then there exists $m\ge1$ such that
    \[
    (T_n^*)^{m} f = 0 \qquad \text{in } L^2([0,1])^{\otimes n}.
    \]
\end{proposition}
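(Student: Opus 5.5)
The plan is to use the closed form from \cref{thm:closed_form_Tm} directly. For a.e.\ $\mathbf s\in[0,1]^n$, $(T_n^*)^m f(\mathbf s)$ is a finite sum of values $f\bigl((\mathbf s+r\mathbf 1)/2^m\bigr)$, $r=0,\dots,2^m-1$. It therefore suffices to show the following: for $m$ large, every point $(\mathbf s+r\mathbf 1)/2^m$ with $\mathbf s\in[0,1]^n$ lies outside $\mathrm{supp}(f)$, up to a null set. Then every summand vanishes a.e., and so does the sum.

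\textbf{Step 1 (geometry).} Since $\mathrm{supp}(f)$ is closed in the compact set $[0,1]^n$ and disjoint from the compact set $\Delta$, there is a positive distance $\delta>0$ between them. Concretely, define
\[
D(\mathbf t)\coloneqq \max_{i,j}|t_i-t_j|.
\]
By compactness and continuity of $D$, there is $\delta>0$ such that $D(\mathbf t)\ge\delta$ for all $\mathbf t\in\mathrm{supp}(f)$. Equivalently, $f=0$ a.e.\ on the band $B_\delta\coloneqq\{\mathbf t: D(\mathbf t)<\delta\}$. This uses that $f$ vanishes a.e.\ off its (essential) support.

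\textbf{Step 2 (band containment).} For $\mathbf t=(\mathbf s+r\mathbf 1)/2^m$ we have $t_i-t_j=(s_i-s_j)/2^m$, hence $D(\mathbf t)\le 2^{-m}$. Choose $m$ with $2^{-m}<\delta$. Then each affine map $\phi_r:\mathbf s\mapsto(\mathbf s+r\mathbf 1)/2^m$ sends $[0,1]^n$ into $B_\delta$.

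\textbf{Step 3 (null sets).} Each $\phi_r$ is an invertible affine map with Jacobian $2^{-mn}$. So the preimage under $\phi_r$ of the null set where $f\neq0$ inside $B_\delta$ is again null. Hence $f\circ\phi_r=0$ a.e.\ for each of the finitely many $r$. Summing gives $(T_n^*)^m f=0$ a.e., i.e.\ in $L^2([0,1])^{\otimes n}$.

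The only delicate point is Step 1: it requires interpreting $\mathrm{supp}(f)$ as the essential support of an $L^2$ class, which is closed, and using compactness to get a uniform $\delta$. Everything else follows from \cref{thm:closed_form_Tm}. This is also where $n\ge2$ matters, since for $n=1$ the diagonal is all of $[0,1]$ and the hypothesis forces $f=0$.
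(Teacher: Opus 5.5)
Your proof is correct and follows the paper's argument. Both use the closed form of $(T_n^*)^m$, separate $\mathrm{supp}(f)$ from $\Delta$ by a positive margin, and observe that every evaluation point $(\mathbf s+r\mathbf 1)/2^m$ lies in a thin band around the diagonal on which $f$ vanishes a.e.

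The differences are cosmetic. You measure closeness to $\Delta$ by $\max_{i,j}|t_i-t_j|$, whereas the paper uses Euclidean distance and the dyadic cubes $Q_{m,r}$; your choice avoids the $\sqrt n$ factor in the choice of $m$. You also state explicitly that the affine maps preserve null sets, which the paper uses without comment.
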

    
    \begin{proof}
        Since $\mathrm{supp}(f)$ is closed and $\Delta$ is compact, 
        we have
        \[
            \delta\coloneqq \mathrm{dist}(\mathrm{supp}(f),\Delta)>0.
        \]    
        Choose $m\ge1$ 
        such that $2^{-m}<2\delta/\sqrt{n}$ and 
        define the dyadic diagonal band
        \[
            \Delta_m\coloneqq \bigcup_{r=0}^{2^m-1} Q_{m,r},
            \qquad 
            Q_{m,r}\coloneqq [r2^{-m},(r+1)2^{-m})^n.
        \]  
    Then we have $\mathrm{dist}(\mathrm{supp}(f),\Delta_m)>0$
    and therefore
    $f=0$ a.e. on $\Delta_m$.

    By \cref{thm:closed_form_Tm}, for a.e.\ 
    $\mathbf s\in[0,1]^n$,
    \[
        (T_n^*)^m f(\mathbf s)
        = 2^{-mn/2}\sum_{r=0}^{2^m-1}
        f\!\left(\frac{\mathbf s+r\mathbf 1}{2^m}\right).
    \]
    For each $r$ and each $\mathbf s\in[0,1]^n$, 
    the point $(\mathbf s+r\mathbf 1)/2^m$ belongs to 
    $Q_{m,r}\subset\Delta_m$. Since $f=0$ a.e.\ on 
    $\Delta_m$, every summand vanishes for a.e.\ 
    $\mathbf s$, and hence $(T_n^*)^m f=0$ in 
    $L^2([0,1])^{\otimes n}$.
\end{proof}
  
We recall the definition of support of a deterministic distribution.

\begin{definition}[Support of distributions]\label{def:support_Sminus}
    Let $n\ge 1$ and
    $u\in (\S^{\hat\otimes n})^\prime$.
    We say that $u$ \emph{vanishes on an open set} $U\subset[0,1]^n$ if
    \[
    (u,f)=0 \quad\text{for every } f\in \mathcal S^{\hat\otimes n}\ \text{with }\mathrm{supp}(f)\subset U.
    \]
    The \emph{support} of $u$, denoted $\mathrm{supp}(u)$, is the complement of the largest open set on which
    $u$ vanishes.
\end{definition}

With the above definition in place, we 
are now ready to give the proof 
of \cref{thm:intro-spec} \emph{(ii)}.

\begin{theorem}[Diagonal support of eigenkernels]\label{thm:diag_support_eigenkernel}
    Let $U=\sum_{n=0}^\infty  I_n(u_n) \in (\S)^*$ 
    be a generalized eigenvector of $\L$ with eigenvalue $\lambda \neq0$.
    Then, for $n\geq 2$,
    \[
    \mathrm{supp}(u_n)\subset \Delta .
    \]
\end{theorem}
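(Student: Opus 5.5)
The plan is to reduce the statement to the chaos-level eigenvalue relation and then apply \cref{thm:diag_supp_f}. As in the proof of \cref{thm:spectral_gap}, pairing the eigenvalue equation $\pp{U}{\L^*\Phi}=\lambda\pp{U}{\Phi}$ against localized test functionals $\Phi=I_n(f)$ with $f\in\S^{\hat\otimes n}$ gives, via \cref{thm:sec quanti} and \cref{thm:generalized chaos},
\[
(u_n,T_n^*f)=\lambda\,(u_n,f),\qquad f\in\S^{\hat\otimes n}.
\]
We need $T_n^*$ to preserve $\S^{\hat\otimes n}$; this holds by \cref{thm: adjoint norms}, since $T_n^*$ also commutes with symmetrization. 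Iterating the relation $m$ times then yields
\[
\lambda^m(u_n,f)=(u_n,(T_n^*)^mf)\quad\text{for every } m\ge1.
\]

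Next, fix an open set $V\subset[0,1]^n$ with $V\cap\Delta=\varnothing$, and let $f\in\S^{\hat\otimes n}$ with $\mathrm{supp}(f)\subset V$. We must show $(u_n,f)=0$. It is cleaner to assume that $\mathrm{supp}(f)$, a closed subset of $[0,1]^n$ and hence compact, is disjoint from $\Delta$. I would read \cref{def:support_Sminus} in this sense, or note that it suffices to prove vanishing on open sets $V$ with $\overline V\cap\Delta=\varnothing$, since these exhaust $[0,1]^n\setminus\Delta$ and vanishing is local. Since $\S^{\hat\otimes n}\subset L^2([0,1])^{\otimes n}$, \cref{thm:diag_supp_f} gives some $m$ with $(T_n^*)^mf=0$ in $L^2$. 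Because $\S^{\hat\otimes n}$ embeds injectively into $L^2$, we get $(T_n^*)^mf=0$ as an element of $\S^{\hat\otimes n}$. Hence $\lambda^m(u_n,f)=0$, and since $\lambda\neq0$ we conclude $(u_n,f)=0$. Thus $u_n$ vanishes on the complement of $\Delta$, so $\mathrm{supp}(u_n)\subset\Delta$.

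The main obstacle is making rigorous the interface between "support" for elements of $\S^{\hat\otimes n}$, where $\S$ consists of Haar-type, possibly discontinuous functions identified with their canonical representatives, and the $L^2$ notion of support used in \cref{thm:diag_supp_f}. Two points must be checked. First, $\mathrm{supp}(f)\subset V$ should imply $f=0$ a.e.\ outside a compact set avoiding $\Delta$, so that the positive-distance argument of \cref{thm:diag_supp_f} applies. Second, the identification $(T_n^*)^mf=0$ should transfer from $L^2$ to the pairing with $u_n$. The second point is immediate because $u_n$ is continuous on $\S_p^{\hat\otimes n}\subset L^2$ and the pairing only depends on the element, not on a representative. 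For the first, I would take closures and, if needed, use an exhaustion of $V$ by open sets with closure in $V$.
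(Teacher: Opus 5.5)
Your proof is correct and is essentially the paper's argument: you reduce to the chaos-level relation $(u_n,T_n^*f)=\lambda(u_n,f)$ by pairing with localized test functionals $I_n(f)$, iterate it, use \cref{thm:diag_supp_f} to get $(T_n^*)^m f=0$, and divide by $\lambda^m\neq 0$. The support worry resolves itself and needs no exhaustion or locality-of-vanishing argument (which is not obviously available for this Haar-based test space anyway): if $\mathrm{supp}(f)\subset V$ with $V\cap\Delta=\varnothing$, then $\mathrm{supp}(f)$ is a closed, hence compact, set disjoint from $\Delta$, which is exactly the hypothesis \cref{thm:diag_supp_f} needs.
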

\begin{proof}
    The eigenvector condition means
    $$
    \pp{U}{\L^*\Phi}=\lambda \pp{U}{\Phi}.
    $$    
    By using the generalized chaos expansion 
    (\cref{thm:generalized chaos}) and 
    choosing localized test functionals $\Phi=I_n(f)$, it follows that 
    \begin{equation}\label{p3}
        (u_n,T_n^*f)=\lambda (u_n,f),
    \end{equation}
    for every $n\geq0$ and $f\in \S^{\hat\otimes n}$ arbitrary. 
    
    Now assume $n\ge2$ and let $f$ be such that $\mathrm{supp}(f)\cap \Delta=\varnothing$.
    Then we can apply \cref{thm:diag_supp_f}
    to find $m\geq1$ such that 
    \[
    (T_n^*)^{m} f = 0.
    \]
    By iterating \eqref{p3} $m$ times we conclude 
    \begin{equation}
        0=(u_n,(T_n^*)^mf)=\lambda^m (u_n,f).
    \end{equation}
    Since $\lambda \neq 0$, it follows that  $(u_n,f)=0$ 
    and the claim follows.
\end{proof}

\begin{remark}
In the usual setting of tempered distributions, the diagonal 
concentration result in \cref{thm:diag_support_eigenkernel} would have further implications. 
By the structure theorem for distributions supported on a 
submanifold \cite[Theorem~2.3.5]{hormander2003analysis}, 
any Schwartz 
distribution on $[0,1]^n$ with support contained in the diagonal 
$\Delta$ must be a finite linear combination of transverse 
derivatives of Dirac masses along $\Delta$. 
In our setting, this would imply that every eigenvector kernel 
$u_n$ possesses a highly constrained structure, essentially 
composed of derivatives of delta functions, with only one degree 
of freedom left. 
Observe that the family constructed in 
\cref{def:eigenvector} has 
this form, with the remaining degree of freedom realized by an 
integral along the diagonal. 
We do not attempt to verify such a structural 
constraint in the particular Gelfand triple considered 
in this work, and leave it as a conjecture.
\end{remark}

\section{Wick powers of white noise}
\label{sec:wick-power}

Here we construct the family of Wick-power
eigenvectors $\{\mathfrak{U}_n\}_{n \geq 0}$ and prove
\cref{thm:intro-spec}\,(iii); see 
\cref{def:eigenvector} and \cref{thm:eigenvector} below.
In the preceding section we established general
structural properties of $\L$; we now turn to the
complementary task of constructing explicit eigenvectors
and computing their eigenvalues.
Before proceeding to the proof, we first provide intuition
behind the construction of the eigenvectors
$\mathfrak{U}_n$.
The heuristic argument relies on
applying the $S$-transform to the eigenvalue problem
and using the second quantization formula.
This reduces the problem to a functional equation on
$\S$, whose simplest polynomial solutions turn out to
be precisely the $S$-transforms of the $\mathfrak{U}_n$.

\subsection{Intuition and motivation}
\label{subsec:S-derivation}

To gain further intuition about the structure of the
eigenmodes of $\L$, beyond the structural results of
\cref{sec:gen-spec}, it is convenient to express
the eigenvalue problem in terms of $S$-transforms.
This complementary point of view makes more transparent how
the constraints imposed by the rescaling maps $F$ and $G$
lead to explicit generalized eigenvectors.

Assume that $U \in (\S)^*$ is a generalized
eigenvector of $\L$ with eigenvalue
$\lambda \neq 0$, so that
\begin{equation}\label{eq:eigen-relation-heuristic}
  \pp{\L U}{\Phi}=\pp{U}{\L^*\Phi} = \lambda\,\pp{U}{\Phi}
\end{equation}
for all $\Phi \in (\S)$.
Let $h \in \S$ and $\mathcal{E}(h)$ be the Wick exponential 
as in \cref{def:wick-exp}.
The chaos kernels
of $\mathcal{E}(h)$ are
$f_n = h^{\otimes n}/n!$ (\cref{prop:wick-chaos}) and 
the second quantization formula (\cref{thm:sec quanti}) gives 
\begin{equation}\label{eq:Lstar-wick}
    \L^*\mathcal{E}(h)
    = \sum_{n=0}^{\infty} \frac{1}{n!}
    \Big[
    I_n\!\big((F^*h)^{\otimes n}\big)
    +
    I_n\!\big((G^*h)^{\otimes n}\big)
    \Big]
    = \mathcal{E}(F^*\!h) + \mathcal{E}(G^*\!h).
\end{equation}
Using $\Phi=\mathcal E(h)$ in \eqref{eq:eigen-relation-heuristic} and 
applying \eqref{eq:Lstar-wick}, 
the eigenvalue equation takes the form
\begin{equation}\label{eq:S-eigenvalue}
    SU(F^*\!h) + SU(G^*\!h)
    = \lambda\, SU(h),
    \qquad h \in \S.
\end{equation}
This is a functional
equation for $SU$: we seek a scalar $\lambda$ and
a functional $SU \colon \S \to \mathbb{R}$ such
that the relation above holds for every $h \in \S$.
A particular family of solutions to
\eqref{eq:S-eigenvalue} is given by the monomials
\begin{equation}\label{eq:monomial-ansatz}
    S \frak U_n(h)
    \coloneqq \int_0^1 h(t)^n\,dt,
    \qquad n \geq 0,
\end{equation}
with corresponding eigenvalues $\lambda_n = 2^{1-n/2}$.

It remains to identify the functionals \(S \frak U_n\) as the
\(S\)-transforms of Hida distributions. Let \(\dot W\) denote
white noise and write its pointwise Wick powers as
\[
    {:}\dot W(t)^n{:}
    =
    I_n(\delta_t^{\hat\otimes n}).
\]
A standard identity in white noise analysis gives (see \cite[p.~40]{kuo2018white})
\[
    S\bigl({:}\dot W(t)^n{:}\bigr)(h)
    =
    h(t)^n .
\]
Consequently, \eqref{eq:monomial-ansatz} is precisely the
\(S\)-transform of the Hida distribution
\begin{equation}\label{eq:un-sec6}
    \mathfrak U_n
    =
    \int_0^1 {:}\dot W(t)^n{:}\,dt,
\end{equation}
namely, the time-integrated \(n\)-th Wick power of white noise 
introduced in \eqref{eq:un-intro}.
The associated chaos decomposition is given by
\begin{equation}\label{eq:un-chaos}
  \mathfrak U_n = I_n(\mathfrak u_n),
  \qquad
  \mathfrak u_n =
  \int_0^1 \delta_t^{\hat\otimes n}\,dt .
\end{equation}

We conclude this heuristic subsection 
with a side remark concerning Malliavin calculus. Although
our construction of \(\mathfrak U_n\) is carried out in the Hida
distribution space, its action admits a formal representation in terms
of Malliavin derivatives. More precisely, for sufficiently regular test
functionals \(\Phi\), one has
\begin{equation}\label{eq:Un-Malliavin}
    \pp{\mathfrak U_n}{\Phi}
    =
    \mathbb E_{\mu_\infty}
    \left[
        \int_0^1 D_t^n \Phi\,dt
    \right],
\end{equation}
where \(D_t^n\Phi\) denotes the \(n\)-th Malliavin derivative of
\(\Phi\) 
evaluated at the diagonal point \((t,\ldots,t)\).
This identity provides a natural point of contact between Hida white
noise analysis and Malliavin calculus. It suggests that tools from
Malliavin calculus, such as integration-by-parts formulae, could be
useful for giving alternative interpretations of the eigenvectors
\(\mathfrak U_n\) and their role in the linearized RG dynamics.
We do not, however, pursue this direction here; the analysis below
relies only on Hida white noise analysis and distribution theory.

\subsection{Proof of \cref{thm:intro-spec} \textnormal{\emph{(iii)}}}
\label{sec:proofs6.2}

We now make the heuristic construction of 
\cref{subsec:S-derivation} rigorous.
We begin by showing that the delta function $\delta_t$
belongs to the deterministic distribution 
space constructed in \cref{sec:deterministic},
which allows us to realize the kernel 
$\frak u_n$ in \eqref{eq:un-chaos}
as a Bochner integral of tensor powers 
of $\delta_t$ in the corresponding 
tensorized distribution space.

\begin{proposition}\label{thm:delta}
    Let $p>1/2$ and
    $\delta_t$ be the point-evaluation map defined by $(\delta_t,f)=f(t)$ 
    for $f \in \S_p$.
    Then, $\delta_t\in \S_{-p}$ and
    \[
    \sup_{t\in[0,1]}|\delta_t|_{-p}\leq C_p,
    \]
    where $C_p$ is the constant appearing in 
    \cref{prop:Sp-Linfty}.
\end{proposition}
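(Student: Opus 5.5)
The plan is to read $\delta_t$ as a continuous linear functional on $\S_p$, bound its dual norm by \cref{prop:Sp-Linfty}, and then pass to $\S_{-p}$ through the isometric isomorphism $\Lambda_p:\S_{-p}\to\S_p'$ of \eqref{eq:iso}. By \cref{cor:pointwise-rep}, for $p>1/2$ every $f\in\S_p$ has a canonical everywhere-defined representative given by its Haar series. Hence $(\delta_t,f)=f(t)$ is a well-defined linear functional on $\S_p$, evaluated on that representative. The Cauchy--Schwarz computation in the proof of \cref{prop:Sp-Linfty} gives
\[
|f(t)|\le |f|_p\Bigl(\alpha_{-1}^{-2p}+\sum_{n\ge0}\sum_{k}\alpha_n^{-2p}e_{n,k}(t)^2\Bigr)^{1/2}\le C_p\,|f|_p .
\]
This holds uniformly in $t\in[0,1]$, so $\delta_t\in\S_p'$ with $\|\delta_t\|_{\S_p'}\le C_p$ for every $t$.

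To land in $\S_{-p}$ itself, I will identify the preimage under $\Lambda_p$ explicitly. Set
\[
d_t\coloneqq \mathbf 1(t)\mathbf 1+\sum_{n,k}e_{n,k}(t)\,e_{n,k},
\]
understood as a formal Haar series. Its $|\cdot|_{-p}$ norm squared is
\[
\alpha_{-1}^{-2p}+\sum_{n,k}\alpha_n^{-2p}e_{n,k}(t)^2 ,
\]
which is exactly the sum already bounded by $C_p^2$ in the proof of \cref{prop:Sp-Linfty}. Therefore the partial sums of the series for $d_t$ lie in $L^2$, are Cauchy in $|\cdot|_{-p}$, and converge to an element $d_t$ of the completion $\S_{-p}$, with $|d_t|_{-p}\le C_p$.

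The next step is to check that $\Lambda_p d_t=\delta_t$. Pairing through $\langle A^{-p}\cdot,A^{p}\varphi\rangle_{L^2}$, the partial sums of $d_t$ produce exactly the partial Haar sums of $\varphi$ evaluated at $t$. These converge to $\varphi(t)$ by the absolute convergence statement of \cref{prop:Sp-Linfty}, and continuity of $\Lambda_p$ lets us pass to the limit. Under the identification $\S_p'\simeq\S_{-p}$ this gives $\delta_t\in\S_{-p}$ with $\sup_t|\delta_t|_{-p}\le C_p$.

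No step is a real obstacle; the only point needing care is the bookkeeping. One must evaluate $f(t)$ on the canonical representative, and one must see that the functional bound and the $\S_{-p}$-norm computation coincide, which holds because $\Lambda_p$ is an isometry. A quick alternative route avoids the explicit series altogether: the isometry gives $|\delta_t|_{-p}=\|\delta_t\|_{\S_p'}\le C_p$ directly from the first paragraph.
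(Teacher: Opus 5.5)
Your proposal is correct and matches the paper's proof. The paper's argument is exactly your first paragraph combined with your ``quick alternative route'': the bound $|f(t)|\le C_p|f|_p$ from \cref{prop:Sp-Linfty} gives $|\delta_t|_{-p}=\sup_{f\neq 0}|(\delta_t,f)|/|f|_p\le C_p$ uniformly in $t$, using the identification $\S_p'\simeq\S_{-p}$. Your explicit representative $d_t=\mathbf 1(t)\mathbf 1+\sum_{n,k}e_{n,k}(t)\,e_{n,k}$ is a valid optional extra, and it additionally shows that $|\delta_t|_{-p}^2=\alpha_{-1}^{-2p}+\sum_{n,k}\alpha_n^{-2p}e_{n,k}(t)^2$ holds with equality.
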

\begin{proof}
    By \cref{prop:Sp-Linfty}, 
    for every $f\in\S_p$ 
    and every $t\in[0,1]$, we have
    $$|(\delta_t,f)|=|f(t)|
    \leq C_p\,|f|_p.$$
    Consequently,
    \[
        |\delta_t|_{-p} 
        = \sup_{f\in\S_p\setminus\{0\}} 
        \frac{|(\delta_t,f)|}{|f|_p} 
        \leq C_p,
    \]
for every $t\in[0,1]$. 
Taking the supremum over $t$ yields the result.
\end{proof}

\begin{proposition}
    Fix $n\ge 1$ and $p>1/2$.  
    Then, $$\frak u_n \coloneqq\int_0^1 \delta_t^{\hat\otimes n}\,dt\,\,\in\,\, \S_{-p}^{\hat\otimes n} ,$$
    where the integral is understood as a Bochner integral 
    of the $\S_{-p}^{\hat\otimes n}$-valued map
    $t\mapsto \delta_t^{\hat\otimes n}$.
\end{proposition}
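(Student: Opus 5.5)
To show that $\frak u_n$ is a well-defined Bochner integral in the separable Hilbert space $\S_{-p}^{\hat\otimes n}$, we verify two things. First, the map $t\mapsto \delta_t^{\hat\otimes n}$ is strongly measurable. Second, it is Bochner integrable, i.e.\ $\int_0^1 |\delta_t^{\hat\otimes n}|_{\S_{-p}^{\otimes n}}\,dt<\infty$.

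Integrability is the easy part. By multiplicativity of the Hilbert tensor norm, $|\delta_t^{\otimes n}|_{\S_{-p}^{\otimes n}}=|\delta_t|_{-p}^n$. Here $\delta_t^{\hat\otimes n}=\delta_t^{\otimes n}$, since it is already symmetric. By \cref{thm:delta} this is bounded by $C_p^n$ uniformly in $t$. Once measurability is established, the integral therefore exists and satisfies $|\frak u_n|_{\S_{-p}^{\otimes n}}\le C_p^n$. Because the integrand lies in the closed subspace of symmetric tensors, so does the integral, and $\frak u_n\in\S_{-p}^{\hat\otimes n}$.

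Measurability is the main point. Since $\S_{-p}$ is separable, Pettis' theorem reduces strong measurability to weak measurability. I would first handle $n=1$. Using the isometry $\Lambda_p$ of \eqref{eq:iso} and the orthonormal basis $\{\alpha_k^{p}e_k\}$ of $\S_{-p}$, it suffices that the coordinates $t\mapsto(\delta_t,\alpha_k^{-p}e_k)$ are measurable. Each of these equals $\alpha_k^{-p}e_k(t)$ up to normalization, and $e_k(t)$ is the canonical pointwise representative from \cref{cor:pointwise-rep}, a Haar function and hence a step function. This gives weak measurability, so $t\mapsto\delta_t$ is strongly measurable into $\S_{-p}$.

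To pass to general $n$, I would take simple functions $s_j\to\delta_\cdot$ pointwise in $\S_{-p}$ and bounded in norm. Then $s_j(t)^{\otimes n}\to\delta_t^{\otimes n}$ pointwise, by the telescoping estimate
\[
|a^{\otimes n}-b^{\otimes n}|\le n\max(|a|,|b|)^{n-1}|a-b|,
\]
and each $s_j(\cdot)^{\otimes n}$ is simple. Alternatively, one can compute the coordinates against tensor basis elements directly: they are the products $\prod_i \alpha_{k_i}^{-p}e_{k_i}(t)$, which are measurable, and then apply Pettis again on the separable space $\S_{-p}^{\otimes n}$.

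The only subtle obstacle is that point evaluation must make sense for basis functions. This is exactly why $p>1/2$ and the canonical representatives of \cref{prop:Sp-Linfty} are used. Everything else is routine.
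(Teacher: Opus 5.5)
Your proof is correct. Its integrability half (multiplicativity of the tensor norm plus the uniform bound $|\delta_t|_{-p}\le C_p$ of \cref{thm:delta}) is exactly the paper's proof; where you differ is strong measurability.

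The paper handles measurability in the remark preceding the proof. It asserts that $t\mapsto\delta_t^{\hat\otimes n}$ is the \emph{uniform} limit of the dyadic simple functions $\Delta_M$. Behind this is the estimate $|\delta_t-\delta_{r2^{-M}}|_{-p}^2\le\sum_{n\ge M}\alpha_n^{-2p}\cdot 4\cdot 2^n\le C\,2^{(1-2p)M}$ for $t\in[r2^{-M},(r+1)2^{-M})$. This holds because every Haar function of level $<M$ is constant on such a half-open interval. One then passes to tensor powers with the same telescoping inequality you wrote down. The estimate uses $p>1/2$ through the tail of the series in \cref{prop:Sp-Linfty}. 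It has to be organized along half-open dyadic intervals because $t\mapsto\delta_t$ is not continuous into $\S_{-p}$: it jumps at dyadic points, so no continuity argument is available.

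You instead apply Pettis's theorem on the separable Hilbert space $\S_{-p}$ (or $\S_{-p}^{\otimes n}$). You then only need the Haar coordinates $\alpha_k^{-p}e_k(t)$, resp.\ their products, to be measurable, which is immediate since they are step functions. This route is softer and needs no quantitative estimate; it uses $p>1/2$ only to place $\delta_t$ in $\S_{-p}$.

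The price is that you get only existence of the Bochner integral. The paper's uniform approximation also identifies $\mathfrak{u}_n$ as the norm limit in $\S_{-p}^{\otimes n}$ of the dyadic Riemann sums $2^{-M}\sum_{r=0}^{2^M-1}\delta_{r2^{-M}}^{\otimes n}$.

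A minor point: in your simple-function route, requiring the $s_j$ to be bounded in norm is unnecessary. A sequence converging at a fixed $t$ is bounded there, which is all the telescoping estimate needs.
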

\begin{remark}
  By an argument similar to the proof of \cref{prop:Sp-Linfty},
  the integrand $\delta_t^{\hat{\otimes} n}$ is approximated
  uniformly by the simple functions
  \begin{equation*}
    \Delta_M
      := \sum_{r=0}^{2^M-1}
         \mathbf{1}_{[r2^{-M},\,(r+1)2^{-M})}\,
         \bigl(\delta_{r2^{-M}}\bigr)^{\hat{\otimes} n}
       + \mathbf{1}_{\{1\}}\,\delta_1^{\hat{\otimes} n},
    \qquad M \geq 1 .
  \end{equation*}
  As a result, the map $t \mapsto \delta_t^{\hat{\otimes} n}$ is
  strongly measurable and the Bochner integral above is well defined;
  see \cite[Chapter 1]{hytonen2016analysis}.
\end{remark}

\begin{proof}    
    Using the norm inequality for Bochner integrals, 
    multiplicativity of the Hilbert tensor norm, and \cref{thm:delta} yields
    \begin{equation*}
        |\frak u_n|_{\S_{-p}^{\otimes n}}
        \le \int_0^1 |\delta_t^{\hat\otimes n}|_{\S_{-p}^{\otimes n}}\,dt
        = \int_0^1 |\delta_t|_{-p}^{\,n}\,dt
        \le \Big(\sup_{t\in[0,1]}|\delta_t|_{-p}\Big)^{n}
        <\infty.
    \end{equation*}
\end{proof}

Now that we have the kernel $\mathfrak{u}_n$ in 
the correct distribution space, 
we may define the corresponding 
Hida distribution via the generalized 
chaos expansion, 
justifying \eqref{eq:un-chaos}.

\begin{definition}\label{def:eigenvector}
    For $n\geq0$, we define the time-integrated $n$-th 
    Wick power of white noise as the Hida distribution
    $$\frak U_n = I_n(\frak u_n),$$
    where we set $\frak u_0 \coloneq 1\in \R$.
\end{definition}

\begin{remark}
    The definition above provides the 
    rigorous interpretation of the formal 
    expression appearing in \eqref{eq:un-intro} and \eqref{eq:un-sec6}.
    Indeed, since 
    ${:}\dot W(t)^n{:}=I_n(\delta_t^{\hat\otimes n})$
    and $I_n$ is a continuous linear map 
    on $\S_{-p}^{\hat\otimes n}$, 
    the Bochner integral commutes with $I_n$,
    giving
    \[
        \frak U_n 
        = I_n\!\left(\int_0^1 
        \delta_t^{\hat\otimes n}\,dt\right)
        = \int_0^1 
        I_n(\delta_t^{\hat\otimes n})\,dt
        = \int_0^1 
        {:}\dot W(t)^n{:}\,dt.
    \]
\end{remark}

\begin{remark}\label{rmk:Un}
Observe that \cref{def:eigenvector} indeed defines a Hida distribution 
by the generalized chaos expansion (\cref{thm:generalized chaos}).
Moreover, for 
a test functional $\Phi=\sum_{m=0}^\infty I_m(f_m)\in(\S)$,
the distribution $\frak U_n$ acts as 
\begin{equation}\label{eq:Un_action}
    \pp{\frak U_n}{\Phi} = n!(\frak u_n,f_n)
    =n!\int_0^1 (\delta_t^{\hat \otimes n},f_n)\,dt,
\end{equation}
where the last equality follows from
the commutation between Bochner integral and bounded linear maps, namely 
$\Bigl(\int_0^1 \delta_t^{\hat\otimes n}\,dt,f_n \Bigr)=\int_0^1 (\delta_t^{\hat\otimes n},f_n)\,dt$.
For $n=0$, we have $\frak u_0=1$ and hence 
$$\pp{\frak U_0}{\Phi} = (1,f_0)=\E_{\mu_\infty}\![\Phi].$$
Thus $\frak U_0$ acts as integration against the Wiener
measure; in other words, under the canonical embedding of measures into
$(\S)^*$, $\frak U_0$ corresponds to $\mu_{\infty}$
and we may write $\frak U_0=\mu_\infty$ in $(\S)^*$. 
\end{remark}

To prove part \emph{(iii)} of \cref{thm:intro-spec}, 
we need first to understand how the adjoints $F^*$ and $G^*$ 
interact with the delta function.

\begin{lemma}\label{thm:delta_adjoints}
    Let $p>1/2$ and $n\ge1$.
    For every $f\in \S_p^{\hat\otimes n}$ we have
    \begin{equation*}
        \begin{split}
            (\delta_t^{\hat\otimes n},(F^*)^{\otimes n}f)
            &=2^{-n/2}(\delta_{t/2}^{\hat\otimes n},f),
            \qquad\\[0.2cm]
            (\delta_t^{\hat\otimes n},(G^*)^{\otimes n}f)
            &=2^{-n/2}(\delta_{(t+1)/2}^{\hat\otimes n},f).
        \end{split}
    \end{equation*}
    for a.e. $t\in[0,1]$.
\end{lemma}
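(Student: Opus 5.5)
The plan is to verify the identity first on elementary tensors $f=f_1\otimes\cdots\otimes f_n$ with $f_i\in\S_p$ and then extend by density and continuity. For an elementary tensor, $(F^*)^{\otimes n}f=F^*f_1\otimes\cdots\otimes F^*f_n$ by definition. Since the pairing of $\delta_t^{\hat\otimes n}$ with an elementary tensor factorizes, the left-hand side equals $\prod_{i}(\delta_t,F^*f_i)=\prod_i (F^*f_i)(t)$. Here $F^*f_i\in\S_p$ by \cref{thm: adjoint norms-combined}, and $(F^*f_i)(t)$ is understood through the canonical representative of \cref{cor:pointwise-rep}. The key one-variable fact is therefore
\[
(F^*g)(t)=\tfrac{1}{\sqrt2}\,g(t/2),\qquad (G^*g)(t)=\tfrac{1}{\sqrt2}\,g\bigl((t+1)/2\bigr),
\]
for $g\in\S_p$ and a.e.\ $t$, with both sides evaluated on canonical representatives. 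Taking the product over $i$ then gives $2^{-n/2}\prod_i f_i(t/2)=2^{-n/2}(\delta_{t/2}^{\hat\otimes n},f)$, and the same computation works for $G^*$.

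To justify the one-variable identity, note that \eqref{eq:F*_G*} defines $F^*g$ as an $L^2$ class equal to $2^{-1/2}g(\cdot/2)$. The canonical representative of $F^*g$ agrees with any representative a.e. Composing the canonical representative of $g$ with $t\mapsto t/2$ gives a representative of $F^*g$, because the preimage of a null set under this map is null. So the two functions $t\mapsto (F^*g)^{\mathrm{can}}(t)$ and $t\mapsto 2^{-1/2}g^{\mathrm{can}}(t/2)$ agree for a.e.\ $t$. One could even show everywhere equality using \cref{thm:adjoints and wavelets} and the pointwise Haar series, but a.e.\ is all the statement asks for. For general $f$ the exceptional null set depends on the $f_i$, which is the main obstacle in the extension step.

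To handle this, I would fix a countable family of elementary tensors built from the Haar basis vectors $e_{k_1}\otimes\cdots\otimes e_{k_n}$. Their span is dense in $\S_p^{\otimes n}$, and its symmetrizations are dense in $\S_p^{\hat\otimes n}$. Taking the union of the countably many exceptional null sets gives a single null set $N$, and off $N$ the identity holds on this whole span by linearity. For $t\notin N$, both sides are continuous linear functionals of $f\in\S_p^{\otimes n}$. The left side is continuous because $(F^*)^{\otimes n}$ is a contraction on $\S_p^{\otimes n}$ (\cref{thm: adjoint norms}) and $\delta_t^{\hat\otimes n}\in\S_{-p}^{\hat\otimes n}$ with norm at most $C_p^n$ (\cref{thm:delta}). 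The right side is continuous because $\delta_{t/2}^{\hat\otimes n}$ is bounded in the same way. Density then extends the identity to all $f$ for every $t\notin N$.

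Using a fixed countable family means the null set does not depend on $f$, which yields the stated a.e.\ identity for all $f$ at once. In fact, with the Haar basis one can check that $(F^*e_k)^{\mathrm{can}}(t)=2^{-1/2}e_k^{\mathrm{can}}(t/2)$ holds except at dyadic points, so the exceptional set is explicitly countable.
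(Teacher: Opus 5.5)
Your proposal is correct and follows the same route as the paper: you read off the one-variable identity $(F^*g)(t)=2^{-1/2}g(t/2)$ (and its $G^*$ analogue) from \eqref{eq:F*_G*}, then use multiplicativity of the pairing on elementary tensors and extend by density. Your use of a countable family of Haar tensors to obtain one exceptional null set independent of $f$ makes the density step rigorous; the paper leaves this implicit, even though its a.e.\ set a priori depends on $f$.
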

\begin{proof}
    From the definition of $F^*$ in \eqref{eq:F*_G*}
    it follows that for every $f\in \S_p$ and a.e. $t\in [0,1]$, we have
    \begin{equation*}
            (\delta_t,F^*f) = (F^*f)(t)
            = \frac{1}{\sqrt 2}f(t/2) 
            =\frac{1}{\sqrt 2}\,(\delta_{t/2},f).
    \end{equation*}
    Similarly, by the definition of $G^*$ in \eqref{eq:F*_G*}
    we obtain 
    $$
    (\delta_t,G^*f) =\frac{1}{\sqrt 2}\,(\delta_{(t+1)/2},f).
    $$
    The tensorized identities follow by multiplicativity of
    the tensor pairing on elementary tensors and density.
\end{proof}

We now state and prove \cref{thm:intro-spec} \emph{(iii)}.

\begin{theorem}\label{thm:eigenvector}
    Let $n\geq0$. Then, 
    the Hida distributions $\frak U_n$ are generalized eigenvectors of 
    $\L$ with eigenvalues 
    $$
    \lambda_n =2^{1-n/2}.
    $$
\end{theorem}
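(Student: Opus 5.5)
The plan is to verify the weak eigenvalue relation $\pp{\frak U_n}{\L^*\Phi}=\lambda_n\pp{\frak U_n}{\Phi}$ directly at the level of chaos kernels, for arbitrary $\Phi=\sum_m I_m(f_m)\in(\S)$.

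The case $n=0$ is handled first. Here $\frak U_0=\mu_\infty$ pairs only with the zeroth kernel. By \cref{thm:sec quanti}, the zeroth kernel of $\L^*\Phi$ is $T_0^*f_0=2f_0$, so $\pp{\frak U_0}{\L^*\Phi}=2f_0=2\pp{\frak U_0}{\Phi}$, which gives $\lambda_0=2$.

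For $n\ge1$, we use \cref{thm:sec quanti} together with the continuity of $\L^*$ on $(\S)$. Since $\L^*\Phi\in(\S)$, its $n$-th chaos kernel is $T_n^*f_n=(F^*)^{\otimes n}f_n+(G^*)^{\otimes n}f_n$. By \eqref{eq:Un_action},
\[
\pp{\frak U_n}{\L^*\Phi}=n!\int_0^1\Big[(\delta_t^{\hat\otimes n},(F^*)^{\otimes n}f_n)+(\delta_t^{\hat\otimes n},(G^*)^{\otimes n}f_n)\Big]dt .
\]
Fix $p>1/2$, so that $f_n\in\S_p^{\hat\otimes n}$ and \cref{thm:delta_adjoints} applies for a.e.\ $t$. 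The bracket then equals $2^{-n/2}\big[(\delta_{t/2}^{\hat\otimes n},f_n)+(\delta_{(t+1)/2}^{\hat\otimes n},f_n)\big]$.

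Write $\phi(s):=(\delta_s^{\hat\otimes n},f_n)$, which is bounded and measurable in $s$ because $\sup_s|\delta_s|_{-p}<\infty$ by \cref{thm:delta}. The substitutions $s=t/2$ and $s=(t+1)/2$ give
\[
\int_0^1\phi(t/2)\,dt=2\int_0^{1/2}\phi(s)\,ds,\qquad \int_0^1\phi((t+1)/2)\,dt=2\int_{1/2}^1\phi(s)\,ds .
\]
Summing, $\pp{\frak U_n}{\L^*\Phi}=2^{-n/2}\cdot 2\cdot n!\int_0^1\phi(s)\,ds=2^{1-n/2}\pp{\frak U_n}{\Phi}$, which is the claim. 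We also record that $\frak U_n\neq0$: testing against $\mathcal E(\mathbf 1)$ gives $S\frak U_n(\mathbf 1)=1$.

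The main obstacle is justifying the measurability and integrability of $t\mapsto\phi(t)$ and its rescalings, since \cref{thm:delta_adjoints} holds only a.e.\ and the change of variables needs a measurable integrand. Measurability follows from the strong measurability of $t\mapsto\delta_t^{\hat\otimes n}$ noted in the remark after the definition of $\frak u_n$, composed with the continuous functional $(\cdot,f_n)$. Boundedness follows from \cref{thm:delta}. The a.e.\ exceptional sets are harmless inside the Lebesgue integral.

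If the a.e.\ statement in \cref{thm:delta_adjoints} proves delicate, for instance because of Haar discontinuities at dyadic points, I would instead argue through $S$-transforms. By \eqref{eq:Lstar-wick}, it suffices to check $S\frak U_n(F^*h)+S\frak U_n(G^*h)=\lambda_nS\frak U_n(h)$ with $S\frak U_n(h)=\int_0^1h^n$, which is the same change of variables. One then concludes via \cref{cor:exp-dense}, using continuity of $\L^*$ and of $\frak U_n$.
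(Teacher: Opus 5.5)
Your proposal is correct and follows essentially the same route as the paper: pair $\frak U_n$ with $\L^*\Phi$ via the second quantization formula, apply \cref{thm:delta_adjoints}, and use the substitutions $s=t/2$ and $s=(t+1)/2$ to recombine the two half-interval integrals into $2^{1-n/2}\pp{\frak U_n}{\Phi}$. The only differences are minor: for $n=0$ you read off $T_0^*f_0=2f_0$ at the chaos level where the paper invokes $\L\mu_\infty=2\mu_\infty$ from the fixed-point relation, and you explicitly check $\frak U_n\neq0$ via $S\frak U_n(\mathbf 1)=1$, which the paper leaves implicit.
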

\begin{proof}
    From the expression for $\L$ in \eqref{eq:definition_L} along with 
    the fixed point relation \eqref{eq:fixedpoint},
    we have $\L\mu_{\infty}=2\mu_\infty$.
    This proves the case $n=0$ since $\frak U_0=\mu_\infty$, see \cref{rmk:Un}.
    
    Now let $n\geq1$ and recall 
    that the action of $\frak{U}_n$ is given by \eqref{eq:Un_action}.
    We must show the equality $\pp{\frak U_n}{\L^*\Phi}=\lambda_n\pp{\frak U_n}{\Phi}$ for any $\Phi\in(\S)$.
    Using \cref{thm:sec quanti} and \cref{thm:delta_adjoints}, 
    the left hand side can be written as 
    \begin{equation*}
        \begin{split}
            \pp{\frak U_n}{\L^*\Phi} &= n!\int_0^1 (\delta^{\hat\otimes n}_t,  (F^*)^{\otimes n} f_n)\, dt + n! \int_0^1 (\delta^{\hat\otimes n}_t,  (G^*)^{\otimes n} f_n)\, dt\\
            &= 2^{-n/2}n!\int_0^1 (\delta^{\hat\otimes n}_{t/2}, f_n)\, dt 
            + 
            2^{-n/2}n!\int_0^1 (\delta^{\hat\otimes n}_{(t+1)/2}, f_n)\, dt\\
            &= 2^{1-n/2} n! 
            \left(\int_0^{1/2} (\delta^{\hat\otimes n}_{s}, f_n)\, ds + \int_{1/2}^{1} (\delta^{\hat\otimes n}_{s}, f_n)\, ds\right)\\
            &= 2^{1-n/2} n!\int_0^{1} (\delta^{\hat\otimes n}_{s}, f_n)\, ds\\
            &= \lambda_n \pp{\frak U_n}{\Phi},
        \end{split}
    \end{equation*}
    completing the proof.
\end{proof}

\section{Cumulants and Wick perturbations}\label{sec:cumulant-interpretation}

The eigenmodes constructed above have a clear 
probabilistic interpretation.
Here we show that a perturbation of 
the Wiener measure in the direction 
of $\mathfrak{U}_n$ modifies precisely the
$n$-th cumulant, leaving all others
unchanged at leading order.
This perspective motivates the 
cumulant analysis of random walks 
in \cref{sec:donsker} and 
gives the intuition underlying 
\cref{thm:intro-universality}.
We start by recalling the notion 
of cumulants for probability measures 
on Banach spaces and illustrate it in the
Gaussian case. 
For background on Gaussian measures 
on Banach spaces, we refer the reader 
to the classical monograph 
of Bogachev~\cite{bogachev1998gaussian}.

\begin{definition}[Cumulants]
    Let $B$ be a separable real Banach space and $\mu$ a 
    Radon probability measure on $B$.
    Define the \emph{cumulant generating functional}
    $$
      K(\ell) \coloneqq \log \E_\mu(e^{\ell(\cdot)}), \qquad \ell \in B^*.
    $$
    For $n \ge 1$, the $n$-th \emph{cumulant}
    of $\mu$ is the symmetric $n$-linear
    form $\kappa_n : (B^*)^{n} \longrightarrow \mathbb{R}$, given by
    $$
      \kappa_n(\ell_1,\dots,\ell_n)
      \coloneqq
      \left.
      \frac{\partial^n}{\partial t_1 \cdots \partial t_n}
      \right|_{t_1=\cdots=t_n=0}
      K\Big(\sum_{j=1}^n t_j \ell_j\Big),
    $$
    whenever the expression above is finite. 
\end{definition}

\begin{example}[Gaussian measure]
    Let $\mu$ be a Radon Gaussian probability measure on $B$ with mean
    $m:B^* \to \R$ and covariance $C:B^* \times B^* \to \mathbb{R}$.
    The cumulant generating functional is
    $$
      K(\ell)
      \;=\; \log \E_\mu(e^{\ell(\cdot)})
      \;=\; m(\ell) + \tfrac12\,C(\ell,\ell).
    $$
    Then, for $t_1,\dots,t_n \in \mathbb{R}$ and 
    $\ell_1,\dots,\ell_n \in B^*$, we have
    $$
    K\Big(\sum_{j=1}^n t_j \ell_j\Big)=\sum_{i=1}^{n}t_i m(\ell_i)
    +\frac{1}{2}\sum_{i,j=1}^{n}t_it_jC(\ell_i,\ell_j).
    $$
    A direct calculation yields
    $$
      \kappa_1(\ell_i)=
      \left.
      \frac{\partial}{\partial t_i}
      \right|_{t_1=\cdots=t_n=0}
      K\Big(\sum_{j=1}^n t_j \ell_j\Big)= m(\ell_i),
    $$
    and
    \[
      \kappa_2(\ell_i,\ell_j)
      = \left.\frac{\partial^2}{\partial t_i \partial t_j}\right|_{t_1=\cdots=t_n=0}
        K\Big(\sum_{j=1}^n t_j \ell_j\Big)
      = C(\ell_i,\ell_j).
    \]
    Since $K$ is a polynomial of degree at most $2$ in $t_1,\dots,t_n$,
    all higher derivatives at $0$ vanish, hence
    $$
      \kappa_n(\ell_1,\dots,\ell_n) = 0
      \qquad \text{for all } n \ge 3.
    $$
    Thus, for a Gaussian measure, the only nonzero cumulants are the first and
    second, given by the mean $m$ and the covariance $C$.
\end{example}
    
In the Wiener-space setting, it is natural to 
extend the defintion of the cumulant generating functional to 
measurable linear functionals that are not necessarily continuous.
Recall that the isonormal Gaussian process
$\mathcal W(h)$, $h\in L^2([0,1])$, provides a family of measurable
linear functionals on Wiener space.  These extend the Gaussian
linear observables arising from $\Omega^*$ and are the natural
coordinates for the Wiener chaos and Hida calculus developed above.
Accordingly, in the following heuristic discussion
we work with the cumulant generating functional
$$
    K(h)
    \coloneqq
    \log \mathbb E_{\mu}[e^{\mathcal W(h)}],
    \qquad h\in\S.
$$

Consider a probability measure whose first-order perturbation is, formally, of the form
$$\mu_\eps\approx\mu_\infty + \eps \frak U_n,$$ 
with $\eps>0$ small. Pairing with the functional $\Phi=e^{\mathcal W(h)}$, $h\in \S$, leads to 
\begin{equation*}
    \begin{split}
        \E_{\mu_\eps}(e^{\mathcal W(h)})&= \E_{\mu_\infty}(e^{\mathcal W(h)}) + \eps \pp{\frak U_n}{e^{\mathcal W(h)}}\\
        &=e^{\frac{1}{2}|h|^2_{L^2}} +
        \eps e^{\frac{1}{2}|h|^2_{L^2}}  \int_{0}^{1}h(t)^n\,dt 
    \end{split}
\end{equation*}
Taking the logarithm we obtain the cumulant generating function 
\begin{equation*}
    \begin{split}
        K^{(\eps)}(h) &= \frac{1}{2}|h|^2_{L^2} + \log\Bigl(1+\eps\int_{0}^{1}h(t)^n\,dt \Bigr)\\
        &=\frac{1}{2}|h|^2_{L^2} +\eps\int_{0}^{1}h(t)^n\,dt + O(\eps^2).
    \end{split}
\end{equation*}
Substituting $h=\sum_{i=1}^{d}t_i h_i$ in the expression above we find 
$$
K^{(\eps)}\Big(\sum_{i=1}^{d}t_i h_i\Big)=
\frac{1}{2}\sum_{i,j=1}^{d}t_it_j\ps{h_i,h_j}_{L^2} 
+
\eps\sum_{i_1,\dots i_n=1}^{d}  t_{i_1} \cdots t_{i_n} \int_{0}^{1} h_{i_1}(t) \cdots h_{i_n}(t)\,dt + O(\eps^2).
$$

For the sake of simplicity, assume $n\geq 3$. From the expression above, we read off the cumulants as
follows. Since $K^{(\varepsilon)}$ has no linear term in the variables
$t_1,\dots,t_d$, the first cumulant vanishes at first order:
\[
  \kappa^{(\varepsilon)}_1(h_i) = O(\varepsilon^2), \quad i=1,\dots,d.
\]
The quadratic part of $K^{(\varepsilon)}$ coincides with that of the Wiener 
measure $\mu_\infty$, hence
\[
  \kappa^{(\varepsilon)}_2(h_i,h_j)
  = \langle h_i,h_j\rangle_{L^2} + O(\varepsilon^2),
\]
so the covariance is unchanged to first order in $\varepsilon$.
The first-order correction appears only in the cumulant of order $n$:
\[
  \kappa_n^{(\varepsilon)}(h_1,\dots,h_n)
  = \varepsilon\,n!
    \int_0^1 h_1(t)\cdots h_n(t)\,dt
    + O(\varepsilon^2),
\]
while all other higher-order cumulants remain zero at first order:
\[
  \kappa_m^{(\varepsilon)}(h_1,\dots,h_m)
  = O(\varepsilon^2),
  \qquad m \ge 3,\ m \neq n.
\]
In other words, for $n\geq3$, perturbing by $\frak U_n$ 
produces, to leading order, a pure $n$-th-order cumulant 
in the otherwise Gaussian process: the mean and covariance 
are unchanged at order $\varepsilon$, and a nontrivial 
$n$-th cumulant is created. This yields a genuinely 
non-Gaussian perturbation. By contrast, the cases $n=1,2$ 
correspond to a first-order modification of the mean and 
covariance, respectively, without introducing any 
higher-order cumulants. In these cases, $\mu_\eps$ is a 
Gaussian perturbation of the underlying measure $\mu_\infty$.

\section{Donsker's invariance principle and universality of corrections}\label{sec:donsker}

We prove 
\cref{thm:intro-universality}\,(i) and (ii);
the precise statements can be found in \cref{thm:cumulant-expansion} and 
\cref{thm:leading-correction} below.
The classical Donsker
invariance principle asserts that appropriately rescaled random
walks converge weakly to Brownian motion. The results in this
section quantify this convergence by showing that the rate
of convergence and the structure of correction terms are governed
by the eigenmodes $\mathfrak{U}_n$ of $\L$, 
thereby revealing a second layer of universality 
in the invariance principle.

\subsection{Donsker approximations}
Let $(\Omega_0,\mathcal F_0,\mathbb P)$ be a 
probability space carrying an i.i.d.\ sequence 
$\{\zeta_j\}_{j\geq 1}$ of real-valued random 
variables with $\mathbb{E}[\zeta_1] = 0$ 
and $\mathbb{E}[\zeta_1^2] = 1$.
We write $\zeta$ for a random variable with 
the common distribution.
Fix an integer $N \geq 1$ and define the 
partial sums
\begin{equation}\label{eq:partial_sums}
S_k^{(N)} := 2^{-N/2} \sum_{j=1}^k \zeta_j, \quad
k = 0, 1, \ldots, 2^N,
\end{equation}
where by convention $S_0^{(N)} = 0$. The Donsker approximation of
Brownian motion at resolution $N$ is the continuous piecewise-linear
path $W_N$ defined on the unit interval by
\begin{equation}\label{eq:donsker_path}
W_N(t) := S_k^{(N)} + 2^{N/2}(t - k/2^N) \zeta_{k+1},
\end{equation}
for $t \in [k/2^N, (k+1)/2^N]$, $k = 0, \ldots, 2^N - 1$.
In other words, $W_N$ is obtained by linearly interpolating the
values $S_k^{(N)}$ at the dyadic grid points $k/2^N$. By
construction, $W_N$ is continuous, $W_N(0) = 0$, and
$W_N \in \Omega$ almost surely.
Thus $W_N:\Omega_0\to \Omega$ is a random path, and we denote 
its law on $\Omega$ by
\[
\widetilde{\mu}_N := (W_N)_\#\mathbb P,
\]
where $(W_N)_\#$ denotes the pushforward.
By Donsker's invariance principle,
$\widetilde{\mu}_N$ converges weakly to $\mu_\infty$ as
$N \to \infty$.

\subsection{Donsker approximations as Hida distributions}
The goal of this section is to embed the Donsker measures 
$\widetilde{\mu}_N$ into the Hida distribution space $(\S)^*$. 
We first realize $\widetilde{\mu}_N$ as
a linear functional 
$\mu_N^{\mathrm{core}}(\Phi)$ 
for $\Phi$ in a core $\mathcal{D}$ of $(\S)$. On this core, 
we obtain a tentative expression for the $S$-transform of 
$\mu_N^{\mathrm{core}}$. We then show that the expression 
satisfies the analyticity and growth conditions and use 
the characterization theorem (\cref{thm:characterization}) to define a Hida distribution 
$\mu_N^{\mathrm{ext}}$, now defined on the whole stochastic test
space $(\S)$. \emph{A posteriori}, we show that 
$\mu_N^{\mathrm{ext}}$ coincides with $\mu_N^{\mathrm{core}}$ 
on the core $\mathcal D$. By uniqueness of the continuous extension, 
we conclude that $\mu_N^{\mathrm{ext}}$ is the unique continuous 
extension of $\mu_N^{\mathrm{core}}$. 
This allows us to realize the probability laws $\widetilde{\mu}_N$
as Hida distributions, giving access to the toolkit of
white noise analysis, which will be instrumental in
what follows.

Let $(W_t)_{t\in[0,1]}$ denote the coordinate process on
$\Omega$.
For $h\in L^2([0,1])$, we write
\[
\mathcal W(h)=\int_0^1 h(t)\,dW_t,
\]
for the integral of $h$ against the coordinate process.
Under the Wiener measure $\mu_\infty$, this is the usual
Wiener integral with respect to Brownian motion.
Under the measure $\widetilde{\mu}_N$, 
the coordinate process is piecewise affine and 
absolutely continuous with a piecewise constant, square-integrable derivative. 
Specifically, $\widetilde{\mu}_N$ is supported on the space
of continuous functions that are affine on each dyadic interval
\[
I_{N,k}\coloneq\bigl[k2^{-N},(k+1)2^{-N}\bigr),\qquad k=0,\dots,2^N-1.
\]
Hence, for $\widetilde{\mu}_N$-almost every $\omega\in\Omega$, the random
variable $\mathcal W(h)$ admits the pathwise representation
\[
\mathcal W(h)(\omega)=\int_0^1 h(t)\,d\omega(t),
\]
where the right-hand side is understood as a Lebesgue--Stieltjes 
integral with respect to an absolutely continuous (hence of bounded variation) path. 
By introducing
\begin{equation}\label{eq:step}
    \phi_{N,k}:=2^{N/2}\mathbf{1}_{I_{N,k}},\qquad k=0,\dots,2^N-1,
\end{equation}
and using the piecewise representation \eqref{eq:donsker_path},
we have, $\P$-a.s.,
\[
\dot W_N(t)=\sum_{k=0}^{2^N-1}\zeta_{k+1}\phi_{N,k}(t)
\qquad\text{for a.e. }t.
\]
Note that $\phi_{0,0}=\mathbf 1$ in $L^2([0,1])$.
Consequently, we obtain under $\widetilde{\mu}_N$
\begin{equation}\label{eq:donsker_w}
\mathcal W(h)
\;\law\;
\sum_{k=0}^{2^N-1}\langle h,\phi_{N,k}\rangle_{L^2}\,\zeta_{k+1}.
\end{equation}

For $h \in \S$, define the Wick exponential
\[
  \mathcal{E}(h) := \exp\!\left(\mathcal{W}(h) 
    - \tfrac{1}{2}|h|_{L^2}^2\right).
\]
By \cref{prop:wick-chaos}, $\mathcal{E}(h) \in (\S)$, and as
in \cref{sec:S-transform} we set
\[
  \mathcal{D} = \operatorname{span}\{\mathcal{E}(h) : h \in \S\}
  \subset (\S),
\]
which is dense in $(\S)$ by \cref{cor:exp-dense}.
In order to compute expectations of the 
random variables $\mathcal{E}(h)$ under $\widetilde{\mu}_N$ we make use of 
\eqref{eq:donsker_w} to evaluate
\begin{equation}\label{eq:S-tentative}
        \mathbb{E}_{\widetilde{\mu}_N}[\mathcal{E}(h)]
        =
        e^{-\tfrac12|h|_{L^2}^2}
        \prod_{k=0}^{2^N-1}
        \mathbb{E}\!\left[e^{\zeta\langle h,\phi_{N,k}\rangle_{L^2} }\right],
\end{equation}
which is finite under suitable assumptions on $\zeta$, e.g., 
the sub-Gaussian condition \eqref{eq:subgaussian} below.
This suggests defining a linear functional
$\mu_N^{\mathrm{core}}$ on $\mathcal{D}$ by setting
\[
  \mu_N^{\mathrm{core}}(\mathcal{E}(h))
  \coloneq \mathbb{E}_{\widetilde{\mu}_N}[\mathcal{E}(h)],
  \qquad h \in \S,
\]
and extending to the span by linearity.  Since the family of exponential
vectors is linearly independent, this defines
$\mu_N^{\mathrm{core}}$ unambiguously on~$\mathcal{D}$.

Intuitively, one expects \eqref{eq:S-tentative} 
to represent the $S$-transform
of the Hida distribution $\widetilde{\mu}_N$, although at this stage 
we have not yet established that 
$\widetilde{\mu}_N$ belongs to $(\S)^*$. In principle 
one can arrive at this conclusion by two different routes.
The first is to show that $\mu_N^{\mathrm{core}}$ 
is continuous on $\mathcal D$ with respect to the $(\S)$-topology.
\cref{cor:exp-dense} would then imply that 
$\mu_N^{\mathrm{core}}$ can be extended continuously to all 
of $(\S)$.
A second approach is to show that the right-hand side 
of \eqref{eq:S-tentative} satisfies the analyticity and growth 
conditions required by the characterization theorem,
ensuring the
existence of a unique Hida distribution 
$\mu_N^{\mathrm{ext}}\in(\S)^*$
whose $S$-transform equals the right-hand side of 
\eqref{eq:S-tentative}. 
Since 
$\mu_N^{\mathrm{ext}}$ coincides with $\mu_N^{\mathrm{core}}$ 
on $\mathcal{D}$, we then conclude that 
$\mu_N^{\mathrm{ext}}$ is the unique continuous extension
of $\mu_N^{\mathrm{core}}$, yielding an
embedding of the probability laws $\widetilde{\mu}_N$ into $(\S)^*$.
In what follows, we settle for the second approach 
due to its technical simplicity.

For the next results, recall that 
$\S_{\mathbb{C}} = \S + i\S$ denotes the 
complexification of $\S$, and the $S$-transform of 
a Hida distribution $U\in(\S)^*$ is denoted by $SU$.
See \cref{sec:S-transform} for more details.
We also introduce the bilinear form 
$Q:\S_\mathbb{C}\times \S_\mathbb{C}\to \mathbb C$,
$$
Q(\xi,\eta)\coloneq\int_0^1\xi(t)\,\eta(t)\,dt,
$$
with $Q(\xi)\coloneq Q(\xi,\xi)$. 
Obviously, when $\xi$ and $\eta$ are real-valued, 
we recover the usual $L^2$-inner product and norm.
The reason for introducing $Q$ is that for the condition 
(PS1) below to hold, we must work with the bilinear
extension of the $L^2$-inner product rather than the usual 
sesquilinear one.

\begin{proposition}\label{prop:FN-characterization}
    Assume that $\zeta$ satisfies the sub-Gaussian bound
    \begin{equation}\label{eq:subgaussian}
      \mathbb{E}\left[e^{t\zeta}\right]
      \leq e^{\sigma^2 t^2/2},
      \qquad t \in \mathbb{R},
    \end{equation}
    for some constant $\sigma > 0$.
    Then, for each $N \geq 1$, the function
    $F_N : \S_{\mathbb{C}} \to \mathbb{C}$ defined by
    \begin{equation}\label{eq:FN-def}
      F_N(\xi)
      \coloneq e^{-Q(\xi)/2}\,
        \prod_{k=0}^{2^N-1}
          \mathbb{E}\left[
            e^{\zeta\, Q(\xi, \phi_{N,k})}
          \right],
      \qquad \xi \in \S_{\mathbb{C}},
    \end{equation}
    satisfies conditions \textup{(PS1)} and \textup{(PS2)} of
    \cref{thm:characterization}, namely:
    \begin{enumerate}
        \item[\textup{(PS1)}]
        for every $\xi, \eta \in \S_{\mathbb{C}}$, the function
        $z \mapsto F_N(\xi + z\,\eta)$ is entire on $\mathbb{C}$;
        \item[\textup{(PS2)}]
        for every
        $\xi \in \S_{\mathbb{C}}$,
        \[
          |F_N(\xi)|
          \leq \exp\!\left(
            \frac{1+\sigma^2}{2}\,|\xi|_{L^2}^2
          \right).
        \]
    \end{enumerate}
\end{proposition}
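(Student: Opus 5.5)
The plan is to analyze the single-variable function $M(w) \coloneq \mathbb{E}[e^{w\zeta}]$ for complex $w$, and then transfer its properties to $F_N$ through the finitely many linear functionals $\xi \mapsto Q(\xi,\phi_{N,k})$.

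\textbf{Step 1: $M$ is entire.} The sub-Gaussian bound \eqref{eq:subgaussian} gives $\mathbb{E}[e^{a|\zeta|}] \le \mathbb{E}[e^{a\zeta}] + \mathbb{E}[e^{-a\zeta}] \le 2e^{\sigma^2a^2/2}$ for every $a \ge 0$. Hence for $|w| \le R$ the integrand $e^{w\zeta}$ is dominated by the integrable function $e^{R|\zeta|}$. By dominated convergence together with Morera's theorem (or by expanding $e^{w\zeta}$ in a power series and using Fubini), $M$ is entire on $\mathbb{C}$.

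\textbf{Step 2: (PS1).} For fixed $\xi, \eta \in \S_{\mathbb{C}}$, the maps
\[
z \mapsto Q(\xi+z\eta,\phi_{N,k}) = Q(\xi,\phi_{N,k}) + zQ(\eta,\phi_{N,k})
\]
are affine in $z$, because $Q$ is bilinear rather than sesquilinear. Likewise $Q(\xi+z\eta)$ is a polynomial of degree two in $z$. Therefore $z \mapsto F_N(\xi+z\eta)$ is a finite product of entire functions composed with affine maps, times the exponential of a polynomial, and is thus entire.

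\textbf{Step 3: (PS2), the main estimate.} Write $\xi = a + ib$ with $a, b \in \S$ real. For $w = x + iy$ we have $|M(w)| \le \mathbb{E}[e^{x\zeta}] \le e^{\sigma^2x^2/2}$. Here $x_k = \langle a,\phi_{N,k}\rangle_{L^2}$, so
\[
\prod_k |M(w_k)| \le \exp\Bigl(\tfrac{\sigma^2}{2}\sum_k \langle a,\phi_{N,k}\rangle^2\Bigr) \le \exp\bigl(\tfrac{\sigma^2}{2}|a|_{L^2}^2\bigr),
\]
by Bessel's inequality, since $\{\phi_{N,k}\}_k$ is orthonormal in $L^2$. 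For the prefactor, $|e^{-Q(\xi)/2}| = e^{-\operatorname{Re}Q(\xi)/2} = e^{(|b|^2-|a|^2)/2} \le e^{|\xi|_{L^2}^2/2}$, using $|\xi|_{L^2}^2 = |a|^2 + |b|^2$. Combining the two bounds gives
\[
|F_N(\xi)| \le \exp\bigl(\tfrac{\sigma^2}{2}|a|^2 + \tfrac12|b|^2\bigr) \le \exp\Bigl(\tfrac{1+\sigma^2}{2}|\xi|_{L^2}^2\Bigr).
\]
This bound holds with $p = 0$ and $K_1 = 1$, and is uniform in $N$.

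The only delicate point is Step 3. There, one must resist bounding $|M(w)|$ by $M(|w|)$, which would produce a dependence on $|b|$ through the product and lose the uniform constant. Dropping the imaginary part inside the expectation, via $|e^{w\zeta}| = e^{x\zeta}$, is what makes the Bessel-inequality argument clean. The rest is routine.
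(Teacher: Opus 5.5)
Your proof is correct and follows essentially the same route as the paper. Both arguments get entirety of $w\mapsto\mathbb{E}[e^{w\zeta}]$ from the sub-Gaussian bound and use affine dependence of $Q(\xi+z\eta,\phi_{N,k})$ on $z$ for (PS1). Both use $|\mathbb{E}[e^{w\zeta}]|\le\mathbb{E}[e^{(\mathrm{Re}\,w)\zeta}]$ together with Bessel's inequality for (PS2); the only cosmetic difference is that you split $\xi=a+ib$, whereas the paper bounds $(\mathrm{Re}\,Q)^2\le|Q|^2$ and applies Bessel to complex $\xi$ directly.

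Your closing caution is slightly misplaced. Keeping the imaginary part through $e^{\sigma^2|w|^2/2}$ is harmless, and that is exactly the paper's bound. What would actually destroy the constant $K_1=1$ is the crude estimate $\mathbb{E}[e^{|w||\zeta|}]\le 2e^{\sigma^2|w|^2/2}$, which costs a factor $2^{2^N}$.
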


\begin{proof}
    \textit{Verification of \textup{(PS1)}.}
      First note that
      $M(t):=\mathbb{E}\left[e^{t\zeta}\right]$ 
      extends to an entire function on $\mathbb{C}$.
      Indeed, 
      under the sub-Gaussian assumption~\eqref{eq:subgaussian}, we have
    \[
      \mathbb{E}[e^{|t|\,|\zeta|}]
      \leq \mathbb{E}[e^{|t|\zeta} + e^{-|t|\zeta}]
      \leq 2\,e^{\sigma^2 t^2/2},
      \qquad t \in \mathbb{R},
    \]
    where we used the basic 
    inequality $e^{|x|}\leq e^{x} + e^{-x}$ with $x=|t|\zeta$.
    For any $w\in \mathbb{C}$, applying this bound 
    with $t=|w|$ and using monotone convergence,
    \[
      \sum_{n=0}^{\infty} \frac{|w|^n}{n!}\,\mathbb{E}[|\zeta|^n]
      = \mathbb{E}\!\left[\sum_{n=0}^{\infty}
          \frac{(|w|\,|\zeta|)^n}{n!}\right]
      = \mathbb{E}[e^{|w|\,|\zeta|}]
      \leq 2\,e^{\sigma^2|w|^2/2}
      < \infty.
    \]
    Hence, by Fubini,
    \[
      \mathbb{E}[e^{w\zeta}]
      = \mathbb{E}\!\left[
          \sum_{n=0}^{\infty} \frac{(w\zeta)^n}{n!}
        \right]
      = \sum_{n=0}^{\infty} \frac{\mathbb{E}[\zeta^n]}{n!}\,w^n,
    \]
    with absolute convergence for every $w \in \mathbb{C}$.
    Therefore $M(w)=\mathbb{E}[e^{w\zeta}]$ is given by a
    power series with infinite radius of convergence, hence
    entire on~$\mathbb{C}$.
    
    Now fix $\xi, \eta \in \S_{\mathbb{C}}$.
    For each $N$ and $k$, the map
    $z \mapsto Q(\xi + z\,\eta,\, \phi_{N,k})
    = Q(\xi,\phi_{N,k}) + z\,Q(\eta,\phi_{N,k})$
    is a polynomial of degree one in $z$, hence entire.
    Since $M(w)$ is entire and
    the composition of entire functions is entire,
    $z \mapsto M(Q(\xi + z\,\eta,\,\phi_{N,k}))$
    is entire for each $N$ and $k$.
    Similarly, $z \mapsto Q(\xi + z\,\eta)
    = Q(\xi) + 2z\,Q(\xi,\eta) + z^2\,Q(\eta)$
    is a polynomial of degree two in $z$, so the Gaussian
    prefactor $z \mapsto e^{-Q(\xi+z\eta)/2}$ is entire.
    Since finite products of entire functions are entire,
    it follows that $z \mapsto F_N(\xi + z\,\eta)$ is entire 
    for each $N$.
  
  \medskip
  \textit{Verification of \textup{(PS2)}.}
  Let $\xi \in \S_{\mathbb{C}}$.
  For the Gaussian prefactor, we have
  \begin{equation}\label{eq:gauss-prefactor}
    \left|e^{-Q(\xi)/2}\right|
    = e^{-\mathrm{Re}(Q(\xi))/2}
    \leq 
    e^{|Q(\xi)|/2}
    \leq 
    e^{|\xi|_{L^2}^2/2}.
  \end{equation}
  For the product, we use the sub-Gaussian
  bound~\eqref{eq:subgaussian} to obtain, for every $N$ and $k$,
    \[
    \left|\mathbb{E}\!\left[
      e^{\zeta\,Q(\xi,\phi_{N,k})}
    \right]\right|
    \leq \mathbb{E}\!\left[
      e^{\mathrm{Re}(Q(\xi,\phi_{N,k}))\,\zeta}
    \right]
    \leq e^{\sigma^2\,(\mathrm{Re}\,Q(\xi,\phi_{N,k}))^2/2}
    \leq e^{\sigma^2\,|Q(\xi,\phi_{N,k})|^2/2}.
    \]
    Taking the product yields
    \begin{equation}\label{eq:product-bound}
        \prod_{k=0}^{2^N-1}
          \left|\mathbb{E}\!\left[
            e^{\zeta\,Q(\xi,\phi_{N,k})}
          \right]\right|
        \leq \prod_{k=0}^{2^N-1}
          e^{\sigma^2\,|Q(\xi,\phi_{N,k})|^2/2}
        = e^{
          \frac{\sigma^2}{2}
          \sum_{k=0}^{2^N-1}|Q(\xi,\phi_{N,k})|^2}.
      \end{equation}
  Now observe that, since $\phi_{N,k}$ is real, 
  $Q(\xi,\phi_{N,k}) = \langle\xi,\phi_{N,k}\rangle_{L^2}$,
  and thus
  \begin{equation}\label{eq:ps2}
    \sum_{k=0}^{2^N-1}|Q(\xi,\phi_{N,k})|^2=
    \sum_{k=0}^{2^N-1}|\langle\xi,\phi_{N,k}\rangle_{L^2}|^2
    \leq |\xi|_{L^2}^2,
    \end{equation}
  where in the final inequality we used Bessel's inequality
  applied to the orthonormal family
  $\{\phi_{N,k}\}_{k=0}^{2^N-1}$ in $L^2([0,1])$.
  Combining~\eqref{eq:gauss-prefactor},
  \eqref{eq:product-bound}, and \eqref{eq:ps2} yields
  \[
    |F_N(\xi)|
    \leq e^{|\xi|_{L^2}^2/2}\,
       e^{\sigma^2\,|\xi|_{L^2}^2/2}
    = e^{
        \bigl(\frac{1+\sigma^2}{2}\bigr)\,|\xi|_{L^2}^2
    }.
  \]
  This is condition (PS2) of \cref{thm:characterization}
  with $K_1 = 1$, $K_2 = (1+\sigma^2)/2$, and $p = 0$.
\end{proof}
  
\begin{corollary}[Donsker measures as Hida distributions]\label{prop:donsker-hida}
  Under the conditions of \cref{prop:FN-characterization},
  for each $N \geq 1$ there exists a unique Hida distribution
  $\mu_N^{\mathrm{ext}} \in (\S)^*$ whose $S$-transform is
  $F_N$.  Moreover, $\mu_N^{\mathrm{ext}}$ is the unique
  continuous extension of $\mu_N^{\mathrm{core}}$ from
  $\mathcal{D}$ to $(\S)$.
\end{corollary}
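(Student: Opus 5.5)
The plan is to combine \cref{prop:FN-characterization} with the characterization theorem (\cref{thm:characterization}) to produce the distribution, and then use the density of $\mathcal D$ (\cref{cor:exp-dense}) for the extension statement. Existence and uniqueness are the easy part. By \cref{prop:FN-characterization}, $F_N$ satisfies (PS1) and (PS2) with $p=0$. \Cref{thm:characterization} therefore yields a unique $\mu_N^{\mathrm{ext}}\in(\S)^*$ with $S\mu_N^{\mathrm{ext}}=F_N$ on $\S_{\mathbb C}$. Uniqueness is exactly \cref{thm:S-injective}.

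The next step is to check agreement on the core. Fix a real $h\in\S$. Because $h$ is real, $Q(h)=|h|_{L^2}^2$ and $Q(h,\phi_{N,k})=\langle h,\phi_{N,k}\rangle_{L^2}$, so $F_N(h)$ coincides with the right-hand side of \eqref{eq:S-tentative}. That expression equals $\E_{\widetilde\mu_N}[\mathcal E(h)]$, using the law identity \eqref{eq:donsker_w} and the independence of the $\zeta_{k+1}$.

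The one point that needs care is this identity itself. Under $\widetilde\mu_N$, the quantity $\mathcal W(h)$ has to be read pathwise as the Lebesgue--Stieltjes integral, and this must agree with what the paper means by $\mathcal E(h)$ evaluated under $\widetilde\mu_N$. I would simply adopt the pathwise interpretation already fixed in the text preceding \eqref{eq:donsker_w}. Finiteness of each factor follows from the sub-Gaussian bound \eqref{eq:subgaussian}. We therefore get $\pp{\mu_N^{\mathrm{ext}}}{\mathcal E(h)}=S\mu_N^{\mathrm{ext}}(h)=F_N(h)=\mu_N^{\mathrm{core}}(\mathcal E(h))$. By linearity of both functionals, $\mu_N^{\mathrm{ext}}=\mu_N^{\mathrm{core}}$ on $\mathcal D$. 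Linearity of $\mu_N^{\mathrm{core}}$ is well defined because exponential vectors are linearly independent, as already noted.

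It remains to establish the continuous-extension claim. Since $\mu_N^{\mathrm{ext}}$ is continuous on $(\S)$ and agrees with $\mu_N^{\mathrm{core}}$ on $\mathcal D$, it is a continuous extension, and in particular $\mu_N^{\mathrm{core}}$ is continuous on $\mathcal D$ for the relative topology. For uniqueness, suppose $V\in(\S)^*$ also extends $\mu_N^{\mathrm{core}}$. Then $V-\mu_N^{\mathrm{ext}}$ vanishes on $\mathcal D$, which is dense in $(\S)$ by \cref{cor:exp-dense}. A continuous functional vanishing on a dense set is zero, so $V=\mu_N^{\mathrm{ext}}$. Equivalently, one can note that $SV=S\mu_N^{\mathrm{ext}}$ on real $h$; this forces equality of all kernels via the injectivity argument in \cref{thm:S-injective}, which only uses real $h$.
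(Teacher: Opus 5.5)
Your proposal is correct and follows the paper's proof essentially step by step. The paper also obtains $\mu_N^{\mathrm{ext}}$ from \cref{thm:characterization}, gets agreement on each $\mathcal{E}(h)$ and then, by linearity, on all of $\mathcal D$, and deduces uniqueness of the continuous extension from the density of $\mathcal D$ (\cref{cor:exp-dense}). Your extra checks — that $F_N(h)$ reduces to \eqref{eq:S-tentative} for real $h$, and the explicit argument that a continuous functional vanishing on a dense set is zero — only spell out what the paper's proof leaves implicit.
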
  
\begin{proof}
  By \cref{prop:FN-characterization}, the function $F_N$
  satisfies conditions (PS1) and (PS2) of
  \cref{thm:characterization}.
  The characterization theorem then yields a unique Hida
  distribution $\mu_N^{\mathrm{ext}} \in (\S)^*$ with
  $S\mu_N^{\mathrm{ext}} = F_N$.
  By definition of the $S$-transform (\cref{def:S-transform}),
  for every $h \in \S$,
  \[
    \pp{\mu_N^{\mathrm{ext}}}{\mathcal{E}(h)}
    = S\mu_N^{\mathrm{ext}}(h)
    = F_N(h)
    = \mu_N^{\mathrm{core}}(\mathcal{E}(h)).
  \]
  By linearity, $\mu_N^{\mathrm{ext}}$ and
  $\mu_N^{\mathrm{core}}$ agree on all of $\mathcal{D}$.
  Since $\mu_N^{\mathrm{ext}}$ is continuous on
  $(\S)$, and $\mathcal{D}$ is dense in $(\S)$
  (\cref{cor:exp-dense}), $\mu_N^{\mathrm{ext}}$ is the unique
  continuous extension of $\mu_N^{\mathrm{core}}$.
\end{proof}

\begin{remark}\label{rem:subgaussian}
    The sub-Gaussian condition~\eqref{eq:subgaussian} is a
    technical assumption used to verify the growth
    bound~(PS2), and is not required for the Donsker invariance
    principle itself, which holds under the weaker conditions
    $\mathbb{E}[\zeta] = 0$ and $\mathbb{E}[\zeta^2] = 1$.
    In practice, \eqref{eq:subgaussian} is not restrictive,
    essentially covering all the standard 
    cases of interest for the Donsker approximation,
    e.g., the Rademacher distribution ($\zeta = \pm 1$ with equal
    probability), 
    the symmetric uniform distribution
    (normalized to unit variance),
    and more generally any
    distribution supported on a bounded interval.
  \end{remark}

\begin{remark}
  From this point onward, we write $\widetilde{\mu}_N$ for the Hida
  distribution $\mu_N^{\mathrm{ext}} \in (\S)^*$ and
  $\pp{\widetilde{\mu}_N}{\Phi}$ for the dual pairing with any test
  functional $\Phi \in (\S)$.  On the generating set of Wick
  exponentials, this pairing recovers the literal integral:
  $\pp{\widetilde{\mu}_N}{\mathcal{E}(h)}
  = \mathbb{E}_{\widetilde{\mu}_N}[\mathcal{E}(h)]$,
  computed via~\eqref{eq:S-tentative}.
\end{remark}
  
\subsection{Proof of \cref{thm:intro-universality} \textnormal{\emph{(i)}}}
\label{sec:cumulant-expansion}

Let $\lambda_n = 2^{1-n/2}$, $n \geq 0$, be the eigenvalues 
associated with the eigenvectors $\frak{U}_n$ of
the linearized RG operator $\L$
(see \cref{def:eigenvector} and \cref{thm:eigenvector}).
Using \cref{prop:S-series}, we obtain the following expression 
for the $S$-transform of $\mathfrak{U}_n$, 
\[
  S\mathfrak{U}_n(\xi)
  = (\mathfrak{u}_n,\, \xi^{\otimes n})
  = \int_0^1 (\delta_t^{\hat\otimes n},\, \xi^{\otimes n})\,dt
  = \int_0^1 \xi(t)^n\,dt,
  \qquad \xi \in \S_\mathbb{C}.
\]
Let
\begin{equation}\label{eq:cumulants-def}
  \kappa_n(\zeta)
  \coloneq \frac{d^n}{dt^n}
    \log \mathbb{E}[e^{t\zeta}]\bigg|_{t=0},
  \qquad n \geq 1,
\end{equation}
denote the cumulants of the increment distribution $\zeta$.
Observe that $\kappa_n(\zeta)$ 
is well-defined and finite under the sub-Gaussian
assumption~\eqref{eq:subgaussian},
with $\kappa_1(\zeta) = 0$ and $\kappa_2(\zeta) = 1$.
The aim of this subsection is to prove the following result.

\begin{theorem}[Cumulant expansion for $\widetilde{\mu}_N$]\label{thm:cumulant-expansion}
  Assume that $\zeta$ satisfies the sub-Gaussian 
  condition~\eqref{eq:subgaussian} and  
  let $\xi\in \mathcal S_{\mathbb C}$.
    Then, for every integer $M\ge 3$,
    \begin{equation}\label{eq:cumulant-expansion}
      \log S\widetilde \mu_N(\xi)
      = \sum_{m=3}^{M}
          \frac{\kappa_m(\zeta)}{m!}\,\lambda_m^N\,
          S\mathfrak{U}_m(\xi)
        + o(\lambda_M^N),
      \qquad N \to \infty,
    \end{equation}
    where $\log$ denotes the principal branch of the logarithm.
\end{theorem}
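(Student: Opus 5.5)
We start from the explicit formula for the $S$-transform of $\widetilde\mu_N$ given by \cref{prop:donsker-hida}, namely $S\widetilde\mu_N(\xi)=F_N(\xi)$ with $F_N$ as in \eqref{eq:FN-def}, and take logarithms factor by factor. Write $a_{N,k}\coloneq Q(\xi,\phi_{N,k})=2^{N/2}\int_{I_{N,k}}\xi(t)\,dt$. Since $\xi\in\S_{\mathbb C}$ is bounded (\cref{prop:Sp-Linfty}), we have $|a_{N,k}|\le 2^{-N/2}|\xi|_{L^\infty}\to0$ uniformly in $k$. The cumulant generating function $K(w)=\log\mathbb E[e^{w\zeta}]$ is analytic near $w=0$, because $M(w)=\mathbb E[e^{w\zeta}]$ is entire with $M(0)=1$. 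Its Taylor series is $K(w)=\frac{w^2}{2}+\sum_{m\ge3}\frac{\kappa_m(\zeta)}{m!}w^m$, using $\kappa_1=0$ and $\kappa_2=1$. So for $N$ large all $a_{N,k}$ lie in a fixed disk where the principal logarithm of $M$ is analytic. This gives
\[
\log F_N(\xi)=-\tfrac12 Q(\xi)+\sum_{k=0}^{2^N-1}K(a_{N,k}) \pmod{2\pi i},
\]
and we must fix the branch issue. For this, note that the right-hand side tends to $0$ as $N\to\infty$ (shown below), hence lies in a small neighbourhood of $0$. It therefore coincides with the principal logarithm of its exponential $F_N(\xi)$.

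Next we expand. Truncating the Taylor series at order $M$ gives $K(w)=\frac{w^2}{2}+\sum_{m=3}^M\frac{\kappa_m}{m!}w^m+R_M(w)$ with $|R_M(w)|\le C|w|^{M+1}$ on the fixed disk. The quadratic part is handled exactly: $\sum_k a_{N,k}^2/2=\frac12 Q(P_N\xi)$, where $P_N$ is the $L^2$-orthogonal projection onto the dyadic step functions at level $N$ (bilinearly extended). The remainder is bounded by $\sum_k|R_M(a_{N,k})|\le C\,2^N\,2^{-N(M+1)/2}|\xi|_\infty^{M+1}=O(2^{-N(M-1)/2})=o(\lambda_M^N)$, since $\lambda_M^N=2^{N(1-M/2)}$.

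For the $m$-th order terms, $\sum_k a_{N,k}^m=2^{N(1-m/2)}\sum_k 2^{-N}\bigl(2^N\!\int_{I_{N,k}}\xi\bigr)^m=\lambda_m^N\int_0^1(P_N\xi)(t)^m\,dt$. It remains to compare $\int(P_N\xi)^m$ with $S\mathfrak U_m(\xi)=\int\xi^m$, and to show that $-\frac12Q(\xi)+\frac12Q(P_N\xi)=-\frac12Q(\xi-P_N\xi)$ is negligible. This is where the Haar structure enters. $P_N\xi$ is the Haar partial sum of $\xi$ up to level $N-1$, so $\xi-P_N\xi=\sum_{n\ge N}\sum_k\langle\xi,e_{n,k}\rangle e_{n,k}$. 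Because $\xi\in\S_p$ for every $p$, the Cauchy--Schwarz argument of \cref{prop:Sp-Linfty} gives $|\xi-P_N\xi|_{L^\infty}\le C_{p}\,2^{-N(p-1/2)}|\xi|_p$. Choosing $p$ large makes this $O(2^{-Nq})$ for any $q$. Consequently $Q(\xi-P_N\xi)=o(\lambda_M^N)$. Likewise $\bigl|\int(P_N\xi)^m-\int\xi^m\bigr|\le m\,|\xi|_\infty^{m-1}|\xi-P_N\xi|_\infty$, so $\lambda_m^N\int(P_N\xi)^m=\lambda_m^N S\mathfrak U_m(\xi)+o(\lambda_M^N)$. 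Collecting the three estimates yields \eqref{eq:cumulant-expansion}.

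The main obstacle is this last step: the Gaussian term $-\frac12 Q(\xi)$ does not cancel exactly against $\frac12\sum_k a_{N,k}^2$. The discrepancy must be shown to decay faster than every $\lambda_M^N$, and this relies essentially on the super-polynomial decay of the Haar coefficients of elements of $\S$, which our choice of $A$ guarantees. The branch-of-logarithm bookkeeping is the only other point needing care. It is settled by the observation that the total expansion tends to $0$, because every term is $O(2^{-N/2})$.
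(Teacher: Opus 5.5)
Your proposal is correct and takes essentially the paper's route: a Taylor expansion of $\log\mathbb E[e^{w\zeta}]$ at order $M$, the identity $\sum_k Q(\xi,\phi_{N,k})^m=\lambda_m^N\int_0^1(P_N\xi)^m\,dt$, the cancellation of the quadratic term up to $Q((I-P_N)\xi)$, the $\lambda_{M+1}^N$ remainder bound, and the branch fixed because the expansion tends to $0$. The only minor difference is that you bound the Haar tail $\xi-P_N\xi$ in $L^\infty$ via the Cauchy--Schwarz argument of \cref{prop:Sp-Linfty}, whereas the paper bounds it in $L^2$ by $\alpha_N^{-p}|\xi|_p$ and explicitly invokes $|P_N\xi|_{L^\infty}\le|\xi|_{L^\infty}$, a fact your Riemann-sum step also uses tacitly; both versions work because $\xi\in\S_p$ for every $p$.
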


The expansion~\eqref{eq:cumulant-expansion} reveals a direct
connection between the Donsker approximation and the spectral
structure of the linearized RG operator.  The terms on the
right-hand side are organized by the eigenvalues $\lambda_m^N$
of $\L$, with each contribution weighted by the cumulant
$\kappa_m(\zeta)$ and paired with the $S$-transform of the
corresponding eigenvector $\mathfrak{U}_m$.
In particular, the non-Gaussianity of the increment
distribution $\zeta$, encoded in its cumulants, is
decomposed along the eigendirections of $\L$:
the cumulant $\kappa_m(\zeta)$ controls the amplitude
of the projection onto the $m$-th eigenmode, while the
eigenvalue $\lambda_m^N$ dictates its decay rate as
$N \to \infty$.
Since $|\lambda_m| < 1$ for $m \geq 3$ and the eigenvalues
are strictly decreasing in $m$, the dominant contribution
is generically determined by the first nonvanishing
cumulant $\kappa_m(\zeta)$ with $m \geq 3$.

Before proceeding with the proof of
\cref{thm:cumulant-expansion}, we collect two auxiliary
lemmas.  Recall the dyadic step functions
$\phi_{N,k} = 2^{N/2}\,\mathbf{1}_{I_{N,k}}$,
$k = 0, \ldots, 2^N-1$, introduced in \eqref{eq:step}.
We denote by $P_N$ the orthogonal projection
of $L^2([0,1])$ onto their span, that is,
\begin{equation}\label{eq:PN-def}
  P_N \xi
  = \sum_{k=0}^{2^N-1}
    \langle \xi, \phi_{N,k}\rangle_{L^2}\,\phi_{N,k}.
\end{equation}
The first lemma shows that $P_N$ is a contraction on
$L^\infty$, which is used to control the nonlinear terms
in the cumulant expansion.
The second expresses $P_N$ in terms of the Haar basis,
which provides quantitative estimates on the approximation
error $|P_N \xi - \xi|_{L^2}$ via the Hida norms $|\cdot|_p$.

\begin{lemma}\label{lem:projection_linfty}
    Let $\xi \in \S_{\mathbb{C}}$. Then,
    \[
    |P_N \xi|_{L^\infty} \leq |\xi|_{L^\infty}.
    \]
\end{lemma}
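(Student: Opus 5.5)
The plan is to compute $P_N\xi$ pointwise. Since $\phi_{N,k}=2^{N/2}\mathbf{1}_{I_{N,k}}$, for every $t\in I_{N,k}$ we have
\[
(P_N\xi)(t)=\langle \xi,\phi_{N,k}\rangle_{L^2}\,2^{N/2}
=2^{N}\int_{I_{N,k}}\xi(s)\,ds .
\]
So $P_N\xi$ is the piecewise-constant function whose value on each dyadic cell $I_{N,k}$ is the average of $\xi$ over that cell. For complex $\xi\in\S_{\mathbb C}$, the projection \eqref{eq:PN-def} is interpreted with the same real step functions and the bilinear pairing $Q(\xi,\phi_{N,k})$, which coincides with $\langle\xi,\phi_{N,k}\rangle_{L^2}$ because $\phi_{N,k}$ is real. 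The formula therefore holds for real and imaginary parts simultaneously, and $P_N\xi=P_N(\mathrm{Re}\,\xi)+iP_N(\mathrm{Im}\,\xi)$.

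The bound then follows from the triangle inequality for integrals of complex-valued functions:
\[
|(P_N\xi)(t)|\le 2^{N}\int_{I_{N,k}}|\xi(s)|\,ds\le |\xi|_{L^\infty}\,2^{N}|I_{N,k}|=|\xi|_{L^\infty}.
\]
This holds for every $t\in I_{N,k}$ and every $k$, and the cells $I_{N,k}$ cover $[0,1)$. Taking the supremum over $t$ gives the claim. The single point $t=1$ is a null set, and in any case lies outside the cells where $P_N\xi$ is defined as a finite sum.

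Finiteness of $|\xi|_{L^\infty}$ is guaranteed by \cref{prop:Sp-Linfty}, applied to the real and imaginary parts of $\xi$ separately, since $\S\subset\S_p$ for every $p>1/2$. This is needed only so that the inequality is nonvacuous; the argument itself works for any $\xi\in L^\infty$.

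I do not expect a genuine obstacle. The only point needing care is to use the averaging identity rather than a Cauchy--Schwarz bound over the orthonormal family. Cauchy--Schwarz would only give $|P_N\xi|_{L^\infty}\le 2^{N/2}|\xi|_{L^2}$, which grows with $N$ and would not yield a contraction.
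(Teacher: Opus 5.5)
Your proof is correct and is essentially the paper's argument: on each cell $I_{N,k}$ you identify $P_N\xi$ with the cell average $2^N\int_{I_{N,k}}\xi(s)\,ds$ and bound it by $|\xi|_{L^\infty}$ using the triangle inequality for integrals. Your extra remarks, on the complex case via the real step functions $\phi_{N,k}$ and on finiteness of $|\xi|_{L^\infty}$ via \cref{prop:Sp-Linfty}, fill in details the paper leaves implicit.
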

\begin{proof}
    For a.e.\ $t \in [0,1]$, there is a unique 
    $k$ such that $t \in I_{N,k}$, and
    \[
    |P_N \xi(t)| 
    = \left|2^N \int_{I_{N,k}} \xi(s)\,ds\right| 
    \leq 2^N \int_{I_{N,k}} |\xi(s)|\,ds 
    \leq 2^N\,|\xi|_{L^\infty}|I_{N,k}| 
    = |\xi|_{L^\infty}.
    \]
    Taking the essential supremum over $t$ gives 
    the result.
\end{proof}

\begin{lemma}\label{thm:projection}
    Let $\{e_{n,k}\}_{n \geq 0,\, 0 \leq k \leq 2^n - 1}$ 
    be the Haar system and  $\xi \in L^2([0,1])$.
    Then,
    \[
    P_N \xi = \ps{\xi,\phi_{0,0}}_{L^2} \phi_{0,0}
    + \sum_{n=0}^{N-1} \sum_{k=0}^{2^n-1} 
    \ps{\xi, e_{n,k}}_{L^2} \, e_{n,k},
    \quad \text{in } L^2([0,1]).
    \]
\end{lemma}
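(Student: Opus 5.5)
The plan is to show that both sides are orthogonal projections of $L^2([0,1])$ onto the same closed subspace. Let $V_N$ denote the span of $\{\phi_{N,k}\}_{k=0}^{2^N-1}$; this is the $2^N$-dimensional space of functions that are constant on each dyadic interval $I_{N,k}$. Let $H_N$ denote the span of $\{\phi_{0,0}\}\cup\{e_{n,k}: 0\le n\le N-1,\ 0\le k\le 2^n-1\}$. The family spanning $H_N$ is orthonormal, being a subfamily of the Haar basis (recall $\phi_{0,0}=\mathbf 1$). So the right-hand side of the claimed identity is exactly the orthogonal projection onto $H_N$. The family $\{\phi_{N,k}\}_k$ is also orthonormal, since the intervals are disjoint and $|\phi_{N,k}|_{L^2}=1$, so $P_N$ is the orthogonal projection onto $V_N$. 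It therefore suffices to prove $H_N=V_N$.

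First I show $H_N\subset V_N$. Clearly $\mathbf 1\in V_N$. For $n\le N-1$, the wavelet $e_{n,k}$ is constant on $[k2^{-n},(k+\tfrac12)2^{-n})$ and on $[(k+\tfrac12)2^{-n},(k+1)2^{-n})$. Both of these are unions of intervals $I_{N,j}$, because $(k+\tfrac12)2^{-n}=(2k+1)2^{-(n+1)}$ is a grid point of level $n+1\le N$. Outside its support, $e_{n,k}$ vanishes. Hence $e_{n,k}\in V_N$.

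Next I compare dimensions. $\dim H_N = 1+\sum_{n=0}^{N-1}2^n = 2^N=\dim V_N$, using linear independence from orthonormality in both cases. Two finite-dimensional subspaces with $H_N\subset V_N$ and equal dimension coincide, so $H_N=V_N$. It follows that the two orthogonal projections agree, which gives the identity in $L^2([0,1])$.

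None of these steps is a real obstacle. The only point needing care is the dyadic bookkeeping in the inclusion: the midpoint of the support of $e_{n,k}$ must lie on the level-$N$ grid. This is precisely why the sum stops at $n=N-1$ rather than $n=N$.
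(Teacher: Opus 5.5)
Your proof is correct and follows essentially the same route as the paper. Both arguments show that the truncated Haar family lies in $V_N$, since each $e_{n,k}$ with $n\le N-1$ is constant on the level-$N$ dyadic intervals, and then use orthonormality together with the count $1+\sum_{n=0}^{N-1}2^n=2^N=\dim V_N$ to conclude that the spans coincide. The only cosmetic difference is that the paper checks $\xi-P_N'\xi\perp V_N$ by hand, whereas you invoke the standard fact that an orthonormal expansion is the orthogonal projection onto the span.
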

\begin{proof}
    Let $V_N := \operatorname{span}\{\phi_{N,k}\}_{k=0}^{2^N - 1}$ 
    and define
    \[
    \mathcal{B}_N := \{\phi_{0,0}\} \cup \{e_{n,k} : 0 \leq n \leq N-1,\; 0 \leq k \leq 2^n - 1\}.
    \]
    It follows that $\mathcal{B}_N \subset V_N$. Indeed, for $n \leq N-1$, 
    the wavelet $e_{n,k}$ lives on a coarser grid than the partition defining $V_N$,
    hence it is constant on each dyadic interval $I_{N,k'}$ and may be written 
    as a linear combination of the corresponding indicators $\phi_{N,k'}$.
    Since $\mathcal{B}_N$ is a subset of the Haar system, 
    it is orthonormal, 
    and its cardinality satisfies
    $
    |\mathcal{B}_N| = 1 + \sum_{n=0}^{N-1} 2^n = 2^N = \dim V_N,
    $
    so $\mathcal{B}_N$ is an orthonormal basis for $V_N$
    with $\operatorname{span}(\mathcal{B}_N)=V_N$.

    We now show that the right-hand side of the claimed formula defines 
    the orthogonal projection onto $V_N$. Define $P_N'\xi$ by the right-hand side. 
    Since $P_N'\xi$ is a linear combination of elements of $\mathcal{B}_N \subset V_N$, 
    we have $P_N'\xi \in V_N$. It remains to show $\xi - P_N'\xi \perp V_N$, 
    since uniqueness of orthogonal projections then gives $P_N' = P_N$. 
    Since $\mathcal{B}_N$ spans $V_N$, it suffices to check orthogonality 
    against each basis element. 
    For any $e_{n',k'} \in \mathcal{B}_N$, 
    linearity and orthonormality of the Haar system give
    \[
    \langle P_N'\xi,\, e_{n',k'}\rangle_{L^2} 
    = \langle \xi, \phi_{0,0}\rangle_{L^2}\langle \phi_{0,0}, e_{n',k'}\rangle_{L^2}
    + \sum_{n,k} \langle \xi, e_{n,k}\rangle_{L^2}\, 
    \langle e_{n,k}, e_{n',k'}\rangle_{L^2} 
    = \langle \xi, e_{n',k'}\rangle_{L^2},
    \]
    and similarly $\langle P_N'\xi,\, \phi_{0,0}\rangle_{L^2} = \langle \xi, \phi_{0,0}\rangle_{L^2}$.
    Hence $\langle \xi - P_N'\xi,\, v\rangle_{L^2} = 0$ for all $v \in V_N$,
    and therefore $\xi - P_N'\xi \in V_N^{\perp}$, as required.
\end{proof}

We now turn to the proof of \cref{thm:cumulant-expansion}.

\begin{proof}[Proof of \cref{thm:cumulant-expansion}]
  By the sub-Gaussian assumption, the function 
  $w \mapsto \mathbb{E}[e^{w\zeta}]$ is entire 
  and equals~$1$ at $w = 0$, so its logarithm 
  is holomorphic in a neighborhood of the origin, 
  with the branch fixed by 
  $\log \mathbb{E}[e^{w\zeta}]\big|_{w=0} = 0$;
  see \cite[Theorem~6.2]{stein2003complex}.
  By \cref{prop:Sp-Linfty}, $\xi \in \S_{\mathbb{C}}$ 
  is bounded on $[0,1]$ and we have
  \begin{equation}\label{eq:linfty_bound}
    \bigl|Q(\xi,\phi_{N,k})\bigr|
    = \left|
        2^{N/2}\int_{I_{N,k}} \xi(t)\,dt
      \right|
    \leq 2^{-N/2}\,|\xi|_{L^\infty},
  \end{equation}
  so that $\log \mathbb{E}[e^{Q(\xi,\phi_{N,k})\zeta}]$ 
  is well defined for all $k$ when $N$ is 
  sufficiently large.
  By \cref{prop:donsker-hida}, 
  $S\widetilde{\mu}_N(\xi) = F_N(\xi)$, 
  and exponentiating one verifies that (see \eqref{eq:FN-def})
  \[
    \log S\widetilde{\mu}_N(\xi)
    = -\frac{Q(\xi)}{2}
      + \sum_{k=0}^{2^N-1}
        \log \mathbb{E}\!\left[
          e^{Q(\xi, \phi_{N,k})\,\zeta}
        \right],
  \]
  where $\log S\widetilde{\mu}_N(\xi)$ denotes 
  a branch of the logarithm, identified as 
  the principal branch at the end of the proof.
  
  Expanding $\log \mathbb{E}[e^{w\zeta}]$ 
  up to order $M$ yields
  \[
    \log\mathbb{E}\!\left[
      e^{Q(\xi,\phi_{N,k})\,\zeta}
    \right]
    = \frac{1}{2}Q(\xi,\phi_{N,k})^2
      + \sum_{m=3}^{M}
        \frac{\kappa_m(\zeta)}{m!}\,
        Q(\xi,\phi_{N,k})^m
      + R_{N,k}^{(M)},
  \]
  where we used that $\kappa_1(\zeta) = 0$ and
  $\kappa_2(\zeta) = 1$, and the remainder satisfies
  \begin{equation}\label{eq:remainder}
    |R_{N,k}^{(M)}|
    \leq C_M\,|Q(\xi,\phi_{N,k})|^{M+1}.
  \end{equation}
  Summing over $k$ yields
  \begin{equation}\label{eq:log_S_sum}
    \log S\widetilde{\mu}_N(\xi)
    = -\frac{Q(\xi)}{2}
      + \frac{1}{2}\sum_{k=0}^{2^N-1}
        Q(\xi,\phi_{N,k})^2
      + \sum_{m=3}^{M}
        \frac{\kappa_m(\zeta)}{m!}
        \sum_{k=0}^{2^N-1}Q(\xi,\phi_{N,k})^m
      + \sum_{k=0}^{2^N-1}R_{N,k}^{(M)}.
  \end{equation}
Because $P_N$ is the orthogonal projection onto
  $\operatorname{span}\{\phi_{N,k}\}_{k=0}^{2^N-1}$, 
  we have
  \begin{equation}\label{eq:mid-eq}
    \sum_{k=0}^{2^N-1}Q(\xi,\phi_{N,k})^2
    = Q(P_N \xi),
    \qquad
    \sum_{k=0}^{2^N-1}Q(\xi,\phi_{N,k})^m
    = \lambda_m^N\int_0^1 (P_N \xi(t))^m\,dt.
  \end{equation}
  Substituting these identities 
  into~\eqref{eq:log_S_sum}
  and noting that 
  $Q(P_N\xi,(I-P_N)\xi) = 0$
  (since $\phi_{N,k}$ are real-valued),
  we obtain
  \begin{equation}\label{eq:log_S_truncated}
    \log S\widetilde{\mu}_N(\xi)
    = -\frac{Q((I-P_N)\xi)}{2}
      + \sum_{m=3}^{M}
        \frac{\kappa_m(\zeta)}{m!}\,\lambda_m^N
        \int_0^1 (P_N \xi(t))^m\,dt
      + \sum_{k=0}^{2^N-1}R_{N,k}^{(M)}.
  \end{equation}
  We now estimate each term in the expression 
  above.

\smallskip
  \textit{Covariance error.}
  Since $\xi \in \S_{\mathbb C}$, 
  by \cref{thm:projection},
  \begin{equation*}
    \begin{split}
      |Q((I-P_N)\xi)|\leq|(I-P_N)\xi|_{L^2}^2
      &= \sum_{n=N}^{\infty}\sum_{k=0}^{2^n-1}
         |\langle \xi, e_{n,k}\rangle_{L^2}|^2 \\
      &= \sum_{n=N}^{\infty}\sum_{k=0}^{2^n-1}
         \alpha_n^{-2p}\,\alpha_n^{2p}\,
         |\langle \xi, e_{n,k}\rangle_{L^2}|^2 \\
      &\leq \alpha_N^{-2p}
         \sum_{n=N}^{\infty}\sum_{k=0}^{2^n-1}
         \alpha_n^{2p}\,|\langle \xi, e_{n,k}\rangle_{L^2}|^2
      \leq \alpha_N^{-2p}\,|\xi|_p^2,
    \end{split}
  \end{equation*}
  for every $p \geq 0$.  Since $\alpha_N \sim 2^N$,
  choosing $p > (M-2)/4$ gives
  $\alpha_N^{-2p} = o(\lambda_M^N)$,
  and hence $|Q((I-P_N)\xi)| = o(\lambda_M^N)$.

  \smallskip
  \textit{Taylor remainder.}
  Using the bound~\eqref{eq:remainder} on the 
  remainder,
  along with the basic 
  estimate~\eqref{eq:linfty_bound}, yields
  \[
    \left|\sum_{k=0}^{2^N-1} R_{N,k}^{(M)}\right|
    \leq C_M\sum_{k=0}^{2^N-1}
      |Q(\xi, \phi_{N,k})|^{M+1}
    \leq C_M\,|\xi|_{L^\infty}^{M+1}\,
      2^N\,2^{-N(M+1)/2}
    = C_M\,|\xi|_{L^\infty}^{M+1}\,\lambda_{M+1}^N
    = o(\lambda_M^N).
  \]

  \smallskip
  \textit{Riemann sum approximation.}
  To bound the second term in the right-hand side
  of~\eqref{eq:log_S_truncated}, 
  let $3 \leq m \leq M$.
  For $a, b \in \mathbb{C}$, the identity
  $a^m - b^m = (a-b)\sum_{j=0}^{m-1}a^{m-1-j}b^j$
  and the bound
  $\sum_{j=0}^{m-1}|a|^{m-1-j}|b|^j
  \leq m\,\max(|a|,|b|)^{m-1}$ yield
  \begin{equation*}
    \begin{split}
      \left|\int_0^1 (P_N \xi)^m\,dt
            - \int_0^1 \xi^m\,dt\right|
      &\leq \int_0^1 
         |(P_N \xi)^m - \xi^m|\,dt \\
      &\leq m\int_0^1
         \max(|P_N \xi(t)|,|\xi(t)|)^{m-1}\,
         |P_N \xi(t) - \xi(t)|\,dt \\
      &\leq m\,|\xi|_{L^\infty}^{m-1}
         \int_0^1 
         |P_N \xi(t) - \xi(t)|\,dt \\
      &\leq m\,|\xi|_{L^\infty}^{m-1}\,
         |P_N \xi - \xi|_{L^2},
    \end{split}
  \end{equation*}
  where we used \cref{lem:projection_linfty} 
  to bound
  $|P_N \xi(t)| \leq |\xi|_{L^\infty}$ 
  for a.e.\ $t \in [0,1]$, 
  and Cauchy--Schwarz to pass from $L^1$ to
  $L^2$.  Arguing as in the estimate of the 
  covariance error,
  we can choose $p$ sufficiently large 
  ($p > (M-3)/2$) so that
  \[
    \lambda_m^N\,|P_N \xi - \xi|_{L^2}
    \leq \lambda_m^N\,\alpha_N^{-p}\,|\xi|_p
    \sim 2^{N(1-m/2-p)}
    = o(\lambda_M^N).
  \]
  Consequently,
  \[
    \lambda_m^N\int_0^1 (P_N \xi)^m\,dt
    = \lambda_m^N\int_0^1 \xi^m\,dt 
    + o(\lambda_M^N).
  \]

  \smallskip
  Substituting the estimates for all three terms
  into~\eqref{eq:log_S_truncated} and summing 
  over $m$ yields
  \[
    \log S\widetilde{\mu}_N(\xi)
    = \sum_{m=3}^{M}
      \frac{\kappa_m(\zeta)}{m!}\,\lambda_m^N\,
      S\mathfrak{U}_m(\xi) + o(\lambda_M^N),
  \]
  establishing \eqref{eq:cumulant-expansion}.
  The identification of the principal branch 
  is due to the fact that the right-hand side 
  goes to zero as $N\to \infty$.
\end{proof}

\subsection{Proof of \cref{thm:intro-universality} \textnormal{\emph{(ii)}}}\label{sec:leading-correction}

The cumulant expansion (\cref{thm:cumulant-expansion})
describes the deviation of $\widetilde{\mu}_N$ from the Gaussian fixed
point $\mu_\infty$ at the level of $S$-transforms.
We now promote this to a statement about the dual pairing
with arbitrary test functionals $\Phi \in (\S)$.

\begin{theorem}[Leading correction]\label{thm:leading-correction}
  Assume that $\zeta$ satisfies the 
  sub-Gaussian condition~\eqref{eq:subgaussian}.
  Fix $m \geq 3$, and suppose that the moments of $\zeta$
  match those of a standard Gaussian up to order $m-1$,
  that is,
  \begin{equation}\label{eq:moment-match}
    \mathbb{E}[\zeta^r] = \mathbb{E}[Z^r],
    \qquad r = 1, \ldots, m-1,
  \end{equation}
  where $Z \sim \mathcal{N}(0,1)$.
  Then, for every $\Phi \in (\S)$,
  \begin{equation}\label{eq:leading-correction}
    \pp{\widetilde{\mu}_N}{\Phi}-\pp{\mu_\infty}{\Phi}
    = \frac{\kappa_m(\zeta)}{m!}\,\lambda_m^N\,
      \pp{\mathfrak{U}_m}{\Phi}
    + o(\lambda_m^N),
    \qquad N \to \infty.
  \end{equation}
\end{theorem}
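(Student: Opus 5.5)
Moment matching up to order $m-1$ forces $\kappa_r(\zeta)=0$ for $3\le r\le m-1$, since the cumulants are polynomials in the moments and the Gaussian ones vanish. Applying \cref{thm:cumulant-expansion} with $M=m$ then leaves a single term:
\[
\log S\widetilde\mu_N(\xi)=c\,\lambda_m^N\,S\mathfrak U_m(\xi)+o(\lambda_m^N),\qquad c\coloneq \frac{\kappa_m(\zeta)}{m!}.
\]
Exponentiating and using $S\mu_\infty\equiv 1$ (indeed $\pp{\mu_\infty}{\mathcal E(\xi)}=1$) gives pointwise in $\xi\in\S_{\mathbb C}$
\[
\frac{S\widetilde\mu_N(\xi)-S\mu_\infty(\xi)}{\lambda_m^N}\longrightarrow c\,S\mathfrak U_m(\xi).
\]
So we study the Hida distributions $V_N\coloneq\lambda_m^{-N}(\widetilde\mu_N-\mu_\infty)$. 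Their $S$-transforms converge pointwise to $S(c\,\mathfrak U_m)$. By \cref{thm:S-convergence} together with \cref{thm:S-injective}, $V_N\to c\,\mathfrak U_m$ weakly-$*$ in $(\S)^*$, provided we have a uniform growth bound. Pairing with an arbitrary $\Phi\in(\S)$ then yields \eqref{eq:leading-correction}.

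The main work is the uniform bound (ii) of \cref{thm:S-convergence}:
\[
|F_N(\xi)-1|\le K_1\lambda_m^N e^{K_2|\xi|_p^2}\quad\text{for all } \xi\in\S_{\mathbb C},\ N\ge1.
\]
The difficulty is that the Taylor expansion of $\log\mathbb E[e^{w\zeta}]$ is only valid for small $|w|$, while the bound must hold for all $\xi$. I would split into two regimes.

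\emph{Small $\xi$, i.e.\ $2^{-N/2}|\xi|_{L^\infty}\le r_0$, with $r_0$ inside the disc where $\log M$ is holomorphic.} Here I redo the proof of \cref{thm:cumulant-expansion} quantitatively. First, $|\log M(w)-w^2/2|\le C|w|^m$ on $|w|\le r_0$, which uses the vanishing of $\kappa_3,\dots,\kappa_{m-1}$. Second, $\sum_k|Q(\xi,\phi_{N,k})|^m\le 2^{N(1-m/2)}|\xi|_\infty^{m}=\lambda_m^N|\xi|_\infty^m$. Third, for the covariance error, $|Q((I-P_N)\xi)|\le \alpha_N^{-2p}|\xi|_p^2\le\lambda_m^N|\xi|_p^2$ once $p\ge(m-2)/4$. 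Summing these, $|\log F_N(\xi)|\le C\lambda_m^N(|\xi|_p^2+|\xi|_p^m)$, using $|\xi|_\infty\le C_p|\xi|_p$ from \cref{prop:Sp-Linfty}. The elementary inequality $|e^z-1|\le|z|e^{|z|}$ and absorption of polynomial factors into an exponential then give the bound with some $K_2$.

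\emph{Large $\xi$, i.e.\ $|\xi|_\infty>r_0 2^{N/2}$.} Here $\lambda_m^{-N}\le 2^{N(m/2-1)}\le (C|\xi|_p/r_0)^{m-2}$. Combined with (PS2) from \cref{prop:FN-characterization}, this gives
\[
\lambda_m^{-N}|F_N(\xi)-1|\le C|\xi|_p^{m-2}\bigl(e^{K|\xi|_{L^2}^2}+1\bigr)\le K_1e^{K_2|\xi|_p^2}.
\]
Both regimes are thus covered uniformly in $N$, so \cref{thm:S-convergence} applies. I expect the careful bookkeeping in the small-$\xi$ regime, namely a uniform remainder constant over the complex disc and the choice of $p$ large enough, to be the only delicate point.
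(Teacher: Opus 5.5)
Your proof is correct in structure. Up to the uniform growth bound (ii) of \cref{thm:S-convergence} it matches the paper: rescale by $\lambda_m^{-N}$, kill $\kappa_3,\dots,\kappa_{m-1}$, and get pointwise convergence from \cref{thm:cumulant-expansion} at $M=m$.

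For (ii) you take a genuinely different route. The paper never takes a logarithm there. It writes $F_N(\xi)=e^{-\frac12 Q((I-P_N)\xi)}\prod_k g(Q(\xi,\phi_{N,k}))$ with $g(z)=e^{-z^2/2}\mathbb E[e^{z\zeta}]$. It then proves one global scalar estimate, $|g(z)-1|\le C|z|^m e^{\frac{1+\sigma^2}{2}|z|^2}$ on all of $\mathbb C$ (\cref{lem:g-minus-one}), where the only case split is $|z|\le1$ versus $|z|\ge1$. Finally it telescopes the product, using Bessel's inequality to keep every exponential quadratic in $|\xi|_p$ and the identity $\sum_k|Q(\xi,\phi_{N,k})|^m=\lambda_m^N\int_0^1|P_N\xi|^m$ to extract $\lambda_m^N$.

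You instead split on the test function itself. Where $2^{-N/2}|\xi|_\infty\le r_0$ you rerun the log-expansion with explicit constants, needing only the local bound $|\log M(w)-w^2/2|\le C|w|^m$. Elsewhere you note that $\lambda_m^{-N}\le (C_p|\xi|_p/r_0)^{m-2}$, so the crude bound (PS2) of \cref{prop:FN-characterization} suffices. This is the path-space counterpart of the paper's $|z|\ge1$ case.

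The paper's version is branch-free and keeps polynomial factors out of the exponent automatically. Yours reuses the cumulant machinery already in hand and makes transparent why large $\xi$ is harmless.

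One step in your small-$\xi$ regime does not close as written. From $|\log F_N(\xi)|\le C\lambda_m^N(|\xi|_p^2+|\xi|_p^m)$ and $|e^z-1|\le|z|e^{|z|}$, you pick up a factor $\exp\bigl(C\lambda_m^N|\xi|_p^m\bigr)$. For fixed $N$ this is not bounded by $e^{K_2|\xi|_p^2}$: your regime constrains $|\xi|_\infty$, not $|\xi|_p$, and $|\xi|_p$ is unbounded there. Absorbing polynomial prefactors does not address this factor.

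The fix uses the regime hypothesis directly. Since $|Q(\xi,\phi_{N,k})|\le r_0$ for every $k$,
$\sum_k C|Q(\xi,\phi_{N,k})|^m\le C r_0^{m-2}\sum_k|Q(\xi,\phi_{N,k})|^2\le Cr_0^{m-2}|\xi|_{L^2}^2$, so also $|\log F_N(\xi)|\le K|\xi|_p^2$. Use this quadratic bound in the exponent. Use your $\lambda_m^N$-bound, with $|\xi|_\infty^m\le C_p^m|\xi|_p^m$, only in the prefactor. Then the absorption argument goes through and both regimes give $|F_N(\xi)-1|\le K_1\lambda_m^N e^{K_2|\xi|_p^2}$ uniformly in $N$.
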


The theorem reveals a two-level universality structure.
First, the convergence rate $\lambda_m^N$ is universal:
it depends neither on the distribution of $\zeta$ nor
on the test functional $\Phi$.
Second, the functional form of the correction,
given by the pairing $\pp{\mathfrak{U}_m}{\Phi}$,
is likewise universal.
The only distribution-dependent quantity is the scalar
prefactor $\kappa_m(\zeta)/m!$, which encodes the leading
non-Gaussianity of $\zeta$ through its $m$-th cumulant.

The standing assumptions $\mathbb{E}[\zeta] = 0$ and
$\mathbb{E}[\zeta^2] = 1$ already imply the moment-matching
condition~\eqref{eq:moment-match} for $m = 3$
(since the first two Gaussian moments are
$\mathbb{E}[Z] = 0$ and $\mathbb{E}[Z^2] = 1$).
In this generic case, the leading correction is governed
by $\lambda_3 = 2^{-1/2}$, provided $\kappa_3(\zeta) \neq 0$.
More generally, by matching higher moments one can
systematically accelerate the convergence rate from
$\lambda_3^N$ to $\lambda_m^N$.

\begin{remark}[Observable coupling]\label{rem:observable-coupling}
  If $\Phi$ does not have a chaos component of order $m$,
  then it does not excite $\mathfrak{U}_m$, that is,
  the pairing $\pp{\mathfrak{U}_m}{\Phi}$ vanishes
  since $\mathfrak{U}_m$ lives entirely in the $m$-th
  Wiener chaos.  In this case, the leading term
  in~\eqref{eq:leading-correction} is zero, and the
  actual correction is governed by the next nonvanishing
  higher-order eigenmode.
  Thus, the eigenmode that controls the convergence rate
  for a given $\Phi$ depends not only on the increment
  distribution $\zeta$ (through its cumulants
  $\kappa_m(\zeta)$), but also on the particular way that
  the chaos structure of $\Phi$ couples with the spectral
  data of $\L$ through the eigenvectors $\mathfrak{U}_m$.
  Nevertheless, the convergence remains universal in the
  sense that only the eigenmodes
  $(\mathfrak{U}_m, \lambda_m)$ of $\L$ appear as
  possible corrections --- which mode is selected depends
  on $\zeta$ and $\Phi$, but the set of admissible
  corrections is determined entirely by the spectral data
  of $\L$.
\end{remark}

The proof of \cref{thm:leading-correction} is based on an
application of \cref{thm:S-convergence} to the
renormalized difference between $\widetilde{\mu}_N$ and $\mu_\infty$.
Accordingly, we must verify two properties of its
$S$-transform: pointwise convergence and a uniform growth
bound. The first follows from the cumulant expansion obtained in
\cref{thm:cumulant-expansion}, whereas the second 
is established using the auxiliary lemmas below.

\begin{lemma}\label{lem:g-minus-one}
  Under the assumptions of \cref{thm:leading-correction},
  define
  \[
    g(z)\coloneq e^{-z^2/2}\E[e^{z\zeta}],
    \qquad z\in\mathbb C.
  \]
  Then $g$ is entire, and there exists a constant $C>0$
  such that
  \[
    |g(z)-1|
    \le C\,|z|^m 
    e^{\frac{1+\sigma^2}{2}|z|^2},
    \qquad z\in\mathbb C.
  \]
\end{lemma}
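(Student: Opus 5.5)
The estimate separates into a bounded region $|z|\le 1$ and an unbounded region $|z|\ge 1$. In the first we exploit the vanishing of the Taylor coefficients of $g-1$ below order $m$; in the second we use the crude sub-Gaussian bound. Entireness of $g$ is immediate: the proof of \cref{prop:FN-characterization} (PS1) shows $M(z)=\mathbb{E}[e^{z\zeta}]=\sum_n \mathbb{E}[\zeta^n]z^n/n!$ is entire, and $e^{-z^2/2}$ is entire, so their product $g$ is entire.

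\emph{Vanishing of low-order coefficients.} Write $\mathbb{E}[e^{z\zeta}]=\sum_n \mathbb{E}[\zeta^n]z^n/n!$ and $e^{z^2/2}=\mathbb{E}[e^{zZ}]=\sum_n \mathbb{E}[Z^n]z^n/n!$ with $Z\sim\mathcal N(0,1)$. The moment-matching hypothesis \eqref{eq:moment-match} gives
\[
\mathbb{E}[e^{z\zeta}]-e^{z^2/2}=\sum_{n\ge m}\frac{\mathbb{E}[\zeta^n]-\mathbb{E}[Z^n]}{n!}z^n=O(z^m).
\]
Hence $g(z)-1=e^{-z^2/2}\bigl(\mathbb{E}[e^{z\zeta}]-e^{z^2/2}\bigr)$ is entire with a zero of order at least $m$ at the origin. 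It follows that $(g(z)-1)/z^m$ extends to an entire function, which is bounded on the closed unit disk by some constant $C_0$. Therefore, for $|z|\le 1$,
\[
|g(z)-1|\le C_0|z|^m\le C_0|z|^m e^{\frac{1+\sigma^2}{2}|z|^2}.
\]

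\emph{Large $|z|$.} For $|z|\ge 1$ we bound the two terms of $g-1$ separately. First, $|e^{-z^2/2}|\le e^{|z|^2/2}$. Second, as in the (PS2) verification,
\[
|\mathbb{E}[e^{z\zeta}]|\le \mathbb{E}[e^{\mathrm{Re}(z)\zeta}]\le e^{\sigma^2|z|^2/2}
\]
by \eqref{eq:subgaussian}. Combining these,
\[
|g(z)-1|\le e^{\frac{1+\sigma^2}{2}|z|^2}+1\le 2e^{\frac{1+\sigma^2}{2}|z|^2}\le 2|z|^m e^{\frac{1+\sigma^2}{2}|z|^2},
\]
where the last step uses $|z|^m\ge 1$. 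Taking $C=\max(C_0,2)$ gives the claim on all of $\mathbb{C}$.

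\emph{Main obstacle.} The only delicate point is the small-$|z|$ step, namely justifying that $g-1$ vanishes to order $m$. The Gaussian moment-generating function has the series $\sum_n \mathbb{E}[Z^n]z^n/n!$, so the coefficients of $\mathbb{E}[e^{z\zeta}]-e^{z^2/2}$ of degree below $m$ vanish exactly by \eqref{eq:moment-match}. Multiplying by the entire factor $e^{-z^2/2}$ preserves the order of the zero. No uniformity issue arises, because the constant $C_0$ comes from a single compact set, the unit disk.
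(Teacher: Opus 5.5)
Your proof is correct and follows the paper's argument essentially step by step. Moment matching kills the Taylor coefficients of $g-1$ below order $m$, which handles $|z|\le 1$; the crude sub-Gaussian bound $|g(z)-1|\le |g(z)|+1\le 2|z|^m e^{\frac{1+\sigma^2}{2}|z|^2}$ handles $|z|\ge 1$; and $C=\max(C_0,2)$. Your observation that $(g(z)-1)/z^m$ extends to an entire function, hence is bounded on the closed unit disk, is a slightly cleaner justification of the constant on $|z|\le 1$ than the paper's bare appeal to $g(z)-1=O(z^m)$ as $z\to 0$.
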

\begin{proof}
    As in the proof of \cref{prop:FN-characterization}, the
    sub-Gaussian assumption implies that
    $z\mapsto \E[e^{z\zeta}]$ is entire, hence so is $g$.
    By noting that 
    $e^{z^2/2}$ is the moment generating function of the 
    standard Gaussian law, and that
    $\zeta$ matches its moments up to order
    $m-1$, we conclude that the Taylor coefficients of $\E[e^{z\zeta}]$ 
    and $e^{z^2/2}$ agree up to order $m-1$.  
    As a result,
    \[
      g(z)-1
      =
      e^{-z^2/2}\bigl(\E[e^{z\zeta}]-e^{z^2/2}\bigr)
      =
      O(z^m),
      \qquad z\to0,
    \]
    since $e^{-z^2/2}=O(1)$ as $z \to 0$.
    Therefore there exists $C_1>0$ such that
    \[
      |g(z)-1|\le C_1|z|^m
      \le C_1|z|^m e^{\frac{1+\sigma^2}{2}|z|^2},
      \qquad |z|\le 1.
    \]
    On the other hand, by \eqref{eq:subgaussian},
    \begin{equation*}
      |g(z)|= |e^{-z^2/2}\E[e^{z\zeta}]| 
      \le e^{|z|^2/2} e^{\sigma^2|z|^2/2}
      =e^{\frac{1+\sigma^2}{2}|z|^2}.
    \end{equation*}
    Hence, for $|z|\ge 1$,
    \[
      |g(z)-1|
      \le |g(z)|+1
      \le e^{\frac{1+\sigma^2}{2}|z|^2}+1
      \le 2|z|^m e^{\frac{1+\sigma^2}{2}|z|^2}.
    \]
    Combining the cases $|z|\le 1$ and $|z|\ge 1$, we obtain
    \[
      |g(z)-1|
      \le C\,|z|^m e^{\frac{1+\sigma^2}{2}|z|^2},
      \qquad z\in\mathbb C,
    \]
    where $C=\max\{C_1,2\}$.
\end{proof}

\begin{lemma}\label{lem:uniform-bound}
  Under the assumptions of \cref{thm:leading-correction},
  for every
  \[
    p>\max\Bigl\{\frac{1}{2},\frac{m-2}{4}\Bigr\},
  \]
  there exist constants $K_1,K_2>0$, independent of $N$,
  such that

  \begin{equation}\label{eq:FN-minus-one-bound}
    |F_N(\xi)-1|
    \le K_1\,\lambda_m^N\,
      \exp\bigl(K_2|\xi|_p^2\bigr),
    \qquad \xi\in \S_{\mathbb C},\ \ N\ge1.
  \end{equation}
\end{lemma}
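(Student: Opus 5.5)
We want $|F_N(\xi)-1|\le K_1\lambda_m^N e^{K_2|\xi|_p^2}$. The idea is to factor $F_N$ as a product over the dyadic blocks of the function $g$ of \cref{lem:g-minus-one}, times a Gaussian correction coming from the projection error $(I-P_N)\xi$. Write $z_k\coloneq Q(\xi,\phi_{N,k})$. Since $\sum_k z_k^2=Q(P_N\xi)$ and $Q(P_N\xi,(I-P_N)\xi)=0$ (as in the proof of \cref{thm:cumulant-expansion}), we have $Q(\xi)=\sum_k z_k^2+Q((I-P_N)\xi)$. Hence
\[
F_N(\xi)=e^{-Q((I-P_N)\xi)/2}\prod_{k=0}^{2^N-1} g(z_k).
\]
We then split
\[
F_N(\xi)-1=\bigl(e^{-Q((I-P_N)\xi)/2}-1\bigr)\prod_k g(z_k)+\Bigl(\prod_k g(z_k)-1\Bigr),
\]
and bound each piece separately.

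\emph{The product term.} We use the telescoping identity $\prod_k a_k-1=\sum_j (a_j-1)\prod_{k<j}a_k$, which gives
\[
\Bigl|\prod_k g(z_k)-1\Bigr|\le \sum_j |g(z_j)-1|\prod_{k\neq j}\max(1,|g(z_k)|).
\]
\cref{lem:g-minus-one} gives $|g(z)-1|\le C|z|^m e^{c|z|^2}$ with $c=(1+\sigma^2)/2$, and also $|g(z)|\le e^{c|z|^2}$. Therefore
\[
\Bigl|\prod g-1\Bigr|\le C\sum_j |z_j|^m\, e^{c\sum_k|z_k|^2}\le C\sum_j|z_j|^m\, e^{c|\xi|_{L^2}^2},
\]
using Bessel's inequality as in \eqref{eq:ps2}. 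By \eqref{eq:linfty_bound}, $|z_j|\le 2^{-N/2}|\xi|_{L^\infty}$, so
\[
\sum_j|z_j|^m\le 2^N2^{-Nm/2}|\xi|_{L^\infty}^m=\lambda_m^N|\xi|_{L^\infty}^m.
\]
Since $p>1/2$, \cref{prop:Sp-Linfty} (applied to real and imaginary parts) gives $|\xi|_{L^\infty}\le C_p|\xi|_p$. The polynomial factor is absorbed via $x^m\le C'e^{x^2}$, and $|\xi|_{L^2}\le|\xi|_p$, so this piece is bounded by $K\lambda_m^N e^{K'|\xi|_p^2}$.

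\emph{The projection-error term.} Use $|e^w-1|\le|w|e^{|w|}$ with $w=-Q((I-P_N)\xi)/2$. The covariance-error estimate in the proof of \cref{thm:cumulant-expansion} gives
\[
|Q((I-P_N)\xi)|\le \alpha_N^{-2p}|\xi|_p^2\le 2^{-2pN}|\xi|_p^2,
\]
and also $|w|\le|\xi|_{L^2}^2/2$. Together with $|\prod g|\le e^{c|\xi|_{L^2}^2}$, this piece is at most
\[
2^{-2pN}|\xi|_p^2\, e^{(c+1/2)|\xi|_p^2}.
\]
The condition $p>(m-2)/4$ is exactly what gives $2^{-2pN}\le 2^{N(1-m/2)}=\lambda_m^N$. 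Again absorbing $|\xi|_p^2$ into the exponential yields the bound \eqref{eq:FN-minus-one-bound} with $N$-independent constants.

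The main point needing care is uniformity. The telescoping bound has to avoid any $N$-dependent factor such as $2^N$ coming from the number of blocks. This is handled because $\sum_j|z_j|^m$ is already of order $\lambda_m^N$, and the products of $|g|$ are controlled globally by Bessel's inequality rather than factor by factor. We also must use the complex bounds $|Q(\xi,\phi)|\le|\langle\xi,\phi\rangle|$ correctly for $\xi\in\S_{\mathbb C}$; both hold as in \cref{prop:FN-characterization}.
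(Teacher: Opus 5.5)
Your proposal is correct and follows essentially the same route as the paper: the same factorization $F_N(\xi)=e^{-Q((I-P_N)\xi)/2}\prod_k g(z_k)$ and two-term split, the same telescoping argument combined with \cref{lem:g-minus-one} and Bessel's inequality, and the same projection-error estimate, which is where $p>(m-2)/4$ is used. The only cosmetic difference is that you bound $\sum_j|z_j|^m$ directly by $\lambda_m^N|\xi|_{L^\infty}^m$ via \eqref{eq:linfty_bound}, whereas the paper rewrites it as $\lambda_m^N\int_0^1|P_N\xi|^m\,dt$ and interpolates between $L^\infty$ and $L^2$; both yield the same $N$-uniform bound.
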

\begin{proof}
  Fix $N\ge1$, $\xi\in \S_{\mathbb C}$, and set
  $g(z)\coloneq e^{-z^2/2}\E[e^{z\zeta}]$.
  By \eqref{eq:FN-def}, we have  
    \begin{equation}
    F_N(\xi)
    =
    e^{-Q(\xi)/2}
    \prod_{k=0}^{2^N-1}
    e^{Q(\xi,\phi_{N,k})^2/2}
    g\bigl(Q(\xi,\phi_{N,k})\bigr). 
    \end{equation}
  The bilinear form $Q$ behaves in the same way as the $L^2$ inner
  product with respect to the orthogonal projection $P_N$ 
  defined in \eqref{eq:PN-def}. 
  In particular, we have the following identities,
  \[
    Q(P_N\xi)
    =
    \sum_{k=0}^{2^N-1} Q(\xi,\phi_{N,k})^2,
    \qquad
    Q(\xi)=Q(P_N\xi)+Q((I-P_N)\xi).
  \]
  Therefore the Gaussian factor can be rewritten as
    \begin{equation*}
    e^{-Q(\xi)/2}\prod_{k=0}^{2^N-1} e^{Q(\xi,\phi_{N,k})^2/2}
    =
    e^{
      -\frac12 Q(\xi)
      + \frac12 \sum_{k=0}^{2^N-1} Q(\xi,\phi_{N,k})^2
    } 
    =
    e^{
      -\frac12\bigl(Q(\xi)-Q(P_N\xi)\bigr)
    } 
    =
    e^{
      -\frac12 Q((I-P_N)\xi)
    }.
    \end{equation*}
  Hence $F_N(\xi)-1$ splits into
  \begin{equation}\label{eq:FN-split}
  F_N(\xi)-1
  =
  \left(
    e^{-\frac12 Q((I-P_N)\xi)}-1
  \right)
  \prod_{k=0}^{2^N-1}
  g\bigl(Q(\xi,\phi_{N,k})\bigr)
  +
  \left(
    \prod_{k=0}^{2^N-1}
    g\bigl(Q(\xi,\phi_{N,k})\bigr)-1
  \right).
  \end{equation}
  We estimate each of the contributing terms separately.

  \smallskip
  \emph{First term.}
  Using the basic inequality $|e^z - 1| \leq |z|\,e^{|z|}$ 
  we obtain
  \begin{equation}\label{eq:7-1}
    \left|e^{-Q((I-P_N)\xi)/2} - 1\right|
    \leq \frac{1}{2}|Q((I-P_N)\xi)|\,
    e^{\frac{1}{2}|Q((I-P_N)\xi)|}.
  \end{equation}
  By observing that $|Q(\eta)|\le |\eta|_{L^2}^2$
  and using \cref{thm:projection}, we argue as in the
  covariance error estimate in the proof of
  \cref{thm:cumulant-expansion} to obtain
  \[
    |Q((I-P_N)\xi)|
    \le |(I-P_N)\xi|_{L^2}^2
    \le \alpha_N^{-2p}|\xi|_p^2
    \leq \lambda_m^N\,|\xi|_p^2,
  \]
  for $p>(m-2)/4$.
  Substituting this bound in \eqref{eq:7-1} and using $\lambda_m^N \leq 1$
  yields
  \begin{equation}\label{eq:first-piece}
    \left|e^{-Q((I-P_N)\xi)/2} - 1\right|
    \leq \frac{1}{2}\lambda_m^N|\xi|_p^2\,
      e^{|\xi|_p^2/2}.
  \end{equation}
  
  \smallskip
  \emph{Second term.}
  By the sub-Gaussian condition \eqref{eq:subgaussian} we have
  \begin{equation}\label{eq:g-bound}
    |g(z)|= |e^{-z^2/2}\E[e^{z\zeta}]| 
    \le e^{|z|^2/2} e^{\sigma^2|z|^2/2}
    =e^{\frac{1+\sigma^2}{2}|z|^2}.
  \end{equation}
  Hence
  \begin{equation}\label{eq:7-2}
    \prod_{k=0}^{2^N-1}|g(Q(\xi,\phi_{N,k}))|
    \leq \prod_{k=0}^{2^N-1}
    e^{\frac{1+\sigma^2}{2}|Q(\xi,\phi_{N,k})|^2}
    = e^{\frac{1+\sigma^2}{2}
    \sum_{k=0}^{2^N-1}|Q(\xi,\phi_{N,k})|^2}.
  \end{equation}
  By Bessel's inequality applied to the orthonormal family
  $\{\phi_{N,k}\}$ in $L^2([0,1])$ (using
  $Q(\xi,\phi_{N,k}) = \langle\xi,\phi_{N,k}\rangle_{L^2}$
  since $\phi_{N,k}$ is real),
  \[
    \sum_{k=0}^{2^N-1}|Q(\xi,\phi_{N,k})|^2
    \leq |\xi|_{L^2}^2
    \leq |\xi|_p^2.
  \]
  Substituting into \eqref{eq:7-2} yields
  \begin{equation}\label{eq:some-piece}
    \prod_{k=0}^{2^N-1}|g(Q(\xi,\phi_{N,k}))|
    \leq e^{\frac{1+\sigma^2}{2}|\xi|_p^2}.
  \end{equation}
  
  \smallskip
  \emph{Third term.}
  Adding and subtracting partial products, we obtain the
  telescoping identity
\begin{align*}
  \prod_{k=0}^{2^N-1} g(Q(\xi,\phi_{N,k})) - 1
  &= \sum_{j=0}^{2^N-1}
    \left(
      \prod_{k=0}^{j}g(Q(\xi,\phi_{N,k}))
      - \prod_{k=0}^{j-1}g(Q(\xi,\phi_{N,k}))
    \right) \\
  &= \sum_{j=0}^{2^N-1}
    \bigl(g(Q(\xi,\phi_{N,j})) - 1\bigr)
    \prod_{k=0}^{j-1}g(Q(\xi,\phi_{N,k})).
\end{align*}
Reasoning as before, we use \eqref{eq:g-bound} and Bessel's
inequality to obtain
\begin{equation}\label{eq:lemma-1}
  \left|\prod_{k=0}^{2^N-1} g(Q(\xi,\phi_{N,k})) - 1\right|
  \leq \left(\sum_{k=0}^{2^N-1}
  |g(Q(\xi,\phi_{N,k})) - 1|\right)
  e^{\frac{1+\sigma^2}{2}|\xi|_p^2}.
\end{equation}  
We estimate the sum by using \cref{lem:g-minus-one},
\begin{equation}\label{eq:lemma-2}
  \begin{split}
    \sum_{k=0}^{2^N-1}|g(Q(\xi,\phi_{N,k}))-1|
    &\leq
    C\sum_{k=0}^{2^N-1}
    |Q(\xi,\phi_{N,k})|^m
    e^{\frac{1+\sigma^2}{2}|Q(\xi,\phi_{N,k})|^2} \\
    &\leq
    C e^{\frac{1+\sigma^2}{2}|\xi|_p^2}
    \sum_{k=0}^{2^N-1}|Q(\xi,\phi_{N,k})|^m \\
    &=
    C\,
    e^{\frac{1+\sigma^2}{2}|\xi|_p^2}
    \lambda_m^N\int_0^1|P_N\xi(t)|^m\,dt,
  \end{split}
\end{equation}
where the last equality follows by the same 
computation as in \eqref{eq:mid-eq}.
\cref{lem:projection_linfty} gives
$|P_N\xi|_{L^\infty}\le |\xi|_{L^\infty}$, while
$|P_N\xi|_{L^2}\le |\xi|_{L^2}$ since $P_N$ is an orthogonal projection.
Since $\S_p$ embeds into $L^\infty([0,1])$ for $p>1/2$,
we conclude that
\begin{equation}\label{eq:lemma-3}
  \int_0^1|P_N\xi(t)|^m\,dt
  \leq
  |P_N\xi|_{L^\infty}^{m-2}|P_N\xi|_{L^2}^2
  \leq
  C|\xi|_p^m.
\end{equation}
Combining \eqref{eq:lemma-1}, 
\eqref{eq:lemma-2}, and \eqref{eq:lemma-3} yields
\begin{equation}\label{eq:second-piece}
  \left|\prod_{k=0}^{2^N-1} g(Q(\xi,\phi_{N,k})) - 1\right|
  \leq
  C\,e^{(1+\sigma^2)|\xi|_p^2}\lambda_m^N|\xi|_p^m .
\end{equation}

\smallskip
\emph{Final step.}
Returning to \eqref{eq:FN-split}, we combine the bounds
\eqref{eq:first-piece},
\eqref{eq:some-piece}, \eqref{eq:second-piece} to obtain
\[
  \begin{aligned}
  |F_N(\xi)-1|
  &\leq
  \frac{1}{2}\lambda_m^N|\xi|_p^2\,
  e^{|\xi|_p^2/2}
  e^{\frac{1+\sigma^2}{2}|\xi|_p^2}
  +
  C\,e^{(1+\sigma^2)|\xi|_p^2}\lambda_m^N|\xi|_p^m \\
  &\leq
  C_1\,\lambda_m^N
  \bigl(|\xi|_p^2+|\xi|_p^m\bigr)
  e^{C_2|\xi|_p^2}.
  \end{aligned}
\]
Since the polynomial factor can be absorbed into the exponential,
this yields \eqref{eq:FN-minus-one-bound}.  
\end{proof}

We now give the proof of the main result.

\begin{proof}[Proof of \cref{thm:leading-correction}]
  Let
\begin{equation}
  \mathfrak V^{(m)}_N \coloneq 
  \lambda_m^{-N}\bigl(\widetilde\mu_N-\mu_\infty\bigr)\in(\mathcal S)^\ast,
  \qquad
  \mathfrak V^{(m)} \coloneq \frac{\kappa_m(\zeta)}{m!}\,\mathfrak U_m \in(\mathcal S)^\ast.
\end{equation}
Then \eqref{eq:leading-correction} is equivalent to 
$\pp{\mathfrak V^{(m)}_N}{\Phi}\to\pp{\mathfrak V^{(m)}}{\Phi}$ for every $\Phi\in(\mathcal S)$,
so it is enough to show that $\mathfrak V^{(m)}_N\to\mathfrak V^{(m)}$ strongly in $(\mathcal S)^\ast$.
To this end, we verify conditions (i) and (ii) of 
\cref{thm:S-convergence}, which will then imply 
the desired result.

(i) 
Since the $r$-th cumulant is a polynomial in 
the first $r$ moments, the moment-matching 
condition \eqref{eq:moment-match} 
implies the vanishing of the cumulants:
$\kappa_r(\zeta)=0$ for $3\le r\le m-1$. 
As a result, the cumulant expansion 
(\cref{thm:cumulant-expansion})
at order $M=m$ reduces to
\begin{equation*}
  \log S\widetilde\mu_N(\xi)=\frac{\kappa_m(\zeta)}{m!}\,\lambda_m^N\,S\mathfrak U_m(\xi)+o(\lambda_m^N).
\end{equation*}
Since $S\mu_\infty\equiv 1$, taking the exponential and Taylor expanding to first order yields
\begin{equation*}
  \begin{split}
    S\mathfrak V^{(m)}_N(\xi)=\lambda_m^{-N}\bigl(S\widetilde\mu_N(\xi)-1\bigr)
    &=\frac{\kappa_m(\zeta)}{m!}\,S\mathfrak U_m(\xi)+o(1)\\[0.2cm]
    &=S\mathfrak V^{(m)}(\xi) +o(1).
  \end{split}
\end{equation*}
Hence $S\mathfrak V^{(m)}_N(\xi)\to S\mathfrak V^{(m)}(\xi)$ for every $\xi\in\mathcal S_{\mathbb C}$,
and condition (i) holds.
 
(ii) By \cref{lem:uniform-bound} we have
$$
|S\mathfrak V^{(m)}_N(\xi)|=\lambda_m^{-N}|F_N(\xi)-1|\le K_1\exp(K_2|\xi|_p^2),
$$
for large enough $p$ and uniformly in $N$.
This verifies condition (ii) and completes the proof.
\end{proof}

\section{Formal application to Quantum Field Theory (QFT)}\label{sec:qft}
As mentioned in the introduction, the eigenmodes $\mathfrak{U}_n$ have a 
canonical interpretation in QFT, where Wick-renormalized 
polynomials appear as interaction terms in the action 
functional. It is therefore natural to ask whether the 
present framework extends to such field-theoretic models, 
beyond the random-walk class. 
In this final discussion, we sketch formally how our program 
applies to Gibbs-type tilts of $\mu_\infty$
and show that the spectral 
data of $\mathscr{L}$ recovers the classical 
relevant/marginal/irrelevant classification 
of QFT power counting.

To make the connection with QFT more explicit, we 
adopt the standard field-theoretic notation for the 
remainder of this section.
We identify paths $W$ under $\mu_\infty$ 
with a one-dimensional Gaussian free field (GFF) $\phi$ on 
$[0,1]$ with Dirichlet--Neumann boundary conditions 
$\phi(0) = \partial_x\phi(1) = 0$; the Wiener 
measure $\mu_\infty$ thus coincides with the law 
of the Dirichlet--Neumann GFF in $d = 1$.
Under this identification, the distributional 
time derivative $\dot{W}$ becomes the 
spatial gradient $\partial_x \phi$, and the 
eigenmodes take the form 
$$\mathfrak{U}_n 
= \int_0^1 {:}(\partial_x \phi(x))^n{:}\,dx.$$

We consider the following family of Gibbs-type 
perturbations of the Brownian fixed point
\begin{equation}\label{eq:tilt}
    \mu^{(\varepsilon, n)}(d\phi) \propto
    e^{-\varepsilon \int_0^1 
      {:}(\partial_x\phi(x))^n{:}\,dx}\,
    \mu_\infty(d\phi),
\end{equation}
with $\varepsilon > 0$ a small coupling parameter. 
In the context of QFT, the family 
$\mu^{(\varepsilon, n)}$ describes a 
one-dimensional self-interacting field theory 
with derivative-coupling interaction of 
order $n$. 
The dependence on the slope 
$\partial_x\phi$ rather than the field $\phi$ 
alone is a common occurrence in interface 
models. In fact, $\mu^{(\varepsilon, n)}$ can 
be interpreted as the continuum analogue of 
the $\nabla\phi$ interface model, sometimes 
referred to as the Ginzburg--Landau effective 
interface model or the anharmonic 
crystal \cite{gawedzki1980rigorous,gawedzki1981renormalization,gawedzki1982renormalization,gawedzki1983block,funaki1997motion,miller2011fluctuations}.
It should be noted that the expression above 
is purely formal, since the tentative 
exponential density is too singular to define 
an honest probability measure on path space. 
In what follows, we do not attempt to give a 
rigorous construction of such measures, and 
thus we keep the discussion at a formal level.
We also note that, for an honest Gibbs-measure interpretation, one typically takes
\(n\) even, so that the interaction term is bounded from below.
Here we keep \(n\geq 1\) arbitrary since the discussion is purely
formal.

The measure $\mu^{(\varepsilon, n)}$ can 
be characterized dynamically from the perspective of 
stochastic partial differential equations (SPDEs),
as the formal invariant measure of the SPDE
\begin{equation*}
    \partial_t \phi = \partial_x^2 \phi
    + \varepsilon n\,\partial_x\left({:}(\partial_x\phi)^{n-1}{:}\right) 
    + \sqrt{2}\,\xi,
\end{equation*}
where $\xi$ is a space-time white noise, describing 
the Langevin dynamics associated to the potential
$\eps\int_0^1 {:}(\partial_x\phi(x))^n{:}\,dx$;
see \cite{da2003strong} and \cite[Section 13.7]{da2014stochastic}. 
Several values of $n$ are of particular interest:
when $n = 1$, 
the interaction is linear 
and we recover the 
classical stochastic heat equation (SHE). 
For 
$n = 2$, the tilt remains Gaussian, amounting to a modification of 
the GFF covariance. The corresponding SPDE then 
reduces to the SHE with an effective diffusion 
coefficient $1 + 2\varepsilon$. 
The case \(n = 3\) corresponds to a cubic derivative-coupling 
model. The associated SPDE contains a slope-quadratic
nonlinearity, reminiscent of Burgers-type structures.
Finally, $n = 4$ is the natural 
derivative-coupled analogue of the $\phi^4$ model, with 
interaction $\int_0^1 {:}(\partial_x \phi(x))^4{:}\,dx$.

We now study the behavior of the perturbed measures 
$\mu^{(\varepsilon, n)}$ under the 
RG operator $\mathcal{R}$.
Expanding the density in
\eqref{eq:tilt} in powers of $\varepsilon$, we obtain
\[
    e^{
        -\varepsilon
        \int_0^1 {:}(\partial_x\phi(x))^n{:}\,dx
    }
    =1 -\varepsilon\int_0^1 {:}(\partial_x\phi(x))^n{:}\,dx
    +
    O(\varepsilon^2),
\]
so that the perturbed measure admits the 
formal expansion
\[
    \mu^{(\varepsilon,n)}
    =
    \mu_\infty
    -
    \varepsilon \mathfrak U_n
    +
    O(\varepsilon^2),
\]
in the sense of distributions. 
Thus, at leading order, the Gibbs measures $\mu^{(\varepsilon,n)}$ 
can be written as a perturbation of the Gaussian fixed point
$\mu_\infty$ in the direction of the eigenmode $\mathfrak U_n$.
Whether or not this perturbation relaxes back to the fixed point
is determined, at the linearized level, by the associated eigenvalue
$\lambda_n = 2^{1-n/2}$.
Indeed, since $\mu_\infty$ is a fixed point of $\mathcal R$, 
and $\L \mathfrak U_n = \lambda_n \mathfrak U_n$, 
we may neglect higher order terms and iterate the RG operator to obtain 
\[
    \mathcal R^N[\mu^{(\varepsilon,n)}]
    \approx
    \mu_\infty
    -
    \varepsilon \L^N \mathfrak U_n
    =
    \mu_\infty
    -
    \varepsilon \lambda_n^N \mathfrak U_n.
\]

This immediately gives the formal stability 
classification of the perturbed measures, which 
can be described in light of the cumulant interpretation of 
\cref{sec:cumulant-interpretation}.
For $n=0$, the eigenvalue is $\lambda_0=2$.
This is the mass direction --- it changes the total mass 
of the Gaussian fixed point and is unstable.
For $n=1$, the perturbation excites the first-order cumulant,
whose eigenvalue is $\lambda_1=\sqrt{2}$. Hence this direction is
also unstable. 
The $n=2$ mode represents a perturbation of the covariance and is
marginal, since $\lambda_2=1$. That is, under the linearized
dynamics, the perturbation $\mathfrak{U}_2$ neither grows nor shrinks under
iterations of $\L$. 
For $n\geq 3$, one has $\lambda_n<1$, and thus the corresponding
modes $\{\mathfrak U_n\}_{n\geq3}$ are stable perturbations of the
fixed point. They describe deviations in the higher-order
cumulants of the GFF, and their amplitudes shrink under the
linearized RG flow.
In this sense, for $n \geq 3$, the associated Gibbs measures
$\mu^{(\varepsilon,n)}$ formally belong to the basin of attraction
of $\mu_\infty$, or equivalently, they lie in the same Gaussian
universality class governed by $\mu_\infty$.
This extends the scope of the theory 
developed in this paper beyond the 
random-walk setting of \cref{sec:donsker}: 
the basin of attraction of $\mu_\infty$ 
contains not only random walks, but 
also, at a formal level, the field theories 
defined by $\mu^{(\varepsilon,n)}$, with the 
eigenpairs $(\lambda_n, \mathfrak{U}_n)$ 
characterizing a second layer of 
universality; they govern both the rate and 
the direction of convergence to the Gaussian 
fixed point.
\bigskip

The stability classification derived above 
from the spectral theory of $\L$ is 
traditionally obtained in QFT through 
dimensional analysis, also known as 
power counting, which we now sketch.
Since $\phi$ is a one-dimensional GFF, the 
scaling relation 
$\phi(cx) \law c^{1/2}\phi(x)$ 
holds under $\mu_\infty$, for $c>0$. 
The derivative thus obeys 
$(\partial_x\phi)(cx) \law 
c^{-1/2}\,(\partial_x\phi)(x)$, 
recovering the standard scaling of white 
noise in one dimension. Taking the $n$-th 
Wick power and integrating, we arrive at
\begin{equation}\label{eq:heuristic-scaling}
  \eps\int_0^c {:}(\partial_x\phi(x))^n{:}\,dx
  \;\law\;
  \eps c^{1-n/2}
  \int_0^1 {:}(\partial_x\phi(x))^n{:}\,dx.
\end{equation}
Observe that the Wick renormalization does 
not alter the scaling: each lower-order 
correction in ${:}(\partial_x\phi)^n{:}$ is 
a monomial in the field paired with a 
compensating power of the variance, so that 
every term carries the same total degree $n$.

The calculation above allows us to interpret 
the rescaling as a flow on the coupling.
Indeed, the scaling relation 
\eqref{eq:heuristic-scaling} induces a 
natural map on the coupling constant,
$$
\eps\mapsto \eps_{\mathrm{eff}}(c)
=\varepsilon c^{1-n/2},
$$
sending the initial coupling to the 
effective coupling at scale $c$, which 
measures the strength of the interaction 
as seen at that scale.
The large-scale effective behavior can be probed by 
setting $c > 1$, leading to the same 
classification as before: interactions with 
$n < 2$ persist at large scales and are 
\emph{relevant} 
(unstable), 
those with $n = 2$ are \emph{marginal}, 
and those with $n > 2$ become negligible at large scales
and are 
\emph{irrelevant} 
(stable).
This is in one-to-one correspondence with 
the stability classification derived above 
solely from the spectral theory of $\L$.
In other words, our approach can recover,
at a formal level, the classical QFT 
classification purely from 
a linear stability analysis in path space.

\paragraph{Acknowledgements}
The authors thank J.\ Bec, E.\ Simonnet, 
and S.\ Thalabard for stimulating discussions 
and valuable feedback.
AC thanks the financial support from the 
CAPES/MATH-AmSud project CHA2MAN.

\paragraph{Data availability}
Not applicable.

\paragraph{Conflict of interest}
The authors declare that they have no competing interests.


\bibliographystyle{unsrt}
\bibliography{biblio}

\end{document}